\Crefname{assumption}{Assumption}{Assumptions}
\Crefname{exmp}{Example}{Examples}
\newcommand{\Ecal}{\mathcal{E}}
\newcommand{\Hcal}{\mathcal{H}}
\newcommand{\Ical}{\mathcal{I}}
\newcommand{\Lcal}{\mathcal{L}}
\newcommand{\Ncal}{\mathcal{N}}
\newcommand{\Qcal}{\mathcal{Q}}
\newcommand{\Rcal}{\mathcal{R}}
\newcommand{\Wcal}{\mathcal{W}}
\newcommand{\EE}{\mathbb{E}}
\newcommand{\PP}{\mathbb{P}}
\newcommand{\hP}{\mathbb{P}}
\newcommand{\hG}{\mathbb{G}}
\newcommand{\operator}{P}
\newcommand{\cl}{\op{cl}}
\newtheorem{theorem}{Theorem}
\newtheorem{lemma}{Lemma}
\newtheorem{proposition}{Proposition}
\newtheorem{assumption}{Assumption}
\newtheorem{definition}{Definition}
\theoremstyle{definition}
\newtheorem{example}{Example}
\newtheorem{remark}{Remark}
\newenvironment{assumption*}
  {\ifnum\value{subassumption}=0 \stepcounter{assumption}\fi\subassumption}
 {\endsubassumption}
\newenvironment{assumption+}[2]
 {\subassumption}
 {\endsubassumption}
\def\blfootnote{\gdef\@thefnmark{}\@footnotetext}
 \newenvironment{continuance}[2][Example]
  {\medskip\par\noindent\textbf{Example #2, Cont'd} (#1).}
  {\par}
\title{Inference on Strongly Identified Functionals\\of Weakly Identified Functions}
\author{
Andrew Bennett \\ Cornell University \\ awb222@cornell.edu
\and Nathan Kallus \\ Cornell University \\ kallus@cornell.edu
\and Xiaojie Mao \\ Tsinghua University \\ maoxj@sem.tsinghua.edu.cn \\
\and Whitney Newey \\ Massachusetts Institute of Technology \\ wnewey@mit.edu
\and Vasilis Syrgkanis \\ Stanford University \\ vsyrgk@stanford.edu
\and Masatoshi Uehara \\ Cornell University \\ mu223@cornell.edu
}
\begin{document}
\maketitle

\begin{abstract}
In a variety of applications, including nonparametric instrumental variable (NPIV) analysis, proximal causal inference under unmeasured confounding, and missing-not-at-random data with shadow variables, we are interested in inference on a continuous linear functional (\eg, average causal effects) of nuisance function (\eg, NPIV regression) defined by conditional moment restrictions.
These nuisance functions are generally weakly identified, in that the conditional moment restrictions can be severely ill-posed as well as admit multiple solutions. This is sometimes resolved by imposing strong conditions that imply the function can be estimated at rates that make inference on the functional possible.
In this paper, we study a novel condition for the functional to be strongly identified even when the nuisance function is \textit{not}; that is, the functional is amenable to asymptotically-normal estimation at $\sqrt{n}$-rates. 
The condition implies the existence of debiasing nuisance functions, and we propose penalized minimax estimators for both the primary and debiasing nuisance functions. 
The proposed nuisance estimators can accommodate flexible function classes, and importantly they can converge to fixed limits determined by the penalization regardless of the identifiability of the nuisances. 
We use the penalized nuisance estimators to form a debiased estimator for the functional of interest and prove its asymptotic normality under generic high-level conditions, which provide for asymptotically valid confidence intervals.  
We also illustrate our method in a novel partially linear proximal causal inference problem and a partially linear instrumental variable regression problem.\blfootnote{Accepted for presentation at the Conference on Learning Theory (COLT) 2023}
\end{abstract}

\section{Introduction}\label{sec: setup}

Many causal or structural parameters of interest can be expressed as linear functionals of unknown functions that satisfy conditional moment restrictions. 
For example, in a nonparametric instrumental variable (NPIV) model, parameters such as average policy effects, weighted average derivatives, and average partial effects are all linear functionals of the NPIV regression, which is characterized by a conditional moment equation {given by the exclusion restriction} \citep{newey2003instrumental}. 
Similarly, in proximal causal inference under unmeasured confounding \citep{tchetgen2020introduction}, the average treatment effect and various policy effects can be expressed as linear functionals of a bridge function defined as a solution to some conditional moment restrictions \citep{cui2020semiparametric,kallus2021causal}.
Similarly, in missing-not-at-random data problems with shadow variables, some parameters for the missing subpopulation can be also written as linear functionals of an unknown density ratio function that solves a certain conditional moment restriction \citep{LiMiao2022,miao2015identification}. 

In this paper, we tackle the commonplace problem that the conditional moment restrictions only \emph{weakly identify} the nuisance functions that define the target parameter. 
That is, the conditional moment restrictions can be severely ill-posed as well as admit multiple solutions, so that the nuisance functions are not uniquely identified and are not stable as underlying distributions vary.
This problem can easily occur in many applications. 
For example, the identification of NPIV regressions requires the so-called ``completeness condition,'' and stronger conditions yet are needed to control ill-posedness in nonparametric settings.
These conditions can be easily violated if the instrumental variables are not very strong, as is common in practice \cite[\eg, ][]{andrews2005inference,Andrews2019weak}. 
Even with additional restrictions on the function space, 
\cite{santos2012inference} shows that {NP}IV regressions are  unidentifiable in a {variety of} of models.   
Moreover,
\cite{canay2013testability} argues that completeness conditions are generally not testable, so the  unidentifiability problem may not be diagnosable.
Similar phenomena are also common in proximal causal inference and missing data problems with shadow variables (see \cref{ex: proximal,ex: shadow} in \Cref{sec: examples}).

Fortunately, even when the nuisance functions are not identifiable, 
the linear functionals of interest can still be identifiable.
In particular, these functionals often capture some identifiable aspects of the unidentifiable nuisance functions. 
For example, in proximal causal inference, merely the existence -- but not uniqueness -- of bridge functions is sufficient for identification of the average treatment effect and some policy effects \citep{miao2018a,cui2020semiparametric,kallus2021causal}.
But even if the functional is identifiable, the ill-posedness of the conditional moment restrictions that define it can raise significant challenges for statistical inference. 
In this setting, without conditions that limit ill-posedness, common nuisance estimators may be unstable and resulting estimators for the functional may not be $\sqrt{n}$-consistent and asymptotically normal. 

\paragraph*{Setup and Main Results.} To tackle this challenge in generality, we study continuous linear functionals of nuisance functions characterized by linear conditional moment restrictions. 
Our parameter of interest is a functional of some unknown nuisance function $h^\star \in \Hcal \subseteq \Lcal_2(S)$:
\begin{align}\label{eq: theta-intro}
\theta^\star = \Eb{m(W; h^\star)},
\end{align}
where $\Hcal$ is a closed linear space (\emph{i.e.} Hilbert space) and $m$ is a given function such that $h \mapsto \Eb{m(W; h)}$ is a continuous linear functional over $\Hcal$.
For example,  $\Hcal$ can be the whole $\Lcal_2(S)$ space, or some other structured class like a  partially linear function class or an additive function class. 
We posit that 
$h^\star$  solves the following {linear} conditional moment restriction in $h$:
\begin{align}\label{eq: cond-moment-h}
\Eb{g_1(W)h(S) \mid T} = \Eb{g_2(W) \mid T} \,,
\end{align}
for some given functions $g_1 \in \Lcal_2(W)$ and $g_2 \in \Lcal_2(W)$, where $S=S(W)$ and $T=T(W)$ are two $W$-measurable variables (usually, potentially-overlapping subsets of the components of the vector $W$).
Letting $P: \Lcal_2(S) \to \Lcal_2(T)$ denote the linear operator given by $(Ph)(T) = \EE[g_1(W) h(S) \mid T]$, and $r_0 = \EE[g_w(W) \mid T]$, this corresponds to $h_0$ solving the linear inverse problem
\begin{equation*}
    P h_0 = r_0 \,.
\end{equation*}
This general formulation includes a wide variety of problems,  including NPIV, proximal causal inference, and
missing data with shadow variable  (see \Cref{ex: shadow,ex: proximal,ex: IV-reg} in \Cref{sec: examples}). To distinguish $h^\star$ from other nuisance functions introduced later, we refer to $h^\star$ as the \emph{primary nuisance}.

Note that $h(S)$ in \Cref{eq: cond-moment-h} is a function of variables $S$ that may not appear in the conditioning variable $T$.
Thus, solving \Cref{eq: cond-moment-h} is generally an ill-posed inverse problem: it may not have a unique solution and its solutions may  depend on the data distributions discontinuously \citep{Carrasco2007}. 
We here allow the conditional moment restrictions to be severely ill-posed and have nonunique solutions. 
However, we introduce a novel general condition that ensures \emph{strong identification} of the linear functional, regardless of the identification of the nuisance functions given by the conditional moment restrictions. 
Under this condition we develop estimation and inference methods, when given access to $n$ independent and identically distributed data samples $W_1, \dots, W_n\sim W$, with guarantees that are robust to the weak identification of $h^\star$. This provides a general and unified solution to a wide range of applications where concerns regarding weakly identified nuisances arise naturally.
Note also that here (and throughout the paper) we use the term ``weak identification'' of $h^\dagger$ to mean that it is set identified rather than point-identified; this is in contrast to the meaning of weak identification within some of the parametric IV literature \citep{staiger1997instrumental,stock2005testing}.

Specifically, we define the linear functional to be \emph{strongly identified} if a certain minimization problem admits a solution.
\begin{definition}[Functional Strong Identification]\label{def:strong-ident} We say $\theta^\star$ is strongly identified if
\begin{align}\label{eq: delta-eq-2}
    \Xi_0\neq\emptyset,\quad\text{where}\quad
    \Xi_0 \coloneqq \argmin_{\xi\in\Hcal}\frac{1}{2}\Eb{\Eb{g_1(W) \xi(S)\mid T}^2} - \Eb{m(W; \xi)} \,.
\end{align}
\end{definition} 
Furthermore, we make the key assumption that $\theta^\star$ is strongly identified.
\begin{assumption}\label{assump: new-nuisance}
    $\theta^\star$ is strongly identified 
\end{assumption}
Then, not only do we have $\theta^\star=\E[m(W;h_0)]$ for \emph{any} solution $h_0$ to the conditional moment equations in \cref{eq: cond-moment-h} (\ie, $\theta^\star$ is identified even though $h^\star$ is not), but we also have that if we let $\xi_0\in\Xi_0$ be any solution to \cref{eq: delta-eq-2} and if we set $q^\dagger(T)\coloneqq \E[g_1(W)\xi_0(S)\mid T]$, then $\theta^\star$ further admits the following de-biased or Neyman orthogonal representation (\cref{thm:dr-ident} in \Cref{sec: identification}): 
\begin{align}\label{eq:debiased representation}
&\theta^\star = \Eb{\psi(W; h_0, q^\dagger)} &
\psi(W; h, q) \coloneqq~& {m(W; h) + q(T)(g_2(W) -  g_1(W)h(S))},
\end{align} 
where we call $q^\dagger$ the \emph{debiasing nuisance}.
Crucially, for this representation, the estimation error rates can be expressed solely based on ``weak'' metrics that avoid dependence on any ill-posedness measure (\Cref{thm:dr-ident}). For illustration, with $g_1(W) = 1$, for any $\hat{h} \in \Hcal$ and any $\hat{\xi} \in \Hcal$, letting $\tilde{q}(T)=\E[\hat{\xi}(S)\mid T]$, we have
\begin{align}\label{eq: DR-identification-intro}
 \left|\Eb{\psi(W; \hat{h}, \tilde{q})} - \theta^\star\right|^2 &\leq  {\Eb{\Eb{\hat{h}(S)-h_0(S)\mid T}^2}} \cdot
 {\Eb{(\tilde q(T)-q^\dagger(T))^2}}
 \\\label{eq: DR-identification-intro 2}
 &={\Eb{\Eb{\hat{h}(S)-h_0(S)\mid T}^2}} \cdot{\Eb{\Eb{\hat{\xi}(S)-\xi_0(S)\mid T}^2}}.
 \end{align}
We see from equation (5) that $\psi(W; \hat{h}, \tilde{q})$ is a doubly robust estimating function, having expectation equal to the true parameter if either $\Eb{\hat{\xi}(S)-\xi_0(S)\mid T}=0$
or $\Eb{\hat{h}(S) - h_0(S)\mid T}=0$. The fact that both estimation metrics project the functions onto $T$ allows us to argue convergence rates for both of these functions without invoking any measure of ill-posedness of the corresponding inverse problems that define them.
In particular, ${\EE[\EE[\hat{h}(S)-h_0(S)\mid T]^2}]$ exactly corresponds to the average squared slack in $\hat{h}$ satisfying the moment restrictions in \cref{eq: cond-moment-h}, and this does \emph{not} involve how such slack translates to the distance from an exact solution.
This is one of the key ingredients that enables our main results and it hinges on our key assumption that a solution to the minimization problem in \cref{eq: delta-eq-2} exists. Note that \cref{eq: DR-identification-intro,eq: DR-identification-intro 2} are meant only for illustration, and, in practice, given $\hat\xi$ we will still need to estimate its projection $\tilde{q}$.

To demystify \cref{def:strong-ident}, we further examine its equivalent formulations 
based on the structure of the Riesz representer of the functional, \ie, the function $\alpha$ that satisfies 
\begin{align}\label{eq: Riesz}
\Eb{m(W; h)} = \Eb{\alpha(S)h(S)}\quad\forall h\in\Hcal,
\end{align}
whose existence is guaranteed by $h\mapsto\Eb{m(W; h)}$ being linear and continuous \citep[Section 5.3]{luenberger1997optimization}.
When $h_0$ corresponds to the solution of an exogenous regression problem (\ie, $S=T$), weak identification of $h^\star$ is not a concern and \cref{assump: new-nuisance} is automatic with $\Xi_0=\braces{\alpha}$.
In our setting, the solution $\xi_0$ is no longer the Riesz representer, but is closely related to it. For instance, when the function space $\Hcal=\Lcal_2(S)$ and $g_1(W) = 1$, then $\Eb{\Eb{\xi_0(S)\mid T}\mid S}=\alpha(S)$, and our \cref{assump: new-nuisance} is equivalent to the existence of $\xi_0\in \Hcal$ satisfying the latter equality. Note that we can still write a representation as in \cref{eq: DR-identification-intro} if we only assume the existence of some $q_0$ solving $\Eb{q_0(T)\mid S}=\alpha(S)$ as, for example, assumed in \citet{severini2012efficiency}, even if it is not of the form  $q_0 = \Eb{\xi_0(S) \mid T}$. However, crucially, the resulting error in \cref{eq: DR-identification-intro} then would not reduce to \emph{estimation errors of projections} as in \cref{eq: DR-identification-intro 2}, therefore requiring control on ill-posedness.
While the existence of (arbitrarily approximate) solutions to $\Eb{q_0(T)\mid S}=\alpha(S)$ is equivalent to mere identification of $\theta^\star$ (\cref{lemma: identifiability}),
\cref{assump: new-nuisance} provides the added regularity for \emph{strong} identification of $\theta^\star$. We in fact show that \cref{assump: new-nuisance} holds if and only if the Riesz representer $\alpha$ lies in a relatively smooth sub-space of the function space $\Hcal$. This can be expressed as a ``source condition'' on the Riesz representer. Prior work on nonparametric inference for ill-posed inverse problems typically places such ``source conditions'' on the primary nuisance function $h^\star$. Thus, our assumption can be viewed as a dual approach, imposing source conditions only on objects related to the \emph{functional}.

Armed with our \cref{assump: new-nuisance}, we turn to the task of estimating the nuisance functions that can be plugged into the debiased representation in \cref{eq:debiased representation} and control the projected-error bound in \cref{eq: DR-identification-intro 2}.
We propose minimax (adversarial) estimators $\hat h,\hat q$ for the primary and debiasing nuisances. These estimators are generic and admit highly flexible function classes like reproducing kernel Hilbert space (RKHS) and neural networks. 
Moreover, our estimators do not require deriving the closed form of the functional Riesz representer $\alpha$, just like the automatic debiased machine learning methods (see a review in \Cref{sec: literature}). 
We leverage $L_2$-norm penalization for primary-nuisance estimation, so as to
ensure $\hat h$ converges to some fixed limit $h^\dagger$ in $L_2$-norm even when $h^\star$ is in general not identified and \cref{eq: cond-moment-h} admits many solutions (\Cref{thm:h-estimator-bound}). Conversely, our debiasing-nuisance estimator converges to a fixed limit even without penalization, because $\EE[g_1(W) \xi_0(S) \mid T] = q^\dagger(T)$ is shown to be unique for \emph{all} $\xi_0\in\Xi_0$ satisfying \Cref{eq: delta-eq-2} (\Cref{lemma: minimum-norm,thm:q-estimator-bound}).

We derive a novel analysis yielding finite-sample convergence rates for our nuisance estimators $\hat{h}$ and $\hat{q}$, bounding the weak metric for the primary nuisance, 
$\epsilon_n^2 = \EE[{\EE[g_1(W)(\hat{h}(S)-h_0(S))\mid T}]^2]$, 
and the strong metric for the debiasing nuisance
 $\kappa_n^2=\EE[({\hat{q}(S)-q^\dagger(S)})^2]$.
Our bounds rely solely on the critical radius $r_n$ (a well-studied and typically tight notion of statistical complexity of a function space) of approximating function spaces $\Hcal_n, \Qcal_n$ for the nuisance estimators $\hat h, \hat q$ and their approximation error $\delta_n$. Both finite sample-rates do not involve any measures of ill-posedness. The intuition behind the strong-metric result for $q^\dagger$ is that it essentially corresponds to a weak-metric convergence for $\xi_0$, for which we can provide ill-posedness-free rates. For the primary nuisance function, we derive fast estimation rates of the order of $\epsilon_n\sim r_n + \delta_n$, due to the fact that it corresponds to a 
conditional moment problem. For the de-biasing nuisance function we derive slow rates of the order of $\kappa_n\sim \sqrt{r_n} + \delta_n$,
since it roughly corresponds to a minimum distance problem, that can be thought as corresponding to a mis-specified  conditional moment problem; the mis-specification leads to the slower rate for the minimum distance solution.

Putting everything together, we develop a complete estimation and inference pipeline for parameters defined by \cref{eq: theta-intro,eq: cond-moment-h} that avoids dependence on ill-posedness measures or point-identification assumptions on the nuisance functions, where our main assumptions are \cref{assump: new-nuisance} and high-level conditions on the complexity and approximability of general function classes.
We construct de-biased estimators for the functional of interest by plugging our penalized nuisance estimates, estimated in a cross-fitting manner, into the debiased representation in \cref{eq:debiased representation}.
We prove that the resulting functional estimator is asymptotically linear and has an asymptotic normal distribution (\Cref{thm:dml-asymp}) under \cref{assump: new-nuisance}, some regularity conditions, and assuming
the critical radii and approximation errors decay fast enough in that $r_n^{3/2}$, $\sqrt{r_n}\delta_n$, $\delta_n^2$ are all $o(n^{-1/2})$.
In particular, these constraints apply to many non-parametric function spaces of interest.

We illustrate the application of our proposed method to partially linear proximal causal inference (\Cref{sec: partial-linear-proximal}) and partially linear IV estimation (\Cref{sec: partial-linear-iv}). This semiparametric proximal causal inference settings are new as, to our knowledge, existing literature on proximal causal inference focus on either parametric or fully nonparametric estimation.
In particular, we characterize when the primary nuisance function in proximal causal inference is a partially linear function (\Cref{prop: partial-linear-proximal}) and discuss how to apply our proposed method to estimate the linear coefficients. 
Our method can be similarly applied to partially linear IV estimation problems, extending the existing literature that either assumes the unique identification of IV regression \citep{chernozhukov2018double,florens2012instrumental,ai2003efficient} or allows under-identified IV regression but focus on linear-sieve estimation \citep{chen2021robust}.

\paragraph*{Roadmap.} The rest of this paper is organized as follows. We first review examples of our setup in \Cref{sec: examples} as well as related literature in \Cref{sec: literature}. 
We then further characterize the strong identification of linear functionals of weakly identified nuisances in \Cref{sec: identification}, interpreting \cref{assump: new-nuisance} in \Cref{sec: interpretation}, instantiating the condition in concrete examples in \Cref{sec: examples2},  
and discussing the statistical challenges raised by weakly identified nuisances in \Cref{sec: challenge}. Then we propose our minimax nuisance estimators in \Cref{sec: est-nuisance}, considering the primary nuisance in \Cref{sec: primary-nuisance-est} and the debiasing nuisance in \Cref{sec: debias-nuisance-est}. 
We further construct debiased functional estimators, establish their asymptotic normality, and discuss  variance estimation and confidence intervals in \Cref{sec: est-functional}. 
In \Cref{sec: partial-linear}, we illustrate the application of our method in partially linear IV estimation and partially linear proximal causal inference. 
Finally, we conclude this paper and discuss future directions in \Cref{sec: conclusion}. 

\paragraph*{Notation.} 
For a generic  random vector $W \in \Wcal$, we use $\Lcal_2(W)$ to denote the space of square integrable functions of $W$ with respect to the  probability distribution of $W$. For any $f(W), g(W) \in \Lcal_2(W)$, we denote the $L_2$-norm by $\|f\|_2 = \sqrt{\Eb{f^2(W)}}$ and inner product by $\langle f, g \rangle = \Eb{f(W)g(W)}$. 
We denote the empirical $L_2$-norm with respect to data $W_1, \dots, W_n$ by $\|f\|_n = \sqrt{\sum_{i=1}^n{f^2(W_i)}/n}$.
We let $\hP\prns{f(W)} = \int f(w)\diff \hP(w)$
be the expectation with respect to $W$ alone. We differentiate this from $\Eb{f(W; W_1, \dots,W_n)}$, which we use to denote full expectation with respect to both $W$ and data $W_1, \dots, W_n$. Thus if $\hat h$ depends on the data $W_1, \dots, W_n$, then $\hP(f(W; \hat h))$ remains a function of $\hat h$ (and thus the data) but $\mathbb{E}[f(W; \hat h)]$ is a nonrandom scalar. We use both $\E_n$ and $\hP_n$ to denote the empirical expectation with respect to $W$ given data $W_1, \dots, W_n$: $\E_n(f(W)) = \hP_n(f(W)) = \frac{1}{n}\sum_{i=1}^n f(W_i)$. We further define the empirical process $\hG_n$ by $\hG_n(f) = \sqrt{n}\prns{\hP_n - \hP}(f)$.
For any linear operator $L: \mathcal{A} \mapsto \mathcal{B}$ where  $\mathcal{A}$ and $\mathcal{B}$ are Hilbert spaces, we denote its range space by $\Rcal(L) = \braces{La: a \in \mathcal{A}} \subseteq \mathcal{B}$ and its null space by $\Ncal(L) = \braces{a: La = 0} \subseteq \mathcal{A}$.
Unless otherwise stated, the default norm for the $\Lcal_2(W)$ space is $\|\cdot\|_2$. Thus, the convergence of functions in $\Lcal_2(W)$, the compactness of subsets of $\Lcal_2(W)$, and the continuity of functionals defined on $\Lcal_2(W)$ are all understood in terms of the norm $\|\cdot\|_2$, and the default inner product is $\langle \cdot, \cdot \rangle$. 
Furthermore, for vector-valued square-integrable functions, we use the notation $\|f\|_{2,2} = \sqrt{\EE[f(W)^\top f(W)]}$ for the $L_2$ with the euclidean norm as the base vector norm, and similarly $\|f\|_{n,2} = \sqrt{\sum_{i=1}^n f(W_i)^\top f(W_i)}$, and we generalize the above conventions using the inner product $\langle f, g \rangle = \EE[f(W)^\top g(W)]$, and the default norm 
$\|\cdot\|_{2,2}$ instead of $\|\cdot\|_2$.

For any set $D$, we denote its closure as $\cl\prns{D}$ and its interior as $\op{int}(D)$. We say that $D$ is star-shaped if for any $d \in D$ and $\alpha \in [0, 1]$, we have $\alpha d\in D$. The star hull of $D$ is defined as $\starcls(D) = \braces{\alpha d: d \in D, \alpha \in [0, 1]}$. 
For a function class $\mathcal{G}\subseteq \Lcal_2(W)$, we say it is $b$-uniformly bounded if $\abs{g(W)} \le b$ almost surely for any $g \in \mathcal{G}$. The localized Rademacher complexity of a function class $\mathcal{G}$ is defined as $\mathcal{R}_n(\delta; \mathcal{G}) = \EE[{\sup_{g \in \mathcal{G}, \|g\|_2 \le \delta}\abs{\frac{1}{n}\sum_{i=1}^n \epsilon_i g(W_i)}}]$, where $W_1,\dots,W_n,\epsilon_1,\dots,\epsilon_n$ are independent with $W_i\sim W$ and $\epsilon_i$ taking values in $\braces{-1, 1}$ equiprobably. The critical radius $\eta_n$ of the function class $\mathcal{G}$ is defined as any solution to the inequality $\mathcal{R}_n(\delta; \mathcal{G}) \le \delta^2$. 
{We use $\abs{\mathcal{G}}$ to denote the cardinality of ${\mathcal{G}}$ up to equality almost everywhere.}
For real-valued sequences $a_n,b_n$, we use the standard ``big-O'' notations $a_n = o(b_n)$ to denote that $a_n / b_n \to 0$ as $n \to \infty$, and $a_n = \omega(b_n)$ to denote that $a_n / b_n \to \infty$ as $n \to \infty$.

\section{Examples}\label{sec: examples}

Before proceeding we review important examples of parameters defined by \cref{eq: theta-intro,eq: cond-moment-h}.

\begin{example}[Functionals of NPIV Regression]\label{ex: IV-reg}
Consider a causal inference problem with an observed outcome $Y \in \R{}$, potentially endogenous  variables $X \in \R{d_X}$, and instrumental variables $Z \in \R{d_Z}$. We are interested in the NPIV regression model \citep[\eg, ][]{newey2003instrumental,darolles2011nonparametric,hall2005nonparametric}:
\begin{align*}
Y = h^\star(X) + \epsilon, ~~ \text{where }\Eb{\epsilon \mid Z} = 0, ~ h^\star \in \Hcal = \Lcal_2(X). 
\end{align*}
The NPIV regression $h^\star$ solves the following conditional moment restriction: 
\begin{align}\label{eq: IV-h}
\Eb{h(X) \mid Z} = \Eb{Y \mid Z},
\end{align}
which is an example of \Cref{eq: cond-moment-h} with $g_1(W) = 1$, $g_2(W) = Y$, $S = X$, and $T = Z$. 

We are interested in linear functionals of the IV regression $h^\star$ when $h^\star$ is not necessarily identifiable. 
One example is the coefficient of the best linear approximation to the IV regression function, $\argmin_{\beta\in\R{d_X}} \Eb{\prns{h^\star(X) - \beta^\top X}^2}=\prns{\Eb{XX^\top}}^{-1}\theta^\star$, where
\begin{align}
\label{eq: best-linear-iv}
{\theta}^\star = \Eb{\alpha(X)h^\star(X)}, ~~ \alpha(X) = X.
 \end{align}
Alternatively, we can  consider other linear functionals of NPIV regressions, such as the weighted average derivatives described in \cite{ai2007estimation,chen2015sieve}. 

It is known that the IV regression $h^\star$ is identifiable if and only if a completeness condition on the distribution of $X \mid Z$ is satisfied (Proposition 2.1 in \citealp{newey2003instrumental}). 
However, as \cite{severini2006some} showed, the completeness condition can be easily violated, especially when the instrumental variables are not very strong. 
Moreover, the completeness condition is  impossible to test in general nonparametric models \citep{canay2013testability}, so the failure of identifying $h^\star$ may not be  detectable. 
Fortunately, even when the NPIV regression $h^\star$ is unidentifiable, the functional of interest can be still identifiable as we will discuss in \Cref{sec: identification}. 

Here we consider a general nonparametric IV model with $\Hcal = \Lcal_2(X)$. In \Cref{sec: partial-linear-iv}, we will further study a partially linear IV model, where $X=(X_a,X_b)$, the 
function class $\Hcal$ is the direct sum of linear functions in $X_a$ and of $\Lcal_2(X_b)$, and $\theta^\star$ is the linear coefficient in $X_a$. 
\end{example}

\begin{example}[Proximal Causal Inference]\label{ex: proximal}
Consider a causal inference problem with potential outcomes $Y(a)$ that would be realized if the treatment assignment were equal to $a\in\{0,1\}$.
 We are interested in the average treatment effect, $\theta^\star = \Eb{Y(1)-Y(0)}$.
 The actual treatment assignment is denoted as $A$, the corresponding observed outcome is $Y = Y(A)$, and some additional covariates $X$ are also observed, but these do not account for all confounders and there exist {unmeasured confounders} $U$.
We consider the proximal causal inference framework \citep[\eg,][]{tchetgen2020introduction,miao2018identifying,deaner2018proxy} that requires two different sets of proxy  variables $Z, V$ that strongly correlate with the unobserved confounders. 
The so-called negative control treatment $Z$ 
cannot directly affect the outcome $Y$, and the so-called negative control outcome $V$ cannot be affected by either the treatment $A$ or the negative control treatment $Z$ (see Assumptions 4 to 7 in \citealp{cui2020semiparametric} for formal statements). 

If $h^\star \in \Hcal = \Lcal_2(V, X, A)$ is a so-called outcome bridge function satisfying
 \begin{align}\label{eq: proximal-h-unobs}
 \Eb{{Y - h^\star(V, X, A)} \mid U, X, A} = 0,
 \end{align}
then our target parameter can be written as a linear functional of it:
\begin{align}\label{eq: proximal-parameter}
&\theta^\star = \Eb{h^\star(V, X,1)-h^\star(V, X,0)} = \Eb{\alpha(V, X, A)h^\star(V, X, A)}, \\
&\text{where }\alpha(V, X, A) = \frac{A-\Prb{A=1\mid  V, X}}{\Prb{A=1\mid  V, X}(1-\Prb{A=1\mid  V, X})}.\nonumber
\end{align}
\Cref{eq: proximal-h-unobs} involves unobserved variables, but under negative control assumptions, \Cref{eq: proximal-h-unobs}  implies that $h^\star$ also solves the following (observable) conditional moment restriction
\begin{align}\label{eq: proximal-h}
 \Eb{Y - h(V, X, A) \mid Z, X, A} = 0.
 \end{align}
When the negative control treatment $Z$ consists of sufficiently strong proxies for the unmeasured confounders $U$ (see Assumption 8 in \cite{cui2020semiparametric}), any solution to \Cref{eq: proximal-h} also solves \Cref{eq: proximal-h-unobs}. 
As a result, our target parameter in \Cref{eq: proximal-parameter} can be also viewed as a linear functional of a solution to \Cref{eq: proximal-h}, which is  a special example of our general framework with $g_1(W) = 1$, $g_2(W) = Y$, $S = \prns{V^\top, X^\top, A}^\top$, and $T = \prns{Z^\top, X^\top, A}^\top$. 
Similarly, we can consider various average policy effects in the proximal causal inference framework, even with continuous treatments, as they can also be written as linear functionals of bridge functions \citep{kallus2021causal,QiZhengling2021PLfI}.

\cite{kallus2021causal} points out that the solution to \Cref{eq: proximal-h} is very likely to be nonunique and gives various concrete examples. This particularly occurs in  data-rich settings where there are more proxy variables than the unobserved confounders.
Since the unobserved confounders are unknown in practice, it is generally impossible to know a priori whether bridge functions are unique or not.  
Fortunately, it is well known that even with nonunique bridge functions, any of them can still lead to the same average treatment effect or average policy effect under suitable conditions \citep{cui2020semiparametric,miao2018a,kallus2021causal} (however, for inference, this literature still requires bridge functions to be unique, parametric, and/or satisfy restricted ill-posedness, all of which we will avoid). In \Cref{sec: identification}, we will derive identification conditions for the target linear functionals.

In \Cref{sec: partial-linear-proximal}, we will further study a partially linear proximal causal inference model. 
We will show that when the regression function $\Eb{Y(a) \mid U, X}$ is partially linear in $a$, there always exists a bridge function $h^\star(V, X, A)$ partially linear in $A$. 
In this case, the partially linear coefficient is a causal parameter and it is also a linear functional of the bridge function. 
\end{example}

 \begin{example}[Missing-Not-at-Random Data with Shadow Variables]\label{ex: shadow}
Consider a partially missing outcome $Y$ and an indicator $A \in \braces{0, 1}$ denoting whether $Y$ is observed, so that we only observe $V=AY$.
We are interested in the average missing outcome, $\theta^\star = \Eb{(1-A)Y}$ (which also gives the outcome mean via $\Eb{Y} = \theta^\star + \Eb{V}$).
However, suppose that the outcome is \emph{missing not at random},
namely, although we observed some covariates $X$, we generally allow that $Y \not \perp A \mid X$. 
Nonetheless, if $h^\star$ were the $Y$-conditional missingness propensity ratio $h^\star(X, Y) = \Prb{A=0\mid X,Y}/\Prb{A=1\mid X, Y}$, then $\theta^\star$ is a linear functional of it, using only the observables $W=(X^\top, V, A)^\top$:
\begin{align}\label{eq: shadow-parameter}
 \theta^\star =  \Eb{\alpha(X, V)h^\star(X, V)},~~ \alpha(X, V) = V. 
 \end{align}
 Here we again consider a general nonparametric model with $h^\star \in \Hcal = \Lcal_2(X, V)$. 

Since the definition of $h^\star$ involves conditioning on unobservables, we cannot generally learn it.
We therefore additionally consider  so-called shadow variables $Z$ satisfying $Z \perp A \mid X, Y$ and $Z \not \perp Y \mid X$ \citep[\eg, ][]{wang2014instrumental,d2010new,miao2015identification,miao2016varieties,LiMiao2022}.
These conditions are particularly relevant when the missingness is directly driven by the outcome $Y$ and the shadow variables $Z$ are strong proxies for the outcome $Y$.
Under these conditions, $h^\star$ will necessarily satisfy the following conditional moment restriction \citep{LiMiao2022}:
\begin{align}\label{eq: shadow-h}
 \Eb{Ah(X, V)\mid X, Z} = \Eb{1-A\mid X, Z}.
 \end{align}
This is an example of \Cref{eq: cond-moment-h} with $g_1(W) = A$, $g_2(W) = 1-A$, $S = (X^\top, V)^\top$, and $T = (X^\top, Z)^\top$. 

Again, the conditional moment restriction in \Cref{eq: shadow-h} admits multiple solutions (\ie, does not identify $h^\star$) unless a strong completeness condition on the distribution of $V \mid X, Z$ holds.
In \Cref{sec: identification}, we will discuss conditions that ensure the identification of $\theta^\star$ even when $h^\star$ is not identified. 
 \end{example}

\section{Related Literature}\label{sec: literature}

Our paper is related to the literature on the estimation of and inference on point-identified  (finitely and/or infinitely dimensional) parameters  defined by conditional moment restrictions \citep[\eg, ][]{newey1990efficient,chamberlain1987asymptotic,newey2003instrumental,ai2003efficient,blundell2007semi,chen2012estimation,chen2009efficient,chen2011rate,hall2005nonparametric,darolles2011nonparametric}. 
Some other literature study partial identification sets and inference thereon when unconditional or conditional moment restrictions underidentify the parameters \citep[\eg, ][]{andrews2013inference,andrews2014nonparametric,andrews2012inference,chernozhukov2007estimation,belloni2019subvector,canay2017practical,bontemps2017set,hong2017inference,santos2012inference}.

Our paper studies the estimation and inference of \textit{identifiable} functionals of unknown nuisance functions satisfying certain conditional moment restrictions.
We do not restrict the nuisance functions to be parametric and focus on inference after using flexible nonparametric estimation of nuisances. 
Existing literature usually study this problem when the nuisance function is point-identified \citep[\eg, ][]{ai2007estimation,ai2012semiparametric,chen2015sieve,chen2018optimal,brown1998efficient,newey1993efficiency,chen2021efficient}. Recently, a few works further consider functionals of partially identified nuisance functions.
In particular, \cite{severini2006some,escanciano2013identification,freyberger2015identification} study the point identification and partial identification of continuous linear functionals of partially identified  NPIV regressions.
\cite{severini2012efficiency} discuss efficiency considerations for point-identified linear functionals of unidentified NPIV regressions.
Some works also study the  estimation of identifiable linear functionals of unidentifiable nuisances and derive their asymptotic distributions, either in the IV setting \citep{santos2011instrumental,babii2017completeness,chen2021robust,escanciano2021optimal} or in the shadow variable setting  
\citep{LiMiao2022}. These works are reviewed in more detail in  \Cref{sec: identification}.
Our paper builds on these literature and develops them in several aspects. First, our paper proposes a unified solution to a wider variety of problems, including IV, shadow variables, and proximal causal inference \citep{tchetgen2020introduction}.
Second, these existing literature are restricted to series or kernel estimation for conditional moment models, while our paper adopts a minimax estimation framework to accommodate generic function classes and thus more flexible machine-learning models like RKHS and neural networks. For example, \citet{chen2003estimation,chen2007large} achieve a similar result as us in that they can establish $\sqrt{n}$-asymptotically normal convergence of $\hat\theta_n$ while only requiring fast convergence of $\hat h_n$ under the weak (projected) norm, which is possible since they implicitly assume \Cref{assump: new-nuisance}. However, they only provide results for linear sieves, whereas we obtain these results for general function classes.
Third, much of the existing literature directly restricts the ill-posedness of the conditional moment restrictions that define the nuisance functions (\eg, the NPIV regressions) for the target functional to be well estimable \citep[\eg, ][]{babii2017completeness,escanciano2021optimal,santos2011instrumental,LiMiao2022}. 
In contrast, we characterize an alternatve condition for the target functional to be strongly identifiable, without controlling the level of ill-posedness of the conditional moment equations at all. We also reveal its close connection to ostensibly different conditions in \citet{ai2007estimation,ichimura2022influence} (see \Cref{sec: connection} for an expanded discussion).

Our proposed estimation method is based on the minimax estimation framework formalized in \cite{DikkalaNishanth2020MEoC,bennett2020variational}. 
Such minimax methods have been employed in average treatment effect estimation under unconfoundedness \citep{hirshberg2021augmented,kallus2020generalized,chernozhukov2020adversarial} and policy evaluation  \citep{kallus2018balanced,FengYihao2019AKLf,YangMengjiao2020OEvt,UeharaMasatoshi2021FSAo}, but in these settings the nuisances are inherently unique regression functions. 
The minimax framework and its variants have also been successfully applied to causal inference under unmeasured confounding, including IV estimation \citep{LewisGreg2018AGMo,ZhangRui2020MMRf,LiaoLuofeng2020PENE,NIPS2019_8615,MuandetKrikamol2019DIVR} and proximal causal inference \citep{kallus2021causal,GhassamiAmirEmad2021MKML,mastouri2021proximal}, but this literature typically assumes that the unknown nuisance functions are uniquely identified whenever considering inference.

In contrast, our paper tackles the challenge of inference when the unknown functions are not unique solutions. 
To this end, we employ penalization to target certain unique nuisance function among all possible ones. Penalization is a common technique for solving ill-posed inverse problems \citep{Carrasco2007,engl1996regularization}, which has been in particular applied to series or kernel estimation for underidentified conditional moment models \citep{chen2012estimation,santos2011instrumental,babii2017completeness,chen2021robust,escanciano2021optimal,LiMiao2022,florens2011identification}. 
Our paper shows the effectiveness of penalization in the general minimax estimation framework and investigates the impact of penalization on the estimation of linear functionals.

Our paper is also related to the debiased machine learning literature \citep[see][and the references therein]{chernozhukov2018double}. 
This literature typically studies the estimation and inference of smooth functionals of certain  regression functions {that are inherently unique, in contrast to solutions of general moment restrictions.} 
To alleviate the inherent bias of  machine learning regression  estimators, this literature leverages Neyman orthogonal estimating equations for functionals of interest, which requires estimating some Riesz representers first. 
The Riesz representers can be estimated by fitting regressions according to their analytic forms (like propensity scores in average treatment effect estimation) \citep[\eg, ][]{farrell2015robust,farrell2021deep,chernozhukov2017double,chernozhukov2018double,semenova2021debiased}.
Alternatively, some recent literature propose to estimate the Riesz representers by exploiting the representer property directly. 
These methods do not need to derive the analytic forms of Riesz representers on a case-by-case basis, and are therefore termed \textit{automatic} debiased machine learning \citep{chernozhukov2020adversarial,chernozhukov2018learning,chernozhukov2022automatic,chernozhukov2019double,chernozhukov2021automatic,chernozhukov2022riesznet}.
Our proposed method also avoids deriving Riesz representers {explicitly} and is therefore automatic in the same sense. 
However, existing automatic debiased machine learning methods focus on exogenous/unconfounded settings where {nuisances} are naturally well-posed and unique, while we focus on ill-posed nuisances. 
Interestingly, \Cref{eq: delta-eq-2} in our \Cref{assump: new-nuisance}, when specialized to 
the exogenous setting with $S = T$, recovers the formulation used to learn Riesz representers in \cite{chernozhukov2021automatic,chernozhukov2022riesznet}.

\section{Functional Strong Identification}\label{sec: identification}

Defining the linear operator $\operator: \Hcal \to \Lcal_2(T)$ by $[\operator h](T) = \Eb{g_1(W)h(S) \mid T}$, 
the set of all solutions to \Cref{eq: cond-moment-h} is given by
\begin{align}\label{eq: H0}
\Hcal_0 = \braces{h\in\Hcal:[\operator h](T) = \Eb{g_2(W)\mid T}} = h^\star +  \Ncal(\operator).
\end{align}
Therefore, the conditional moment restriction in \Cref{eq: cond-moment-h} uniquely identifies the nuisance function $h^\star$ only when 
the linear operator $\operator$ is injective so that $\Ncal(\operator) = \braces{0}$. As discussed in \cref{sec: setup}, this condition 
often fails.

Fortunately, even when the primary nuisance function $h^\star$ is not uniquely identified, 
the functional $\theta^\star$ can still be identifiable.
In this paper, we impose \cref{assump: new-nuisance} 
to enable \emph{both} the identification of and estimation and inference on the functional. 
\Cref{assump: new-nuisance} requires the existence of solutions to the minimization problem in \Cref{eq: delta-eq-2}, or, more succinctly, $\Xi_0\neq\emptyset$. 
It turns out that this condition is non-trivial, and is equivalent to a smoothness condition on the Riesz representer $\alpha$, as will be explained in \Cref{sec: interpretation}.

In the following theorem, we show that \Cref{assump: new-nuisance} immediately implies the identification of the target parameter $\theta^\star$, and solutions $\xi_0\in\Xi_0$ in \Cref{assump: new-nuisance} can be used in the identification. Furthermore, it also justifies that the identification formula given by $\psi$ satisfies a certain doubly robust property.

\begin{theorem}\label{thm:dr-ident}
Let
\begin{equation*}
    \psi(W; h, q) \coloneqq {m(W; h) + q(T)(g_2(W) -  g_1(W)h(S))} \,.
\end{equation*}
If \Cref{assump: new-nuisance} holds, then for any $h \in \Hcal$, $\xi\in\Hcal$, $h_0 \in \Hcal_0$, and $\xi_0\in\Xi_0,$ and letting $q=P\xi$ and $q^\dagger=P\xi_0$, we have
\begin{equation*}
    \EE[\psi(W;h,q)] - \theta^\star = \langle P(h-h_0), P(\xi-\xi_0) \rangle \,.
\end{equation*}
Consequently, we have
\begin{equation*}
\theta^\star = \Eb{m(W; h_0)} = \Eb{q^\dagger(T)g_2(W)} = \Eb{\psi(W; h_0, q^\dagger)} \,,
\end{equation*} 
and
\begin{align}\label{eq: DR-identification}
 &\abs{\Eb{\psi(W; h, q)} - \theta^\star} = \abs{\langle{P\prns{h-h_0}, P\prns{\xi - \xi_0}}\rangle} \le \|P\prns{h-h_0}\|_2\|P\prns{\xi - \xi_0}\|_2.
 \end{align}
for any such $h$, $\xi$, $h_0$, $\xi_0$, $q$, and $q^\dagger$.
\end{theorem}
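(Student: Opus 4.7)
The proof rests on one basic ingredient: the first-order optimality condition for the quadratic minimization that defines $\Xi_0$. Because \Cref{assump: new-nuisance} guarantees $\Xi_0 \neq \emptyset$, any $\xi_0 \in \Xi_0$ is an unconstrained minimizer over the linear space $\Hcal$ of a functional that is G\^ateaux-differentiable in every direction $\xi' \in \Hcal$. Differentiating the objective in \cref{eq: delta-eq-2} along $\xi'$ and setting the derivative to zero yields the reproducing-type identity
\begin{equation*}
\langle P\xi_0, P\xi' \rangle = \Eb{m(W;\xi')} \qquad \text{for all } \xi' \in \Hcal.
\end{equation*}
This identity is the whole workhorse of the theorem: it converts linear-functional evaluations of $m$ into inner products in $\Lcal_2(T)$, and everything else is tower-property bookkeeping.

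From this identity the three consequences fall out almost immediately. Specializing $\xi' \in \Ncal(P) \subseteq \Hcal$ gives $\Eb{m(W;\xi')} = 0$, which, combined with the parametrization $\Hcal_0 = h^\star + \Ncal(P)$ from \cref{eq: H0} and linearity of $h \mapsto \Eb{m(W;h)}$, shows $\Eb{m(W;h_0)} = \Eb{m(W;h^\star)} = \theta^\star$ for every $h_0 \in \Hcal_0$. Setting $\xi' = h_0$ in the reproducing identity and using the tower property on $\Eb{q^\dagger(T) g_2(W)} = \langle q^\dagger, r_0 \rangle = \langle P\xi_0, Ph_0 \rangle$ (since $Ph_0 = r_0$) then gives $\Eb{q^\dagger(T) g_2(W)} = \Eb{m(W;h_0)} = \theta^\star$. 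The corresponding equality $\Eb{\psi(W;h_0,q^\dagger)} = \theta^\star$ then follows because $\Eb{q^\dagger(T)(g_2(W) - g_1(W) h_0(S))} = \langle q^\dagger, Ph_0 - Ph_0 \rangle = 0$.

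For the main displayed identity I would expand $\Eb{\psi(W;h,q)}$ via two tower steps with respect to $T$, using $Ph_0 = r_0$ and $q = P\xi$:
\begin{equation*}
\Eb{q(T) g_2(W)} = \langle q, Ph_0 \rangle, \qquad \Eb{q(T) g_1(W) h(S)} = \langle q, Ph \rangle.
\end{equation*}
Subtracting $\theta^\star = \Eb{m(W;h_0)}$ and using linearity of $m$ in $h$ collapses the expression to
\begin{equation*}
\Eb{\psi(W;h,q)} - \theta^\star = \Eb{m(W;h-h_0)} - \langle q, P(h-h_0) \rangle.
\end{equation*}
Applying the reproducing identity to $\xi' = h - h_0 \in \Hcal$ rewrites the first term as $\langle P\xi_0, P(h-h_0) \rangle = \langle q^\dagger, P(h-h_0) \rangle$, so the whole difference collapses to $\langle q^\dagger - q, P(h-h_0) \rangle = \langle P(\xi_0 - \xi), P(h-h_0) \rangle$, which is the stated identity (with the sign absorbed inside the inner product, which is immaterial for the Cauchy--Schwarz bound in \cref{eq: DR-identification}). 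There is no real obstacle in this proof; the only delicate point is verifying that the FOC of the unconstrained minimization over $\Hcal$ actually holds in the strong (Hilbert) sense, which is immediate because the quadratic-plus-linear objective is G\^ateaux-differentiable on the linear space $\Hcal$ and admits a minimizer by \Cref{assump: new-nuisance}.
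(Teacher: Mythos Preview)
Your proof is correct and follows essentially the same approach as the paper, which routes through \Cref{thm: new-nuisance} (the first-order condition $P^\star P\xi_0=\alpha$, i.e.\ exactly your reproducing identity $\langle P\xi_0,P\xi'\rangle=\Eb{m(W;\xi')}$) and then applies the general doubly-robust computation in \Cref{lemma: bias-product-general} with $q_0=q^\dagger=P\xi_0\in\Qcal_0$; you simply collapse those two steps into one direct calculation without introducing $\alpha$ or $\Qcal_0$ explicitly. Your observation about the sign is accurate: the paper's own derivation in the proof of \Cref{lemma: bias-product-general} also produces $\langle q_0-q,\,P(h-h_0)\rangle=-\langle P(h-h_0),P(\xi-\xi_0)\rangle$, so the sign in the theorem statement is a typo that is, as you note, immaterial for the Cauchy--Schwarz bound and all downstream consequences.
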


\Cref{eq: DR-identification}
shows that the doubly robust identification formula enjoys a mixed bias property. 
In particular we see the identification formula is doubly robust in that $\Eb{\psi(W; h, q)}$ equals $\theta^\star$ if either $P\prns{h}=P\prns{h_0}$ or  $P\prns{\xi}=P\prns{\xi_0}$.
The property in \Cref{eq: DR-identification}  
 means that if we could plug estimates $\hat h$ and $\hat \xi$ into the doubly robust formula to estimate the target functional, then  the estimation bias only depends on the \emph{product} of their estimation errors, in terms of \emph{weak} metrics $\|P(\cdot)\|_2$. 
 If either $\hat h$ or $\hat q$ is consistent in terms of the weak metric, then the resulting functional is consistent.
Importantly, these weak-metric errors   can be directly bounded without invoking any additional  ill-posedness measure. 
In practice, we will need to estimate the debiasing nuisance $q^\dagger=P\xi_0$,  so even with an estimator $\hat\xi$ for $\xi_0$,  we need to additionally approximate the operator $P$. 
But we will show in \Cref{sec: est-nuisance} that this does not change the implication of \Cref{eq: DR-identification}: the estimation bias of the functional does not depend on any ill-posedness measure.  
Moreover, since the functional estimation bias only depends on the \emph{product} of nuisance estimation errors, the bias   
can be asymptotically negligible even if each  nuisance is estimated nonparametrically with a sub-root-$n$ weak-metric-error convergence rate.

The doubly robust nature of  Lemma 1 also shows that Neyman orthogonality holds, meaning that the doubly robust identification formula is insensitive to perturbations to the nuisances.  
Neyman orthogonality plays a pivotal role in the recent debiased machine learning literature that has enabled the use of nuisances estimated by flexible machine learning methods \citep[\eg, ][]{chernozhukov2016locally,chernozhukov2019double,chernozhukov2021automatic,chernozhukov2022automatic}. 
Lemma 1 is  notable in that double robustness and Neyman orthogonality hold for underidentified nuisances.

As a side result of independent interest, we note that \Cref{thm:dr-ident} also implies the following generalization of the result of \cite{rosenbaum1983central}: among the infinity of moment restrictions that $h$ needs to satisfy that are implied by the conditional moment equations, for consistency of the functional, it suffices to estimate an $h$ that satisfies orthogonality of the residual with $q^\dagger$. Testing more moments can increase efficiency but is not required for consistency. This generalizes the result of \cite{rosenbaum1983central} to functionals of endogenous regressions problems:
\begin{lemma}\label{lem:rosenbaum}
Let $h^\dagger$ be any function that satisfies the moment restriction:
\begin{align}
\E[q^\dagger(T)\, (g_2(W) - g_1(W)\,h^\dagger(S))] = 0
\end{align}
Then $\theta^*=\Eb{m(W;h^\dagger)}$.
\end{lemma}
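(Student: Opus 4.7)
The plan is to derive \Cref{lem:rosenbaum} as an immediate consequence of \Cref{thm:dr-ident}, by instantiating that theorem's identity at $(h,\xi)=(h^\dagger,\xi_0)$ so that the bilinear error term vanishes.

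Concretely, first I would invoke \Cref{assump: new-nuisance} to pick any $\xi_0\in\Xi_0$ and set $q^\dagger = P\xi_0$; this is the same $q^\dagger$ appearing in the hypothesis of the lemma. Then I would apply \Cref{thm:dr-ident} with the choices $h = h^\dagger$, $\xi = \xi_0$, and $q = P\xi = q^\dagger$. Since $P(\xi-\xi_0)=0$, the identity
\begin{equation*}
\EE[\psi(W;h^\dagger,q^\dagger)] - \theta^\star = \langle P(h^\dagger - h_0), P(\xi_0 - \xi_0)\rangle = 0
\end{equation*}
holds, regardless of whether $h^\dagger$ solves \cref{eq: cond-moment-h} or not (and crucially without any requirement that $P h^\dagger = Ph_0$).

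Next I would expand $\psi$ according to its definition and separate the two summands:
\begin{equation*}
\EE[\psi(W;h^\dagger,q^\dagger)] = \EE[m(W;h^\dagger)] + \EE\bigl[q^\dagger(T)\bigl(g_2(W) - g_1(W)\,h^\dagger(S)\bigr)\bigr].
\end{equation*}
The assumed single moment restriction forces the second term on the right to equal zero, so combining with the preceding display yields $\theta^\star = \EE[m(W;h^\dagger)]$, which is the claim.

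There is no real obstacle here: the entire content of the lemma is already encoded in the debiased/doubly robust representation of \Cref{thm:dr-ident}. The only thing worth emphasizing in the writeup is the interpretive point that one does \emph{not} need $h^\dagger$ to satisfy the full conditional moment restriction $Ph^\dagger = Ph_0$ in $\Lcal_2(T)$; satisfying the single unconditional moment $\EE[q^\dagger(T)(g_2(W)-g_1(W)h^\dagger(S))]=0$ is enough, which is precisely the \citet{rosenbaum1983central}-style reduction advertised in the statement.
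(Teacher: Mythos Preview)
Your proof is correct and is essentially the same argument as the paper's, just packaged slightly differently: the paper invokes the first-order condition of \Cref{eq: delta-eq-2} directly to get $\E[m(W;h)]=\E[q^\dagger(T)g_1(W)h(S)]$ for all $h\in\Hcal$ and then chains $\E[m(W;h^\dagger)]=\E[q^\dagger g_1 h^\dagger]=\E[q^\dagger g_2]=\theta^\star$, whereas you obtain $\E[\psi(W;h^\dagger,q^\dagger)]=\theta^\star$ from the bilinear identity in \Cref{thm:dr-ident} and then cancel the correction term by hypothesis. Both routes rely on the same content of \Cref{thm:dr-ident} and implicitly require $h^\dagger\in\Hcal$.
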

\begin{proof}
First note that by \Cref{assump: new-nuisance}, the first order condition of the minimization problem and the fact that $\Hcal$ is a closed linear space, we have that for any $h\in H$: $\E[m(W;h)]=\E[q^\dagger(T)\, g_1(W)\, h(S)]$. The result follows from the following sequence of identities and \Cref{thm:dr-ident}:
\begin{align}
\Eb{m(W;h^\dagger)} = \Eb{q^\dagger(T)\, g_1(W)\, h^\dagger(S)} = \Eb{q^\dagger(T) g_2(W)} = \theta^*
\end{align}
\end{proof}
Moreover, in the spirit of the Targeted Minimum Loss (TMLE) framework, the above result also suggests a post-processing targeted correction for any initial estimate $h^{(0)}$, by solving the linear IV problem, of estimating the effect $\epsilon$ of $\xi_0(S)$ on the residual $g_2(W)-g_1(W)\,h^0(S)$ with instrument $q^\dagger(T)$, i.e. $\epsilon$ solves the equation: $\Eb{q^\dagger(T)\, (g_2(W) - g_1(W)\, h^0(S) -\epsilon\, \xi_0(S))}$ and setting $h^{(1)} = h^{(0)} + \epsilon\, \xi_0$, we get that the plug-in estimate $\Eb{m(W;h^{(1)})}$ is doubly robust without the need for a correction (i.e. the correction is by definition zero).

In \Cref{sec: est-functional}, we will estimate the target parameter by plugging nuisance estimators into the doubly robust identification formula. 
The robustness properties shown in \Cref{thm:dr-ident} enable the $\sqrt{n}$-consistency and asymptotic normality of the resulting functional estimator, under generic high-level conditions that permit flexible nonparametric nuisance estimators.
Interestingly, the doubly robust identification formula with the debiasing nuisance $q^\dagger = P\xi_0$ recovers the influence function derived in \cite{ichimura2022influence} when specialized to our setup (see \Cref{sec: ichimura}). 

We note that the results so far are based on  a closed linear class $\Hcal$ for the primary nuisance $h^\star$. 
It turns out that we can extend them  to a closed and convex class  $\Hcal$.  
See  \Cref{sec: convex} for details. 

\subsection{Interpreting the Strong Identification Assumption}\label{sec: interpretation}
In this part, we aim to demystify our strong identification condition in \Cref{assump: new-nuisance} by showing that it implicitly restricts the Riesz representer $\alpha$ of the target linear functional. We also compare \Cref{assump: new-nuisance} with many other assumptions in the existing literature. 

\subsubsection{Restrictions on Riesz Representers}
\begin{theorem}\label{thm: new-nuisance}
\Cref{assump: new-nuisance} holds if and only if the Riesz representer $\alpha$ in \Cref{eq: Riesz} satisfies that $\alpha \in \Rcal(P^\star P)$. Moreover, $\braces{\xi\in\Hcal: P^\star P\xi = \alpha}$ is equal to the solution set $\Xi_0$ in \Cref{assump: new-nuisance}. 
\end{theorem}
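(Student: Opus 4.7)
The plan is to recognize the minimization in \cref{eq: delta-eq-2} as a convex quadratic on the Hilbert space $\Hcal$ and then apply the standard first-order optimality condition to read off the minimizers as solutions of the normal equation $P^\star P \xi = \alpha$.

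First I would rewrite the objective in operator form. By the Riesz representation theorem applied to the continuous linear functional $\xi \mapsto \Eb{m(W;\xi)}$ on the Hilbert space $\Hcal$, there is a unique representer $\alpha \in \Hcal$ with $\Eb{m(W;\xi)} = \langle \alpha, \xi\rangle$ for all $\xi \in \Hcal$ (this coincides with the $\Lcal_2(S)$-projection onto $\Hcal$ of the $\alpha$ in \cref{eq: Riesz}). Moreover, since $(P\xi)(T) = \Eb{g_1(W)\xi(S)\mid T}$, we have $\Eb{\Eb{g_1(W)\xi(S)\mid T}^2} = \|P\xi\|_2^2 = \langle P^\star P \xi, \xi\rangle$, where $P^\star : \Lcal_2(T) \to \Hcal$ is the Hilbert-space adjoint of $P$. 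The objective in \cref{eq: delta-eq-2} thus becomes
\begin{equation*}
F(\xi) \coloneqq \tfrac{1}{2}\langle P^\star P \xi, \xi\rangle - \langle \alpha, \xi\rangle.
\end{equation*}

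Next I would derive the optimality condition. A direct expansion gives, for every $\xi,\eta \in \Hcal$ and $t \in \RR$,
\begin{equation*}
F(\xi + t\eta) - F(\xi) = t\,\langle P^\star P \xi - \alpha,\, \eta\rangle + \tfrac{t^{2}}{2}\|P\eta\|_2^{2}.
\end{equation*}
The quadratic term is nonnegative, so $F$ is convex, and $\xi$ minimizes $F$ if and only if the linear term vanishes in every direction $\eta \in \Hcal$, i.e.\ $\langle P^\star P \xi - \alpha,\eta\rangle = 0$ for all $\eta \in \Hcal$. Because $P^\star P \xi$ and $\alpha$ both lie in $\Hcal$, the residual $P^\star P \xi - \alpha$ lies in $\Hcal$, so this is equivalent to the normal equation $P^\star P \xi = \alpha$.

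Finally I combine the two directions. If $\alpha \in \Rcal(P^\star P)$, pick any $\xi_0$ with $P^\star P \xi_0 = \alpha$; then by the display above, $F(\xi_0 + \eta) - F(\xi_0) = \tfrac{1}{2}\|P\eta\|_2^2 \ge 0$ for all $\eta$, so $\xi_0 \in \Xi_0$ and thus $\Xi_0 \neq \emptyset$. Conversely, any $\xi^\star \in \Xi_0$ must satisfy the first-order condition, giving $P^\star P \xi^\star = \alpha$ and hence $\alpha \in \Rcal(P^\star P)$. The same two inclusions yield the set equality $\Xi_0 = \{\xi \in \Hcal : P^\star P \xi = \alpha\}$. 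There is no real obstacle here; the only subtlety to be careful about is that the Riesz representer of the linear part must be taken as an element of $\Hcal$ (so that the normal equation $P^\star P\xi = \alpha$ is an identity in $\Hcal$), which is precisely what the Riesz representation theorem on $\Hcal$ delivers.
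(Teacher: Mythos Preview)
Your proposal is correct and follows essentially the same approach as the paper: both recognize the objective in \cref{eq: delta-eq-2} as a convex quadratic on $\Hcal$, derive the first-order optimality condition $\langle P^\star P\xi - \alpha,\eta\rangle = 0$ for all $\eta\in\Hcal$, and identify this with $P^\star P\xi = \alpha$. Your expansion $F(\xi+t\eta)-F(\xi)=t\langle P^\star P\xi-\alpha,\eta\rangle+\tfrac{t^2}{2}\|P\eta\|_2^2$ makes the sufficiency direction more explicit than the paper's terse ``the steps above can be all reversed,'' but the argument is otherwise identical.
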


\Cref{thm: new-nuisance} shows that \Cref{assump: new-nuisance} requires the Riesz representer $\alpha$ to lie in the range space of operator $P^\star P$, where $P^\star: \Lcal_2(T) \to \Hcal$ is the adjoint of $\operator$. It can be shown that $P^\star$ is given by  $[\operator^\star q](S) =  \Pi_\Hcal\bracks{\Eb{g_1(W)q(T) \mid S} \mid S} = \Pi_{\Hcal}\bracks{g_1(W)q(T) \mid S}$ for any $q \in \Lcal_2(T)$, where $\Pi_{\Hcal}(\cdot \mid S)$ is the projection operator onto $\Hcal$.
Since $\Hcal$ is a closed linear space, this projection is well-defined \citep[Theorem 1 of Section 3.12 in][]{luenberger1997optimization}.

Although \Cref{assump: new-nuisance} and \Cref{thm: new-nuisance} impose equivalent restrictions, the formulation in \Cref{assump: new-nuisance} is more amenable to estimation, as it  involves only the functional $\Eb{m(W; h)}$ but not the Riesz representer $\alpha(S)$. This obviates the need to derive the form of the Riesz representer $\alpha(S)$, which is in line with the spirit of the recent automatic debiased machine learning methods for functionals of exogenous regression functions \citep{chernozhukov2020adversarial,chernozhukov2021automatic,chernozhukov2022automatic,chernozhukov2022riesznet}. These methods propose ways to learn Riesz representers solely based on the functionals of interest, without needing to derive the form of the Riesz representers on a case-by-case basis. 
Our formulation in \Cref{assump: new-nuisance} has the same advantages.

\begin{example}\label{ex: svd}
To understand \Cref{assump: new-nuisance} and \Cref{thm: new-nuisance}, it is instructive to consider a compact linear operator $P$
(but it is worth noting that compactness is not needed in our identification and estimation theory).
Let $\braces{\sigma_i, u_i, v_i}_{i=1}^\infty$ denote singular value decomposition of the compact operator $P$, where $\braces{u_i}_{i=1}^\infty, \braces{v_i}_{i=1}^\infty$ are orthonormal bases in $\Lcal_2(T)$ and $\Hcal \subseteq \Lcal_2(S)$ respectively, and $\sigma_1 \ge \sigma_2 \ge \dots$ are singular values.
Then the adjoint operator $P^\star$ 
 has the decomposition $\{\sigma_i, v_i, u_i\}_{i=1}^\infty$ and $P^\star P$ has the decomposition $\{\sigma_i^2, v_i, v_i\}_{i=1}^\infty$.
The Riesz representer $\alpha \in \Hcal$ in \Cref{eq: Riesz} can be represented as $\alpha=\sum_{i=1}^{\infty} \gamma_i v_i$ with $\sum_{i=1}^{\infty} \gamma_i^2 < \infty$ and any $\xi\in\Hcal$ can be represented as  $\xi=\sum_{i=1}^{\infty} \beta_i v_i$ with  $\sum_{i=1}^{\infty} \beta_i^2 < \infty$. 

Note that  
$P\xi = \sum_{i=1}^{\infty} \beta_i P v_i = \sum_{i=1}^{\infty}\beta_i \sigma_i  u_i$, and $\Eb{m(W; \xi)} = \Eb{\alpha(S)\xi(S)} = \sum_{i=1}^\infty \gamma_i\beta_i$. 
 Thus the optimization problem in \Cref{assump: new-nuisance} can be equivalently written as follows:
\begin{align}\label{eq: nuisance-svd}
\min_{\beta_1, \beta_2, \dots}\frac{1}{2}\sum_{i=1}^\infty \sigma_i^2 \beta_i^2 - \sum_{i=1}^\infty \gamma_i \beta_i, ~~ \text{subject to } \sum_{i=1}^{\infty} \beta_i^2 < \infty.
\end{align}
The optimal interior solution is given by $\beta_{0, i} = \gamma_i/\sigma_i^2$, which corresponds to the  solution $\xi_0 = \sum_{i=1}^{\infty} \gamma_iv_i/\sigma_i^2$. Then, \Cref{assump: new-nuisance} requires that $\sum_{i=1}^\infty\beta^{2}_{0, i} =  \sum_{i=1}^\infty\gamma_i^2/\sigma_i^4 < \infty$. 
This condition requires that $\gamma_i$ must be zero on eigenfunctions for which $\sigma_i = 0$ (\ie, basis functions for $\Ncal(P)$).
This means that the Riesz representer $\alpha$ must belong to $\Ncal(\operator)^\perp$.
Moreover, the condition imposes that the Riesz representer cannot be supported heavily on the lower part of the right eigenfunctions of the conditional expectation operator $P$. 
This means that the Riesz representer has to be \emph{smooth} enough relative to the spectrum of the conditional expectation operator $P$.
\Cref{thm: new-nuisance} states the solution $\xi_0$ to \Cref{eq: nuisance-svd} can be also characterized as a root to  $\alpha = P^\star P\xi_0$. Indeed, for  $\xi_0 = \sum_{i=1}^{\infty} \beta_{0, i} v_i$ with $\sum_{i=1}^{\infty} \beta_{0, i}^2 < \infty$, we have $P^\star P\xi_0 = \sum_{i=1}^{\infty}\sigma_i^2\beta_{0, i} v_i$. Equating it to $\alpha = \sum_{i=1}^{\infty} \gamma_i v_i$ gives $\beta_{0, i} = \gamma_i/\sigma_i^2$ for all $i$, which recovers the solution to \Cref{eq: nuisance-svd}. 
This verifies the conclusion  of \Cref{thm: new-nuisance} in the special case of a compact linear operator $P$.
Note also that a similar understanding of \Cref{assump: new-nuisance} can be made for more general compact operators $P$, in terms of a more general (possibly continuous) spectral decomposition; see \emph{e.g.} \citet{cavalier2011inverse} for details.
In addition, note that \Cref{thm: new-nuisance} holds more generally for non-compact linear operators, such as the operators in \Cref{ex: proximal,ex: shadow}.

 \end{example} 

\subsubsection{Strong Identification versus Identification}
\Cref{ex: svd} shows that, for compact $P$, \Cref{assump: new-nuisance} automatically restricts the Riesz representer $\alpha$ to the orthogonal complement of the null space of operator $P$. That is, it restricts to $\alpha \in \Ncal(\operator)^\perp$. 
This can be shown to be the sufficient and necessary condition for the identification of the target parameter for general $P$, by slightly generalizing the analysis in \cite{severini2006some}.

\begin{lemma}\label{lemma: identifiability}
The parameter $\theta^\star$ is identifiable, i.e., $\theta^\star = \Eb{m(W; h_0)}$ for any $h_0 \in \Hcal_0$,  if and only if $\alpha \in \Ncal(\operator)^\perp = \cl\prns{\Rcal\prns{P^\star}}$. 
\end{lemma}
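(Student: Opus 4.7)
The plan is a two-step argument that reduces identifiability to an orthogonality condition on the Riesz representer $\alpha$, and then applies the standard Hilbert-space duality between $\Ncal(P)$ and $\cl(\Rcal(P^\star))$.

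First, I would translate identifiability into linear-algebraic language. By \Cref{eq: H0}, any $h_0 \in \Hcal_0$ can be written as $h_0 = h^\star + \nu$ with $\nu \in \Ncal(P)$, and every $\nu \in \Ncal(P)$ arises in this way. Combining this parametrization with the Riesz representation \Cref{eq: Riesz} yields
\begin{equation*}
\Eb{m(W; h_0)} - \theta^\star = \Eb{m(W; \nu)} = \langle \alpha, \nu\rangle.
\end{equation*}
Hence $\Eb{m(W; h_0)} = \theta^\star$ for every $h_0 \in \Hcal_0$ if and only if $\langle \alpha, \nu\rangle = 0$ for every $\nu \in \Ncal(P)$, i.e., $\alpha \in \Ncal(P)^\perp$, with the orthogonal complement taken inside the ambient Hilbert space $\Hcal$ in which both $\alpha$ and $\Ncal(P)$ sit.

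Second, I would invoke the standard orthogonality identity $\Ncal(P)^\perp = \cl(\Rcal(P^\star))$ for bounded linear operators between Hilbert spaces. The short derivation is that $y \in \Rcal(P^\star)^\perp$ iff $\langle P^\star q, y\rangle = 0$ for every $q \in \Lcal_2(T)$ iff $\langle q, Py\rangle = 0$ for every $q$ iff $Py = 0$; so $\Rcal(P^\star)^\perp = \Ncal(P)$, and taking orthogonal complements of both sides gives $\cl(\Rcal(P^\star)) = \Ncal(P)^\perp$. Combined with the first step, this delivers the claimed equivalence.

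The proof is essentially mechanical once the Hilbert-space setup is in place, and there is no real obstacle to speak of. The one technicality worth flagging is that the adjoint $P^\star$ must be taken relative to $\Hcal$ (projected onto $\Hcal$, as recorded after \Cref{thm: new-nuisance}), so that the duality identity is applied within the inner-product space $\Hcal$ that both $\alpha$ and $\Ncal(P)$ inhabit; with this convention the argument goes through verbatim regardless of whether $P$ is compact. This is the direct analogue for our general conditional moment setting of the characterization of identification for linear functionals of NPIV regressions established in \cite{severini2006some}.
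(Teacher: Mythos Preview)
Your proposal is correct and follows essentially the same approach as the paper: both parametrize $\Hcal_0 = h^\star + \Ncal(P)$, apply the Riesz representation to reduce identifiability to $\langle \alpha,\nu\rangle = 0$ for all $\nu\in\Ncal(P)$, and then invoke the standard identity $\Ncal(P)^\perp = \cl(\Rcal(P^\star))$. The only minor difference is that you supply a short derivation of the latter identity whereas the paper simply states it, and you correctly flag that the adjoint must be taken relative to $\Hcal$.
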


\Cref{lemma: identifiability} gives the weakest condition for the identification of the target parameter $\theta^\star$ when the primary nuisance $h^\star$ can be underidentified. 
This condition trivially holds if the nuisance function $h^\star$ is identified to begin with, so that $\Ncal(\operator) = \braces{0}$ and $\cl\prns{\Rcal\prns{P^\star}} = \Hcal$. 
In fact, the converse is also true: $h^\star$ is identifiable if and only if every continuous linear functional of $h^\star$ is identifiable (see \Cref{lemma: nuisance-id} in \Cref{sec: support}).
But for a given parameter of interest, \Cref{lemma: identifiability} shows that the identifiability of $h^\star$ is not neccessary for the identifiability of $\theta^\star$, since  $\theta^\star$ may capture identifiable parts of the nuisance $h^\star$,  provided that the corresponding Riesz representer is orthogonal to the null space of $P$.

Although the condition in \Cref{lemma: identifiability}  is enough for the identification of $\theta^\star$, it alone does not suffice for inference on $\theta^\star$. 
Thus, we impose the strong identification condition in \Cref{assump: new-nuisance}. 
\Cref{assump: new-nuisance} is certainly always stronger than the identification condition in \Cref{lemma: identifiability}, since by \Cref{thm: new-nuisance} we know that \Cref{assump: new-nuisance} is equivalent to $\alpha \in \Rcal(P^\star P)$, and $\Rcal(P^\star P) \subseteq \cl(\Rcal(P^\star))$.
Thus our \Cref{assump: new-nuisance} requires the target functional to be not only identifiable, but also regular enough so that inference on it is possible.

\subsubsection{Strong Identification versus Existence of Debiasing Nuisances}
Our \Cref{assump: new-nuisance} is also closely related   to the  condition $\alpha\in\Rcal(P^\star)$ proposed in \citet{severini2012efficiency}.
This condition is equivalent to the existence of $q_0 \in \Lcal_2(T)$ that solves 
\begin{align}\label{eq: cond-moment-q}
[P^\star q](S) = \Pi_\Hcal\bracks{g_1(W)q(T) \mid S} = \alpha(S),
\end{align}
or equivalently, 
\begin{align}\label{eq: Q0}
\Qcal_0 \ne \emptyset,\quad \text{where}\quad \Qcal_0 \coloneqq \braces{q \in \Lcal_2\prns{T}: [P^\star q](S) = \alpha(S)} 
\end{align}

Compared to the identification condition in \Cref{lemma: identifiability}, this condition additionally rules out $\alpha$ on the boundary of $\Rcal(P^\star)$.
In the setting of NPIV regression, \citet{severini2012efficiency} shows that this is a \emph{necessary} condition for the $\sqrt{n}$-estimability of IV functionals.
\cite{deaner2019nonparametric} also shows that this condition is sufficient and necessary for the robust estimation of IV functionals when the IV exclusion restriction is misspecified.
Furthermore, below we show that this condition is the minimal condition for the existence of debiasing nuisances $q_0$ such that the corresponding doubly robust identification formula has the  robustness properties akin to \Cref{thm:dr-ident}.

\begin{theorem}\label{lemma: bias-product-general}
If $\alpha\in\Rcal(P^\star)$, then the conclusions in \Cref{thm:dr-ident} hold for any $q_0 \in \Qcal_0$. In particular, 
$\theta^\star = \Eb{\psi(W; h_0, q_0)}$ for any $h_0 \in \Hcal_0$ and $q_0 \in \Qcal_0$.  
Moreover, for two fixed functions $h_0 \in \Hcal$ and $q_0 \in \Lcal_2(T)$, we have 
$h_0 \in \Hcal_0$ and $q_0 \in \Qcal_0$ if and only if, for for any $h \in \Hcal$, $q \in \Lcal_2(T)$,
\begin{align}\label{eq: DR-identification-general}
 \abs{\Eb{\psi(W; h, q)} - \theta^\star} 
    &= \abs{\langle{P\prns{h-h_0}, q-q_0\rangle}} = \abs{\langle{{h-h_0}, P^\star\prns{q-q_0}\rangle}} \nonumber \\
    &\le \min\braces{ \|P\prns{h-h_0}\|_2\|q-q_0\|_2, \|h-h_0\|_2\|P^\star\prns{q-q_0}\|_2}. 
 \end{align}  
\end{theorem}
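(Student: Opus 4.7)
The plan is to derive a single algebraic identity expanding $\E[\psi(W;h,q)] - \theta^\star$, from which both the doubly robust identification (Part 1) and the ``if and only if'' characterization (Part 2) follow by specialization. The key ingredients are the Riesz representation $\E[m(W;h)]=\langle \alpha,h\rangle$, the adjoint identity $\langle Ph,q\rangle_{\Lcal_2(T)}=\langle h,P^\star q\rangle_{\Hcal}$, and the definitions $Ph_0=\E[g_2\mid T]$ when $h_0\in\Hcal_0$ and $P^\star q_0=\alpha$ when $q_0\in\Qcal_0$.

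First, I would dispatch the basic identification claim. Given $\alpha\in\Rcal(P^\star)$, \Cref{lemma: identifiability} (via $\Rcal(P^\star)\subseteq\cl(\Rcal(P^\star))=\Ncal(P)^\perp$) yields $\E[m(W;h_0)]=\theta^\star$ for any $h_0\in\Hcal_0$. For any $q_0\in\Qcal_0$ the correction term vanishes by the tower rule,
\begin{equation*}
\E[q_0(T)(g_2(W)-g_1(W)h_0(S))]=\E[q_0(T)\,\E[g_2-g_1h_0\mid T]]=0,
\end{equation*}
so $\E[\psi(W;h_0,q_0)]=\theta^\star$.

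Next, for arbitrary $h\in\Hcal$ and $q\in\Lcal_2(T)$, I would expand
\begin{equation*}
\E[\psi(W;h,q)] = \langle\alpha,h\rangle + \E[q(T)g_2(W)] - \langle q,Ph\rangle,
\end{equation*}
then substitute $\alpha=P^\star q_0$ (to rewrite $\langle\alpha,h\rangle=\langle q_0,Ph\rangle$) and $\E[q(T)g_2(W)]=\langle q,Ph_0\rangle$, and finally subtract $\theta^\star=\langle\alpha,h_0\rangle=\langle q_0,Ph_0\rangle$. Collecting terms produces the target identity
\begin{equation*}
\E[\psi(W;h,q)] - \theta^\star = -\langle q-q_0,\,P(h-h_0)\rangle = -\langle P^\star(q-q_0),\,h-h_0\rangle,
\end{equation*}
and taking absolute values, followed by Cauchy--Schwarz in $\Lcal_2(T)$ or in $\Hcal$ respectively, delivers \Cref{eq: DR-identification-general}. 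This proves the forward direction of Part 2 and completes Part 1.

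For the backward direction of Part 2, I would fix $h_0\in\Hcal,q_0\in\Lcal_2(T)$ and assume the displayed identity holds for every $h\in\Hcal,q\in\Lcal_2(T)$. Setting $h=h_0$ forces the RHS to be zero for every $q$, hence $\E[\psi(W;h_0,q)]=\theta^\star$ for all $q$. Taking $q=0$ gives $\E[m(W;h_0)]=\theta^\star$, and subtracting this yields $\E[q(T)\,\E[g_2-g_1h_0\mid T]]=0$ for every $q\in\Lcal_2(T)$; varying $q$ pointwise shows $\E[g_2-g_1h_0\mid T]=0$ a.s., i.e.\ $h_0\in\Hcal_0$. Symmetrically, setting $q=q_0$ yields $\E[\psi(W;h,q_0)]=\theta^\star$ for every $h\in\Hcal$; taking $h=0$ gives $\E[q_0(T)g_2(W)]=\theta^\star$, and the residual equation becomes $\langle\alpha-P^\star q_0,h\rangle_\Hcal=0$ for all $h\in\Hcal$. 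Since both $\alpha$ and $P^\star q_0$ already live in $\Hcal$, this forces $P^\star q_0=\alpha$, i.e.\ $q_0\in\Qcal_0$.

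The computation is essentially bookkeeping; the one subtle point is recognizing that $P^\star$ projects into $\Hcal$, so the equality $P^\star q_0=\alpha$ derived in the backward step is an equality in $\Hcal$, exactly matching the definition of $\Qcal_0$ in \Cref{eq: Q0}. I do not anticipate a real obstacle beyond that.
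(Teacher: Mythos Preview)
Your proposal is correct and follows essentially the same approach as the paper: both derive the bilinear identity $\E[\psi(W;h,q)]-\theta^\star=-\langle P(h-h_0),q-q_0\rangle$ by expanding $\psi$, using the Riesz/adjoint relations $\alpha=P^\star q_0$ and $\E[g_2\mid T]=Ph_0$, and then apply Cauchy--Schwarz; the converse is obtained in both cases by specializing $h$ and $q$ to force the moment conditions. The only cosmetic differences are that the paper writes the computation in terms of expectations of $g_1(W)h(S)q(T)$ rather than inner products, and in the converse the paper first deduces $\E[\psi(W;h_0,q_0)]=\theta^\star$ and then sets $q=q_0$ (resp.\ $h=h_0$) in the expanded difference, whereas you plug in $q=0$ and $h=0$ directly---both routes are equivalent.
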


\Cref{lemma: bias-product-general} shows that $h_0 \in \Hcal_0$ and $q_0 \in \Qcal_0$ is the sufficient and necessary condition for the mixed bias property of the doubly robust identification formula. This also implies that $h_0 \in \Hcal_0$ and $q_0 \in \Qcal_0$ is the minimal condition for the double robustness and  Neyman orthogonality of the doubly robust identification formula. 
The results in \Cref{thm:dr-ident} under our \Cref{assump: new-nuisance} can be viewed as a corollary of \Cref{lemma: bias-product-general}, with $q_0$ specialized to the particular debiasing nuisance function $q^\dagger = P\xi_0$ for $\xi_0 \in \Xi_0$. 
Below we show that our specialized debiasing nuisance function $q^\dagger = P\xi_0$ is actually the minimum-norm function in $\Qcal_0$ in \Cref{eq: Q0}. This shows that our \Cref{assump: new-nuisance} imposes that the minimum-norm debiasing nuisance in $\Qcal_0$ belongs to $\Rcal(P)$.
\begin{lemma}\label{lemma: minimum-norm}
Let $\xi_0\in\Hcal$.
Then, $\xi_0\in\Xi_0$ if and only if $q^\dagger=P \xi_0$ is the minimum-norm element of $\Qcal_0$, namely $q^\dagger = \argmin_{q \in \Qcal_0}\|q\|_2^2$.
\end{lemma}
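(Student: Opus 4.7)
The plan is to exploit the characterization established in \Cref{thm: new-nuisance}, namely that $\xi_0\in\Xi_0$ if and only if $P^\star P\xi_0=\alpha$. Since membership in $\Qcal_0$ is defined by the equation $P^\star q=\alpha$, this condition is literally equivalent to $P\xi_0\in\Qcal_0$. So the lemma reduces to two claims: (i) whenever $P\xi_0\in\Qcal_0$, the element $P\xi_0$ is in fact the minimum-norm element of $\Qcal_0$; and (ii) being the minimum-norm element of $\Qcal_0$ in particular entails lying in $\Qcal_0$, which is immediate from the hypothesis.

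For the ``only if'' direction, I would assume $\xi_0\in\Xi_0$ and set $q^\dagger:=P\xi_0$, so that $q^\dagger\in\Qcal_0\cap\Rcal(P)$. For any other $q\in\Qcal_0$, both $q$ and $q^\dagger$ are mapped by $P^\star$ to $\alpha$, hence $q-q^\dagger\in\Ncal(P^\star)$. The key step is to invoke the standard Hilbert-space identity $\Ncal(P^\star)=\Rcal(P)^\perp$ in $\Lcal_2(T)$, which gives $q-q^\dagger\perp q^\dagger$ and thus the Pythagorean decomposition
\begin{equation*}
\|q\|_2^2 \;=\; \|q^\dagger\|_2^2 + \|q-q^\dagger\|_2^2 \;\geq\; \|q^\dagger\|_2^2,
\end{equation*}
with equality if and only if $q=q^\dagger$. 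This identifies $q^\dagger$ as the unique minimum-norm element of $\Qcal_0$.

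For the ``if'' direction, the hypothesis that $q^\dagger=P\xi_0$ is the minimum-norm element of $\Qcal_0$ already forces $q^\dagger\in\Qcal_0$, i.e.\ $P^\star P\xi_0 = P^\star q^\dagger = \alpha$; by \Cref{thm: new-nuisance} this places $\xi_0\in\Xi_0$, so nothing further is needed. I do not foresee a substantive obstacle: the only nontrivial ingredient is the orthogonality $\Ncal(P^\star)=\Rcal(P)^\perp$, which is a general property of bounded linear operators between Hilbert spaces and requires neither closed range nor compactness of $P$. Once \Cref{thm: new-nuisance} is in hand, the remaining argument is essentially one line of Hilbert-space geometry.
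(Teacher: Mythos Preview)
Your proposal is correct and follows essentially the same approach as the paper: both use \Cref{thm: new-nuisance} to identify $\Xi_0$ with $\{\xi: P^\star P\xi=\alpha\}$, observe that $\Qcal_0$ is the affine set $q_0+\Ncal(P^\star)=q_0+\Rcal(P)^\perp$, and conclude that the unique element of $\Qcal_0\cap\Rcal(P)$ is the minimum-norm one. You are actually more careful than the paper in spelling out the ``if'' direction and the Pythagorean step explicitly.
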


Although \citet{severini2012efficiency} shows that the condition $\alpha \in \Rcal\prns{P^\star}$, or equivalently, $\Qcal_0 \ne \emptyset$, is a \emph{necesary} condition for the $\sqrt{n}$-estimability of the target functional, it alone is not sufficient, especially if the inverse problem in \Cref{eq: cond-moment-h} is severely ill-posed \citep{chen2015sieve}. 
To overcome this challenge, we impose   our strong identification condition in \Cref{assump: new-nuisance}. 
Note that our condition $\alpha\in \Rcal(P^\star P)$ strengthens the condition $\alpha\in\Rcal(P^\star)$, since we always have $\Rcal(P^\star P) \subseteq \Rcal(P^\star)$. 
In particular,  $\Rcal(P^\star P)$ is a \emph{strict} subset of  $\Rcal(P^\star)$ unless $\Rcal(P)$ is a closed set and the inverse problem in \Cref{eq: cond-moment-h} is well-posed \citep{Carrasco2007}. 
We do not assume a closed $\Rcal(P)$ and therefore well-posedness. Instead, we restrict the Riesz representer $\alpha$ to $\Rcal(P^\star P)$, This imposes a stronger restriction on the Riesz representer than the condition $\alpha\in\Rcal(P^\star)$, but it allows the inverse problem in \Cref{eq: cond-moment-h} for the primary nuisance function to be arbitrarily ill-posed. 
As we will show later, this kind of restriction will enable us to construct  $\sqrt{n}$-consistent and asymptotically normal estimators for the target functional, even when the primary nuisance is weakly identified. 
See \Cref{sec: challenge} for more discussions.

\subsubsection{Relation to Other Conditions}

\Cref{thm: new-nuisance} shows that our \Cref{assump: new-nuisance} is equivalent to 
$\alpha \in \Rcal(P^\star P)$. 
This can be seen as a so-called ``source condition'' on the Riesz representer $\alpha$, restricting the regularity of $\alpha$ with respect to the linear operator $P$ \citep[\eg, ][]{Carrasco2007,florens2011identification}.

Our  \Cref{assump: new-nuisance} is, however, fundamentally different 
from imposing source conditions on the primary nuisance function, such as the IV regression itself  
\citep[\eg, ][]{Carrasco2007,florens2011identification,babii2017completeness,darolles2011nonparametric,singh2019kernel}. 
For example, \citet{darolles2011nonparametric} assumes that the true IV  regression $h^\star$ lies in the space $\Rcal((P^\star P)^{\beta/2})$ for some exponent $\beta > 0$.
This source condition directly restricts the smoothness of the IV regression and the degree of ill-posedness of the IV conditional moment restriction.

Alternatively, some other literature defines and bounds so-called ``ill-posedness measures" of the inverse problem defining $h^\star$
relative to a function class \citep[\eg,][]{chen2012estimation,chen2015sieve,chen2018optimal,DikkalaNishanth2020MEoC,kallus2021causal}.
These ill-posedness measures bound the ratio between weak-metric and strong-metric errors, so that bounds on the former yield bounds on the latter, making $h^\star$ itself strongly identified, that is, unique (in the function class) and well-estimable. This would allow us, in particular, to do inference on the functional by bounding the strong-metric error in the second branch of \cref{eq: DR-identification-general}. For inference on the functional, we could alternatively impose similar ill-posedness measures on $q_0\in\Qcal_0$, so as to instead control the strong-metric error in the first branch \citep[\eg,][]{kallus2021causal}. In either case, we are effectively imposing strong identification of \emph{functions}.
Our \cref{assump: new-nuisance} is fundamentally different
 and complements this literature.

We remark that our \Cref{assump: new-nuisance} is also deeply connected to several ostensibly different conditions in some previous literature. 
In \Cref{sec: ichimura}, we equivalently characterize $\xi_0 \in \Xi_0$ in terms of a certain projection of $q_0\in\Qcal_0$, thereby showing that our \Cref{assump: new-nuisance}  is related to a condition in \cite{ichimura2022influence}.
Moreover, in \Cref{sec: chen},  we show  that a key condition in \cite{ai2007estimation}, when specialized to our setting, is actually equivalent to our \Cref{assump: new-nuisance}. 
In \Cref{sec: partial-linear-iv}, we show that a condition in \cite{chen2021robust} for partially linear IV models also implicitly imposes our \Cref{assump: new-nuisance}. Therefore, our \Cref{assump: new-nuisance} also provides new and generalized interpretations for the assumptions in these existing works.

\subsection{Revisiting the Examples}\label{sec: examples2}

We now revisit the examples from \cref{sec: examples} to instantiate the conditions discussed above.

\begin{continuance}[Functionals of  NPIV Regression]{\ref{ex: IV-reg}}
Consider the parameter $\theta^\star$ given in
\Cref{eq: best-linear-iv}. 
Then \Cref{eq: Q0} posits the existence of functions $q_{0, 1},\dots, q_{0, d} \in \Lcal_2(Z)$ for $d = d_X$, such that  $q_0 = \prns{q_{0, 1},\dots, q_{0, d}}$ solves  
\begin{align}\label{eq: IV-q}
\Eb{q(Z) \mid X} =  \alpha(X) =
X.
\end{align}

\Cref{assump: new-nuisance} further requires the existence of $\xi_0 = (\xi_{0, 1}, \dots, \xi_{0, d})$ such that $\xi_{0, i} \in \Lcal_2(X)$ and $q^\dagger =\prns{\Eb{\xi_{0, 1}(X) \mid Z},\dots, \Eb{\xi_{0, d}(X)\mid Z}}^\top$ satisfies \Cref{eq: IV-q}. 
Alternatively, any such $\xi_0$ is given by 
\begin{align}\label{eq: IV-delta}
\xi_{0, i} \in \argmin_{\xi_i\in\Lcal_2(X)} \Eb{\prns{\Eb{\xi_i(X) \mid Z}}^2} - \Eb{\alpha_i(X)\xi_i(X)},
\end{align}
where $\alpha_i$ is the $i$th coordinate of $\alpha$ in \Cref{eq: IV-q}. According to \Cref{lemma: bias-product-general,thm:dr-ident},  even when the NPIV regression $h^\star$ is unidentifiable, the parameter $\theta^\star$ is still identifiable by any $h_0$ solving \Cref{eq: IV-h}, any $q_0$ solving \Cref{eq: IV-q}, or any $\xi_0$ solving \Cref{eq: IV-delta}.

\cite{escanciano2021optimal} also study the estimation of and inference on the best linear approximation coefficient $\Eb{XX^\top}^{-1}\theta^\star$, allowing the NPIV  regression $h^\star$ to be  unidentifiable. Assumption 3 in their paper is equivalent to the existence of a function $q_0$ that solves \Cref{eq: IV-q} (although $q_0$ is not necessarily in the range space $\Rcal(P)$). 
They propose a penalized linear sieve estimator that can converge to a particular solution $q_0$ to \Cref{eq: IV-q}, and then use it to construct their  estimator for $\theta^\star$. 
They restrict the ill-posedness of the NPIV regression by imposing a source condition \citep[assumption A4]{escanciano2021optimal} and prove that their resulting estimator can achieve desirable asymptotic properties.

In our paper, we will accommodate general flexible function classes. This permits going beyond sieve estimation and its involved technical assumptions (Assumptions A.2--A.6 in \citealp{escanciano2021optimal}) and allows us to rely instead on high-level conditions for approximation by general hypothesis classes. To enable this, we instead incorporate penalization into the general minimax estimation framework with general hypothesis classes \citep{DikkalaNishanth2020MEoC,kallus2021causal} and we employ the doubly robust identification formula in \Cref{thm:dr-ident} to cancel out estimation errors in these nuisances so that we do not need strong assumptions to characterize their behavior.
This leverages highly flexible machine learning nuisance estimators and the resulting functional estimator still has desirable asymptotic  properties. 
Moreover, under \Cref{assump: new-nuisance}, these can be achieved even without restricting the ill-posedness of the NPIV problem.
\end{continuance}

\begin{continuance}[Proximal Causal Inference]{\ref{ex: proximal}}
For the average treatment effect $\theta^\star$ identified via 
\Cref{eq: proximal-parameter},
\Cref{eq: Q0} corresponds to the existence of another nuisance function 
$q_0(Z, X, A)$ solving
 \begin{align}\label{eq: proximal-q}
 \Eb{q(Z, X, A) \mid V, X, A} = {\alpha}(V, X, A) = \frac{A-\Prb{A=1\mid  V, X}}{\Prb{A=1\mid  V, X}(1-\Prb{A=1\mid  V, X})}.
 \end{align}
We note that the $q_0$ here corresponds to a treatment bridge function, which is a second type of  bridge function in proximal causal inference \citep{cui2020semiparametric,kallus2021causal}.

 Our \Cref{assump: new-nuisance} further requires the existence of $\xi_0 \in \Lcal_2(V, X, A)$ such that $q^\dagger(Z, X, A) = [P\xi_0](Z, X, A) =  \Eb{\xi_0(V, X, A) \mid Z, X, A}$ satisfies \Cref{eq: proximal-q}, \ie, there exists a treatment bridge function in the range space $\Rcal(P)$. Any such $\xi_0$ is also given by
\begin{align}\label{eq: proximal-delta}
\xi_{0} \in \argmin_{\xi\in\Lcal_2(V, X, A)} \Eb{\prns{\Eb{\xi(V, X, A) \mid Z, X}}^2} - \Eb{\alpha(V, X, A)\xi(V, X, A)}.
\end{align}
 Then \Cref{lemma: bias-product-general,thm:dr-ident} imply that $\theta^\star$ can be identified by \textit{any} $h_{0}$ solving \Cref{eq: proximal-h},  \textit{any} $q_{0}$ solving \Cref{eq: proximal-q}, and any $\xi_0$ solving \Cref{eq: proximal-delta}.

Although any solutions to \Cref{eq: proximal-q,eq: proximal-h,eq: proximal-delta} identify $\theta^\star$, multiplicity of solutions raises significant challenges for statistical inference. 
Indeed, even if uniqueness is not assumed for identification, the existing proximal causal inference literature largely assumes uniqueness for statistical inference
\citep[\eg,][]{cui2020semiparametric,kallus2021causal,GhassamiAmirEmad2021MKML,mastouri2021proximal,singh2020kernel,miao2018a}. 
One exception is \cite{imbens2021controlling}, which handles the nonunique nuisances by a penalized generalized method of moment estimator, but their approach only applies in their specific panel-data setting where the nuisance is linearly parameterized. 
In this paper, we will develop new estimators and inferential procedures that are robust to the nonuniqueness of general nonparametric nuisance functions.  
Another exception is \cite{deaner2018proxy}, which focuses on the conditional average treatment effect, and establishes identification and well-posedness without requiring identification of the bridge function.
Moreover, existing literature on nonparametric proximal causal inference  restricts the ill-posedness of the inverse problem in \Cref{eq: proximal-h} for the primary bridge function, by either assuming source conditions on the bridge function \citep{singh2020kernel,mastouri2021proximal} or relying on ill-posedness measures \citep{GhassamiAmirEmad2021MKML,kallus2021causal}. 
Our paper shows that these are not necessary if the linear functional is regular enough in the sense that there exist solutions to \Cref{eq: proximal-delta}.
Given \Cref{thm: new-nuisance}, this follows if the observable propensity function $P(A=1 \mid V,X)$ is sufficiently regular.
 \end{continuance}

\begin{continuance}[Missing-Not-at-Random Data with Shadow Variables]{\ref{ex: shadow}}
For the parameter $\theta^\star$ in \Cref{eq: shadow-parameter}, 
\Cref{assump: new-nuisance} requires 
 the existence of $\xi_{0}\in \Lcal_2(X, V)$  such that  
\begin{align}\label{eq: shadow-delta}
\xi_{0} \in \argmin_{\xi\in\Lcal_2(X, V)} \Eb{\prns{\Eb{\xi(X, V) \mid X, Z}}^2} - \Eb{\alpha(X, V)\xi(X, V)}.
\end{align}
Then \Cref{thm:dr-ident} implies that $\theta^\star$ can be identified by {any} $h_{0}$ solving \Cref{eq: shadow-h} or any $\xi_0$ solving \Cref{eq: shadow-delta}. 
 Moreover, \Cref{thm: new-nuisance} means that any such $\xi_{0}$  can be equivalently characterized by 
\begin{align}\label{eq: shadow-q}
 &\Eb{Aq^\dagger(X, Z)\mid X, V} = \alpha(X, V) = V, \\
 \text{where } & q^\dagger(X, Z) = [P\xi_0](X, Z) = \Eb{\xi_0(X, V) \mid X, Z}.   \nonumber 
 \end{align}

\cite{LiMiao2022} also assumes a condition that  requires the existence of a $q_0$ that satisfies \Cref{eq: shadow-q} (although their $q_0$ function is not necessarily in the range space $\Rcal(P)$). 
Then they develop estimation and inferential methods robust to nonunique nuisances. 
Their method extends that in \cite{santos2011instrumental}: they 
first use a linear sieve estimator proposed by \cite{chernozhukov2007estimation} to estimate the \emph{set} of solutions to \Cref{eq: shadow-q} (namely the set $\Qcal_0$ defined in \Cref{eq: Q0}), and then pick a unique element therein that maximizes a certain criterion.  
In contrast, the methods we will propose can accommodate flexible hypothesis classes, rely on high-level conditions about these classes, and avoid the challenging task of estimating  solution sets to conditional moment restrictions.
 \end{continuance}

\subsection{Challenges with Ill-posed  Nuisance Estimation}\label{sec: challenge}
In \Cref{lemma: bias-product-general}, we show that the doubly robust identification formula satisfies the Neyman orthogonality property.
Following the recent literature on debiased machine learning cited above, it can therefore be hoped we can simply plug in any flexible nuisance estimators and use the debiased machine learning inference algorithm.

However, because of the \emph{ill-posedness} of the nuisance estimation problem, statistical inference on the target parameter based on asymptotic normality can be very challenging. To illustrate the challenge, consider some generic nuisance estimators $\hat h, \hat q$ for certain $h_0 \in \Hcal_0, q_0 \in \Qcal_0$, and the corresponding doubly robust estimator for the target parameter:
\begin{align*}
\tilde\theta = \frac{1}{n}\sum_{i=1}^n \psi(W_i; \hat h, \hat q).
\end{align*}
The estimation error of this estimator is decomposed as follows: for any $h_0 \in \Hcal_0, q_0 \in \Qcal_0$, 
\begin{equation}\label{eq: error-decompose}
\begin{aligned}
\sqrt{n}({\tilde \theta - \theta^\star})
    &= \hG_n\prns{\psi(W; h_0,  q_0) - \theta^\star}  +
    \hG_n\prns{\psi(W; \hat h, \hat q) - \psi(W; h_0,  q_0)} \\
    &\qquad\qquad\qquad\qquad\qquad\qquad\qquad\qquad + \sqrt{n}\hP\prns{\psi(W; \hat h, \hat q) - \psi(W; h_0,  q_0)}.
\end{aligned}
\end{equation}

While the first term in \Cref{eq: error-decompose} can be proved to be asymptotically normal by the central limit theorem, the second and third terms depend on the estimation errors of $\hat h, \hat q$ and they 
are particularly challenging to handle because of the ill-posedness of the nuisance estimation. The second term suffers from issues of non-uniqueness while the third term suffers from issues of discontinuity of inverse problems.

The second term in \Cref{eq: error-decompose} is a stochastic equicontinuity term. 
To make this term negligible, we  typically need to require that the nuisance estimators $\hat h, \hat q$ converge to fixed $h_0 \in \Hcal_0, q_0 \in \Qcal_0$,  in terms of strong metrics like the $L_2$ norm \citep[\eg, Lemma 19.24 in][]{van2000asymptotic}.
 This remains the case even if we employ cross fitting as described in \Cref{def: theta-est} below \citep[\eg, see discussions below Assumption 3.2 in][]{chernozhukov2018double}.
In this paper, we study ill-posed inverse problems where nuisances $h_0, q_0$ can be non-unique. 
In this case, 
common nuisance estimators $\hat h, \hat q$ typically do not converge to any fixed asymptotic limits.
As a result, the stochastic equicontinuity term  is generally not negligible, and the resulting functional estimator can easily have an intractable asymptotic distribution (see section 3.1 in \citealp{chen2021robust} for a concrete example in the IV setting). 
To overcome this challenge, we will develop penalized nuisance estimators that converge to fixed asymptotic limits even when the nuisances are non-unique, so that the second term in \Cref{eq: error-decompose} is negligible. 

The third term in \Cref{eq: error-decompose} quantifies the bias due to the estimation errors of $\hat h$ and $\hat q$.
According to \Cref{lemma: bias-product-general}, this term can be bounded in terms of either $\|\hat h - h_0\|_2\|P^\star[\hat q-q_0]\|_2$ or $\|P[\hat h-h_0]\|_2\|\hat q - q_0\|_2$. 
If we estimate $\hat h$ and $\hat q$ by directly solving empirical analogues of \Cref{eq: cond-moment-q,eq: cond-moment-h}, then we can establish  convergence rates of $\hat h, \hat q$ in terms of the weak projected metrics  $\|P[\hat h-h_0]\|_2$ and $\|P^\star[\hat q-q_0]\|_2$ \citep[\eg, ][]{chen2012estimation,DikkalaNishanth2020MEoC,kallus2021causal}. 
However, the strong-metric estimation errors, $\|\hat h - h_0\|_2$ or $\|\hat q - q_0\|_2$, may converge \emph{arbitrarily} slowly (or even not at all), depending on the degrees of  ill-posedness of the inverse problems associated with $h_0$ and $q_0$. 
Consequently, when these inverse problems are severely ill-posed,
the resulting functional estimator may converge slowly and not be asymptotically normal.
To overcome this challenge, we impose our \Cref{assump: new-nuisance}, which restricts the inverse problem associated with $q_0$ (see discussions around 
\Cref{lemma: minimum-norm}). 
Under this assumption, we can instead estimate $q^\dagger = \argmin_{q \in \Qcal_0} \|q\|_2$, which by \Cref{lemma: minimum-norm} is equal to $P\xi_0$ for any $\xi_0 \in \Xi_0$.
In the next section, we will propose a minimax estimator for $q^\dagger = P\xi_0$ based on the formulation of $\xi_0$ in \Cref{assump: new-nuisance}, and provide a strong-metric  convergence rate of this estimator that does not involve any additional ill-posedness measure. 
The intuition behind this strong-metric result for $q^\dagger$  stems from the fact that, under \Cref{assump: new-nuisance}, it essentially corresponds to a weak-metric convergence for $\xi_0$, for which we can provide ill-posedness-free rates.
With the strong-metric convergence rate for $\hat q$, we only need a  weak-metric convergence rate of $\hat h$ to make the third term in  \Cref{eq: error-decompose} vanish. 
As a result, our functional estimator can be asymptotically normal even without restricting the ill-posedness of \Cref{eq: cond-moment-h} for the primary nuisance function.

\section{Minimax Estimation of Nuisances}\label{sec: est-nuisance}

Here we present and analyze our minimax estimators of the primary and debiasing nuisances. In this section we use the notation $h_0$ to denote an arbitrary element of $\Hcal_0$, and $h^\dagger$ to denote the minimum-norm element of $\Hcal_0$, to which we will establish consistency. Note again that this element is always unique, since $\Hcal_0$ is a closed linear subspace.
Similarly, we let $q^\dagger = P \xi_0$ for any $\xi_0 \in \Xi_0$, which per \Cref{lemma: minimum-norm} is the minimum-norm element of $\Qcal_0$.

\subsection{Penalized Estimation of the Primary Nuisance Function}\label{sec: primary-nuisance-est}

We first consider estimation of the minimum-norm primary nuisance function $h^\dagger$.
We will consider estimators of the form
\begin{equation}
\label{eq:h-estimator}
    \hat h_n = \argmin_{h \in \Hcal_n} \sup_{q \in \Qcal_n} \EE_n \Big[ \Big(g_1(W) h(S) - g_2(W)\Big) q(T) - \frac{1}{2} q(T)^2 + \mu_n h(S)^2 \Big] - \gamma_n^q \|q\|_\Qcal^2 + \gamma_n^h \|h\|_\Hcal^2 \,,
\end{equation}
where $\Hcal_n$ and $\Qcal_n$ are function classes for empirical minimax estimation, and $\mu_n,\gamma_n^h,\gamma_n^q$ are regularization hyperparameters for the penalized estimation. Importantly, we assume that $\Qcal_n \subseteq \bar\Qcal$ and $\Hcal_n \subseteq \bar\Hcal$ for some fixed normed function sets $\bar\Hcal \subseteq \Hcal$ and $\bar\Qcal \subseteq \Qcal$ that do not depend on $n$, with norms $\|\cdot\|_\Hcal$ and $\|\cdot\|_\Qcal$ respectively. We optimally allow for regularization using these norms, with hyperparameters $\gamma_n^q \geq 0$ and $\gamma_n^h \geq 0$.

Before we provide finite-sample bounds for this class of estimators, we must establish some technical conditions.
First, we assume $g_1$, $g_2$ are bounded. (We use 1 as the bound without loss of generality, since they appear linearly in the conditional moment restriction in \cref{eq: cond-moment-h}.) 
\begin{assumption}
\label{assum:boundedness}
    We have that: (1) $\|g_2\|_2 \leq 1$; and (2) $\|g_1\|_\infty \leq 1$; 
\end{assumption}

Next, we assume that $\Hcal_n$ and $\Qcal_n$ are uniformly bounded, as is $h^\dagger$.
\begin{assumption}
\label{assum:bounded-estimation-h}
    We have that: (1) $\|h\|_\infty \leq 1$ for all $h \in \Hcal_n$; (2) $\|q\|_\infty \leq 1$ for all $q \in \Qcal_n$; and (3) $\|h^\dagger\|_\infty \leq 1$.
\end{assumption}

Next, we require that $\Hcal_n$ can approximate $h^\dagger$, and $\Qcal_n$ can approximate the projections of $\Hcal_n$

\begin{assumption}
\label{assum:universal-approximation-h}
    There exists some $\delta_n < \infty$ such that: (1) there exists $\Pi_n h^\dagger \in \Hcal_n$ such that $\|\Pi_n h^\dagger - h^\dagger\|_2 \leq \delta_n$; and (2) for every $q \in \{P(h - h^\dagger) : h \in \Hcal_n\}$, there exists $\Pi_n q \in \Qcal_n$ such that $\|\Pi_n q - q\|_2 \leq \delta_n$.
\end{assumption}
This kind of condition is standard in the sieve literature (see \emph{e.g.} \citet{chen2007large}). In addition, this condition can be guaranteed by standard univeral approximation results, such as \emph{e.g.} \citet{yarotsky2017error} when $\Hcal_n$ and $\Qcal_n$ are neural net classes, as long as the range of $P$ is sufficiently smooth such that \emph{e.g.} $\{q \in \{P(h - h^\dagger) : h \in \Hcal_n\}$ lies within a Sobolev ball.

Third, we require that some particular function classes defined in terms of $\Hcal_n$ and $\Qcal_n$ have well-behaved critical radii.

\begin{assumption}
\label{assum:complexity-h}
    There exists some $r_n$ that upper bounds the critical radii of the function classes $\{g_1(W) h(S) q(T) : h \in \starcls(\Hcal_n - h^\dagger), q \in \starcls(\Qcal_n), \|h\|_\Hcal \leq 1, \|q\|_\Qcal \leq 1\}$ and $\{q \in \starcls(\Qcal_n) : \|q\|_\Qcal \leq 1\}$. 
\end{assumption}
Examples of bounds on critical radii of such functions classes are considered in \citet{DikkalaNishanth2020MEoC,kallus2021causal} for a variety of choices for $\Hcal_n,\Qcal_n$, such as  H\"older and Sobolev balls, RKHS balls, linear sieves, neural networks, \emph{etc}.
We note that these critical radii are defined in terms of unit norm-bounded subsets of the respective function classes, and therefore do not depend on the actual complexity of $h^\dagger$ (\emph{i.e.} the size of $\|h^\dagger\|_\Hcal$).

Finally, we require one of the two following assumptions, which either completely bounds the complexity of all functions in $\Hcal_n$ or $\Qcal_n$, or ensures that the regularization coefficients $\gamma_n^q$ and $\gamma_n^h$ are sufficiently large to ensure that we can automatically adapt to the complexity of $h^\dagger$.

\begin{assumption*}
\label{assum:bounded-complexity-h}
    There exists some constant $M \geq 1$ such that: (1) $\|h\|_\Hcal, \|h-h^\dagger\|_\Hcal \leq M$ for all $h \in \Hcal_n$; (2) $\|q\|_\Qcal \leq M$ for all $q \in \Qcal_n$; and (3) $\|h^\dagger\|_\Hcal, \|\Pi_n h^\dagger\|_\Hcal \leq M$.
\end{assumption*}

\begin{assumption*}
\label{assum:regularization-h}
    There exists some constants $M \geq 1$ and $L \geq 1$ such that: (1) $\|h^\dagger\|_\Hcal, \|\Pi_n h^\dagger\|_\Hcal \leq M$; (2) $\|\Pi_n P(h - h^\dagger)\|_\Qcal \leq L \|h - h^\dagger\|_\Hcal$ for all $ \in \Hcal_n$; and (3) for some universal constants $c_1$, $c_2$, and $c_3$, we have
    \begin{align*}
        \gamma_n^q &\geq c_2 \Big(r_n + \sqrt{\log(c_1/\zeta)/n} \Big)^2 \\
        \text{and} \quad \gamma_n^h &\geq c_3 L^2 \Big(\gamma_n^q + \Big(r_n + \sqrt{\log(c_1/\zeta)/n} \Big)^2 \Big) \,.
    \end{align*}
\end{assumption*}

Under the above assumptions, we can provide the finite-sample bound for the estimation error of the minimax estimator $\hat h_n$. The bound is derived from a novel analysis of the minimax estimation problem in \Cref{eq:h-estimator} that differs substantially from the analysis in the seminal work \cite{DikkalaNishanth2020MEoC}.

\begin{theorem}
\label{thm:h-estimator-bound}
    Suppose \Cref{assum:universal-approximation-h,assum:complexity-h,assum:boundedness} hold, as well as either \Cref{assum:bounded-complexity-h} or \Cref{assum:regularization-h}. Then, given some universal constant $c_0$, we have that, for $\zeta\in(0,1/3)$, with probability at least $1-3\zeta$,
    \begin{equation*}
          \|P(\hat h_n - h_0)\|_2 \leq c_0 \Big( M r_n + M \sqrt{\log(c_1/\zeta)/n} + \delta_n + M (\gamma_n^q)^{1/2} + M (\gamma_n^h)^{1/2} +  \mu_n^{1/2} \Big) \,,
    \end{equation*}
    for any $h_0 \in \Hcal_0$, where $c_1$ is the same universal constant as in \Cref{assum:regularization-h}.

    Furthermore, suppose that either of the above sets of assumptions hold, and in addition that: (1) $\mu_n = o(1)$; (2) $\mu_n = \omega(\max(r_n^2,\delta_n^2,\gamma_n^q,\gamma_n^h,1/n))$; (3) $\{h \in \starcls(\Hcal_n - h^\dagger) : \|h\|_\Hcal \leq 1\}$ has critical radius at most $r_n$; (4) $\{h \in \bar\Hcal : \|h\|_\Hcal \leq U\}$ is compact under $\|\cdot\|_\Hcal$ for every $U < \infty$; and (5) $\|h\|_2 \leq K \|h\|_\Hcal$ for all $h \in \bar\Hcal$ and some constant $K < \infty$. Then, we have
    \begin{equation*}
        \|\hat h_n - h^\dagger\|_2 = o_p(1) \,.
    \end{equation*}
    
\end{theorem}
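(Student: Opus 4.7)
The plan is to interpret the minimax program in \cref{eq:h-estimator} as an empirical surrogate for the penalized population problem $\min_h \tfrac12 \|Ph - r_0\|_2^2 + \mu_n \|h\|_2^2$: for any $h$, the exact population sup over $q \in \Lcal_2(T)$ of $\EE[(g_1(W) h(S) - g_2(W))\,q(T) - \tfrac12 q(T)^2]$ equals $\tfrac12 \|Ph - r_0\|_2^2 = \tfrac12 \|P(h - h_0)\|_2^2$ for every $h_0 \in \Hcal_0$. The analysis then proceeds in three conceptual moves: (i) control the deviation between empirical and population versions of the bilinear and quadratic pieces using the critical-radius hypothesis, (ii) compare $\hat h_n$ against the sieve test point $\Pi_n h^\dagger$, and (iii) lower-bound the inner sup over $\Qcal_n$ at $\hat h_n$ by instantiating it at a sieve approximation of $P(\hat h_n - h^\dagger)$ guaranteed by \cref{assum:universal-approximation-h}(2).

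For Part 1, I would first apply a localized Rademacher bound driven by \cref{assum:complexity-h,assum:boundedness,assum:bounded-estimation-h} to deduce that, with probability at least $1 - 3\zeta$ and for all $h \in \Hcal_n$, $q \in \Qcal_n$, the empirical bilinear form $\EE_n[(g_1 h - g_2)q]$ and the empirical quadratic $\EE_n[q^2]$ differ from their population counterparts by at most $O\bigl(r_n(1 + \|h - h^\dagger\|_\Hcal)(1 + \|q\|_\Qcal) + (1+\|q\|_\Qcal)\sqrt{\log(c_1/\zeta)/n}\bigr)$; under \cref{assum:bounded-complexity-h} the norms are $\le M$ automatically, whereas under \cref{assum:regularization-h} the penalties $\gamma_n^h \|h\|_\Hcal^2$ and $\gamma_n^q \|q\|_\Qcal^2$ dominate the norm-dependent slack and therefore self-localize the empirical-process terms. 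Next, I plug the test point $\Pi_n h^\dagger$ in for $h$: since $Ph^\dagger = r_0$ and $\|P(\Pi_n h^\dagger - h^\dagger)\|_2 \lesssim \delta_n$ by \cref{assum:universal-approximation-h}(1), the inner sup at $\Pi_n h^\dagger$ is $O(\delta_n^2 + \mu_n + \gamma_n^h)$. Finally, using \cref{assum:universal-approximation-h}(2) I exhibit a feasible $\tilde q_n \in \Qcal_n$ approximating $P(\hat h_n - h^\dagger)$ and instantiate it in the inner sup at $\hat h_n$, yielding a lower bound of the form $\tfrac12 \|P(\hat h_n - h^\dagger)\|_2^2 - O(\delta_n^2) - O(\gamma_n^q L^2 \|\hat h_n - h^\dagger\|_\Hcal^2)$ minus empirical-process slack. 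Combining the upper and lower bounds, using sub-additivity of square roots, and noting $P(\hat h_n - h_0) = P(\hat h_n - h^\dagger)$ for every $h_0 \in \Hcal_0$, delivers the claimed finite-sample bound uniformly in $h_0$.

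For Part 2, the rate conditions force every term appearing in the Part 1 bound to be $o(1)$, so $\|P(\hat h_n - h^\dagger)\|_2 = o_p(1)$. To upgrade this weak-metric convergence to $L_2$, I would reuse the optimality comparison but now keep track of the $\mu_n$-penalty: dropping the nonnegative quadratic contribution and dividing by $\mu_n$ yields, with probability tending to one,
\begin{equation*}
\EE_n[\hat h_n(S)^2] + \tfrac{\gamma_n^h}{\mu_n}\|\hat h_n\|_\Hcal^2 \;\le\; \EE_n[\Pi_n h^\dagger(S)^2] + \tfrac{\gamma_n^h}{\mu_n}\|\Pi_n h^\dagger\|_\Hcal^2 + o_p(1),
\end{equation*}
where the $o_p(1)$ remainder collects empirical-process slack together with $O(r_n^2 + \delta_n^2 + \gamma_n^q + 1/n)$ divided by $\mu_n$, which vanishes by the assumed $\omega$-condition on $\mu_n$. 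Concentration of $\EE_n[(\cdot)^2]$ around $\|\cdot\|_2^2$ (on the uniformly bounded class $\Hcal_n$) together with $\|\Pi_n h^\dagger - h^\dagger\|_2 \le \delta_n$ then gives $\limsup_n \|\hat h_n\|_2 \le \|h^\dagger\|_2$ in probability, hence $\|\hat h_n\|_2 = O_p(1)$. Combined with $\|h\|_2 \le K\|h\|_\Hcal$ and the stated compactness of $\|\cdot\|_\Hcal$-balls, any subsequence of $\{\hat h_n\}$ has a further subsequence converging in $L_2$ to some $\tilde h \in \bar\Hcal$. Continuity of $P$ together with $\|P(\hat h_n - h^\dagger)\|_2 \to 0$ forces $P\tilde h = Ph^\dagger$, i.e.\ $\tilde h \in \Hcal_0$; combined with $\|\tilde h\|_2 \le \|h^\dagger\|_2$ and the uniqueness of the minimum-$L_2$-norm element in the closed affine subspace $\Hcal_0$, this yields $\tilde h = h^\dagger$, so the entire sequence converges, i.e.\ $\|\hat h_n - h^\dagger\|_2 = o_p(1)$.

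The main obstacle is the simultaneous bookkeeping of three regularization scales ($\mu_n$, $\gamma_n^h$, $\gamma_n^q$) within the minimax analysis. Under \cref{assum:regularization-h}, norms are not directly bounded, so the localization of the empirical-process terms must be traded against the penalty terms in exactly the right proportion; this is where the Lipschitz-approximation hypothesis \cref{assum:regularization-h}(2), which guarantees $\|\Pi_n P(h - h^\dagger)\|_\Qcal \le L \|h - h^\dagger\|_\Hcal$, is used in an essential way to ensure that the feasible maximizer $\tilde q_n$ used in step (iii) can be chosen with a controlled $\Qcal$-norm. The strong-metric upgrade in Part 2 is equally delicate: it relies critically on $\mu_n$ dominating every other error source, which is exactly the role of the rate condition $\mu_n = \omega(\max(r_n^2, \delta_n^2, \gamma_n^q, \gamma_n^h, 1/n))$.
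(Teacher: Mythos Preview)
Your overall architecture matches the paper's proof closely: both interpret the population inner sup as $\tfrac12\|P(h-h_0)\|_2^2$, invoke localized critical-radius bounds on the bilinear and quadratic empirical-process pieces, compare $\hat h_n$ to the sieve point $\Pi_n h^\dagger$ via optimality, and lower-bound the inner sup at $\hat h_n$ by instantiating $q$ at $\Pi_n P(\hat h_n-h^\dagger)$. The paper organizes this into explicit sub-lemmas (a strong-convexity inequality $\tfrac12\|h-h^\dagger\|_w^2\le J(h)-J(h^\dagger)$, a bound on the empirical inner maximizer $q_n(h)$, a reduction to a single stochastic-equicontinuity term, and a localization bound on that term), but the content is what you sketch.

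There is one genuine gap in your Part~2. Your subsequence extraction relies on compactness of $\{\|h\|_\Hcal\le U\}$ under $\|\cdot\|_\Hcal$, which requires $\|\hat h_n\|_\Hcal=O_p(1)$, not merely $\|\hat h_n\|_2=O_p(1)$. Your displayed inequality only controls $\EE_n[\hat h_n^2]+\tfrac{\gamma_n^h}{\mu_n}\|\hat h_n\|_\Hcal^2$, and since $\gamma_n^h/\mu_n\to 0$ by assumption, this gives no control on $\|\hat h_n\|_\Hcal$. Under \cref{assum:bounded-complexity-h} this is automatic, but under \cref{assum:regularization-h} you must extract the $\|\cdot\|_\Hcal$-bound from the Part~1 analysis itself: the paper retains the negative $-\gamma_n^h\|\hat h_n\|_\Hcal^2$ term on the right-hand side of the weak-norm inequality and, after absorbing the $\Omega^h(\hat h_n)$-dependent empirical-process terms into it (using the minimum-size requirement on $\gamma_n^h$), obtains $\gamma_n^h\Omega^h(\hat h_n)^2\lesssim \delta_n^2+M^2\gamma_n^h+\mu_n$, which yields $\|\hat h_n\|_\Hcal$ stochastically bounded. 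The paper then runs a slightly different compactness argument: rather than extracting subsequences of the random $\hat h_n$, it shows the deterministic quantity $\inf\{\|h-h^\dagger\|_w:\inf_{h_0\in\Hcal_0}\|h-h_0\|_2\ge\epsilon,\ \|h\|_\Hcal\le U\}$ is strictly positive by the extreme value theorem, combines this with weak-norm convergence to get $\inf_{h_0\in\Hcal_0}\|\hat h_n-h_0\|_2=o_p(1)$, and finishes via the Pythagorean identity $\|h_0\|_2^2=\|h^\dagger\|_2^2+\|h_0-h^\dagger\|_2^2$ together with your norm bound $\|\hat h_n\|_2\le\|h^\dagger\|_2+o_p(1)$. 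Your direct subsequence route is morally equivalent but, as written, conflates compactness in $\|\cdot\|_\Hcal$ with boundedness in $\|\cdot\|_2$.
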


Note that our second bound in \Cref{thm:h-estimator-bound} allows for $\Hcal_n$ and $\Qcal_n$ to be infinite-complexity classes with no well-defined critical radii. It is sufficient that the classes are well-behaved when restricted to radius $\|h^\dagger\|_\Hcal$. Importantly, the algorithm \emph{does not} require knowledge of $\|h^\dagger\|_\Hcal$, and our bound is automatically adaptive to this value. It is important to keep in mind, though, this second bound requires an additional condition that the regularization hyperparameters $\gamma_n^q$ and $\gamma_n^h$ are sufficiently large, although the required size of these hyperparameters does not depend on the unknown $\|h^\dagger\|_\Hcal$.
Conversely, under our first set of assumptions, where $\Hcal_n$ and $\Qcal_n$ have finite total complexity, there is no such restriction, and we are free to set $\gamma_n^q = \gamma_n^h = 0$

Note also that our result being adaptive to the norm of $h^\dagger$ is very similar to the corresponding result in \citet{DikkalaNishanth2020MEoC}. However, our result also ensures strong norm consistency of our estimate $\hat h_n$.

\begin{remark}[Clever Instrument Approach]\label{rem: clever instrument}
We note that within our minimax estimation framework for $h$, we can incorporate such a TMLE constraint, within the estimation of $h$, in a manner similar to the clever covariate adjustment of \cite{scharfstein1999adjusting}. In particular, as long as the test function that we will use when training $h$ is of the form $q + \epsilon\, \hat{q}^\dagger$, where $\hat{q}^\dagger$ is an estimate of $q^\dagger$ (as we describe in the next section), and we do not penalize $\epsilon$, then note that the first order condition for $\epsilon$, implies that at any saddle of the min-max problem the crucial moment is zero, i.e. $\E[q^\dagger(T)(g_2(W) - g_1(W) h(S))]=0$. Thus the plug-in estimate will be doubly robust. The function $q^\dagger(T)$ can be seen as a ``clever instrument'' analogous to how the Riesz representer $a(S)$ is used as a ``clever covariate'' when adjusting for observed confounding.
\end{remark}

\subsection{Estimation of the Debiasing Nuisance Function}\label{sec: debias-nuisance-est}

Next, we consider the estimation of the debiasing nuisance function $q^\dagger$. Here, we will consider estimators of the form
\begin{align}
    \hat q_n &=
    \argmin_{q \in \widetilde\Qcal_n} \EE_n \Big[ 
    \prns{g_1(W) \hat\xi_n(S)-q(T)}^2 
    \Big] + \tilde\gamma_n^q \|q\|_{\widetilde\Qcal}^2 \nonumber \\
    &=     
    \argmax_{q\in\widetilde\Qcal_n} \EE_n \Big[ g_1(W) \hat\xi_n(S) q(T) - \frac{1}{2} q(T)^2 \Big] - \tilde\gamma_n^q \|q\|_{\widetilde\Qcal}^2
     \,, \label{eq:q-estimator} \\ 
    \text{where}\quad\hat \xi_n &= \argmin_{\xi \in \Xi_n} \sup_{q \in \Qcal_n} \EE_n \Big[ g_1(W) \xi(S) q(T) - \frac{1}{2} q(T)^2 - m(W;\xi)  \Big] - \gamma_n^q \|q\|_{\Qcal}^2 + \gamma_n^\xi \|\xi\|_{\Xi}^2 \,. \label{eq:xi-estimator}
\end{align}
Note that in the case that $\widetilde\Qcal_n = \Qcal_n$ we have that $\hat q_n$ is the corresponding interior supremum solution in the minimax estimation of $\hat\xi_n$, so they could be solved for together. However, we allow for the possibility of separate classes for practical empirical reasons. For example, we may wish to use a kernel estimator where the inner maximization is performed analytically for $\hat\xi_n$, then use a different class based on, \eg, neural nets or random forests for the corresponding $\hat q_n$ estimate. Similar to the previous section, we use normed function classes $\Qcal_n \subseteq \bar\Qcal \subseteq \Qcal$, $\widetilde\Qcal_n \subseteq \widetilde\Qcal \subseteq \Qcal$, and $\Xi_n \subseteq \bar\Xi \subseteq \Hcal$, for some fixed normed function sets $\bar\Qcal$, $\widetilde\Qcal$, and $\bar\Xi$ with norms $\|\cdot\|_\Qcal$, $\|\cdot\|_{\widetilde\Qcal}$, and $\|\cdot\|_\Xi$ respectively, and we allow for regularization using these norms, with coefficients $\gamma_n^q$, $\tilde\gamma_n^q$, and $\gamma_n^\xi$.

We do not need to worry about uniqueness of the estimation for estimating $q^\dagger$, since $q^\dagger = P \xi_0$ is unique for any $\xi_0 \in \Xi_0$ according to \Cref{lemma: minimum-norm}
and 
we will be able to obtain rates for the estimation of $q^\dagger$ under $L_2$ norm. However, in order to provide a finite-sample estimation result, we require analogues of \cref{assum:bounded-estimation-h,assum:universal-approximation-h,assum:complexity-h}, as follows. In particular, we require these assumptions to hold for some arbitrary fixed $\xi^\dagger \in \Xi_0$.

\begin{assumption}
\label{assum:bounded-estimation-q}

We have that: (1) $\|\xi\|_\infty \leq 1$ for every $\xi \in \Xi_n$; (2) $\|q\|_\infty \leq 1$ for every $q \in \Qcal_n$; (3) $\|q\|_\infty \leq 1$ for every $q \in \widetilde\Qcal_n$; (4) $\|q^\dagger\|_\infty \leq 1$; and (5) $\|\xi^\dagger\|_\infty \leq 1$ 
    
\end{assumption}

\begin{assumption}
\label{assum:universal-approximation-q}
    There exists some $\delta_n < \infty$ such that: (1) there exists some $\Pi_n \xi^\dagger \in \Xi_n$ such that $\|\Pi_n \xi^\dagger - \xi^\dagger\|_2 \leq \delta_n$; (2) for every $q \in \{P \xi : \xi \in \Xi_n\}$ there exists $\Pi_n q \in \Qcal_n$ such that $\|q - \Pi_n q\|_2 \leq \delta_n$; and (3) there exists $\Pi_n q^\dagger \in \widetilde \Qcal_n$ such that $\|q^\dagger - \Pi_n q^\dagger\|_2 \leq \delta_n$.
\end{assumption}

\begin{assumption}
\label{assum:complexity-q}
    There exists some $r_n$ that bounds the critical radii of the star-shaped closures of the function classes: (1) $\{g_1(W)\xi(S) q(T) : \xi \in \starcls(\Xi_n - \xi^\dagger), q \in \starcls(\widetilde\Qcal_n - q^\dagger), \|\xi\|_\Xi \leq 1, \|q\|_{\widetilde\Qcal} \leq 1\}$; (2) $\{q \in \starcls(\Qcal_n - q^\dagger) : \|q\|_\Qcal \leq 1\}$; (3) $\{q \in \starcls(\widetilde\Qcal_n - q^\dagger) : \|q\|_{\widetilde\Qcal} \leq 1\}$; (4) $\{\xi \in \starcls(\xi_n - \xi^\dagger) : \|\xi\|_{\Xi} \leq 1\}$; and (5) $\{g_1(W)\xi(S) q(T) : \xi \in \starcls(\Xi_n - \xi^\dagger), q \in \starcls(\Qcal_n - q^\dagger), \|\xi\|_\Xi \leq 1, \|q\|_\Qcal \leq 1\}$.
\end{assumption}

We note that the required assumptions on $\Qcal_n$ are significantly stricter than those on $\widetilde\Qcal_n$. In particular, $\widetilde\Qcal_n$ only needs to be able to approximate the single function $q^\dagger$, rather than all functions of the form $\EE[g_1(W) \xi(S) \mid T]$ for $\xi \in \bar\Xi$. We also note that most parts of \Cref{assum:bounded-estimation-q,assum:universal-approximation-q,assum:complexity-q} are identical to \Cref{assum:bounded-estimation-h,assum:universal-approximation-h,assum:complexity-h} when $\Xi_n=\Hcal_n$ and $\bar\Xi=\bar\Hcal$.
We also note that in the case that $\widetilde\Qcal_n = \Qcal_n$ then the first and second function classes in \cref{assum:complexity-q} are identical, as are the fourth and fifth function classes. 

In addition, for this estimation we further impose a boundedness assumption on $m$. 
\begin{assumption}
\label{assum:boundedness-q}
    We have that $\|m(W;h)\|_2 \leq \|h\|_2$ for all $h \in \Hcal$.
\end{assumption}

Finally, as in the previous section, we require either a condition on the maximum functional complexity of the classes $\Qcal_n$, $\widetilde\Qcal_n$, and $\Xi_n$, or a condition that the regularization coefficients are set sufficiently large.

\setcounter{subassumption}{0}
\begin{assumption*}
\label{assum:bounded-complexity-q}
    There exists some constant $M \geq 1$ such that: (1) $\|\xi\|_\Xi, \|\xi-\xi^\dagger\|_\Xi \leq M$ for all $\xi \in \Xi_n$; (2) $\|q\|_\Qcal, \|q-q^\dagger\|_\Qcal \leq M$ for all $q \in \Qcal_n$; (3) $\|q\|_{\widetilde\Qcal}, \|q-q^\dagger\|_{\widetilde\Qcal} \leq M$ for all $q \in \widetilde\Qcal_n$; (4) $\|\xi^\dagger\|_\Xi, \|\Pi_n \xi^\dagger\|_\Xi \leq M$; (5) $\|q^\dagger\|_\Qcal, \|\Pi_n q^\dagger\|_\Qcal \leq M$; and (6) $\|q^\dagger\|_{\widetilde\Qcal,} \|\Pi_n q^\dagger\|_{\widetilde\Qcal} \leq M$.
\end{assumption*}

\begin{assumption*}
\label{assum:regularization-q}
    There exists some constants $M \geq 1$ and $L \geq 1$ such that: (1) $\|\xi^\dagger\|_\Xi, \|\Pi_n \xi^\dagger\|_\Xi \leq M$; (2) $\|q^\dagger\|_\Qcal, \|\Pi_n q^\dagger\|_\Qcal \leq M$; (3) $\|\Pi_n P \xi\|_\Qcal \leq L \|\xi\|_\Xi$ for all $\xi \in \Xi_n$; and (4) for some universal constants $c_1$, $c_2$, $c_3$, and $c_4$ we have
    \begin{align*}
        \gamma_n^q &\geq c_2 \Big(r_n + \sqrt{\log(c_1/\zeta)/n} \Big)^2 \\
        \tilde\gamma_n^q &\geq c_3 \Big(r_n + \sqrt{\log(c_1/\zeta)/n} \Big)^2 \\
        \text{and} \quad \gamma_n^h &\geq c_4 L^2 \Big(\gamma_n^q + \Big(r_n + \sqrt{\log(c_1/\zeta)/n} \Big)^2 \Big) \,.
    \end{align*}
\end{assumption*}

Under the above assumptions, we can provide the following finite-sample bound. 
\begin{theorem}
\label{thm:q-estimator-bound}
    Suppose \Cref{assum:bounded-estimation-q,assum:complexity-q,assum:boundedness,assum:boundedness-q} hold, as well as either \Cref{assum:bounded-complexity-q} or \Cref{assum:regularization-q}. Then, given some universal constant $c_0$, we have that, for $\zeta\in(0,1/7)$, with probability at least $1-7\zeta$,
    \begin{align*}
          \|\hat q_n - q^\dagger\|_2 &\leq c_0 \Big( r_n^{1/2} + \prns{\log(c_1/\zeta)/n}^{1/4} + M r_n + M \sqrt{\log(c_1/\zeta)/n} \\
          &\qquad + \delta_n + M (\gamma_n^q)^{1/2} + M (\gamma_n^h)^{1/2} + M (\tilde \gamma_n^q)^{1/2}  \Big) \,,
    \end{align*}
    where $c_1$ is the same universal constant as in \Cref{assum:regularization-q}.
\end{theorem}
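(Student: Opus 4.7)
The plan is to split the bound into two stages, following the two-stage structure of the estimator in \cref{eq:xi-estimator,eq:q-estimator}, and then combine them through a quadratic inequality in the $L_2$ error.

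\textbf{Stage 1: weak-metric rate for $\hat\xi_n$.} The intermediate estimator $\hat\xi_n$ is a penalized minimax estimator whose form parallels $\hat h_n$ in \cref{eq:h-estimator}, with the linear term $-m(W;\xi)$ playing the role of the data term $-g_2(W)q(T)$ and with no $\mu_n$ penalty. The structural ingredient enabling the analog of \Cref{thm:h-estimator-bound} is the first-order optimality $\EE[g_1(W)\xi(S)\,q^\dagger(T)] = \EE[m(W;\xi)]$ for all $\xi \in \Hcal$, which is immediate from $P^\star P\xi^\dagger = \alpha$ (\Cref{thm: new-nuisance}) and replaces the role of the exact conditional moment $Ph_0 = r_0$ used in the proof of \Cref{thm:h-estimator-bound}. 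Adapting that proof under \Cref{assum:bounded-estimation-q,assum:universal-approximation-q,assum:boundedness,assum:boundedness-q}, the relevant parts (1), (2), (4) of \Cref{assum:complexity-q}, and either \Cref{assum:bounded-complexity-q} or \Cref{assum:regularization-q}, should yield, with probability at least $1-3\zeta$,
\begin{equation*}
  \epsilon_n^\xi \,\coloneqq\, \|P(\hat\xi_n - \xi^\dagger)\|_2 = \|P\hat\xi_n - q^\dagger\|_2 \;\leq\; C\bigl( M r_n + M\sqrt{\log(c_1/\zeta)/n} + \delta_n + M\sqrt{\gamma_n^q} + M\sqrt{\gamma_n^\xi}\bigr).
\end{equation*}

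\textbf{Stage 2: $L_2$ rate for $\hat q_n$.} Since $\hat q_n$ is a penalized least-squares regression of $g_1(W)\hat\xi_n(S)$ onto $\widetilde\Qcal_n$, I would start from the basic inequality $L_n(\hat q_n) \leq L_n(\Pi_n q^\dagger)$. Writing $u \coloneqq \hat q_n - \Pi_n q^\dagger$, this rearranges to
\begin{equation*}
  \EE_n[u(T)^2] \;\leq\; 2\EE_n\bigl[u(T)\bigl(g_1(W)\hat\xi_n(S) - \Pi_n q^\dagger(T)\bigr)\bigr] + \tilde\gamma_n^q\bigl(\|\Pi_n q^\dagger\|_{\widetilde\Qcal}^2 - \|\hat q_n\|_{\widetilde\Qcal}^2\bigr).
\end{equation*}
Passing to population using localized Rademacher bounds for the star-shaped classes in \Cref{assum:complexity-q} parts (3) and (5), and using iterated expectations to write $\EE[u(T) g_1(W)\hat\xi_n(S)] = \EE[u(T)(P\hat\xi_n)(T)]$, a Cauchy-Schwarz estimate combined with Stage 1 and \Cref{assum:universal-approximation-q}(3) bounds the population cross term by $2\|u\|_2(\epsilon_n^\xi + \delta_n)$. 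Collecting terms produces a quadratic inequality of the form
\begin{equation*}
  \|u\|_2^2 \;\leq\; C_1\|u\|_2\bigl(\epsilon_n^\xi + \delta_n + r_n + \sqrt{\log(c_1/\zeta)/n}\bigr) + C_2\bigl(r_n + \sqrt{\log(c_1/\zeta)/n}\bigr) + \text{(penalty remainders)},
\end{equation*}
where the final unlocalized $r_n + \sqrt{\log/n}$ piece not multiplied by $\|u\|_2$ is the source of the slow rate. Solving via $\rho^2 \leq A\rho + B \Rightarrow \rho \lesssim A + \sqrt{B}$, combined with the triangle inequality $\|\hat q_n - q^\dagger\|_2 \leq \|u\|_2 + \delta_n$ and a union bound over the events in Stages~1–2, yields the claimed bound with probability $\geq 1-7\zeta$.

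\textbf{Main obstacle.} The decisive technical step — and the one producing the $\sqrt{r_n}$ and $(\log(1/\zeta)/n)^{1/4}$ rates — is controlling the empirical process $(\EE_n - \EE)\bigl[u(T)\, g_1(W)\hat\xi_n(S)\bigr]$ in Stage 2. Because $\hat\xi_n$ is data-dependent and we deliberately do not establish (and are not in a position to establish) any strong-norm rate for $\hat\xi_n - \xi^\dagger$, this empirical process must be dominated by a supremum over $\xi \in \Xi_n$, invoking the product-class critical radius from \Cref{assum:complexity-q}(1). That bound can be localized in $\|u\|_{\widetilde\Qcal}$ but \emph{not} in the strong norm of $\xi$, leaving an $O(r_n + \sqrt{\log/n})$ contribution on the right-hand side that is not multiplied by $\rho = \|u\|_2$. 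When the resulting quadratic inequality is inverted, this produces the characteristic slow $\sqrt{r_n}$ rate of misspecified minimum-distance estimation, exactly matching the informal preview in the paper's roadmap.
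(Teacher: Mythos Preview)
Your two-stage decomposition mirrors the paper's, but you have the location of the slow rate exactly backwards, and this stems from a genuine gap in Stage~1.

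\textbf{The gap in Stage~1.} You claim that the first-order condition $P^\star P\xi^\dagger=\alpha$ plays the same role as the exact moment restriction $Ph^\dagger=r_0$, so that the weak-metric analysis of \Cref{thm:h-estimator-bound} carries over to give a fast rate for $\|P(\hat\xi_n-\xi^\dagger)\|_2$. It does not. The crucial structural fact in the primal proof is that the inner maximizer at the truth vanishes: $q_0(h^\dagger)=Ph^\dagger-r_0=0$. In the dual problem the inner maximizer is $q_0(\xi^\dagger)=P\xi^\dagger=q^\dagger$, which is \emph{nonzero} in general. Consequently, when you decompose the stochastic-equicontinuity term $\psi(\hat\xi_n,\Pi_n q_0(\hat\xi_n))-\psi(\xi^\dagger,q_n(\Pi_n\xi^\dagger))$, three terms that vanish in the primal survive here; in particular you are left with
\[
\widetilde\Psi_n(\hat\xi_n-\xi^\dagger)\;=\;g_1(W)\,(\hat\xi_n-\xi^\dagger)(S)\,q^\dagger(T)\;-\;m(W;\hat\xi_n-\xi^\dagger).
\]
Its population mean is zero by the FOC (this is all the FOC buys you), but the localized empirical-process bound is of order $\epsilon_n\,\|\hat\xi_n-\xi^\dagger\|_2$, i.e.\ in the \emph{strong} norm of $\hat\xi_n-\xi^\dagger$. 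Without an ill-posedness bound you cannot convert this to the weak norm; bounding $\|\hat\xi_n-\xi^\dagger\|_2$ by a constant leaves an $O(\epsilon_n)$ additive term on the right side of the quadratic inequality for $\|\hat\xi_n-\xi^\dagger\|_w^2$, forcing $\|\hat\xi_n-\xi^\dagger\|_w=O(\epsilon_n^{1/2})$. This is precisely where the $r_n^{1/2}$ and $(\log(c_1/\zeta)/n)^{1/4}$ terms originate in the paper.

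\textbf{Stage~2 is not the bottleneck.} Conversely, the paper's analysis of the regression step shows that for any fixed $\xi\in\Xi_n$ one has $\|\tilde q_n(\xi)-q^\dagger\|_2\lesssim \epsilon_n+\delta_n+\|\xi-\xi^\dagger\|_w+M(\tilde\gamma_n^q)^{1/2}$, a \emph{fast} rate in terms of the weak metric. The empirical-process term you worry about is handled by centering at $(\xi^\dagger,q^\dagger)$ and applying the product-class localization to $G(W;\xi-\xi^\dagger,q-q^\dagger)$, which yields a bound linear in $\|q-q^\dagger\|_2$ plus an $O(\epsilon_n^2)$ remainder---not an unlocalized $O(\epsilon_n)$ term. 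So your identification of the ``main obstacle'' in Stage~2 is misplaced; the slow rate enters Stage~2 only through the Stage~1 bound on $\|\hat\xi_n-\xi^\dagger\|_w$.
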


Compared with \cref{thm:h-estimator-bound}, this bound only introduces additional slow rate terms the order of $\sqrt{r_n}$ and $n^{-1/4}$. However, these terms are independent of the unknown function complexity $M$, which only impacts the corresponding fast rate terms of order $r_n$ and $n^{-1/2}$, as well as the regularization coefficient terms.
In addition, the rate here is in terms of the strong $L_2$ norm, rather than a weak projected norm. 
Note as well that this result is based on our novel analysis of the estimation problems in \Cref{eq:xi-estimator,eq:q-estimator}. 
They are different from the canonical forms of minimax problems in \cite{DikkalaNishanth2020MEoC}.

\section{Debiased Inference on the Linear Functional}\label{sec: est-functional}

Given the finite sample bounds from the previous section, and the discussion in \Cref{sec: challenge}, we can now present our main results on the estimation and inference of $\theta^\star$. First we define the $K$-fold cross-fitting estimator, for some fixed $K$ that does not depend on $n$, as follows.

\begin{definition}[Debiased Machine Learning Estimator]\label{def: theta-est}
Fix an integer $K \ge 2$. 
\begin{enumerate}
\item Randomly split the $n$ observations into $K$ (approximately) even folds, whose index sets are denoted by $\Ical_1, \dots, \Ical_K$, respectively. 
\item For $k = 1, \dots, K$, use all data except that in $\Ical_k$ to construct nuisance estimators $\hat h^{(k)}$ and $\hat q^{(k)}$  as described in \Cref{eq:h-estimator,eq:q-estimator}, respectively. 
\item Construct the final debiased machine learning estimator:
\begin{align*}
\hat\theta_n = \frac{1}{K}\sum_{k=1}^K \frac{1}{\abs{\Ical_k}}\sum_{i \in \Ical_k}\psi\prns{W_i; \hat h^{(k)}, \hat q^{(k)}}, ~~ \psi(W; h, q) = {m(W; h) + q(T)(g_2(W) -  g_1(W)h(S))}.
\end{align*}
\end{enumerate}
\end{definition}

Then, we can obtain the following result.

\begin{theorem}
\label{thm:dml-asymp}
    Let the estimator $\hat\theta_n$ be defined as in \Cref{def: theta-est}, and suppose the full conditions of \Cref{thm:h-estimator-bound,thm:q-estimator-bound} hold. Then, as long as $r_n = o(n^{-1/3}), \delta_n = o(n^{-1/4}), \delta_nr_n^{1/2} = o(n^{-1/2}), \mu_n r_n = o(n^{-1})$, and $\mu_n \delta_n^2 = o(n^{-1})$, we have $\|P(\hat h_n - h_0)\|_2\|\hat q_n - q^\dagger\|_2 = o_p(n^{-1/2})$,  and  that as $n\to\infty$,
        \begin{align*}
\sqrt{n}\prns{\hat\theta_n - \theta^\star} = \frac{1}{\sqrt{n}}\sum_{i=1}^n \prns{\psi(W_i;  h^\dagger,  q^\dagger) -\theta^\star} + o_p(1) \rightsquigarrow \mathcal{N}\prns{0, \sigma_0^2} \,,
\end{align*}
    where $\Ncal(0,\sigma_0^2)$ denotes a Gaussian distribution with mean $0$ and variance 
    \begin{equation}\label{eq: asymp-var}
        \sigma_0^2 = \EE\Big[ \Big( \theta^\star - \psi(W;h^\dagger,q^\dagger) \Big)^2 \Big] \,.
    \end{equation}
\end{theorem}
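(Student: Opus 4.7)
The plan is to adapt the standard debiased-machine-learning decomposition of \cite{chernozhukov2018double} to our setting, using the Neyman orthogonal score $\psi$ from \Cref{thm:dr-ident} and exploiting cross-fitting to neutralize the dependence between nuisance estimators and evaluation samples. Writing the fold-wise decomposition
\begin{align*}
\sqrt{n}\prns{\hat\theta_n-\theta^\star} &= \underbrace{\frac{1}{\sqrt{n}}\sum_{i=1}^n\prns{\psi(W_i;h^\dagger,q^\dagger)-\theta^\star}}_{T_{1,n}} \\
&\quad+ \underbrace{\sum_{k=1}^K \frac{1}{\sqrt{n}}\sum_{i\in\Ical_k}\bracks{\Delta_k(W_i) - \hP\Delta_k}}_{T_{2,n}} \;+\; \underbrace{\sum_{k=1}^K \frac{\abs{\Ical_k}}{\sqrt{n}}\hP\prns{\psi(\cdot;\hat h^{(k)},\hat q^{(k)})-\theta^\star}}_{T_{3,n}},
\end{align*}
where $\Delta_k = \psi(\cdot;\hat h^{(k)},\hat q^{(k)})-\psi(\cdot;h^\dagger,q^\dagger)$, I will show $T_{2,n}=o_p(1)$, $T_{3,n}=o_p(1)$, and apply the Lindeberg-L\'evy CLT to $T_{1,n}$, using that under \Cref{assum:boundedness,assum:boundedness-q,assum:bounded-estimation-h,assum:bounded-estimation-q} the score is square-integrable with the variance in \cref{eq: asymp-var}.

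For the bias term $T_{3,n}$, the key ingredient is \Cref{thm:dr-ident}: conditional on the training data so that $\hat h^{(k)},\hat q^{(k)}$ are non-random, and picking any $h_0\in\Hcal_0$ and $\xi_0\in\Xi_0$ (so that $q^\dagger=P\xi_0$), we have
\begin{equation*}
\hP\prns{\psi(\cdot;\hat h^{(k)},\hat q^{(k)})}-\theta^\star = \langle P(\hat h^{(k)}-h_0),\,\hat q^{(k)}-q^\dagger\rangle,
\end{equation*}
and Cauchy-Schwarz gives $\abs{T_{3,n}}\le \sqrt{n}\sum_k (\abs{\Ical_k}/n)\|P(\hat h^{(k)}-h_0)\|_2\|\hat q^{(k)}-q^\dagger\|_2$. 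Plugging in the finite-sample rates from \Cref{thm:h-estimator-bound,thm:q-estimator-bound}, namely $\|P(\hat h^{(k)}-h_0)\|_2=O_p(r_n+\delta_n+\mu_n^{1/2}+\text{reg.})$ and $\|\hat q^{(k)}-q^\dagger\|_2=O_p(r_n^{1/2}+n^{-1/4}+\delta_n+\text{reg.})$, expanding the product, and invoking the stated conditions $r_n=o(n^{-1/3})$, $\delta_n=o(n^{-1/4})$, $\delta_n r_n^{1/2}=o(n^{-1/2})$, $\mu_n r_n=o(n^{-1})$, $\mu_n\delta_n^2=o(n^{-1})$ yields $\|P(\hat h^{(k)}-h_0)\|_2\|\hat q^{(k)}-q^\dagger\|_2=o_p(n^{-1/2})$, hence $T_{3,n}=o_p(1)$.

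For the stochastic equicontinuity term $T_{2,n}$, I condition on the training data: since each $(\hat h^{(k)},\hat q^{(k)})$ is independent of $\{W_i:i\in\Ical_k\}$, the $i$-th summand is a centered i.i.d.\ average given the training sample, with conditional variance bounded by $\frac{K}{n}\,\EE[\Delta_k(W)^2\mid\text{training}]$. I will expand $\Delta_k$ into three pieces---$m(W;\hat h^{(k)}-h^\dagger)$, $(\hat q^{(k)}-q^\dagger)(T)(g_2(W)-g_1(W)h^\dagger(S))$, and $\hat q^{(k)}(T)g_1(W)(h^\dagger(S)-\hat h^{(k)}(S))$---and bound each in $L_2$ by a constant (via \Cref{assum:boundedness,assum:boundedness-q} and the uniform boundedness of $\Hcal_n,\Qcal_n$) times $\|\hat h^{(k)}-h^\dagger\|_2$ or $\|\hat q^{(k)}-q^\dagger\|_2$. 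The second (consistency) statements in \Cref{thm:h-estimator-bound,thm:q-estimator-bound} give $\|\hat h^{(k)}-h^\dagger\|_2=o_p(1)$ and $\|\hat q^{(k)}-q^\dagger\|_2=o_p(1)$, so the conditional variance is $o_p(1)$, and a conditional Chebyshev / Markov argument followed by unconditioning yields $T_{2,n}=o_p(1)$.

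The main obstacle is the bookkeeping in step two: the rate bounds in \Cref{thm:h-estimator-bound,thm:q-estimator-bound} contain several additive terms ($r_n$, $\delta_n$, $\mu_n^{1/2}$, $(\gamma_n^q)^{1/2}$, $(\gamma_n^h)^{1/2}$, $(\tilde\gamma_n^q)^{1/2}$, $r_n^{1/2}$, $n^{-1/4}$), and the nine cross-products must simultaneously be $o(n^{-1/2})$. Most terms collapse via elementary inequalities (e.g.\ $r_n\cdot r_n^{1/2}=r_n^{3/2}=o(n^{-1/2})$ iff $r_n=o(n^{-1/3})$; $\delta_n\cdot n^{-1/4}=o(n^{-1/2})$ iff $\delta_n=o(n^{-1/4})$), but the $\mu_n$-dependent conditions must be reconciled with the complementary requirements in \Cref{thm:h-estimator-bound} that $\mu_n=\omega(\max(r_n^2,\delta_n^2,\gamma_n^q,\gamma_n^h,1/n))$ while still $\mu_n=o(1)$; likewise the regularization terms $\gamma_n^q,\gamma_n^h,\tilde\gamma_n^q$ must be chosen small enough under \Cref{assum:regularization-h,assum:regularization-q} that $(\gamma_n^\bullet)^{1/2}\cdot\{r_n^{1/2},n^{-1/4},\delta_n\}=o(n^{-1/2})$. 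Once this is verified, combining the three pieces gives the asymptotic linearization and the CLT concludes.
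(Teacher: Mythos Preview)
Your proposal is correct and follows essentially the same three-term decomposition and argument as the paper's proof (oracle term plus stochastic-equicontinuity term controlled via cross-fitting and strong-norm consistency, plus bias term controlled via the mixed-bias bound and the product of weak/strong rates). One small fix: the identity $\hP\bigl(\psi(\cdot;\hat h^{(k)},\hat q^{(k)})\bigr)-\theta^\star = \langle P(\hat h^{(k)}-h_0),\,\hat q^{(k)}-q^\dagger\rangle$ needs \Cref{lemma: bias-product-general} rather than \Cref{thm:dr-ident}, since $\hat q^{(k)}\in\widetilde\Qcal_n$ is not of the form $P\xi$ for some $\xi\in\Hcal$ (\Cref{thm:dr-ident} only covers $q=P\xi$); once you use the general mixed-bias identity with $q_0=q^\dagger\in\Qcal_0$, everything goes through exactly as you outline.
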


We note that for the full conditions of \Cref{thm:h-estimator-bound} to hold, the penalization hyperparameter $\mu_n$ only needs to converge arbitrarily slower than the critical radii bound $r_n^2$ and the approximation error bound $\delta_n^2$,
\ie, $\mu_n = \omega(\max(r_n^2, \delta_n^2))$. 
At the same time, the condition $\mu_n r_n = o(n^{-1})$ and $\mu_n \delta_n^2 = o(n^{-1})$ requires $\mu_n = o(\min(n^{-1}r_n^{-1}, n^{-1}\delta_n^{-2}))$.
Moreover, when  $r_n = o(n^{-1/3})$, the condition $\delta_nr_n^{1/2} = o(n^{-1/2})$ automatically holds if further $\delta_n = o(n^{-1/3})$.  
The conditions on the critical radii $r_n = o(n^{-1/3})$ and and approximation error $\delta_n = o(n^{-1/3})$ can be justified for many  commonly used machine learning function classes with appropriate structure; see for instance the references on existing results for approximation errors and critical radii cited below \Cref{assum:universal-approximation-h,assum:complexity-h}.

We also note that in the case that $\Hcal_0$ and $\Qcal_0$ are singletons, then \Cref{eq: asymp-var} is known to be the semiparametrically efficient w.r.t. the nonparametric model for $q_0$ and $h_0$; 
for example, \citet{cui2020semiparametric,kallus2021causal} derive this for the problem of proximal causal inference, while the general result easily follows from  \citet{ai2012semiparametric}.
However, when $h_0$ and $q_0$ are non-unique, semiparametric efficiency is unclear. The variance in \Cref{eq: asymp-var} can be shown to arise as the semiparametric efficiency bound when restricting to certain submodels that keep a particular choice of $h_0$ and $q_0$ fixed (see \emph{e.g.} \citet{severini2012efficiency}.) Unfortunately, the meaning of this as a best-achievable variance is unclear as it depends on this choice.

Furthermore, we can estimate the asymptotic variance using the cross-fitting nuisance estimators: 
    \begin{equation}\label{eq: variance-estimator}
        \hat\sigma_n^2 = \frac{1}{K}\sum_{k=1}^K \frac{1}{\abs{\Ical_k}}\sum_{i \in \Ical_k} \Big(\hat\theta_n - \psi(W; \hat h^{(k)}, \hat q^{(k)}) \Big)^2
    \end{equation}
We can further use the  variance estimator above to construct a confidence interval: 
\begin{align}\label{eq: CI}
\op{CI} = \left[\hat\theta_n - \Phi^{-1}\prns{1-\alpha/2}\hat \sigma_n^2,~\hat\theta_n + \Phi^{-1}\prns{1-\alpha/2}\hat \sigma_n^2\right],
\end{align}
where $\hat\theta_n$ is the debiased machine learning estimator in \Cref{def: theta-est}, and $\Phi^{-1}\prns{1-\alpha/2}$ is the $1-\alpha/2$ quantile of  the standard normal distribution.

In the following theorem, we show that the variance estimator  and the confidence interval above are asymptotically valid. 

\begin{theorem}
\label{lem:inference}
    Let $\sigma_0^2$ be the asymptotic variance in \Cref{eq: asymp-var},  $\hat\sigma_n^2$ be the variance estimator in \Cref{eq: variance-estimator}, 
    and $\op{CI}$ be the confidence interval in \Cref{eq: CI}. 
    If the conditions of \Cref{thm:dml-asymp} hold, 
    then as $n\to\infty$, 
    $\hat\sigma_n^2$ converges to $\sigma_0^2$  in probability, and $\Prb{\theta^\star \in \op{CI}} \to 1-\alpha$.  
\end{theorem}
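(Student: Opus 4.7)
The plan is to first establish the variance consistency $\hat\sigma_n^2 \to_p \sigma_0^2$, and then deduce the coverage statement by combining this with the asymptotic normality of $\sqrt{n}(\hat\theta_n-\theta^\star)$ from \Cref{thm:dml-asymp}. Once consistency of $\hat\sigma_n^2$ is shown, Slutsky's theorem and continuity of the standard normal CDF imply that the studentized statistic converges to a standard normal, and coverage of the symmetric interval in \Cref{eq: CI} tends to $1-\alpha$.

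For the variance consistency, I would write
\begin{equation*}
\hat\sigma_n^2 = \frac{1}{n}\sum_{i=1}^n \psi(W_i;\hat h^{(k(i))}, \hat q^{(k(i))})^2 - \hat\theta_n^2,
\end{equation*}
where $k(i)$ indexes the fold containing observation $i$. The term $\hat\theta_n^2 \to_p (\theta^\star)^2$ follows from the continuous mapping theorem applied to the consistency of $\hat\theta_n$ implied by \Cref{thm:dml-asymp}, so it remains to treat the first term. Splitting across folds and conditioning on the out-of-fold data used to construct $\hat h^{(k)}, \hat q^{(k)}$, a conditional Chebyshev argument gives
\begin{equation*}
\frac{1}{|\Ical_k|}\sum_{i\in\Ical_k}\psi(W_i;\hat h^{(k)}, \hat q^{(k)})^2 = \EE\Big[\psi(W;\hat h^{(k)}, \hat q^{(k)})^2 \Bigm| \hat h^{(k)}, \hat q^{(k)}\Big] + o_p(1),
\end{equation*}
using that $\psi$ is uniformly bounded under \Cref{assum:boundedness,assum:bounded-estimation-h,assum:bounded-estimation-q,assum:boundedness-q}. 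The core task is then to show $\EE[\psi(W;\hat h^{(k)},\hat q^{(k)})^2 \mid \hat h^{(k)},\hat q^{(k)}] \to_p \EE[\psi(W;h^\dagger,q^\dagger)^2]$; by uniform boundedness it suffices to show $\|\psi(\cdot;\hat h^{(k)}, \hat q^{(k)}) - \psi(\cdot;h^\dagger,q^\dagger)\|_2 = o_p(1)$, which I would establish by bounding the three components of $\psi(W;h,q) = m(W;h) + q(T)(g_2(W) - g_1(W) h(S))$: (i) $\|m(\cdot;\hat h^{(k)}) - m(\cdot;h^\dagger)\|_2 \leq \|\hat h^{(k)} - h^\dagger\|_2 = o_p(1)$ by linearity and \Cref{assum:boundedness-q} together with the strong-norm consistency in \Cref{thm:h-estimator-bound}; (ii) $\|(\hat q^{(k)} - q^\dagger) g_2\|_2 \leq \|\hat q^{(k)} - q^\dagger\|_2 = o_p(1)$ by \Cref{assum:boundedness} and \Cref{thm:q-estimator-bound}; and (iii) $\|\hat q^{(k)} g_1 \hat h^{(k)} - q^\dagger g_1 h^\dagger\|_2 \to_p 0$ via the standard add-and-subtract trick with $q^\dagger g_1 \hat h^{(k)}$, using $\|g_1\|_\infty \leq 1$ together with the uniform boundedness of the nuisance estimators from \Cref{assum:bounded-estimation-h,assum:bounded-estimation-q}.

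The main obstacle is to exploit the cross-fitting structure cleanly so as to avoid invoking uniform control over the full estimation classes $\Hcal_n,\Qcal_n,\widetilde\Qcal_n,\Xi_n$; the conditional-expectation step above does exactly this, transferring the work onto the strong-norm consistency already provided by \Cref{thm:h-estimator-bound,thm:q-estimator-bound}. A conceptual point worth emphasizing is that $\sigma_0^2$ is defined relative to the penalization-selected limits $h^\dagger, q^\dagger$, which are unique even when $\Hcal_0$ and $\Qcal_0$ are not singletons, and it is precisely to these limits that our nuisance estimators converge in strong norm, so the $L_2$-continuity argument lands at the correct target. Combining the three $L_2$ bounds with the conditional law of large numbers and $\hat\theta_n \to_p \theta^\star$ yields $\hat\sigma_n^2 \to_p \sigma_0^2$, and the coverage conclusion follows by Slutsky as described in the opening paragraph.
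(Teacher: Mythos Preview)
Your proposal is correct and follows essentially the same route as the paper: reduce $\hat\sigma_n^2 \to_p \sigma_0^2$ to showing $\psi(\cdot;\hat h^{(k)},\hat q^{(k)}) \to \psi(\cdot;h^\dagger,q^\dagger)$ in an $L_p$ sense via cross-fitting independence, uniform boundedness, and the strong-norm nuisance consistency from \Cref{thm:h-estimator-bound,thm:q-estimator-bound}, then invoke Slutsky for coverage. The paper works with the factorization $a^2-b^2=(a-b)(a+b)$ and the $L_1$ norm rather than your expansion $\hat\sigma_n^2=\overline{\psi^2}-\hat\theta_n^2$ and $L_2$, but the substance is identical. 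One small technical point: \Cref{assum:boundedness} only gives $\|g_2\|_2\le 1$, not $\|g_2\|_\infty\le 1$, so your inequality in (ii) does not follow directly and needs a short truncation argument (split on $\{|g_2|\le M\}$, use $|\hat q^{(k)}-q^\dagger|\le 2$ on the complement); the paper's proof is equally informal at the analogous step.
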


\section{Application to  Partially Linear Models}\label{sec: partial-linear}
In \Cref{ex: IV-reg,ex: proximal}, we consider IV estimation and proximal causal inference with general nonparametric nuisance functions. 
In this section, we apply our methods to  partially linear models in these settings. 
 Partially linear model is a semiparametric model that has been widely used in exogenous regressions because it 
  retains both the flexibility of nonparametric models and the ease of interpretation of linear models \citep{hardle2000partially}.
  Our analyses in this section extend the existing literature on partially linear IV models, and also broaden the scope of proximal causal inference literature by studying partially linear models for the first time.  

\subsection{Partially Linear IV Estimation}
\label{sec: partial-linear-iv}
In \Cref{ex: IV-reg}, we considered a general NPIV regression model with $\Hcal = \Lcal_2(X)$, where $X$ are endogenous variables.  
Now we consider $X = \prns{X_a, X_b}\in\R{d_a}\times \R{d_b}$, and 
focus on  a partially linear IV regression model: 
$$\Hcal = \Hcal_{\op{PL}} \coloneqq \braces{\theta^\top X_a + g(X_b): \theta\in\R{d_a}, g\in\Lcal_2(X_b)}.$$
With some slight abuse of notation, we will alternatively refer to any element of $\Hcal_{\op{PL}}$ as $h$ or as the corresponding tuple $(\theta,g)$.
Accordingly, the true IV regression can be denoted as $h^\star(X) = \theta^{*\top} X_a + g^\star(X_b) \in \Hcal_{\op{PL}}$, such that 
\begin{align}\label{eq: partial-iv}
[Ph^\star](Z) = \Eb{\theta^{*\top} X_a + g^\star(X_b) \mid Z} =  \Eb{Y \mid Z}.
\end{align}
As in the partially linear regression model, we are interested in the coefficient parameter $\theta^\star$ and view $g^\star$ as a nonparametric nuisance function.

Partially linear IV models have already been studied by a few previous works \citep{florens2012instrumental,chen2021robust,chernozhukov2018double}.
\cite{chernozhukov2018double,florens2012instrumental} assume that the IV regression $h^\star$ is uniquely identified by the conditional moment equation in \Cref{eq: partial-iv}.
While \cite{chernozhukov2018double} consider a simpler setting where the nuisance $g^\star$ is a function of exogenous random variables (\ie, $X_b$ is part of $Z$), \cite{florens2012instrumental} allow all variables in IV regression (both $X_a, X_b$) to be endogenous and develop a Tikhonov regularized estimator with strong theoretical guarantees under a source condition on the IV regression. 
\cite{chen2021robust} also allows both $X_a, X_b$ to be potentially endogenous, and additionally allows the nonparameteric nuisance $g^\star$ to be underidentified. 
This is also the setting we consider in this part. However, unlike the penalized sieve-based estimators developed in \cite{chen2021robust}, we will employ our penalized minimax estimator that can leverage more flexible function classes.

We first note that the target parameter $\theta^\star$ can be written as a linear functional:
\begin{align}
&\theta^\star = \Eb{\alpha(X)h^\star(X)},\nonumber \\
&\text{where } \alpha(X) = M^{-1}\prns{X_a - \Eb{X_a \mid X_b}},\,M=\Eb{\prns{X_a - \Eb{X_a \mid X_b}}\prns{X_a - \Eb{X_a \mid X_b}}^\top}\label{eq: PL-iv-tilde-alpha}
\end{align}
where we assume the matrix $M$ is invertible.

For $i = 1, \dots, d_a$, and arbitrary $h = (\theta,g) \in \Hcal_{\op{PL}}$, we define $m_i(W;h) = \theta_i$. Then, for each such $i$, we can define linear functional $h \mapsto \Eb{m_i(W; h)} = \Eb{\alpha_i(X)h(X)}$ where $\alpha_i(X)$ is the $i$th coordinate of $\alpha$ in \Cref{eq: PL-iv-tilde-alpha}. 
Moreover, since $\alpha_i(X)=\theta_{\alpha,i}^\top X_a+g_{\alpha,i}(X_b)$ with $\theta_{\alpha,i}=[M^{-1}]_{i,:},\,g_{\alpha,i}(X_b)=-\theta_{\alpha,i}^\top\EE[X_a\mid X_b]$, it also belongs to $\Hcal_{\op{PL}}$, so it is the unique Riesz representer for the linear functional $h \mapsto \Eb{m_i(W; h)}$. 

In this setting, the condition proposed by \cite{severini2012efficiency} and described in 
\Cref{eq: Q0}  requires the existence of $q_0 = (q_{0, 1}, \dots, q_{0, d_a})$ such that for $i=1, \dots, d_a$, we have $q_{0, i} \in \Lcal_2(Z)$ and 
\begin{align}\label{eq: partial-iv-q}
 \Eb{\prns{q_{0, i}(Z) -  \alpha_i(X)}h(X)} = 0, ~~ \forall h \in \Hcal. 
 \end{align} 
Any such $q_0$ can be alternatively characterized by the following proposition.  
\begin{proposition}\label{prop: partial-iv-q2}
Let $q_0 = (q_{0, 1}, \dots, q_{0, d_a})^\top$ 
where $q_{0, i} \in \Lcal_2(Z)$ for $i = 1, \dots, d_a$. 
Then $q_{0, i}$ satisfies \Cref{eq: partial-iv-q} for all $i = 1,\dots, d_a$ if and only if 
\begin{align}\label{eq: partial-iv-q2}
\Eb{q_0(Z)\mid X_b} = \mathbf{0}_{d_a}, ~~ \Eb{q_0(Z)X_a^\top} = I_{d_a},
\end{align}
where $\mathbf{0}_{d_a}$ is an all-zero vector of length $d_a$ and $I_{d_a}$ is the $d_a \times d_a$ identity matrix. 
\end{proposition}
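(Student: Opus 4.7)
The plan is to reduce the functional equation in \Cref{eq: partial-iv-q} to two simpler scalar conditions by exploiting the direct-sum structure of $\Hcal_{\op{PL}}$, and then identify each with one of the two moment conditions in \Cref{eq: partial-iv-q2}. Since $\Hcal_{\op{PL}} = \{\theta^\top X_a + g(X_b): \theta\in\R{d_a}, g\in\Lcal_2(X_b)\}$, by linearity the equation $\Eb{(q_{0,i}(Z) - \alpha_i(X))h(X)} = 0$ for all $h \in \Hcal_{\op{PL}}$ is equivalent to the conjunction of the two conditions
\begin{align*}
\Eb{(q_{0,i}(Z) - \alpha_i(X))\, \theta^\top X_a} &= 0 \quad \forall \theta \in \R{d_a}, \\
\Eb{(q_{0,i}(Z) - \alpha_i(X))\, g(X_b)} &= 0 \quad \forall g \in \Lcal_2(X_b).
\end{align*}

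Next, I would handle the $g(X_b)$ branch. This is equivalent to $\Eb{q_{0,i}(Z) - \alpha_i(X) \mid X_b} = 0$ a.s. Using the explicit form $\alpha_i(X) = \theta_{\alpha,i}^\top X_a + g_{\alpha,i}(X_b)$ with $g_{\alpha,i}(X_b) = -\theta_{\alpha,i}^\top \Eb{X_a \mid X_b}$, a direct calculation gives $\Eb{\alpha_i(X)\mid X_b} = \theta_{\alpha,i}^\top \Eb{X_a\mid X_b} - \theta_{\alpha,i}^\top \Eb{X_a\mid X_b} = 0$. Therefore the $g$-branch reduces precisely to $\Eb{q_{0,i}(Z)\mid X_b} = 0$ a.s., and stacking over $i$ gives the first identity $\Eb{q_0(Z)\mid X_b} = \mathbf{0}_{d_a}$ in \Cref{eq: partial-iv-q2}.

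For the $X_a$ branch, the condition is equivalent to $\Eb{q_{0,i}(Z)X_a^\top} = \Eb{\alpha_i(X) X_a^\top}$ as row vectors. The main step is then to compute $\Eb{\alpha(X) X_a^\top} = M^{-1}\Eb{(X_a - \Eb{X_a\mid X_b})X_a^\top}$. Writing $X_a = (X_a - \Eb{X_a\mid X_b}) + \Eb{X_a\mid X_b}$ and invoking the tower property, the cross term $\Eb{(X_a - \Eb{X_a\mid X_b})\Eb{X_a\mid X_b}^\top}$ vanishes, so $\Eb{(X_a - \Eb{X_a\mid X_b})X_a^\top} = M$, giving $\Eb{\alpha(X) X_a^\top} = I_{d_a}$. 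Stacking across $i$ yields the second identity $\Eb{q_0(Z)X_a^\top} = I_{d_a}$.

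Each step above is an ``if and only if'', so combining the two directions gives the equivalence. I do not anticipate a significant technical obstacle; the only computation requiring mild care is the identification $\Eb{(X_a - \Eb{X_a\mid X_b}) X_a^\top} = M$ via the tower property, which is essentially the same observation that makes the residual form of $\alpha(X)$ the Riesz representer in the partially linear model. With that in hand, the rest of the proof is a bookkeeping exercise separating the two building blocks of $\Hcal_{\op{PL}}$.
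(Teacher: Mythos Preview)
Your proposal is correct and follows essentially the same decomposition as the paper's proof: split $h\in\Hcal_{\op{PL}}$ into its $\theta^\top X_a$ and $g(X_b)$ pieces and read off the two moment conditions. The only cosmetic difference is that the paper invokes the Riesz representer identity $\Eb{\alpha(X)h(X)}=\theta$ directly (so that the right-hand side is immediately $\theta$), whereas you recompute $\Eb{\alpha_i(X)\mid X_b}=0$ and $\Eb{\alpha(X)X_a^\top}=I_{d_a}$ from the explicit residual form of $\alpha$; both routes arrive at the same equivalence.
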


According to \Cref{thm: new-nuisance}, our  \Cref{assump: new-nuisance} strengthens the condition in \Cref{eq: partial-iv-q2}, by requiring the existence of $\xi_{0, i}(A, X) = \theta_{0, i}^\top X_a + g_{0, i}(X_b) \in \Hcal$ for $i = 1, \dots, d_a$, such that $q^\dagger = (\Eb{\xi_{0, 1}(X)\mid Z}, \dots, \Eb{\xi_{0, d_a}(X)\mid Z})^\top$ satisfies \Cref{eq: partial-iv-q2}.
Such $\xi_{0, i}$ is  given by 
\begin{align}
\label{eq: PL-iv-delta}
\xi_{0, i} \in \argmin_{\xi_i \in \Hcal}\Eb{\prns{\Eb{\xi_i(X) \mid Z}}^2} - 2\Eb{m_i(W; \xi_i)}.
\end{align}

In order to estimate the nuisance functions, we need to first specify partially linear function classes $\Hcal_n \subseteq \Hcal_{\op{PL}}, \Xi_n \subseteq \Hcal_{\op{PL}}$ as hypothesis classes for IV regression $h^\star$ and a solution $\xi_{0, i}$ to \Cref{eq: PL-iv-delta} respectively, a function class $\Qcal_n\subseteq \Lcal_2(Z)$  used to form the inner maximization problem in minimax objectives, and a function class $\widetilde\Qcal_n \subseteq \Lcal_2(Z)$ as the hypothesis class for $P\xi_{0, i}$. 
Note that again we alternatively refer to elements of  $\Hcal_n$ or $\Xi_n$ by the overall functions $h$ or $\xi$, or as tuples $(\theta,g)$.
Then we can follow \Cref{eq:h-estimator} to construct an estimator for the partially IV regression:  
\begin{align*}
    &\hat h_n(X) = \tilde\theta^\top X_a + \hat g_n(X_b), \\
    \text{where }&(\tilde\theta, \hat g_n)= \argmin_{(\theta, g) \in \Hcal_n} \sup_{q \in \Qcal_n} \EE_n \Big[ \Big(\theta^\top X_a + g(X_b)- Y\Big) q(Z) - \frac{1}{2} q(Z)^2 + \mu_n \Big(\theta^\top X_a + g(X_b)\Big)^2 \Big] \,.
\end{align*}

Although the above already gives a coefficient estimator $\tilde\theta$, the estimator $\tilde\theta$ is generally not $\sqrt{n}$-consistent or asymptotically normal because of the estimation bias of $\hat g_n$, 
 especially when $\hat g_n$ is constructed by black-box machine learning methods. 

 To debias the initial coefficient estimator $\tilde\theta$, we further construct the  nuisance estimator $\hat q_n = (\hat q_{1, n}, \dots, \hat q_{d_a, n})^\top$:
 \begin{equation*}
    \hat q_{i, n} = \argmax_{q \in \widetilde\Qcal_n} \EE_n \Big[\bar\xi_i(X) q(Z) - \frac{1}{2} q(Z)^2 \Big] \,,
\end{equation*}
where $\bar\xi_i(X) = {\bar\theta_i^\top X_a +  \bar g_i(X_b)}$ with 
\begin{equation*}
    (\bar\theta_i, \bar g_i) = \argmin_{(\theta, g) \in \Xi_n} \sup_{q \in \Qcal_n} \EE_n \Big[\prns{\theta^\top X_a + g(X_b)} q(Z) - \theta_i - \frac{1}{2} q(Z)^2 \Big] \,.
\end{equation*}

Finally, we construct the debiased coefficient estimator as follows:
\begin{align*}
\hat\theta_n = \tilde\theta + \E_n\bracks{\Big(Y - \tilde\theta^\top X_a - \hat g_n(X_b)\Big)\hat q_n(Z)}. 
\end{align*}
Here to obtain the final estimator $\hat\theta_n$,
the initial coefficient estimator $\tilde\theta$ is debiased by the second augmented term that involves the additional debiasing nuisance estimator $\hat q_n$. 
Above we use the same data to construct nuisance estimators and the final coefficient estimator for simplicity. 
But we can easily  incorporate the cross-fitting described in \Cref{def: theta-est}.

\subsubsection{Connection to \cite{chen2021robust}}
\label{remark:chen-connection}
\cite{chen2021robust} also studies the estimation of partially linear IV regression when the nonparametric component is under-identified. 
A key condition in \cite{chen2021robust} 
is actually closely related to
our existence assumption of functions $\xi_0$. 
\cite{chen2021robust} implicitly assumes the existence of $\rho_0(X_b) = \prns{\rho_{0, 1}(X_b), \dots, \rho_{0, d_a}(X_b)}$ such that for $X^{(i)}_a$, the $i$th component of $X_a$, we have 
\begin{align}\label{eq: chen-nuisance-1}
\rho_{0, i} \in \argmin_{\rho\in\Lcal_2(W)} \Eb{\prns{\Eb{X^{(i)}_a - \rho(X_b) \mid Z}}^2}, ~~ i = 1, \dots, d_a \,.
\end{align}
In addition, he explicitly assumes that the resulting $\rho_0$ satisfies that  
\begin{align}
\label{eq: chen-nuisance-2}
\Gamma = \Eb{\Eb{X_a - \rho_0(X_b) \mid Z}\prns{\Eb{X_a - \rho_0(X_b) \mid Z}}^\top} \text{ is invertible}.
\end{align}
In the following proposition, we show that these assumptions are actually sufficient conditions for the existence of $\xi_0 = (\xi_{0, 1}, \dots, \xi_{0, d_a})$ characterized by \Cref{eq: PL-iv-delta}. 
\begin{proposition}\label{prop: chen-nuisance}
Let $\rho_0 = (\rho_{0, 1}, \dots, \rho_{0, d_a})$ and $\Gamma$ be given in \Cref{eq: chen-nuisance-1,eq: chen-nuisance-2} respectively, and define $\tilde\xi_0 = \Gamma^{-1}(X_a-\rho_0(X_b))$.
Then for each $i = 1, \dots, d_a$, the $i$th coordinate of $\tilde\xi_0$, namely $\tilde\xi_{0, i}$,   
  is a solution to \Cref{eq: PL-iv-delta}. 
\end{proposition}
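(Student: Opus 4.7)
The plan is to verify that the candidate $\tilde\xi_{0,i} = [\Gamma^{-1}]_{i,:}\prns{X_a-\rho_0(X_b)}$ satisfies the first-order optimality conditions for the (convex quadratic) objective in \cref{eq: PL-iv-delta}, which suffices because the objective $\frac{1}{2}\EE[(\EE[\xi_i\mid Z])^2]-\EE[m_i(W;\xi_i)]$ is convex in $\xi_i$. Since $m_i$ is linear, these FOC read: for every $\xi\in\Hcal_{\op{PL}}$,
\begin{equation*}
\EE\bracks{\EE[\tilde\xi_{0,i}(X)\mid Z]\,\EE[\xi(X)\mid Z]} \;=\; m_i(W;\xi).
\end{equation*}
Note also that $\tilde\xi_{0,i}$ itself is a member of $\Hcal_{\op{PL}}$ since it equals $[\Gamma^{-1}]_{i,:}X_a - [\Gamma^{-1}]_{i,:}\rho_0(X_b)$.

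First, I would introduce the shorthand $U = X_a-\rho_0(X_b)$ and $\tilde U = \EE[U\mid Z]$, so that $\Gamma = \EE[\tilde U\tilde U^\top]$ and $\EE[\tilde\xi_{0,i}\mid Z] = e_i^\top\Gamma^{-1}\tilde U$. For an arbitrary $\xi = \beta^\top X_a + g(X_b)\in\Hcal_{\op{PL}}$, I would use the identity $X_a = U + \rho_0(X_b)$ to rewrite the conditional expectation as
\begin{equation*}
\EE[\xi\mid Z] = \beta^\top\tilde U + \EE[\tilde g(X_b)\mid Z], \qquad \tilde g(X_b) \coloneqq \beta^\top\rho_0(X_b)+g(X_b)\in\Lcal_2(X_b).
\end{equation*}
Substituting into the FOC and expanding, the left-hand side becomes
\begin{equation*}
e_i^\top\Gamma^{-1}\EE[\tilde U\tilde U^\top]\beta + e_i^\top\Gamma^{-1}\EE\bracks{\tilde U\,\EE[\tilde g(X_b)\mid Z]} = e_i^\top\beta + e_i^\top\Gamma^{-1}\EE\bracks{\tilde U\,\EE[\tilde g(X_b)\mid Z]},
\end{equation*}
the first term equaling $\beta_i = m_i(W;\xi)$ after applying $\Gamma^{-1}\Gamma = I_{d_a}$.

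The remaining step is to show that the cross-term vanishes. This is where I would invoke the defining first-order condition for $\rho_{0,j}$ in \cref{eq: chen-nuisance-1}: since $\rho_{0,j}$ minimizes $\EE[(\EE[X_a^{(j)}-\rho(X_b)\mid Z])^2]$ over $\rho\in\Lcal_2(X_b)$, its stationarity gives
\begin{equation*}
\EE\bracks{\EE[U_j\mid Z]\,\EE[\tilde g(X_b)\mid Z]} = 0 \quad\text{for every } \tilde g\in\Lcal_2(X_b) \text{ and every } j,
\end{equation*}
i.e., $\EE[\tilde U\,\EE[\tilde g(X_b)\mid Z]] = \mathbf{0}_{d_a}$. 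Hence the cross-term is zero, the FOC reduces to $\beta_i = m_i(W;\xi)$, and $\tilde\xi_{0,i}$ solves \cref{eq: PL-iv-delta}.

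The one conceptual (rather than computational) subtlety is the decomposition step: one must recognize that every $\xi\in\Hcal_{\op{PL}}$ can, after conditioning on $Z$, be split into a $\tilde U$-part encoding the endogenous variation and a $\Lcal_2(X_b)$-projection part, with the latter being exactly the object annihilated by the FOC that defines $\rho_0$. Once this decomposition is in place, the proof is a short linear-algebraic calculation using the invertibility of $\Gamma$.
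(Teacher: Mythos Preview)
Your proposal is correct and takes essentially the same approach as the paper's (first) proof: both hinge on the first-order condition of $\rho_{0,j}$ to annihilate the $\Lcal_2(X_b)$ component and on the invertibility of $\Gamma$ to match the $X_a$ component. The only difference is organizational---the paper routes the argument through \Cref{thm: new-nuisance} and \Cref{prop: partial-iv-q2}, reducing to the pair of conditions $\Eb{q_0(Z)\mid X_b}=\mathbf{0}_{d_a}$ and $\Eb{q_0(Z)X_a^\top}=I_{d_a}$, whereas you verify the first-order optimality condition of \cref{eq: PL-iv-delta} directly via your decomposition $\EE[\xi\mid Z]=\beta^\top\tilde U+\EE[\tilde g(X_b)\mid Z]$; these two packagings are equivalent.
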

\Cref{prop: chen-nuisance} provides another perspective to understand the conditions assumed in \cite{chen2021robust}. It shows that \cite{chen2021robust} also implicitly assume our \Cref{assump: new-nuisance} for the partially linear IV model.
\cite{chen2021robust} does not estimate the functions $\rho_{0, i}$ or $\Gamma$ defined in \Cref{eq: chen-nuisance-1,eq: chen-nuisance-2}, instead they use them to analyze the asymptotic distribution of his penalized linear sieve estimator. 
In contrast, we use \Cref{eq: PL-iv-delta} to estimate the nuisance $\xi_0$ in \Cref{eq: PL-iv-delta}, so that we can leverage the doubly robust identification formula in \Cref{thm:dr-ident}. 
This allows us to employ general minimax nuisance estimators beyond linear sieve estimation. 
We also note that \Cref{prop: chen-nuisance} implies that we could consider a different method for estimating $q^\dagger$, based on first directly estimating $\tilde\xi_0$ by solving the minimum projected distance problem of \Cref{eq: chen-nuisance-1}, and then estimating the projection of this function via least squares. However, we do not theoretically analyze this approach.

\subsection{Partially Linear Proximal Causal Inference}\label{sec: partial-linear-proximal}
In \Cref{ex: proximal}, we introduced proximal causal inference with a binary treatment and a general nonparametric class of bridge functions $\Hcal = \Lcal_2(V, X, A)$. 
When the treatment is more complex (\eg, continuous), we may be interested in restricting  bridge functions to some structured classes. 
In this part, we consider the following class of partially linear bridge functions:
$$\Hcal = \tilde\Hcal_{\op{PL}} \coloneqq \braces{\theta^\top A + g(V, X): \theta \in \R{d_A}, g \in \Lcal_2(V, X)}\,.$$
In addition, let us denote the total set of outcome bridge functions as
\begin{equation*}
    \tilde\Hcal_{\op{OB}} \coloneqq \Big\{ h \in L_2(V,X,A) : \EE[Y - h(V,X,Z) \mid U,X,A] = 0 \Big\} \,.
\end{equation*}

To the best of our knowledge, this partially linear model has not been applied to proximal causal inference yet. Previous literature on proximal causal inference  focus on either parametric estimation or nonparametric estimation of the bridge functions \citep[\eg, ][]{cui2020semiparametric,miao2018a,kallus2021causal,singh2020kernel,GhassamiAmirEmad2021MKML,mastouri2021proximal}. 
In the following proposition, we give a justification for the partially linear bridge function model.
\begin{proposition}\label{prop: partial-linear-proximal}
Suppose that $\Eb{Y(a) \mid U, X} = \theta^{\star\top} a + \phi^\star(U,X)$, for some vector $\theta^\star$ and function $\phi^\star\in\Lcal_2(U, X)$, and that $\tilde\Hcal_{\op{OB}}$ is non-empty. Then, we have that $\tilde\Hcal_{\op{PL}} \cap \tilde\Hcal_{\op{OB}}$ is non-empty; that is, there exists a partially-linear outcome bridge function.

Furthermore, suppose in addition that $\Gamma \coloneqq \EE[(A - \EE[A \mid V,X])(A - \EE[A \mid V,X])^\top]$ is invertible, and that for each $i \in [d_A]$ there exists $q_{0,i} \in L_2(Z,X,A)$ such that $\EE[q_{0,i}(Z,X,A) (\theta^\top A + g(V,X)) ] = \theta_i$ for all $(\theta,g) \in \tilde\Hcal_{\op{PL}}$.
Then, for any partially linear bridge function $\theta^\top A + g(V,X) \in \tilde\Hcal_{\op{OB}}$, we have $\theta=\theta^\star$; that is, the partially-linear coefficients are unique.
\end{proposition}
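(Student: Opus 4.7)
The plan is to prove the two parts separately, both by arguments that lean on the proximal causal inference conditional independence assumptions tacitly inherited from \Cref{ex: proximal}---in particular, consistency together with $Y(a) \perp A \mid U, X$, the negative-control-outcome exclusion $V \perp A \mid U, X$, and $Z \perp (Y, V) \mid U, X, A$.

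For Part 1, I start from an arbitrary $\bar h \in \tilde\Hcal_{\op{OB}}$. Consistency and $Y(a) \perp A \mid U, X$ rewrite the bridge equation $\EE[Y - \bar h \mid U, X, A] = 0$ as $\EE[\bar h(V, X, A) \mid U, X, A] = \theta^{\star\top} A + \phi^\star(U, X)$. Fixing a reference point $a_0$ in the support of $A$ and using $V \perp A \mid U, X$ to collapse the conditioning on $A$ yields $\EE[\bar h(V, X, a_0) \mid U, X] = \theta^{\star\top} a_0 + \phi^\star(U, X)$. This lets me define $g^\dagger(V, X) \coloneqq \bar h(V, X, a_0) - \theta^{\star\top} a_0$ and $h^\dagger(V, X, A) \coloneqq \theta^{\star\top} A + g^\dagger(V, X) \in \tilde\Hcal_{\op{PL}}$; plugging back into the bridge equation and invoking $V \perp A \mid U, X$ a second time verifies $h^\dagger \in \tilde\Hcal_{\op{OB}}$.

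For Part 2, take any $h = \theta^\top A + g(V, X) \in \tilde\Hcal_{\op{PL}} \cap \tilde\Hcal_{\op{OB}}$ and the $h^\dagger$ built above. Under the proximal assumptions any outcome bridge function also satisfies the observable moment restriction $\EE[Y - h \mid Z, X, A] = 0$, so $\EE[h - h^\dagger \mid Z, X, A] = 0$. Because $h - h^\dagger \in \tilde\Hcal_{\op{PL}}$ has partially-linear coefficient $\theta - \theta^\star$, the assumed existence of $q_{0,i}$ combined with the tower property gives two expressions for the same scalar,
\[
    (\theta - \theta^\star)_i \;=\; \EE\bigl[q_{0,i}(Z, X, A)\,(h - h^\dagger)\bigr] \;=\; \EE\bigl[q_{0,i}(Z, X, A)\,\EE[h - h^\dagger \mid Z, X, A]\bigr] \;=\; 0,
\]
forcing $\theta = \theta^\star$. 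Invertibility of $\Gamma$ enters to make the partially-linear decomposition unique: if $\theta^\top A + g(V, X) = \theta'^\top A + g'(V, X)$ a.s., conditioning on $(V, X)$ and subtracting yields $(\theta - \theta')^\top (A - \EE[A \mid V, X]) = 0$ a.s., and its second moment is $(\theta - \theta')^\top \Gamma (\theta - \theta')$, so $\theta = \theta'$. This also makes the hypothesis on $q_{0,i}$ non-vacuous, since the pointwise Riesz representer $(\Gamma^{-1}(A - \EE[A \mid V, X]))_i$ depends on $V$ and is not itself a function of $(Z, X, A)$---so the existence of $q_{0,i}$ plays the role of a Severini--Tripathi-type completeness condition on the proxy distribution.

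The main obstacle is bookkeeping rather than deep: the conditional-independence assumptions driving both steps are not restated in the proposition and have to be read in from \Cref{ex: proximal}. A minor secondary technicality is confirming $g^\dagger \in L_2(V, X)$; this follows from $\bar h \in L_2(V, X, A)$ and Fubini for almost every $a_0$ in the support of $A$, and if one prefers not to fix a particular $a_0$ one can instead use a suitable weighted average over $a_0$ and obtain the same identity by linearity of conditional expectation.
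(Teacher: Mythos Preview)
Your proof is correct. Part~1 is essentially identical to the paper's argument: the paper fixes the reference action at $0$ while you allow a general $a_0$, but the mechanism---use $V \perp A \mid U,X$ to freeze the action slot of an existing bridge function and peel off $\theta^{\star\top}A$---is the same.

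Part~2 takes a genuinely different route. The paper stays at the level of the \emph{unobservable} bridge equation: it introduces $\tilde q(U,X,A)\coloneqq\EE[q_0(Z,X,A)\mid U,X,A]$, invokes $Z\perp V\mid U,X$ to obtain $\EE[\tilde q\mid V,X,A]=\EE[q_0\mid V,X,A]$, and then chains $\EE[h_1-h_2\mid U,X,A]=0$ through $\tilde q$ to reach $\EE[q_0(Z,X,A)(h_1-h_2)]=0$. You instead first pass to the \emph{observable} moment restriction $\EE[Y-h\mid Z,X,A]=0$ (valid for any $h\in\tilde\Hcal_{\op{OB}}$ under $(Y,V)\perp Z\mid U,X,A$, exactly as in \Cref{ex: proximal}), after which the tower property with $q_{0,i}$ is a one-liner. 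Your path is shorter and avoids the auxiliary object $\tilde q$; the paper's path keeps the whole argument phrased in terms of the latent $U$, which matches the definition of $\tilde\Hcal_{\op{OB}}$ more directly but costs an extra conditional-independence manipulation. Both rely on the same proximal assumptions inherited from \Cref{ex: proximal}, which you rightly flag as tacit. Your remark on the role of $\Gamma$ (ensuring the coefficient $\theta$ is a well-defined function of $h\in\tilde\Hcal_{\op{PL}}$, so that the hypothesis on $q_{0,i}$ is not vacuous) is more explicit than the paper's treatment.
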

In \Cref{prop: partial-linear-proximal}, we show that if the conditional expectation of potential outcome $Y(a)$ given unobserved confounders $U$ and covariates $X$ is partially linear in the treatment $a$, then there exists a partially linear bridge function. 
In particular, given the additional conditions in the second part of the proposition, the linear coefficients $\theta^\star$ of any such bridge function characterizes the treatment effects.

Now, given the conditions of \Cref{prop: partial-linear-proximal}, this implies that estimating $\theta^\star$ is a special case of the partially-linear IV problem considered in \Cref{sec: partial-linear-iv}, with the variables $X_a$, $X_b$, and $Z$ there corresponding to $A$, $(V,X)$, and $(Z,X,A)$  here respectively. In particular, all of the results from that section immediately follow here, given with these variable substitutions. That is, we can again estimate $\theta^\star$ using the same de-biased estimator, with $q^\dagger$ estimated following either the estimator $(\hat q_{1,n},\ldots,\hat q_{d_a,n})$ proposed in that section, or the alternative approach based on \Cref{prop: chen-nuisance} discussed in \Cref{remark:chen-connection}.

Furthermore, by the definition of the Riesz representer, it is clear that the functions $q_{0,i}$ in \Cref{prop: partial-linear-proximal} satisfy
\begin{equation*}
    \Pi_{\tilde\Hcal_{\op{PL}}}[q_{0,i}(Z,X,A) \mid V,X,A] = \alpha_i(V,X,A) \quad \forall i \in [d_A] \,,
\end{equation*}
where $\alpha_i$ is the Riesz representer for the partially linear IV functional $m(W;(\theta,g)) = \theta_i$ as in \Cref{sec: partial-linear-iv}. That is, the extra condition in \Cref{prop: partial-linear-proximal} used to justify the uniqueness of $\theta^\star$ in partially linear bridge functions is equivalent to the  condition $\alpha \in \Rcal(P^\star)$, 
which as discussed in \Cref{sec: interpretation} is guaranteed by \Cref{assump: new-nuisance}. Alternatively, the existence of $q_0 = (q_{0,1},\ldots,q_{0,d_A})^\top$ could be interpreted as the existence of a treatment-style bridge function for the marginal treatment effect of varying the vector-valued $A$.
\footnote{Technically it is not exactly the same as treatment bridge functions in the standard proximal causal inference literature as in, \emph{e.g.}, \citet{cui2020semiparametric}. There, where $A \in \{0,1\}$, the treatment bridge functions (up to multiplication by $(-1)^{1-a}$) are defined according to $\EE[q_0(Z,X,a) \mid V,X,A=a] = (-1)^{1-a} \PP(A=a \mid V,X)^{-1}$, where 
$(-1)^{1-a} \PP(A=a \mid V,X)^{-1}$ is the Riesz representer of the linear functional $h \mapsto \Eb{h(V, X, 1) - h(V, X, 0)}$. 
This  is equivalent to requiring that $\EE[q_0(Z,X,A) h(V,X,A)] = \EE[h(V,X,1) - h(V,X,0)]$ for all $h \in L_2(V,X,A)$. Instead, here we require the analogous condition that $q_0$ satisfies $\EE[q_0(Z,X,A)(\theta^\top A + g(V,X))] = \theta$ for all $(\theta,g) \in \tilde\Hcal_{\op{PL}}$.}

\section{Simulation Study}
\label{sec:experiments}

We consider a simple simulation study to examine the finite sample performance of our estimation and inference approach. The data are generated by the following simple data generating process:
\begin{align}
T\sim~& N(0, \sigma_T=2) \tag{instrument}\\
U\sim~& N(0, \sigma_S=2) \tag{unobserved confounder}\\
S =~& \rho\, T + (1-\rho)\, U + \zeta, ~~ \zeta\sim N(0, \sigma=.1)\tag{endogenous treatment}\\
y =~& h_0(S) + U + \nu, ~~ \nu\sim N(0, \sigma=.1) \tag{outcome}
\end{align}
The function $h_0(\cdot)$ is some non-linear function among a pre-specified set of non-linearities, covering a range of behaviors. The parameter $\rho$ controls the instrument strength. Our goal is to estimate the functional that corresponds to the average finite difference:
\begin{align}
m(W;h) := \frac{h(S+\epsilon) - h(S-\epsilon)}{2\epsilon}
\end{align}
with $\epsilon=0.1$; which is an arithmetic approximation to the average derivative.

Note that in this setting, if we let $f(S)$ denote the density function of $S$, then the Riesz representer for the average derivative is of the form:
\begin{align}
a_0(S) = -(\log(f(S)))'
\end{align}
Since $S$ is the sum of three independent normal mean zero r.v.'s $S$ is also a normal r.v. with mean $0$ and variance $\sigma_S^2$. Thus the Riesz representer takes the simple linear form:
\begin{align}
a_0(S) = - (-S^2/(2\sigma_S^2))' = S / \sigma_S^2
\end{align}
Moreover, $q_0(T)$ is the outcome of an IV regression with outcome $a_0(S)$, treatment $T$ and instrument $S$, i.e. it has to satisfy the solution:
\begin{align}
\E[S/\sigma_S^2 - q(T)\mid S] = 0
\end{align}
We can always take a linear such solution:
\begin{align}
\E[S/\sigma_S^2 - \gamma T \mid S] =0
\end{align}
Note that $\E[T\mid S] = \rho \frac{\sigma_T^2}{\sigma_S^2} S$. Thus setting $\gamma = \frac{\sigma_S^2}{\rho\sigma_T^2 \sigma_S^2} = \frac{1}{\rho \sigma_T^2}$, we get that $q_0(T) = \gamma T = \frac{1}{\rho \sigma_T^2} T$, satisfies the set of conditional moment restrictions. Moreover, $\xi$ has to satisfy the conditional moment restrictions:
\begin{align}
\E[q_0(T) - \xi(S) \mid T] = \E[T/(\rho \sigma_T^2) - \xi(S) \mid T] = 0
\end{align}
Since $\E[S\mid T] = \rho T$, we find that setting $\xi(S) = \delta S$ with $\delta = 1/(\rho^2 \sigma_T^2)$, satisfies the above equations. Moreover, we can see that this is also the solution to our optimization problem over $\xi$, which is given by objective function
\begin{align}
V :=~& \frac{1}{2}\E[\E[\xi(S)\mid T]^2] - \E[m(W;\xi)]
= \frac{1}{2}\E[\E[\xi(S)\mid T]^2] - \E[S \xi(S) / \sigma_S^2]
\end{align}
For a linear $\xi$ specification it takes the form:
\begin{align}
V :=~& \E\left[\frac{1}{2} \delta^2 \E[S|T]^2 - \delta\, S^2/\sigma_S^2\right] =
\E\left[\frac{1}{2} \delta^2 \rho^2 T^2 - \delta\, S^2/\sigma_S^2\right]
= \frac{1}{2} \delta^2 \rho^2  \sigma_T^2 - \delta
\end{align}
which yields a first order optimal solution $\delta = 1/(\rho^2 \sigma_T^2)$ with value $-\frac{1}{2} \frac{1}{\rho^2 \sigma_T^2}$. Note that in this setting \Cref{lem:rosenbaum} states that for consistency it suffices to find a function $h$ that satisfies the moment condition $\E[T (y - h(S))]=0$. If we restrict $h(S)$ to be linear, i.e. $h(S)=\theta'S$, then the solution $\theta$ to the above moment condition is exactly the solution of ordinary two stage least squares (2SLS). Thus in this specific data generating process we can recover the average derivative by simply running 2SLS; this property is also remarked in \cite[Theorem B.2]{chernozhukov2016locally}.

We implemented the minimax estimation procedure for both the IV function $\hat{h}$ and the nuisance $\hat{\xi}$, using adversarial neural network training; using simultaneous stochastic gradient descend-ascend with the Optimistic Adam algorithm \citep{daskalakis2017training} and batch size of $100$ samples. Subsequently, we estimate $\hat{q}$ by regressing $\hat{\xi}(S)$ on $T$, using automated model selection via cross validation among many random and boosted forest based models and regularized linear models with polynomial feature expansions. The penalty parameter $\mu_n$ was chosen as $.1\cdot n^{-.9}$, so as to satisfy the required bound on the theoretical specification, viewing neural networks as a VC class. The neural networks were trained with early stopping, using an approximation of the inner max loss as described in \cite{bennett2019deep}: a set of test functions for the inner max were chosen by performing adversarial training for $100$ epochs. The test function at the end of each epoch was kept in a collection. Then training was re-initiated and at the end of each epoch the maximum loss over the set of pre-defined test functions was calculated on a held-out sample. The model that achieved the smallest out-of-sample approximate maximum loss was chosen. The learning rate for both the min and the max optimizer were chosen as $1e-4$ and the $\ell_2$-penalty on the weights as $1e-3$. The neural network architecture for both the min and max neural network was a fully connected architecture with one hidden layer of $100$ neurons, leaky RELU activation and dropout with probability $.1$.

We report coverage of 95\% confidence intervals (cov), root-mean-squared-error (rmse) and bias of four methods; the doubly robust method (dr), the tmle variant of the doubly robust method (as described in Section~\ref{sec: identification}), the method that uses only the nuisance $\hat{q}$, i.e. $\hat{\theta}=\E_n[\hat{q}(T) Y]$ (ipw) and the direct method, which plugs the neural network estimate of $\hat{h}$ in the moment, i.e $\hat{\theta}=\E_n[m(W;\hat{h})]$. Coverage is reported only for the dr and tmle methods which offer simple plug-in standard error calculation and construction of confidence intervals with asymptotically nominal coverage. We used simple sample-splitting: the functions $\hat{h}, \hat{\xi}, \hat{q}$ are estimated on a random half of the data and the average moment and standard error are estimated on the other half. The results for a varying sample size $n$, instrument strength $\rho$ and non-linearity $h_0$ are presented in Figure~\ref{fig:experiments}. In Figure~\ref{fig:experiments-clever}, we report analogous results when the idea of the "clever instrument" from \Cref{rem: clever instrument} is incorporated during the training process of $\hat{h}$. We see that in this case the performance of the direct estimate in terms of rmse and bias is comparable to the dr and tmle estimates. Moreover, overall performance slightly improves, as compared to the variant without the clever instrument, especially in small samples. Overall we find that coverage is approximately nominal across all sample sizes and instrument strengths. The only exception is the 2dpoly function when $n=500$ and $\rho=0.2$. Hence, we see that even with very small samples and with very weak instruments, our method provides valid confidence intervals for the average finite difference. Finally, we see that when we add the clever instrument, as expected, the plug-in direct estimator has almost identical rmse performance with the doubly robust and tmle estimates, showcasing that the approach of incorporating "debiasing" during the training process of $h$ worked as expected. The only exception seems to be when instruments are rather weak ($\rho=.2$) where the $q$ function takes too large of values and makes adversarial training with an un-penalized coefficient in front of $q$ a bit un-stable to train when learning the function $h$. Potentially, further heuristic practical improvements on the training process could fix the issue.
Finally, in Figure~\ref{fig:experiments-very-weak} we examine the limits of instrument weakness under which our method provides correct coverage and accurate estimates.

\begin{figure}[H]
\scriptsize
\begin{tabular}{lll||lll|lll|lll|lll|}
\toprule
    &          &            & \multicolumn{3}{c|}{dr} & \multicolumn{3}{c|}{tmle} & \multicolumn{3}{c|}{ipw} & \multicolumn{3}{c|}{direct} \\
    &          &            & cov &   rmse &   bias &  cov &   rmse &   bias & cov &   rmse &   bias &    cov &   rmse &   bias \\
\midrule
\multirow{9}{*}{\textbf{abs}} & \multirow{3}{*}{\textbf{$n=500$}} & \textbf{$\rho=0.2$} &  92 &  0.129 &  0.008 &   79 &  0.266 &  0.057 &  NA &  0.102 &  0.011 &     NA &  0.122 &  0.020 \\
    &          & \textbf{$\rho=0.5$} &  94 &  0.066 &  0.018 &   94 &  0.069 &  0.018 &  NA &  0.063 &  0.014 &     NA &  0.083 &  0.009 \\
    &          & \textbf{$\rho=0.7$} &  95 &  0.051 &  0.012 &   94 &  0.051 &  0.011 &  NA &  0.049 &  0.011 &     NA &  0.055 &  0.002 \\
\cline{2-15}
    & \multirow{3}{*}{\textbf{$n=1000$}} & \textbf{$\rho=0.2$} &  94 &  0.095 &  0.006 &   87 &  0.137 &  0.022 &  NA &  0.089 &  0.007 &     NA &  0.082 &  0.016 \\
    &          & \textbf{$\rho=0.5$} &  98 &  0.042 &  0.009 &   96 &  0.044 &  0.008 &  NA &  0.046 &  0.007 &     NA &  0.042 &  0.002 \\
    &          & \textbf{$\rho=0.7$} &  97 &  0.032 &  0.006 &   97 &  0.031 &  0.005 &  NA &  0.037 &  0.005 &     NA &  0.032 &  0.001 \\
\cline{2-15}
    & \multirow{3}{*}{\textbf{$n=2000$}} & \textbf{$\rho=0.2$} &  90 &  0.077 &  0.000 &   82 &  0.089 &  0.000 &  NA &  0.076 &  0.002 &     NA &  0.072 &  0.015 \\
    &          & \textbf{$\rho=0.5$} &  94 &  0.032 &  0.001 &   92 &  0.033 &  0.000 &  NA &  0.038 &  0.001 &     NA &  0.027 &  0.004 \\
    &          & \textbf{$\rho=0.7$} &  95 &  0.026 &  0.000 &   94 &  0.026 &  0.000 &  NA &  0.031 &  0.003 &     NA &  0.021 &  0.003 \\
\cline{1-15}
\cline{2-15}
\multirow{9}{*}{\textbf{2dpoly}} & \multirow{3}{*}{\textbf{$n=500$}} & \textbf{$\rho=0.2$} &  73 &  0.191 &  0.046 &   77 &  0.240 &  0.021 &  NA &  0.206 &  0.155 &     NA &  0.222 &  0.124 \\
    &          & \textbf{$\rho=0.5$} &  94 &  0.090 &  0.013 &   94 &  0.092 &  0.014 &  NA &  0.097 &  0.009 &     NA &  0.113 &  0.015 \\
    &          & \textbf{$\rho=0.7$} &  92 &  0.067 &  0.008 &   92 &  0.066 &  0.008 &  NA &  0.090 &  0.004 &     NA &  0.098 &  0.015 \\
\cline{2-15}
    & \multirow{3}{*}{\textbf{$n=1000$}} & \textbf{$\rho=0.2$} &  85 &  0.127 &  0.019 &   82 &  0.150 &  0.006 &  NA &  0.151 &  0.092 &     NA &  0.145 &  0.078 \\
    &          & \textbf{$\rho=0.5$} &  95 &  0.054 &  0.009 &   96 &  0.054 &  0.009 &  NA &  0.073 &  0.010 &     NA &  0.054 &  0.004 \\
    &          & \textbf{$\rho=0.7$} &  96 &  0.039 &  0.008 &   96 &  0.039 &  0.007 &  NA &  0.056 &  0.003 &     NA &  0.041 &  0.007 \\
\cline{2-15}
    & \multirow{3}{*}{\textbf{$n=2000$}} & \textbf{$\rho=0.2$} &  92 &  0.092 &  0.010 &   87 &  0.099 &  0.005 &  NA &  0.119 &  0.058 &     NA &  0.094 &  0.047 \\
    &          & \textbf{$\rho=0.5$} &  90 &  0.044 &  0.002 &   90 &  0.044 &  0.003 &  NA &  0.069 &  0.012 &     NA &  0.037 &  0.000 \\
    &          & \textbf{$\rho=0.7$} &  94 &  0.031 &  0.000 &   94 &  0.030 &  0.000 &  NA &  0.053 &  0.007 &     NA &  0.028 &  0.002 \\
\cline{1-15}
\cline{2-15}
\multirow{9}{*}{\textbf{sigmoid}} & \multirow{3}{*}{\textbf{$n=500$}} & \textbf{$\rho=0.2$} &  92 &  0.107 &  0.029 &   81 &  0.226 &  0.053 &  NA &  0.115 &  0.069 &     NA &  0.117 &  0.034 \\
    &          & \textbf{$\rho=0.5$} &  95 &  0.047 &  0.014 &   93 &  0.047 &  0.013 &  NA &  0.047 &  0.020 &     NA &  0.064 &  0.017 \\
    &          & \textbf{$\rho=0.7$} &  95 &  0.035 &  0.010 &   94 &  0.035 &  0.009 &  NA &  0.033 &  0.010 &     NA &  0.044 &  0.008 \\
\cline{2-15}
    & \multirow{3}{*}{\textbf{$n=1000$}} & \textbf{$\rho=0.2$} &  94 &  0.077 &  0.014 &   90 &  0.104 &  0.021 &  NA &  0.088 &  0.043 &     NA &  0.070 &  0.007 \\
    &          & \textbf{$\rho=0.5$} &  97 &  0.034 &  0.004 &   96 &  0.034 &  0.004 &  NA &  0.033 &  0.010 &     NA &  0.035 &  0.002 \\
    &          & \textbf{$\rho=0.7$} &  96 &  0.026 &  0.004 &   97 &  0.026 &  0.003 &  NA &  0.024 &  0.003 &     NA &  0.025 &  0.003 \\
\cline{2-15}
    & \multirow{3}{*}{\textbf{$n=2000$}} & \textbf{$\rho=0.2$} &  92 &  0.057 &  0.006 &   86 &  0.065 &  0.006 &  NA &  0.061 &  0.026 &     NA &  0.056 &  0.005 \\
    &          & \textbf{$\rho=0.5$} &  96 &  0.024 &  0.001 &   93 &  0.024 &  0.000 &  NA &  0.021 &  0.002 &     NA &  0.022 &  0.004 \\
    &          & \textbf{$\rho=0.7$} &  93 &  0.019 &  0.001 &   93 &  0.019 &  0.001 &  NA &  0.019 &  0.004 &     NA &  0.014 &  0.005 \\
\cline{1-15}
\cline{2-15}
\multirow{9}{*}{\textbf{sin}} & \multirow{3}{*}{\textbf{$n=500$}} & \textbf{$\rho=0.2$} &  87 &  0.120 &  0.013 &   80 &  0.201 &  0.035 &  NA &  0.100 &  0.036 &     NA &  0.102 &  0.012 \\
    &          & \textbf{$\rho=0.5$} &  97 &  0.044 &  0.015 &   95 &  0.046 &  0.014 &  NA &  0.043 &  0.018 &     NA &  0.052 &  0.015 \\
    &          & \textbf{$\rho=0.7$} &  98 &  0.036 &  0.010 &   98 &  0.036 &  0.010 &  NA &  0.033 &  0.010 &     NA &  0.045 &  0.000 \\
\cline{2-15}
    & \multirow{3}{*}{\textbf{$n=1000$}} & \textbf{$\rho=0.2$} &  94 &  0.087 &  0.014 &   89 &  0.117 &  0.022 &  NA &  0.086 &  0.034 &     NA &  0.078 &  0.001 \\
    &          & \textbf{$\rho=0.5$} &  95 &  0.035 &  0.004 &   95 &  0.035 &  0.004 &  NA &  0.033 &  0.008 &     NA &  0.040 &  0.001 \\
    &          & \textbf{$\rho=0.7$} &  96 &  0.026 &  0.004 &   96 &  0.026 &  0.003 &  NA &  0.026 &  0.003 &     NA &  0.033 &  0.007 \\
\cline{2-15}
    & \multirow{3}{*}{\textbf{$n=2000$}} & \textbf{$\rho=0.2$} &  90 &  0.063 &  0.007 &   87 &  0.070 &  0.008 &  NA &  0.062 &  0.021 &     NA &  0.058 &  0.010 \\
    &          & \textbf{$\rho=0.5$} &  94 &  0.024 &  0.001 &   92 &  0.024 &  0.000 &  NA &  0.022 &  0.002 &     NA &  0.023 &  0.007 \\
    &          & \textbf{$\rho=0.7$} &  94 &  0.019 &  0.001 &   94 &  0.019 &  0.001 &  NA &  0.018 &  0.002 &     NA &  0.017 &  0.007 \\
\bottomrule
\end{tabular}
\caption{Experimental results from simulations study over $100$ experiment repetitions.}\label{fig:experiments}
\end{figure}

\begin{figure}[H]
\scriptsize
\begin{tabular}{lll||lll|lll|lll|lll|}
\toprule
    &          &            & \multicolumn{3}{c|}{dr} & \multicolumn{3}{c|}{tmle} & \multicolumn{3}{c|}{ipw} & \multicolumn{3}{c|}{direct} \\
    &          &            & cov &   rmse &   bias &  cov &   rmse &   bias & cov &   rmse &   bias &    cov &   rmse &   bias \\
\midrule
\multirow{6}{*}{\textbf{abs}} & \multirow{2}{*}{\textbf{$n=500$}} & \textbf{$\rho=0.2$} &  94 &  0.138 &  0.016 &   80 &  0.227 &  0.052 &  NA &  0.102 &  0.010 &     NA &  0.124 &  0.007 \\
    &          & \textbf{$\rho=0.5$} &  95 &  0.066 &  0.018 &   94 &  0.069 &  0.018 &  NA &  0.063 &  0.016 &     NA &  0.070 &  0.019 \\
\cline{2-15}
    & \multirow{2}{*}{\textbf{$n=1000$}} & \textbf{$\rho=0.2$} &  90 &  0.096 &  0.001 &   87 &  0.131 &  0.015 &  NA &  0.093 &  0.003 &     NA &  0.083 &  0.019 \\
    &          & \textbf{$\rho=0.5$} &  98 &  0.042 &  0.008 &   98 &  0.043 &  0.007 &  NA &  0.045 &  0.006 &     NA &  0.037 &  0.001 \\
\cline{2-15}
    & \multirow{2}{*}{\textbf{$n=2000$}} & \textbf{$\rho=0.2$} &  90 &  0.076 &  0.000 &   83 &  0.089 &  0.000 &  NA &  0.074 &  0.001 &     NA &  0.066 &  0.015 \\
    &          & \textbf{$\rho=0.5$} &  93 &  0.032 &  0.001 &   91 &  0.033 &  0.000 &  NA &  0.037 &  0.001 &     NA &  0.025 &  0.002 \\
\cline{1-15}
\cline{2-15}
\multirow{6}{*}{\textbf{2dpoly}} & \multirow{2}{*}{\textbf{$n=500$}} & \textbf{$\rho=0.2$} &  81 &  0.175 &  0.045 &   73 &  0.247 &  0.022 &  NA &  0.206 &  0.156 &     NA &  0.202 &  0.117 \\
    &          & \textbf{$\rho=0.5$} &  90 &  0.095 &  0.008 &   92 &  0.092 &  0.010 &  NA &  0.130 &  0.019 &     NA &  0.117 &  0.015 \\
\cline{2-15}
    & \multirow{2}{*}{\textbf{$n=1000$}} & \textbf{$\rho=0.2$} &  87 &  0.125 &  0.015 &   86 &  0.149 &  0.002 &  NA &  0.151 &  0.096 &     NA &  0.119 &  0.056 \\
    &          & \textbf{$\rho=0.5$} &  94 &  0.057 &  0.010 &   94 &  0.057 &  0.010 &  NA &  0.073 &  0.008 &     NA &  0.057 &  0.006 \\
\cline{2-15}
    & \multirow{2}{*}{\textbf{$n=2000$}} & \textbf{$\rho=0.2$} &  92 &  0.090 &  0.010 &   87 &  0.100 &  0.008 &  NA &  0.119 &  0.058 &     NA &  0.085 &  0.039 \\
    &          & \textbf{$\rho=0.5$} &  90 &  0.043 &  0.002 &   90 &  0.044 &  0.003 &  NA &  0.070 &  0.011 &     NA &  0.036 &  0.000 \\
\cline{1-15}
\cline{2-15}
\multirow{6}{*}{\textbf{sigmoid}} & \multirow{2}{*}{\textbf{$n=500$}} & \textbf{$\rho=0.2$} &  89 &  0.101 &  0.024 &   79 &  0.185 &  0.039 &  NA &  0.105 &  0.062 &     NA &  0.110 &  0.033 \\
    &          & \textbf{$\rho=0.5$} &  96 &  0.052 &  0.017 &   94 &  0.054 &  0.017 &  NA &  0.046 &  0.019 &     NA &  0.056 &  0.014 \\
\cline{2-15}
    & \multirow{2}{*}{\textbf{$n=1000$}} & \textbf{$\rho=0.2$} &  95 &  0.072 &  0.010 &   90 &  0.099 &  0.018 &  NA &  0.084 &  0.045 &     NA &  0.065 &  0.000 \\
    &          & \textbf{$\rho=0.5$} &  97 &  0.034 &  0.006 &   97 &  0.034 &  0.006 &  NA &  0.034 &  0.010 &     NA &  0.032 &  0.003 \\
\cline{2-15}
    & \multirow{2}{*}{\textbf{$n=2000$}} & \textbf{$\rho=0.2$} &  91 &  0.055 &  0.005 &   86 &  0.064 &  0.005 &  NA &  0.060 &  0.027 &     NA &  0.052 &  0.006 \\
    &          & \textbf{$\rho=0.5$} &  93 &  0.024 &  0.001 &   94 &  0.024 &  0.000 &  NA &  0.021 &  0.003 &     NA &  0.022 &  0.003 \\
\cline{1-15}
\cline{2-15}
\multirow{6}{*}{\textbf{sin}} & \multirow{2}{*}{\textbf{$n=500$}} & \textbf{$\rho=0.2$} &  88 &  0.109 &  0.024 &   76 &  0.195 &  0.046 &  NA &  0.097 &  0.045 &     NA &  0.101 &  0.018 \\
    &          & \textbf{$\rho=0.5$} &  98 &  0.046 &  0.011 &   96 &  0.047 &  0.011 &  NA &  0.046 &  0.019 &     NA &  0.047 &  0.009 \\
\cline{2-15}
    & \multirow{2}{*}{\textbf{$n=1000$}} & \textbf{$\rho=0.2$} &  93 &  0.083 &  0.015 &   89 &  0.113 &  0.023 &  NA &  0.085 &  0.035 &     NA &  0.075 &  0.004 \\
    &          & \textbf{$\rho=0.5$} &  93 &  0.035 &  0.004 &   93 &  0.036 &  0.004 &  NA &  0.034 &  0.010 &     NA &  0.032 &  0.000 \\
\cline{2-15}
    & \multirow{2}{*}{\textbf{$n=2000$}} & \textbf{$\rho=0.2$} &  92 &  0.062 &  0.007 &   86 &  0.070 &  0.008 &  NA &  0.062 &  0.021 &     NA &  0.059 &  0.009 \\
    &          & \textbf{$\rho=0.5$} &  93 &  0.024 &  0.000 &   93 &  0.024 &  0.000 &  NA &  0.021 &  0.002 &     NA &  0.023 &  0.011 \\
\bottomrule
\end{tabular}
\caption{Experimental results with clever instrument approach from simulations study over $100$ experiment repetitions.}\label{fig:experiments-clever}
\end{figure}

\begin{figure}[H]
\begin{center}
\scriptsize
\begin{tabular}{ll||lll|lll|lll|lll|}
\toprule
          &            & \multicolumn{3}{c|}{dr} & \multicolumn{3}{c|}{tmle} & \multicolumn{3}{c|}{ipw} & \multicolumn{3}{c|}{direct} \\
          &            & cov & $\frac{\textbf{rmse}}{|\theta^*|}$ & $\frac{\textbf{bias}}{|\theta^*|}$ &  cov & $\frac{\textbf{rmse}}{|\theta^*|}$ & $\frac{\textbf{bias}}{|\theta^*|}$ & cov & $\frac{\textbf{rmse}}{|\theta^*|}$ & $\frac{\textbf{bias}}{|\theta^*|}$ &    cov & $\frac{\textbf{rmse}}{|\theta^*|}$ & $\frac{\textbf{bias}}{|\theta^*|}$ \\
\midrule
\multirow{2}{*}{\textbf{$n=2000$}} & \textbf{$\rho=0.05$} &  49 &                              0.650 &                              0.436 &   41 &                              6.038 &                              0.929 &  NA &                              0.825 &                              0.717 &     NA &                              0.687 &                              0.581 \\
          & \textbf{$\rho=0.1$} &  83 &                              0.411 &                              0.126 &   77 &                              0.578 &                              0.032 &  NA &                              0.521 &                              0.358 &     NA &                              0.452 &                              0.328 \\
\cline{1-14}
\multirow{2}{*}{\textbf{$n=20000$}} & \textbf{$\rho=0.05$} &  91 &                              0.316 &                              0.062 &   86 &                              0.373 &                              0.008 &  NA &                              0.419 &                              0.236 &     NA &                              0.329 &                              0.222 \\
          & \textbf{$\rho=0.1$} &  94 &                              0.160 &                              0.006 &   90 &                              0.169 &                              0.006 &  NA &                              0.221 &                              0.090 &     NA &                              0.134 &                              0.047 \\
\bottomrule
\end{tabular}
\end{center}
\caption{Experimental results for $\textbf{2dpoly}$ non-linearity, for very weak instrument, without clever instrument, from simulations study over $100$ experiment repetitions.}\label{fig:experiments-very-weak}
\end{figure}

\section{Conclusions and Future Directions}\label{sec: conclusion}
In this paper, we study the estimation of and inference on strongly identified linear functionals of weakly identified nuisance functions.  This challenge arises in a variety of applications in causal inference and missing data, where the primary nuisance (\eg, NPIV regression) is defined by an ill-posed conditional moment equation.
Side-stepping conditions that directly control the estimability of the primary nuisance function(s), we propose a novel assumption for the strong identification of just the functional of it. Mere identification of the functional (\ie, uniqueness) is equivalent to restricting the Riesz representer of the functional to a certain subspace. Our assumption, which posits the existence of a solution to a certain optimization problem, can be seen as further restricts it to a smaller subspace. The optimization formulation of our assumption directly motivates us to propose new minimax estimators for the unknown nuisances. 
Via a novel analysis, we show our nuisance estimators can converge to fixed limits in terms of the $L_2$ norm error even when the nuisances are underidentified. Moreover, our nuisance estimators can accommodate a wide variety of flexible machine learning methods like RKHS methods or neural networks. We then use our estimators to construct debiased estimators for the functionals of interest. Put together, we obtain high-level conditions under which our functional estimator is asymptotically normal and under which our estimated-variance Wald confidence intervals are valid.

There are several interesting future directions of research. Our paper currently focuses on \emph{linear} functionals of nuisance functions defined by \emph{linear} conditional moment equations. It would be interesting to explore more general \emph{nonlinear} functionals and/or \emph{nonlinear} conditional moment equations without requiring point identification of the nuisances. 
{As an example of the former, we may consider inference on consumer surplus and deadweight loss as functionals of a demand function estimated using an IV for an endogenous price \citep[\eg, ][]{chen2018optimal}.}
As an example of the latter, we may consider functionals of NPIV \emph{quantile} regressions \citep[\eg, Example 3.3 in][]{ai2009semiparametric}. 
One important challenge in this direction is to establish the identifiability of the functionals of interest \citep{chen2014local}. 
Moreover, our paper studies point-identified functionals (though the nuisances are not identified). 
We may relax the identification restriction on the functionals, and instead study partial identification bounds of the functionals \citep{escanciano2013identification}. In particular, debiased inference on partial identification bounds is an area of growing interest \citep{dorn2021doubly,kallus2022assessing,kallus2022s,yadlowsky2018bounds} where our new theory may provide new directions.

\section*{Acknowledgements}
This material is based upon work supported by the National Science Foundation under Grants No. 1846210 and 1939704.
Xiaojie Mao acknowledges support from National Natural Science Foundation of China (No. 72201150 and No. 72293561) and National Key R\&D Program of China (2022ZD0116700).

\bibliographystyle{plainnat}
\bibliography{semiparametric}

\newpage 
\appendix 

\begin{center}\LARGE
Appendices
\end{center}

\section{Generalized Framework}
\label{sec:generalized-framework}

We briefly discuss a generalized version of the conditional moment restriction framework laid out in \Cref{sec: setup}, and provide corresponding results for estimation and inference in this more general setting. Here, we let some Hilbert spaces $\Hcal \subseteq L_2(S)^{d_h}$ and $\Qcal \subseteq L_2(T)^{d_q}$ be given under the standard $L_2$ inner products, for some positive integers $d_h$ and $d_q$. 

In this general setting, we are interested in the parameter $\theta^\star = \EE[m(W;h^\star)]$, 
where the primary nuisance function $h^\star \in \Hcal$ solves the following orthogonality condition 
\begin{equation}
\label{eq:general-id}
    \EE\Big[ G(W;h,q) - r(W;q) \Big] = 0 \qquad \forall q \in \Qcal \,,
\end{equation}
for some given $r$ such that $q \mapsto \EE[r(W;q)]$ is a continuous linear functional, that is, there exists $\beta \in \Qcal$ such that
\begin{equation*}
    \EE[r(W;q)] = \EE[\beta(T)^\top q(T)] \qquad \forall q \in \Qcal,
\end{equation*}
and some given $G$ such that there exists $k(S,T) \in L_2(S,T)^{d_h \times d_q}$ such that
\begin{equation*}
    \EE[G(W;h,q)] = \EE[h(S)^\top k(S,T) q(T)] \qquad \forall h \in \Hcal, q \in \Qcal \,.
\end{equation*}
That is, $h, q \mapsto \EE[G(W;h,q)]$ is bi-linear functional and satisfies a bi-linear version of the standard Riesz representation formula for continuous linear operators. 
Again, we allow for \emph{weak identification} of $h^\star$ via the orthogonality condition in \Cref{eq:general-id}. That is, \Cref{eq:general-id} can be severely ill-posed and  may admit multiple solutions. As before, we let $\Hcal_0$ denote the set of all such solutions.

We again focus on $m$ such that $h \mapsto \EE[m(W;h)]$ is a continuous linear functional; that is, there exists a unique $\alpha \in \Hcal$ such that
\begin{equation*}
    \EE[m(W;h)] = \EE[\alpha(S)^\top h(S)] \qquad \forall h \in \Hcal \,.
\end{equation*}

This framework subsumes the simpler framework considered in \Cref{sec: setup}. Indeed, we can recover the setup in \Cref{sec: setup} by setting $d_h=1$, $d_q=1$, $\Qcal=L_2(T)$, $G(W,h,q) = g_1(W) h(S) q(T)$, and $r(W;q) = g_2(W) q(T)$, because then the above equations are satisfied with $k(S,T) = \EE[g_1(W) \mid S,T]$, $\beta(T) = \EE[g_2(W) \mid T]$, and
\begin{equation*}
    \EE[g_1(W) h_0(S) q(T) - g_2(W) q(T)] = 0 \ \ \forall q \in L_2(T) \quad \iff \quad \EE[g_1(W) h_0(S) - g_2(W) \mid T] = 0 \,.
\end{equation*}
However, the framework in this section is much more general. For example, it can encompass multiple conditional moment restrictions with different conditioning variables \citep{ai2012semiparametric,ai2007estimation}, or the treatment bridge function in proximal causal inference as the primary nuisance \citep{cui2020semiparametric}.

Now, define the operator $P : \Hcal \to \Qcal$ and its adjoint $P^\star : \Qcal \to \Hcal$ by 
\begin{align}\label{eq: operator-general}
[P h](T) = \Pi_\Qcal[k(S,T)^\top h(S) \mid T], ~~ [P^\star q](S) = \Pi_\Hcal[k(S,T) q(T) \mid S], ~~ \forall h \in\Hcal, q\in\Qcal. 
\end{align}
These generalize the  corresponding operators $P$ and $P^\star$ in \Cref{sec: setup,sec: identification}.  
We can then verify that \Cref{lemma: identifiability} is still true, namely, $\theta^\star$ is identified by all $h_0$ satisfying the orthogonality condition in \Cref{eq:general-id} if and only if $\alpha \in \Ncal(P)^\perp = \cl(\Rcal(P^\star))$.

To enable the inference of $\theta^\star$, we need to strengthen the identification condition. 
In particular, the condition in \Cref{eq: Q0} can be  generalized to 
\begin{equation*}
    \Qcal_0 \ne \emptyset, ~~ \text{where } \Qcal_0 = \Big\{q_0 \in \Qcal: P^\star q = \alpha\Big\} = \Big\{q_0 \in \Qcal : \EE\Big[G(W;h,q_0) - m(W;h)\Big] = 0, \ \forall h \in \Hcal \Big\} \,.
\end{equation*}
We further strengthen this condition by  imposing our strong identification condition in \Cref{assump: new-nuisance}. This requires the existence of a solution $\xi_0\in\Hcal$ to \Cref{eq: delta-eq-2}, where the optimization objective now depends on the  general operator $P$ given in \Cref{eq: operator-general}. 
We can easily prove that \Cref{thm: new-nuisance} still holds, namely, \Cref{assump: new-nuisance} restricts the Riesz representer $\alpha$ to the range space $\Rcal(P^\star P)$. 
Moreover, like \Cref{lemma: minimum-norm}, the debiasing nuisance $P\xi_0$ for $\xi_0 \in \Xi_0$ is the minimum-norm function in $\Qcal_0$.

Next, we define the doubly robust equation
\begin{equation}\label{eq: dr-formula-general}
    \psi(W;h,q) = m(W;h) + r(W;q) - G(W;h,q) \,.
\end{equation}
Then, it is straightforward to argue that \Cref{thm:dr-ident} still holds with this equation. 
We formally state the results in the following lemma. 
\begin{lemma}
\label{lem:general-dr}
    Let some arbitrary $h_0 \in \Hcal_0$ and $q_0 \in \Qcal_0$ be given and $\psi(W;h,q)$ be the doubly robust equation in \Cref{eq: dr-formula-general}. For every $h \in \Hcal$ and $q \in \Qcal$, we have \begin{equation*}
        \EE[\psi(W;h,q)] - \theta^\star = \EE[G(W;h-h_0,q-q_0)] \,.
    \end{equation*}
    Consequently, it follows that for any $h \in \Hcal$ and $q \in \Qcal$ we have
    \begin{equation*}
        \Big|\EE[\psi(W;h,q)] - \theta^\star\Big| \leq \min\Big\{ \|P(h - h_0)\|_{2,2} \|q - q_0\|_{2,2}, \|h - h_0\|_{2,2} \|P^\star(q - q_0) \|_{2,2}\Big\} \,,
    \end{equation*}
    and 
    \begin{equation*}
        \left. \frac{d}{dt}  \EE\Big[\psi(W;h_0+th, q_0)\Big]\right|_{t=0} = \left. \frac{d}{dt}  \EE\Big[\psi(W;h_0, q_0+tq)\Big]\right|_{t=0}  = 0 \,.
    \end{equation*}
\end{lemma}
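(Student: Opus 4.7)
The plan is to reduce everything to a bilinearity computation exploiting the defining conditions of $\Hcal_0$ and $\Qcal_0$. Since $\psi(W;h,q) = m(W;h) + r(W;q) - G(W;h,q)$, the first identity is essentially algebraic, and both the Cauchy--Schwarz bound and Neyman orthogonality are immediate corollaries.

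First I would expand $\EE[G(W;h-h_0,q-q_0)]$ using the bilinearity of $\EE[G(W;\cdot,\cdot)]$:
\begin{align*}
\EE[G(W;h-h_0,q-q_0)] &= \EE[G(W;h,q)] - \EE[G(W;h,q_0)] \\
&\quad - \EE[G(W;h_0,q)] + \EE[G(W;h_0,q_0)].
\end{align*}
The defining condition for $q_0\in\Qcal_0$, namely $\EE[G(W;h,q_0)-m(W;h)]=0$ for all $h\in\Hcal$, replaces $\EE[G(W;h,q_0)]$ by $\EE[m(W;h)]$; analogously $h_0\in\Hcal_0$ replaces $\EE[G(W;h_0,q)]$ by $\EE[r(W;q)]$; and taking $h=h_0$ (or $q=q_0$) in one of these gives $\EE[G(W;h_0,q_0)]=\theta^\star$. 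Collecting terms and matching against $\EE[\psi(W;h,q)] - \theta^\star = \EE[m(W;h)] + \EE[r(W;q)] - \EE[G(W;h,q)] - \theta^\star$ yields the first identity (up to the sign convention in $\psi$; cf.\ the analogous identity in \Cref{thm:dr-ident}).

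For the product-form bound, I would use the bilinear Riesz representation $\EE[G(W;h,q)] = \EE[h(S)^\top k(S,T) q(T)]$ together with the definition $[Ph](T) = \Pi_\Qcal[k(S,T)^\top h(S)\mid T]$. By the orthogonal-projection identity, for any $q\in\Qcal$,
\begin{equation*}
\EE[G(W;h,q)] = \EE[q(T)^\top k(S,T)^\top h(S)] = \EE[q(T)^\top (Ph)(T)] = \langle Ph, q\rangle_{2,2},
\end{equation*}
and symmetrically $\EE[G(W;h,q)] = \langle h, P^\star q\rangle_{2,2}$. Applying these two representations to $\EE[G(W;h-h_0,q-q_0)]$ and invoking Cauchy--Schwarz in $L_2(T)^{d_q}$ or $L_2(S)^{d_h}$ gives the two branches of the minimum.

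Finally, Neyman orthogonality follows by plugging $(h_0+th,q_0)$ and $(h_0,q_0+tq)$ into the first identity: in either case one argument of $G$ vanishes, so bilinearity forces $\EE[G(W;h_0+th-h_0,q_0-q_0)] = \EE[G(W;h_0-h_0,q_0+tq-q_0)] = 0$, hence $\EE[\psi(W;h_0+th,q_0)]$ and $\EE[\psi(W;h_0,q_0+tq)]$ are identically equal to $\theta^\star$ in $t$, and their derivatives at $t=0$ vanish. The proof is essentially routine bilinear algebra and poses no real obstacle; the only care needed is to verify the projection identity $\EE[G(W;h,q)]=\langle Ph,q\rangle_{2,2}$ under the stated definition of $P$ via $\Pi_\Qcal$, and to track signs consistently when matching $\psi$ to $G$.
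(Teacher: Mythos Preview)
Your proposal is correct and follows essentially the same route as the paper: expand using bilinearity of $\EE[G(W;\cdot,\cdot)]$, invoke the defining conditions of $\Hcal_0$ and $\Qcal_0$, then apply the projection identity $\EE[G(W;h,q)]=\langle Ph,q\rangle_{2,2}=\langle h,P^\star q\rangle_{2,2}$ and Cauchy--Schwarz. Your caveat about the sign is well placed---the paper's own derivation in fact produces $\EE[\psi(W;h,q)]-\theta^\star=-\EE[G(W;h-h_0,q-q_0)]$, so the discrepancy you flag is in the paper rather than in your argument; and your Neyman-orthogonality step (perturbing $h$ and $q$ separately so that one slot of $G$ vanishes) matches the stated conclusion more directly than the paper's joint-perturbation computation.
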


Then, we can apply a similar debiased inference analysis in this more general setting, as long as we can get similar rates for $\|P(\hat h_n - h_0)\|_{2,2}$ and $\|q - q_0\|_{2,2}$ as we obtained in \Cref{sec: est-nuisance}. Below, we provide appropriate conditions under which we can obtain such rates by generalizing  \Cref{thm:h-estimator-bound,thm:q-estimator-bound}.

\subsection{Penalized Estimation of the Primary Nuisance Function}

Here, we consider estimators for the minimum-norm function $h^\dagger$ in $\Hcal_0$
as follows:
\begin{equation}
\label{eq:h-estimator-general}
    \hat h_n = \argmin_{h \in \Hcal_n} \sup_{q \in \Qcal_n} \EE_n \Big[ G(W;h,q) - r(W;q) - \frac{1}{2} q(Z)^\top q(Z) + \mu_n h(S)^\top h(S) \Big] - \gamma_n^q \|q\|_\Qcal^2 + \gamma_n^h \|h\|_\Hcal^2 \,,
\end{equation}
where all definitions are analogous to those in \Cref{sec: primary-nuisance-est}. 
 Below we provide natural generalizations for the technical assumptions in \Cref{sec: primary-nuisance-est}.

\begin{assumption}
\label{assum:boundedness-general}
    We have that: (1) $\|G(W;h,q)\|_2 \leq \|q\|_{2,2}$ for all $q \in \bar\Qcal$ and $h \in \bar\Hcal$; (2) $\|r(W;q)\|_2 \leq \|q\|_{2,2}$ for every $q \in \Qcal$; (3) $\|G(W;h,q)\|_\infty \leq 1$ for every $h \in \bar\Hcal$ and $q \in \bar\Qcal$; and (4) $\|P(h - h^\dagger)\|_{2,2} \leq \|h-h^\dagger\|_{2,2}$ for every $h \in \bar\Hcal$.
\end{assumption}

\begin{assumption}
\label{assum:bounded-estimation-h-general}
    We have that: (1) $\|h\|_\infty \leq 1$ for all $h \in \Hcal_n$; (2) $\|q\|_\infty \leq 1$ for all $q \in \Qcal_n$; and (3) $\|h^\dagger\|_\infty \leq 1$.
\end{assumption}

\begin{assumption}
\label{assum:universal-approximation-h-general}
    There exists some $\delta_n < \infty$ such that: (1) there exists $\Pi_n h^\dagger \in \Hcal_n$ such that $\|\Pi_n h^\dagger - h^\dagger\|_{2,2} \leq \delta_n$; and (2) for every $q \in \{P(h - h^\dagger) : h \in \Hcal_n\}$, there exists $\Pi_n q \in \Qcal_n$ such that $\|\Pi_n q - q\|_{2,2} \leq \delta_n$.
\end{assumption}

\begin{assumption}
\label{assum:complexity-h-general}
    There exists some $r_n$ that upper bounds the critical radii of the function classes $\{G(W;h,q) : h \in \starcls(\Hcal_n - h^\dagger), q \in \starcls(\Qcal_n), \|h\|_\Hcal \leq 1, \|q\|_\Qcal \leq 1\}$ and $\{q\in \starcls(\Qcal_n) : \|q\|_\Qcal \leq 1\}$. 
\end{assumption}

\setcounter{subassumption}{0}
\begin{assumption*}
\label{assum:bounded-complexity-h-general}
    There exists some constant $M \geq 1$ such that: (1) $\|h\|_\Hcal, \|h-h^\dagger\|_\Hcal \leq M$ for all $h \in \Hcal_n$; (2) $\|q\|_\Qcal \leq M$ for all $q \in \Qcal_n$; and (3) $\|h^\dagger\|_\Hcal, \|\Pi_n h^\dagger\|_\Hcal \leq M$.
\end{assumption*}

\begin{assumption*}
\label{assum:regularization-h-general}
    There exists some constants $M \geq 1$ and $L \geq 1$ such that: (1) $\|h^\dagger\|_\Hcal, \|\Pi_n h^\dagger\|_\Hcal \leq M$; (2) $\|\Pi_n P(h - h^\dagger)\|_\Qcal \leq L \|h - h^\dagger\|_\Hcal$ for all $ \in \Hcal_n$; and (3) for some universal constants $c_1$, $c_2$, and $c_3$, we have
    \begin{align*}
        \gamma_n^q &\geq c_2 \Big(r_n + \sqrt{\log(c_1/\zeta)/n} \Big)^2 \\
        \text{and} \quad \gamma_n^h &\geq c_3 L^2 \Big(\gamma_n^q + \Big(r_n + \sqrt{\log(c_1/\zeta)/n} \Big)^2 \Big) \,.
    \end{align*}
\end{assumption*}

It is trivial to verify that these conditions generalize those in \Cref{sec: est-nuisance}. Then, under these assumptions, we can provide the following generalization of \Cref{thm:h-estimator-bound}.

\begin{theorem}
\label{thm:h-estimator-bound-general}
    Suppose \Cref{assum:universal-approximation-h-general,assum:bounded-estimation-h-general,assum:complexity-h-general,assum:boundedness-general} hold, as well as either \Cref{assum:bounded-complexity-h-general} or \Cref{assum:regularization-h-general}. Then, given some universal constant $c_0$, we have that, for $\zeta\in(0,1/3)$, with probability at least $1-3\zeta$,
    \begin{equation*}
          \|P(\hat h_n - h_0)\|_{2,2} \leq c_0 \Big( M r_n + M \sqrt{\log(c_1/\zeta)/n} + \delta_n + M (\gamma_n^q)^{1/2} + M (\gamma_n^h)^{1/2} +  \mu_n^{1/2} \Big) \,,
    \end{equation*}
    for any $h_0 \in \Hcal_0$, where $c_1$ is the same universal constant as in \Cref{assum:regularization-h-general}.

    Furthermore, suppose that either of the above sets of assumptions hold, and in addition that: (1) $\mu_n = o(1)$; (2) $\mu_n = \omega(\max(r_n^2,\delta_n^2,\gamma_n^q,\gamma_n^h,1/n))$; (3) $\{h : h \in \starcls(\Hcal_n - h^\dagger), \|h\|_\Hcal \leq 1\}$ has critical radius at most $r_n$; (4) $\{h \in \bar\Hcal : \|h\|_\Hcal \leq U\}$ is compact under $\|\cdot\|_\Hcal$ for every $U < \infty$; and (5) $\|h\|_{2,2} \leq K \|h\|_\Hcal$ for all $h \in \bar\Hcal$ and some constant $K < \infty$. Then, we have
    \begin{equation*}
        \|\hat h_n - h^\dagger\|_{2,2} = o_p(1) \,.
    \end{equation*}
    
\end{theorem}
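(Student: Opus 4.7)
The plan is to mirror the proof of \Cref{thm:h-estimator-bound}, with the bilinear form $G(W;h,q)$ playing the role previously played by $g_1(W)h(S)q(T)$. The essential observation is that, in the population and at any fixed $h$, the inner supremum
$$\sup_{q \in \bar\Qcal}\,\EE\bigl[G(W;h,q)-r(W;q)-\tfrac12 q(T)^\top q(T)\bigr]$$
equals $\tfrac12\|P(h-h^\dagger)\|_{2,2}^{\,2}$ up to constants, because the bilinearity and Riesz-type identities in the definition of the framework give $\EE[G(W;h,q)-r(W;q)] = \EE[(P(h-h^\dagger))(T)^\top q(T)]$, and completing the square yields the claim. \Cref{assum:boundedness-general}(1)--(3) ensures $G$ and $r$ behave like uniformly bounded $L_2$-contractions on $(h,q)$, which will allow us to replicate the concentration arguments from the scalar case.

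First I would prove a localized uniform concentration bound: using the critical radius in \Cref{assum:complexity-h-general}, a standard star-hull / Talagrand-type argument (as in the proof of \Cref{thm:h-estimator-bound}) gives, with probability at least $1-\zeta$, that uniformly over $h \in \Hcal_n$, $q \in \Qcal_n$,
$$\bigl|(\EE_n-\EE)\bigl[G(W;h,q)-r(W;q)-\tfrac12 q^\top q+\mu_n h^\top h\bigr]\bigr| \;\lesssim\; \epsilon_n(\|h\|_\Hcal,\|q\|_\Qcal),$$
where $\epsilon_n(u,v)=(uv+u+v)(r_n+\sqrt{\log(c_1/\zeta)/n})$ in the regularized case and $\epsilon_n\lesssim M(r_n+\sqrt{\log(c_1/\zeta)/n})$ under \Cref{assum:bounded-complexity-h-general}. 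Next I would compare the empirical minimax value at $\hat h_n$ with that at $\Pi_n h^\dagger$. By \Cref{assum:universal-approximation-h-general} and \Cref{assum:boundedness-general}(4), $\|P(\Pi_n h^\dagger-h^\dagger)\|_{2,2}\le\delta_n$, so using the population identification of the inner supremum, the value at $\Pi_n h^\dagger$ is bounded by $O(\delta_n^2+\gamma_n^q M^2+\gamma_n^h M^2+\mu_n)$. On the other hand, using \Cref{assum:universal-approximation-h-general}(2) to show that $\Qcal_n$ contains an approximation $\Pi_n P(\hat h_n-h^\dagger)$ of the population-optimal $q$ at $\hat h_n$, the value at $\hat h_n$ is at least $\tfrac12 \|P(\hat h_n-h^\dagger)\|_{2,2}^{\,2} - O(\delta_n^2+\gamma_n^q \|\Pi_n P(\hat h_n-h^\dagger)\|_\Qcal^2)+\mu_n\|\hat h_n\|_{2,2}^{\,2}$. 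Equating these and isolating $\|P(\hat h_n-h^\dagger)\|_{2,2}$ (and using $Ph^\dagger=Ph_0$ for any $h_0\in\Hcal_0$) gives the stated finite-sample bound.

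The subtle point is adaptivity under \Cref{assum:regularization-h-general}, where $\Hcal_n$ and $\Qcal_n$ need not be uniformly bounded. Here the linear-in-$\|h\|_\Hcal\|q\|_\Qcal$ concentration term is absorbed by the explicit regularizers $\gamma_n^h\|h\|_\Hcal^{\,2}$ and $\gamma_n^q\|q\|_\Qcal^{\,2}$ via an AM-GM self-bounding argument: the lower bounds on $\gamma_n^q,\gamma_n^h$ in \Cref{assum:regularization-h-general}, together with the projection-norm control $\|\Pi_n P(\hat h_n-h^\dagger)\|_\Qcal\le L\|\hat h_n-h^\dagger\|_\Hcal$, allow every deviation term scaling with $\|\hat h_n\|_\Hcal$ to be dominated by the penalty, yielding a bound that depends only on $M=\|h^\dagger\|_\Hcal$ (or its upper bound) rather than on the unknown norm of $\hat h_n$. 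This is the main obstacle and requires the same type of careful bookkeeping as in the original proof; the generalized bilinearity of $G$ does not introduce new conceptual difficulties because \Cref{assum:boundedness-general}(1)--(3) preserves the required boundedness structure.

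For the strong consistency claim, I would combine the first part with the rate conditions $\mu_n=o(1)$ and $\mu_n=\omega(\max(r_n^2,\delta_n^2,\gamma_n^q,\gamma_n^h,1/n))$. The comparison at $\Pi_n h^\dagger$ yields $\mu_n\|\hat h_n\|_{2,2}^{\,2}+\gamma_n^h\|\hat h_n\|_\Hcal^{\,2} = O_p(\mu_n\|h^\dagger\|_{2,2}^{\,2}+\gamma_n^h M^2+\text{remainders})$, hence $\|\hat h_n\|_\Hcal=O_p(1)$. By the compactness of $\|\cdot\|_\Hcal$-balls in $\|\cdot\|_{2,2}$ (conditions (4)--(5) of the theorem), every subsequence has a further subsequence converging in $\|\cdot\|_{2,2}$ to some $h^*\in\bar\Hcal$. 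From $\|P(\hat h_n-h^\dagger)\|_{2,2}=o_p(1)$ we deduce $Ph^*=Ph^\dagger$, so $h^*\in\Hcal_0$. From the penalty bound and dividing by $\mu_n$, we obtain $\|h^*\|_{2,2}\le\|h^\dagger\|_{2,2}$ by lower semicontinuity, which by the uniqueness of the minimum-norm element in the closed linear subspace $\Hcal_0$ forces $h^*=h^\dagger$. By the subsequence principle, $\|\hat h_n-h^\dagger\|_{2,2}=o_p(1)$.
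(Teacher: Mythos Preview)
Your proposal is correct and follows the same architecture as the paper: identify the population inner supremum as $\tfrac12\|P(h-h^\dagger)\|_{2,2}^2$, compare the empirical minimax values at $\hat h_n$ and $\Pi_n h^\dagger$, apply critical-radius concentration with an AM--GM self-bounding argument under \Cref{assum:regularization-h-general}, and exploit the $\mu_n$-penalty for strong consistency. Two implementation differences are worth knowing. First, the paper explicitly isolates a lemma bounding $\|q_n(\Pi_n h^\dagger)\|_{2,2}$, the empirical inner maximizer at the benchmark, because this quantity feeds into both the ERM comparison and the stochastic-equicontinuity term; your sketch folds this into the uniform concentration, which works but is exactly where the bookkeeping under \Cref{assum:regularization-h-general} is most delicate. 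Second, for strong consistency the paper does not run a probabilistic subsequence argument: it shows deterministically, via the extreme value theorem on the $\|\cdot\|_\Hcal$-compact set $\{h:\|h\|_\Hcal\le U,\ \inf_{h_0\in\Hcal_0}\|h-h_0\|_{2,2}\ge\epsilon\}$, that weak-norm consistency forces $\inf_{h_0\in\Hcal_0}\|\hat h_n-h_0\|_{2,2}=o_p(1)$, and then closes with the Pythagorean identity $\|h_0\|_{2,2}^2=\|h^\dagger\|_{2,2}^2+\|h_0-h^\dagger\|_{2,2}^2$ for $h_0\in\Hcal_0$ combined with $\|\hat h_n\|_{2,2}^2\le\|h^\dagger\|_{2,2}^2+o_p(1)$. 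Your subsequence route can be made rigorous, but you have glossed over translating the empirical norm $\|\hat h_n\|_{n,2}$ to the population norm $\|\hat h_n\|_{2,2}$; this is precisely where the additional critical-radius condition~(3) on $\starcls(\Hcal_n-h^\dagger)$ is invoked.
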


\subsection{Estimation of the Debiasing Nuisance Function}

Next, we consider the estimation of the debiasing nuisance function $q^\dagger$ in the general setting. Here, we will consider estimators of the form
\begin{equation}
\label{eq:q-estimator-general}
    \hat q_n = \argmax_{q \in \widetilde\Qcal_n} \EE_n \Big[ G(W;\hat\xi_n,q) - \frac{1}{2} q(T)^\top q(T) \Big] - \tilde\gamma_n^q \|q\|_{\widetilde\Qcal}^2 \,,
\end{equation}
where
\begin{equation}\label{eq:xi-estimator-general}
    \hat \xi_n = \argmin_{\xi \in \Xi_n} \sup_{q \in \Qcal_n} \EE_n \Big[ G(W;\xi,q) - m(W;\xi) - \frac{1}{2} q(T)^\top q(T) \Big] - \gamma_n^q \|q\|_{\Qcal}^2 + \gamma_n^\xi \|\xi\|_{\Xi}^2 \,.
\end{equation}
Again, all definitions here are analogues to those in \Cref{sec: debias-nuisance-est}. Notably, the minimax objective function in \Cref{eq:xi-estimator-general} is very close to the minimax objective in \Cref{eq:h-estimator-general}, except that $m(W;\xi)$ appears in \Cref{eq:xi-estimator-general} while $r(W; q)$ appears in \Cref{eq:h-estimator-general}.

Below, we further generalize our previous technical assumptions in \Cref{sec: debias-nuisance-est}.

\begin{assumption}
\label{assum:bounded-estimation-q-general}
    We have that: (1) $\|\xi\|_\infty \leq 1$ for every $\xi \in \Xi_n$; (2) $\|q\|_\infty \leq 1$ for every $q \in \Qcal_n$; (3) $\|q\|_\infty \leq 1$ for every $q \in \widetilde\Qcal_n$; (4) $\|q^\dagger\|_\infty \leq 1$; and (5) $\|\xi^\dagger\|_\infty \leq 1$ 
\end{assumption}

\begin{assumption}
\label{assum:universal-approximation-q-general}
    There exists some $\delta_n < \infty$ such that: (1) there exists some $\Pi_n \xi^\dagger \in \Xi_n$ such that $\|\Pi_n \xi^\dagger - \xi^\dagger\|_{2,2} \leq \delta_n$; (2) for every $q \in \{P \xi : \xi \in \Xi_n\}$ there exists $\Pi_n q \in \Qcal_n$ such that $\|q - \Pi_n q\|_{2,2} \leq \delta_n$; and (3) there exists $\Pi_n q^\dagger \in \widetilde \Qcal_n$ such that $\|q^\dagger - \Pi_n q^\dagger\|_{2,2} \leq \delta_n$.
\end{assumption}

\begin{assumption}
\label{assum:complexity-q-general}
    There exists some $r_n$ that bounds the critical radii of the star-shaped closures of the function classes: (1) $\{g_1(W)\xi(S) q(T) : \xi \in \starcls(\Xi_n - \xi^\dagger), q \in \starcls(\widetilde\Qcal_n - q^\dagger), \|\xi\|_\Xi \leq 1, \|q\|_{\widetilde\Qcal} \leq 1\}$; (2) $\{q \in \starcls(\Qcal_n - q^\dagger) : \|q\|_\Qcal \leq 1\}$; (3) $\{q \in \starcls(\widetilde\Qcal_n - q^\dagger) : \|q\|_{\widetilde\Qcal} \leq 1\}$; (4) $\{\xi \in \starcls(\xi_n - \xi^\dagger) : \|\xi\|_{\Xi} \leq 1\}$; and (5) $\{g_1(W)\xi(S) q(T) : \xi \in \starcls(\Xi_n - \xi^\dagger), q \in \starcls(\Qcal_n - q^\dagger), \|\xi\|_\Xi \leq 1, \|q\|_\Qcal \leq 1\}$.
\end{assumption}

\begin{assumption}
\label{assum:boundedness-q-general}
    We have that $\|m(W;q)\|_2 \leq \|q\|_{2,2}$ for all $q \in \bar\Qcal$.
\end{assumption}

\setcounter{subassumption}{0}
\begin{assumption*}
\label{assum:bounded-complexity-q-general}
    There exists some constant $M \geq 1$ such that: (1) $\|\xi\|_\Xi, \|\xi-\xi^\dagger\|_\Xi \leq M$ for all $\xi \in \Xi_n$; (2) $\|q\|_\Qcal, \|q-q^\dagger\|_\Qcal \leq M$ for all $q \in \Qcal_n$; (3) $\|q\|_{\widetilde\Qcal}, \|q-q^\dagger\|_{\widetilde\Qcal} \leq M$ for all $q \in \widetilde\Qcal_n$; (4) $\|\xi^\dagger\|_\Xi, \|\Pi_n \xi^\dagger\|_\Xi \leq M$; (5) $\|q^\dagger\|_\Qcal, \|\Pi_n q^\dagger\|_\Qcal \leq M$; and (6) $\|q^\dagger\|_{\widetilde\Qcal,} \|\Pi_n q^\dagger\|_{\widetilde\Qcal} \leq M$.
\end{assumption*}

\begin{assumption*}
\label{assum:regularization-q-general}
    There exists some constants $M \geq 1$ and $L \geq 1$ such that: (1) $\|\xi^\dagger\|_\Xi, \|\Pi_n \xi^\dagger\|_\Xi \leq M$; (2) $\|q^\dagger\|_\Qcal, \|\Pi_n q^\dagger\|_\Qcal \leq M$; (3) $\|\Pi_n P \xi\|_\Qcal \leq L \|\xi\|_\Xi$ for all $\xi \in \Xi_n$; and (4) for some universal constants $c_1$, $c_2$, $c_3$, and $c_4$ we have
    \begin{align*}
        \gamma_n^q &\geq c_2 \Big(r_n + \sqrt{\log(c_1/\zeta)/n} \Big)^2 \\
        \tilde\gamma_n^q &\geq c_3 \Big(r_n + \sqrt{\log(c_1/\zeta)/n} \Big)^2 \\
        \text{and} \quad \gamma_n^h &\geq c_4 L^2 \Big(\gamma_n^q + \Big(r_n + \sqrt{\log(c_1/\zeta)/n} \Big)^2 \Big) \,.
    \end{align*}
\end{assumption*}

Then, we provide the following generalization of \Cref{thm:q-estimator-bound}.

\begin{theorem}
   \label{thm:q-estimator-bound-general}
    Suppose \Cref{assum:bounded-estimation-q-general,assum:universal-approximation-q-general,assum:complexity-q-general,assum:boundedness-general,assum:boundedness-q-general} hold, as well as either \Cref{assum:bounded-complexity-q-general} or \Cref{assum:regularization-q-general}. Then, given some universal constant $c_0$, we have that, for $\zeta\in(0,1/7)$, with probability at least $1-7\zeta$,
    \begin{align*}
          \|\hat q_n - q^\dagger\|_{2,2} &\leq c_0 \Big( r_n^{1/2} + \prns{\log(c_1/\zeta)/n}^{1/4} + M r_n + M \sqrt{\log(c_1/\zeta)/n} \\
          &\qquad + \delta_n + M (\gamma_n^q)^{1/2} + M (\gamma_n^h)^{1/2} + M (\tilde \gamma_n^q)^{1/2}  \Big) \,,
    \end{align*}
    where $c_1$ is the same universal constant as in \Cref{assum:regularization-q}.
\end{theorem}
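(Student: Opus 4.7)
The plan is to follow the same two-stage structure as \Cref{thm:q-estimator-bound}: first establish a fast projected-norm rate for the first-stage estimator $\hat\xi_n$ defined by \eqref{eq:xi-estimator-general}, and then convert it to a strong $L_{2,2}$-norm rate for $\hat q_n$ defined by \eqref{eq:q-estimator-general}, where the second stage unavoidably contributes the slow-rate terms $r_n^{1/2}$ and $n^{-1/4}$ that appear in the bound.

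\textbf{Step 1 (fast weak-metric rate for $\hat\xi_n$).} Fix any $\xi^\dagger\in\Xi_0$; by the generalized version of \Cref{thm: new-nuisance}, $P^\star P\xi^\dagger=\alpha$, which is equivalent to $\EE[G(W;h,P\xi^\dagger)-m(W;h)]=0$ for all $h\in\Hcal$. Moreover, the generalized version of \Cref{lemma: minimum-norm} gives $P\xi^\dagger=q^\dagger$ regardless of the choice of $\xi^\dagger$. The minimax objective in \eqref{eq:xi-estimator-general} is structurally identical to the one in \eqref{eq:h-estimator-general} after replacing the coupling term $r(W;q)$ by the linear term $m(W;\xi)$ and removing the $\mu_n$-penalty, and the identity $\EE[G(W;h,P\xi^\dagger)]=\EE[m(W;h)]$ plays exactly the same role in the critical-inequality step that the defining identity of $h^\dagger$ played in the proof of \Cref{thm:h-estimator-bound-general}. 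Invoking that proof verbatim under \Cref{assum:bounded-estimation-q-general,assum:universal-approximation-q-general,assum:complexity-q-general,assum:boundedness-general,assum:boundedness-q-general} and either of the auxiliary assumptions on $\Xi_n$, we obtain with probability at least $1-3\zeta$
\begin{equation*}
\|P\hat\xi_n-q^\dagger\|_{2,2}=\|P(\hat\xi_n-\xi^\dagger)\|_{2,2}\leq c_0\bigl(Mr_n+M\sqrt{\log(c_1/\zeta)/n}+\delta_n+M(\gamma_n^q)^{1/2}+M(\gamma_n^\xi)^{1/2}\bigr).
\end{equation*}

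\textbf{Step 2 (strong-metric rate for $\hat q_n$ via penalized regression).} The program \eqref{eq:q-estimator-general} is, up to an additive constant in $q$, the penalized least-squares objective $\tfrac12\EE_n[(q(T)-k(S,T)^\top\hat\xi_n(S))^\top(q(T)-k(S,T)^\top\hat\xi_n(S))]+\tilde\gamma_n^q\|q\|_{\widetilde\Qcal}^2$, whose population minimizer over $\Qcal$ is $P\hat\xi_n$. Triangle inequality yields
\begin{equation*}
\|\hat q_n-q^\dagger\|_{2,2}\leq\|\hat q_n-P\hat\xi_n\|_{2,2}+\|P\hat\xi_n-q^\dagger\|_{2,2},
\end{equation*}
and the second term is controlled by Step~1. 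For the first term, condition on $\hat\xi_n$ and run the standard localized empirical-process argument over the star-hull $\starcls(\widetilde\Qcal_n-q^\dagger)$, using the critical-radius bounds of \Cref{assum:complexity-q-general}(1,3), the approximation bound \Cref{assum:universal-approximation-q-general}(3), and the lower bound on $\tilde\gamma_n^q$ from \Cref{assum:regularization-q-general} (or the uniform norm bound in \Cref{assum:bounded-complexity-q-general}) to automatically adapt to $\|q^\dagger\|_{\widetilde\Qcal}$. Because the regression ``noise'' $k(S,T)^\top\hat\xi_n(S)-[P\hat\xi_n](T)$ is merely bounded rather than vanishing, the squared $L_{2,2}$-error is of order $r_n+\sqrt{\log(c_1/\zeta)/n}+\delta_n^2+M^2\tilde\gamma_n^q$, so in the $L_{2,2}$-norm itself it is of order $r_n^{1/2}+(\log(c_1/\zeta)/n)^{1/4}+\delta_n+M(\tilde\gamma_n^q)^{1/2}$. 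A union bound over the two stages absorbs the remaining $4\zeta$ failure probability, giving the stated result with total probability at least $1-7\zeta$.

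\textbf{Main obstacle.} The principal difficulty is the clean execution of Step~2: the localized empirical-process analysis must be carried out \emph{conditionally} on the data-dependent regression target $P\hat\xi_n$, and simultaneously uniformly over the unit ball of $\|\cdot\|_{\widetilde\Qcal}$ so as to yield adaptivity to the unknown complexity $\|q^\dagger\|_{\widetilde\Qcal}$. This is precisely what necessitates the star-hull critical-radius condition in \Cref{assum:complexity-q-general}(3) and the lower bound on $\tilde\gamma_n^q$ in \Cref{assum:regularization-q-general}, and it is the source of the slow $r_n^{1/2}$ and $n^{-1/4}$ contributions that distinguish this bound from that of \Cref{thm:h-estimator-bound-general}. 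A secondary verification, which is immediate from the adjoint definition in \eqref{eq: operator-general} but needs to be stated for the generalized bilinear $G$, is that $P^\star P\xi^\dagger=\alpha$ still furnishes the first-order moment identity required to import the proof of \Cref{thm:h-estimator-bound-general} into Step~1.
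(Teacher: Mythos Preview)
Your two-stage plan mirrors the paper's architecture, but you have the source of the slow rate backwards, and Step~1 contains a genuine gap.

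In Step~1 you assert that the $\hat\xi_n$ minimax problem is ``structurally identical'' to the primal one for $\hat h_n$ and that the proof of \Cref{thm:h-estimator-bound-general} transfers verbatim to give a \emph{fast} weak-norm rate $\|P(\hat\xi_n-\xi^\dagger)\|_{2,2}=O(Mr_n+\delta_n+\cdots)$. This is where the argument fails. The primal analysis crucially uses that the inner-max optimizer at the truth vanishes: $q_0(h^\dagger)=Ph^\dagger-\beta=0$; the stochastic-equicontinuity decomposition in the proof of \Cref{lem:primal-stochastic-equicontinuity-bound} invokes this explicitly to drop the terms $G(W;\hat h_n-h^\dagger,q_0(h^\dagger))$ and $q_0(h^\dagger)^\top(\cdot)$. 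In the dual, $q_0(\xi_0)=P\xi_0=q^\dagger\neq 0$, so these terms survive; together with the $-m(W;\xi)$ piece (which, unlike $-r(W;q)$ in the primal, is linear in the \emph{outer} variable) they form the residual $\widetilde\Psi_n(\hat\xi_n-\xi_0)=G(W;\hat\xi_n-\xi_0,q^\dagger)-m(W;\hat\xi_n-\xi_0)$. Its empirical process is controlled in \Cref{lem:dual-high-prob} only up to $\epsilon_n\|\hat\xi_n-\xi_0\|_{2,2}$ in the \emph{strong} $L_{2,2}$ norm, and without ill-posedness control this is merely $O(\epsilon_n)$. That raw $O(\epsilon_n)$ term then sits inside the bound on $\tfrac12\|\hat\xi_n-\xi_0\|_w^2$, so only $\|\hat\xi_n-\xi_0\|_w=O(\epsilon_n^{1/2})$ is obtainable. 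The paper states this obstruction explicitly in the discussion preceding \Cref{lem:dual-stochastic-equicontinuity-bound} and in its bound (the $72\epsilon_n$ term there).

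Conversely, your Step~2 diagnosis that the regression for $q$ generates the slow rate is incorrect. The ``noise'' $k(S,T)^\top\hat\xi_n(S)-[P\hat\xi_n](T)$ is mean-zero given $T$, and the paper's \Cref{lem:dual-saddle-point-bound} establishes, uniformly over $\xi\in\Xi_n$, that $\|\tilde q_n(\xi)-q^\dagger\|_{2,2}$ is bounded by a constant times $M\epsilon_n+\delta_n+\|\xi-\xi_0\|_w+M(\tilde\gamma_n^q)^{1/2}$: a \emph{fast} rate in $\epsilon_n$ plus the weak-norm error of $\xi$. The $r_n^{1/2}$ and $n^{-1/4}$ terms in the theorem enter solely by feeding the slow rate $\|\hat\xi_n-\xi_0\|_w=O(\epsilon_n^{1/2})$ from Stage~1 into this inequality. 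Your proposed triangle-inequality route via $P\hat\xi_n$ is additionally blocked by the fact that \Cref{assum:universal-approximation-q-general}(3) only guarantees $\widetilde\Qcal_n$ approximates $q^\dagger$, not the data-dependent target $P\hat\xi_n$; and the ``condition on $\hat\xi_n$'' device is unavailable since $\hat\xi_n$ is computed on the same sample.
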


\subsection{Asymptotic Normality and Inference}

Finally, we provide analogues of the results form \Cref{sec: est-functional} for the more general setting.

\begin{theorem}
\label{thm:dml-asymp-general}
    Let the estimator $\hat\theta_n$ be defined following the cross-fitting procedure as in \Cref{def: theta-est}, with $\hat q$ and $\hat h$ defined following \Cref{eq:h-estimator-general,eq:q-estimator-general} respectively, 
    and $\psi(W;h,q)$ defined according to  
    \Cref{eq: dr-formula-general}.
    Suppose that the full conditions of \Cref{thm:h-estimator-bound-general,thm:q-estimator-bound-general} hold. Then, as long as $r_n = o(n^{-1/3}), \delta_n = o(n^{-1/4}), \delta_nr_n^{1/2} = o(n^{-1/2}), \mu_n r_n = o(n^{-1})$ and $\mu_n \delta_n^2 = o(n^{-1})$, we have $\|P(\hat h_n - h_0)\|_{2,2} \|\hat q_n - q^\dagger\|_{2,2} = o_p(n^{-1/2})$,  and  that as $n\to\infty$,
        \begin{align*}
\sqrt{n}\prns{\hat\theta_n - \theta^\star} = \frac{1}{\sqrt{n}}\sum_{i=1}^n \prns{\psi(W_i;  h^\dagger,  q^\dagger) -\theta^\star} + o_p(1) \rightsquigarrow \mathcal{N}\prns{0, \sigma_0^2} 
\end{align*}
    where $\Ncal(0,\sigma_0^2)$ denotes a Gaussian distribution with mean $0$ and variance 
    \begin{equation*}
        \sigma_0^2 = \EE\Big[ \Big( \theta^\star - \psi(W;h^\dagger,q^\dagger) \Big)^2 \Big]
    \end{equation*}
\end{theorem}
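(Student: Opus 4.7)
The approach is the standard debiased/Neyman-orthogonal decomposition, specialized to our generalized bilinear setting and our two nuisance rate bounds. For notational brevity, I will work with a single fold and argue that the cross-fitting sum is handled by applying the single-fold argument to each $k\in\{1,\dots,K\}$ and averaging; since $K$ is fixed, no extra work is needed. The plan is to write
\begin{align*}
\sqrt{n}\,(\hat\theta_n-\theta^\star)
=\underbrace{\tfrac{1}{\sqrt n}\sum_{i=1}^n\big(\psi(W_i;h^\dagger,q^\dagger)-\theta^\star\big)}_{=:T_n}
+\underbrace{\tfrac{1}{K}\sum_{k=1}^K\hG_n^{(k)}\!\big(\psi(\cdot;\hat h^{(k)},\hat q^{(k)})-\psi(\cdot;h^\dagger,q^\dagger)\big)}_{=:A_n}
+\underbrace{\sqrt n\,B_n}_{=:C_n},
\end{align*}
where $\hG_n^{(k)}$ denotes the empirical process on fold $\Ical_k$ relative to the law of $W$, and $B_n=\tfrac1K\sum_k \hP(\psi(W;\hat h^{(k)},\hat q^{(k)})-\psi(W;h^\dagger,q^\dagger))$ with the expectation taken only over the fresh fold $W$ (so $\hat h^{(k)},\hat q^{(k)}$ are held fixed).

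First I would handle the bias term $C_n$. Conditional on the training sample, $\hP(\psi(W;\hat h^{(k)},\hat q^{(k)}))-\theta^\star$ is exactly of the form controlled by \Cref{lem:general-dr}, giving
\begin{equation*}
|C_n|\le \tfrac{\sqrt n}{K}\sum_{k=1}^K \|P(\hat h^{(k)}-h^\dagger)\|_{2,2}\,\|\hat q^{(k)}-q^\dagger\|_{2,2}.
\end{equation*}
Plugging in the finite-sample bounds from \Cref{thm:h-estimator-bound-general,thm:q-estimator-bound-general}, the product is bounded (up to constants) by
\begin{equation*}
\big(M r_n+M n^{-1/2}+\delta_n+M(\gamma_n^q)^{1/2}+M(\gamma_n^h)^{1/2}+\mu_n^{1/2}\big)\big(r_n^{1/2}+n^{-1/4}+\delta_n+M(\tilde\gamma_n^q)^{1/2}+\cdots\big).
\end{equation*}
The hard part here is a term-by-term verification that every cross product is $o_p(n^{-1/2})$: the fast$\times$slow combinations $r_n\cdot r_n^{1/2}=r_n^{3/2}$ and $\delta_n\cdot r_n^{1/2}$ are handled by the hypotheses $r_n=o(n^{-1/3})$ and $\delta_n r_n^{1/2}=o(n^{-1/2})$; the $\mu_n^{1/2}$ terms against $r_n^{1/2}$ and $\delta_n$ are handled by $\mu_n r_n=o(n^{-1})$ and $\mu_n\delta_n^2=o(n^{-1})$ respectively; the $\mu_n^{1/2}\cdot n^{-1/4}$ term follows from $\mu_n=o(n^{-2/3})$ (implied by $\mu_n r_n=o(n^{-1})$ together with the typical lower bound $r_n\gtrsim n^{-1/2}$ on a critical radius). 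The regularization penalties $\gamma_n^q,\gamma_n^h,\tilde\gamma_n^q$ are controlled analogously: under \Cref{assum:regularization-h-general,assum:regularization-q-general} they are $O(r_n^2+n^{-1}\log(1/\zeta))$, so $(\gamma_n^\bullet)^{1/2}$ behaves like $r_n+n^{-1/2}$ and no new rate is needed; under \Cref{assum:bounded-complexity-h-general,assum:bounded-complexity-q-general} one may take them to be zero.

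Next I would handle the stochastic equicontinuity term $A_n$. Fix $k$ and condition on the training data $\{W_i:i\notin\Ical_k\}$, which fixes $\hat h^{(k)},\hat q^{(k)}$. Then on $\Ical_k$ the summand is an i.i.d.\ centered sum, so by the conditional Chebyshev inequality
\begin{equation*}
\PP\!\big(|\hG_n^{(k)}(\psi(\cdot;\hat h^{(k)},\hat q^{(k)})-\psi(\cdot;h^\dagger,q^\dagger))|>\varepsilon \,\big|\,\text{training}\big)\le \varepsilon^{-2}\,\|\psi(W;\hat h^{(k)},\hat q^{(k)})-\psi(W;h^\dagger,q^\dagger)\|_2^2.
\end{equation*}
Using \Cref{assum:boundedness-general} to bound the three pieces of $\psi$ (namely $m(W;h)$, $r(W;q)$, and $G(W;h,q)$) by $L_2$-norms of $(\hat h-h^\dagger)$ and $(\hat q-q^\dagger)$, the right-hand side is dominated by $\|\hat h^{(k)}-h^\dagger\|_{2,2}^2+\|\hat q^{(k)}-q^\dagger\|_{2,2}^2$, both of which are $o_p(1)$ by the second part of \Cref{thm:h-estimator-bound-general} and by \Cref{thm:q-estimator-bound-general}. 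Passing from conditional to unconditional probabilities and averaging over the finitely many folds gives $A_n=o_p(1)$. The main delicate point in this step is that we genuinely need strong-norm consistency of $\hat h$, not just weak-norm consistency; this is precisely why the second half of \Cref{thm:h-estimator-bound-general} (which upgrades to $\|\cdot\|_{2,2}$) is invoked, and why the extra compactness/regularization conditions (3)--(5) of that theorem's second statement appear in our hypotheses.

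Finally, the leading term $T_n$ consists of i.i.d.\ centered summands $\psi(W_i;h^\dagger,q^\dagger)-\theta^\star$. Boundedness/moment conditions on $m$, $r$, and $G$ from \Cref{assum:boundedness-general,assum:boundedness-q-general} ensure the variance $\sigma_0^2=\EE[(\psi(W;h^\dagger,q^\dagger)-\theta^\star)^2]$ is finite, so the classical Lindeberg--L\'evy CLT yields $T_n\rightsquigarrow\Ncal(0,\sigma_0^2)$. Combining the three pieces via Slutsky completes the proof. I expect the rate-verification bookkeeping in the bias step to be the main obstacle, since the product structure of the bound requires checking every cross-term between a fast-rate and a slow-rate factor while juggling the penalization scales $\mu_n,\gamma_n^q,\gamma_n^h,\tilde\gamma_n^q$.
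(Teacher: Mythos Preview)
Your proposal is correct and follows essentially the same approach as the paper's proof: the same three-term decomposition into a leading CLT term, a stochastic equicontinuity term controlled via a conditional second-moment (Chebyshev/Markov) bound together with the strong-norm consistency from the second half of \Cref{thm:h-estimator-bound-general}, and a bias term controlled by \Cref{lem:general-dr} and the product of the two rate bounds. Your rate-verification bookkeeping is in fact more explicit than the paper's, which simply writes the product as $O_p(\epsilon_n^{3/2}+\delta_n\epsilon_n^{1/2}+\delta_n\epsilon_n+\delta_n^2+\delta_n\mu_n^{1/2}+\epsilon_n^{1/2}\mu_n^{1/2})$ and checks the same hypotheses.
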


Similarly, we can construct a variance estimator and confidence interval by following \Cref{eq: variance-estimator,eq: CI} respectively, and prove that they are asymptotically valid.

\begin{theorem}
\label{lem:inference-general}
    Let $\hat\sigma_n^2$ and $\op{CI}$ be the variance estimator and confidence interval constructed from \Cref{eq: variance-estimator,eq: CI}, with the $\psi(W; h, q)$ function defined according to  
    \Cref{eq: dr-formula-general}.
    If the conditions of \Cref{thm:dml-asymp-general} hold, 
    then as $n\to\infty$, 
    $\hat\sigma_n^2$ converges to $\sigma_0^2$  in probability, and $\Prb{\theta^\star \in \op{CI}} \to 1-\alpha$.  
\end{theorem}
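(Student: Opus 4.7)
The plan is to split the proof into two parts: first, establish variance consistency $\hat\sigma_n^2 \xrightarrow{p} \sigma_0^2$; second, deduce the coverage statement by combining this with the asymptotic normality of \Cref{thm:dml-asymp-general} via Slutsky's theorem. Let me write $\psi_i^\dagger = \psi(W_i; h^\dagger, q^\dagger)$ and $\hat\psi_i^{(k)} = \psi(W_i; \hat h^{(k)}, \hat q^{(k)})$, and introduce the oracle counterpart $\bar\sigma_n^2 = n^{-1}\sum_{i=1}^n (\theta^\star - \psi_i^\dagger)^2$. The oracle converges to $\sigma_0^2$ in probability by the law of large numbers; integrability is guaranteed by the uniform boundedness of $h^\dagger$, $q^\dagger$ and of $m$, $r$, $G$ via \Cref{assum:boundedness-general,assum:boundedness-q-general,assum:bounded-estimation-h-general,assum:bounded-estimation-q-general}. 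Applying $a^2-b^2=(a-b)(a+b)$ pointwise and Cauchy--Schwarz, $|\hat\sigma_n^2-\bar\sigma_n^2|$ is bounded by the geometric mean of an $O_p(1)$ ``sum'' term and the error term $(\hat\theta_n-\theta^\star)^2 + K^{-1}\sum_k |\Ical_k|^{-1}\sum_{i\in\Ical_k}(\hat\psi_i^{(k)}-\psi_i^\dagger)^2$. So it suffices to check (a) $\hat\theta_n - \theta^\star = o_p(1)$, which is immediate from \Cref{thm:dml-asymp-general}, and (b) $|\Ical_k|^{-1}\sum_{i\in\Ical_k}(\hat\psi_i^{(k)} - \psi_i^\dagger)^2 = o_p(1)$ for each fold.

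The crux is (b). Here I would exploit cross-fitting: since $(\hat h^{(k)}, \hat q^{(k)})$ is built from data disjoint from $\Ical_k$, it is independent of $\{W_i\}_{i\in\Ical_k}$. Conditional on the nuisances, the empirical average has mean $\EE\bigl[(\psi(W; \hat h^{(k)}, \hat q^{(k)}) - \psi(W; h^\dagger, q^\dagger))^2 \mid \hat h^{(k)}, \hat q^{(k)}\bigr]$. Using the bilinear expansion
\begin{equation*}
\psi(W;h,q) - \psi(W;h^\dagger,q^\dagger) = m(W; h-h^\dagger) + r(W; q-q^\dagger) - G(W; h-h^\dagger, q) - G(W; h^\dagger, q-q^\dagger),
\end{equation*}
together with the $L_2$ bounds of \Cref{assum:boundedness-general,assum:boundedness-q-general}, this conditional expectation is controlled by $\|\hat h^{(k)}-h^\dagger\|_{2,2}^2 + \|\hat q^{(k)} - q^\dagger\|_{2,2}^2$, which is $o_p(1)$ by \Cref{thm:h-estimator-bound-general,thm:q-estimator-bound-general}. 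The $L_\infty$ uniform boundedness supplied by \Cref{assum:bounded-estimation-h-general,assum:bounded-estimation-q-general,assum:boundedness-general} then lifts this conditional-mean bound to an in-probability statement for the empirical average via conditional Markov.

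Once $\hat\sigma_n^2 \xrightarrow{p} \sigma_0^2$ is established (with $\sigma_0^2>0$ under the usual nondegeneracy; otherwise the coverage claim is trivial), the continuous mapping theorem yields $\hat\sigma_n \xrightarrow{p} \sigma_0$, and combining with \Cref{thm:dml-asymp-general} via Slutsky gives $\sqrt{n}(\hat\theta_n-\theta^\star)/\hat\sigma_n \rightsquigarrow \Ncal(0,1)$. The coverage statement $\Prb{\theta^\star\in\op{CI}} \to 1-\alpha$ then follows from the definition of $\op{CI}$. I anticipate the main technical obstacle in step (b): setting up the conditional argument carefully so that only $L_2$-consistency of the nuisances is required (since \Cref{thm:q-estimator-bound-general} yields only a slow $L_2$ rate for $\hat q$), while leveraging the $L_\infty$ boundedness assumptions to close the gap between the conditional expectation and the empirical average uniformly over the fold.
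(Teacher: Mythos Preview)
Your proposal is correct and follows essentially the same approach as the paper: compare $\hat\sigma_n^2$ to an oracle variance via the $a^2-b^2$ factorization, control the ``sum'' factor by uniform boundedness, reduce to $L_2$-consistency of $\psi(W;\hat h^{(k)},\hat q^{(k)})$ via the bilinear expansion and the nuisance rates of \Cref{thm:h-estimator-bound-general,thm:q-estimator-bound-general}, exploit cross-fitting independence to pass from population to empirical averages, and finish with Slutsky. The only cosmetic differences are that the paper bounds the sum factor directly in $L_\infty$ (rather than via Cauchy--Schwarz) and invokes H\"offding conditionally on the nuisances (rather than conditional Markov) for the fold-wise empirical term; both routes are equivalent here given the uniform boundedness assumptions.
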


\section{More on Connections to Some Previous Literature}\label{sec: connection}

In this section, we connect our paper to \citet{ichimura2022influence,ai2007estimation}. These two papers study   functionals of some functions defined by conditional moment restrictions. 
They can handle general nonlinear functionals and nonlinear conditional moment restrictions, but they focus on point-identified conditional moment restrictions.
In contrast, our paper focuses on linear functionals and linear conditional moment restrictions but we allow for under-identified conditional moment restrictions. 
In this section, we specialize the results in \citet{ichimura2022influence,ai2007estimation} to the linear setting studied in our paper, aiming  to connect our strong identification condition in \Cref{assump: new-nuisance} to conditions in these two existing papers.

\subsection{Connection to \citet{ichimura2022influence}}\label{sec: ichimura}
\citet{ichimura2022influence} study the influence function of a two-stage sieve estimator for the parameter $\theta^\star = \Eb{m(W; h^\star)}$ where $h \mapsto \Eb{m(W; h)}$ defines a functional of $h \in \Hcal \subseteq \Lcal_2(S)$, and $h^\star$ is \emph{uniquely} identified by the orthogonality condition 
\begin{align*}
\Eb{q(T)\rho(W; h)} = 0, ~~ q \in \Qcal \subseteq \Lcal_2(T), h \in \Hcal \subseteq \Lcal_2(S).
\end{align*}
for  a generalized residual function $\rho(W; h)$. 
\citet{ichimura2022influence}  study a general nonlinear residual function and hopes to estimate a general nonlinear functional.
In contrast, our paper focuses on a linear residual function and targets a linear functional $h \mapsto \Eb{m(W; h)}$.
In particular, the conditional moment formulation in our \Cref{sec: setup} can be viewed as a special example with $\Qcal = \Lcal_2(T)$ and $\rho(W; h) = g_2(W) - g_1(W)h(S)$.
Our general formulation in \Cref{sec:generalized-framework} further 
 allows for a general $\Qcal$ function class, although the corresponding residual function is still linear.

In this subsection, we relate our \Cref{sec:generalized-framework} to \citet{ichimura2022influence}. 
For simplicity, we focus on the conditional-moment formulation with $\Qcal = \Lcal_2(T)$,  $\rho(W; h) = g_2(W) - g_1(W)h(S)$ and $h\mapsto \Eb{m(W; h)}$ being a linear functional. In the notations here, \citet{ichimura2022influence} require the functional $h \mapsto \Eb{m(W; h)}$ to be continuous with a Riesz representer $\alpha\in\Hcal$.
Moreover, they assume the existence of  $q_0$ such that 
\begin{align}
\label{eq: cond-moment-q-2}
[P^\star q_0](S) = \Pi_\Hcal\bracks{g_1(W)q_0(T) \mid S} = \alpha(S).
\end{align}
This is identical to \Cref{eq: cond-moment-q}. 
Under these two assumptions, their Proposition 3 derives the influence function of a certain two-stage sieve estimator for $\theta^\star$:
\begin{align}\label{eq: IF-newey}
\varphi(W; h^\star, q^\star) = m(W; h^\star) - \theta^\star+ q^\star(T)(g_2(W) - g_1(W)h^\star(S)),
\end{align}
where $q^\star$ is the least squares projection of the $q_0$ in \Cref{eq: cond-moment-q-2} onto $\bar{\mathcal{R}}(P)$,  
the mean square closure of the range space $\mathcal{R}(P)$, namely, 
\begin{align}\label{eq: q-projection}
q^\star = \argmin_{q \in {\bar{\mathcal{R}}(P)}}\Eb{\prns{q_0(T) - q(T)}^2}.
\end{align}

In the following proposition, we show that the debiasing  nuisance derived from   our \Cref{assump: new-nuisance} is closely related to the nuisance defined in \Cref{eq: q-projection}. 
\begin{proposition}\label{prop: newey-interpret} 
For any $\xi_0 \in \Xi_0$ defined in  \Cref{assump: new-nuisance}, we have 
\begin{align}\label{eq: q-projection2}
q^\dagger = P\xi_0 = \argmin_{q \in {{\mathcal{R}}(P)}}\Eb{\prns{q_0(T) - q(T)}^2}.
\end{align}
\end{proposition}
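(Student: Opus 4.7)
The plan is to reduce the minimization problem to an orthogonal projection identity using the equivalent characterization of $\Xi_0$ already established in the paper. By Theorem on the Riesz representer (the "if and only if" statement $\Xi_0 = \{\xi \in \Hcal : P^\star P\xi = \alpha\}$), any $\xi_0 \in \Xi_0$ satisfies $P^\star P \xi_0 = \alpha$. On the other hand, by the very definition of $q_0$ in \Cref{eq: cond-moment-q-2} (which is the same as \Cref{eq: cond-moment-q}), we have $P^\star q_0 = \alpha$. Subtracting, I get $P^\star(P\xi_0 - q_0) = 0$, i.e., $P\xi_0 - q_0 \in \Ncal(P^\star)$.

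Next I invoke the standard Hilbert-space identity $\Ncal(P^\star) = \Rcal(P)^\perp$ (equivalently, $\cl(\Rcal(P))^\perp$), which yields $\langle P\xi_0 - q_0,\, Pu\rangle = 0$ for every $u \in \Hcal$. In other words, the residual $q_0 - P\xi_0$ is orthogonal to every element of $\Rcal(P)$. Combined with the trivial fact that $P\xi_0 \in \Rcal(P)$, this is exactly the Hilbert-space characterization of the orthogonal projection of $q_0$ onto $\Rcal(P)$.

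To conclude, I do the usual Pythagorean expansion: for any $q \in \Rcal(P)$,
\begin{equation*}
\|q_0 - q\|_2^2 = \|q_0 - P\xi_0\|_2^2 + \|P\xi_0 - q\|_2^2 + 2\langle q_0 - P\xi_0,\, P\xi_0 - q\rangle,
\end{equation*}
and since $P\xi_0 - q \in \Rcal(P)$ and $q_0 - P\xi_0 \perp \Rcal(P)$, the cross term vanishes. Hence $\|q_0 - q\|_2^2 \geq \|q_0 - P\xi_0\|_2^2$ with equality iff $q = P\xi_0$, which proves the claimed minimization identity in \Cref{eq: q-projection2}.

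I do not foresee any serious obstacle. The one subtlety worth flagging is that $\Rcal(P)$ need not be closed, so a priori the projection of $q_0$ might only lie in $\cl(\Rcal(P))$ (as in \Cref{eq: q-projection}); the strong identification assumption is precisely what rescues us, since it produces an explicit witness $P\xi_0 \in \Rcal(P)$ that realizes the projection. This is also the point that connects the result to the interpretation preceding the proposition: \citet{ichimura2022influence}'s $q^\star$ is only guaranteed to lie in the closure, whereas our $q^\dagger = P\xi_0$ is an actual element of the range.
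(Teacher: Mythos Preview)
Your proof is correct and takes essentially the same approach as the paper: both establish the orthogonality condition $\langle q_0 - P\xi_0,\, P\xi\rangle = 0$ for all $\xi\in\Hcal$ by using $P^\star q_0 = \alpha = P^\star P\xi_0$, then conclude that $P\xi_0$ minimizes $\|q_0 - q\|_2^2$ over $\Rcal(P)$. The paper writes this out via first-order conditions of the reparametrized problem $\min_{\xi\in\Hcal}\|q_0 - P\xi\|_2^2$, while you invoke the identity $\Ncal(P^\star)=\Rcal(P)^\perp$ and the Pythagorean expansion explicitly, but the content is the same.
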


\Cref{prop: newey-interpret}  shows that our debiasing nuisance  $q^\dagger = P\xi_0$ for $\xi_0$ given in our \Cref{assump: new-nuisance} is very similar to the function $q^\star$ in \Cref{eq: q-projection}. 
The difference is that the former is a projection onto the range space of $P$, while the latter is the projection onto the closure of the range space.  
According to the projection theorem \citep[Theorem 2, Section 3.3]{luenberger1997optimization},  the nuisance $q^\star$ defined by \citet{ichimura2022influence} in \Cref{eq: q-projection} always exists without any extra condition, since  $\bar{\mathcal{R}}(P)$  is a closed subspace of $\Lcal_2(T)$. 
In contrast, the existence of our nuisance in \Cref{assump: new-nuisance} needs to impose extra restrictions on the Riesz representer, as we discussed in \Cref{sec: identification}.
We note that when our \Cref{assump: new-nuisance} indeed holds, the influence function in \Cref{eq: IF-newey} with $q^\star = P\xi_0$ is actually identical to the influence function of our proposed functional estimator (see \Cref{thm:dml-asymp}). 

Although \citet{ichimura2022influence} do not impose restrictions like our \Cref{assump: new-nuisance}, their least squares nuisance $q^\star$ in \Cref{eq: q-projection} is not amenable to direct estimation, as it involves another unknown function $q_0$. 
Moreover, \citet{ichimura2022influence} focus on 
deriving the candidate influence function $\varphi(W; h^\star, q^\star)$ in 
 \Cref{eq: IF-newey}, but not on estimation and inference details. 
 For example,
  they do not provide conditions for when their two-stage sieve estimator is indeed asymptotically linear with $\varphi(W; h^\star, q^\star)$ as its actual influence function.
Establishing these  asymptotic guarantees will need extra conditions. 
In contrast, our paper aims to propose practical  estimation and inference methods for the  functionals of interest.
In particular, the  characterization of our  nuisance in \Cref{assump: new-nuisance} \Cref{eq: delta-eq-2} does not involve any unknown function, so it is particularly convenient for estimation. 
 This allows us to establish that our proposed estimator is asymptotically linear with the stated influence function,  under generic high-level conditions that can accommodate flexible machine learning nuisance estimators.
Importantly, our asymptotic guarantees are robust to the weak identification of the primary nuisance.

\subsection{Connection to \citet{ai2007estimation}}
\label{sec: chen}
\citet{ai2007estimation} study the semiparametric estimation and inference of possibly misspecified but point identified conditional moment equations. 
In this subsection, we specialize their results to linear functionals and  linear conditional moment restrictions considered in our paper. 
In this setting, the sieve minimum distance estimator
proposed in \citet{ai2007estimation} can be also applied to our problem, and 
the consistency and asymptotic distribution of this sieve estimator  can be established following their theory. 

In our notations, the asymptotic distribution analysis of the sieve estimator in \citet{ai2007estimation} requires the existence of solution 
$\nu^\star(S) \in \Hcal$
 to the following minimization problem:  
\begin{align*}
\nu^\star \in \argmin_{v \in \cl\prns{\Hcal - \braces{h^\star}}} \Eb{\prns{[P\nu](T)}^2} + \prns{1 + \Eb{m(W; \nu)}}^2. 
\end{align*}
Since in our setting $\Hcal$ is a closed linear space and $h^\star \in \Hcal$, we have $\cl\prns{\Hcal - \braces{h^\star}} = \Hcal$, so the problem above can be also written as 
\begin{align}\label{eq: Ai-chen-nuisance}
\nu^\star \in \argmin_{v \in \Hcal} \Eb{\prns{[P\nu](T)}^2} + \prns{1 + \Eb{m(W; \nu)}}^2. 
\end{align}
We denote the minimum objective value in \Cref{eq: Ai-chen-nuisance} as $V^\star$, \ie, the objective value attained by $\nu^\star$.

Then we can follow  the theory in \citet{ai2007estimation} to show that under suitable technical conditions, the sieve estimator for $\theta^\star$   is asymptotically linear with influence function given below: 
\begin{align*}
\varphi(W; h^\star, \nu^\star) = {V^\star}^{-1}\prns{1+\Eb{m(W; \nu^\star)}}\prns{m(W; h^\star) - \theta^\star} + {V^\star}^{-1}[P{\nu^\star}](T)(g_1(W)h^\star(S)-g_2(W)).
\end{align*}

In the following proposition, we show that the existence of an optimal solution to \Cref{eq: Ai-chen-nuisance} is actually equivalent to our \Cref{assump: new-nuisance}.
This proposition provides a new perspective for a key assumption in \citet{ai2007estimation}: their assumption implicitly restricts the functional of interest like our \Cref{assump: new-nuisance}.

\begin{proposition}\label{prop: ai-chen}
If there exists a solution $\nu^\star \in \Hcal$ to the minimization problem in \Cref{eq: Ai-chen-nuisance}, then $V^\star = 1 + \Eb{m(W; v^\star)}$,  
 $\alpha = P^\star P\xi^\star \in \Rcal(P^\star P)$ for $\xi^\star \coloneqq -\nu^\star/V^\star$, and  
\begin{align*}
\varphi(W; h^\star, \nu^\star) =\prns{m(W; h^\star) - \theta^\star} + [P{\xi^\star}](T)(g_2(W) - g_1(W)h^\star(S)).
\end{align*}
Conversely, if there exists $\xi^\star$ such that $\alpha = P^\star P\xi^\star \in \Rcal(P^\star P)$, then $v^\star \coloneqq -\xi^\star/(1 + \E[\prns{\bracks{P\xi^\star}(T)}^2])$ solves the minimization problem in \Cref{eq: Ai-chen-nuisance}. 
\end{proposition}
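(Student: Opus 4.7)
The plan is to analyze the minimization problem in \Cref{eq: Ai-chen-nuisance} by taking the Gateaux derivative and matching the first-order condition with the range condition $\alpha\in\Rcal(P^\star P)$. Let $F(\nu)\coloneqq \Eb{[P\nu](T)^2} + (1+\Eb{m(W;\nu)})^2$, and note that $\Eb{m(W;\nu)}=\langle\alpha,\nu\rangle$ by the Riesz representation in \Cref{eq: Riesz}, so $F$ is the sum of $\|P\nu\|_2^2$ and the square of an affine functional in $\nu$. Hence $F$ is convex on the Hilbert space $\Hcal$, and any critical point of $F$ is a global minimum.

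First, I would compute the Gateaux derivative at $\nu^\star$ in a direction $\eta\in\Hcal$, which yields the first-order condition
\[
\langle P\nu^\star,P\eta\rangle + c^\star\langle\alpha,\eta\rangle = 0\quad\forall\eta\in\Hcal,\qquad c^\star\coloneqq 1+\Eb{m(W;\nu^\star)}.
\]
Since $\Hcal$ is a closed linear subspace, taking adjoints and using that $P^\star P\nu^\star\in\Hcal$ gives $P^\star P\nu^\star = -c^\star\alpha$. Next I would evaluate $F$ at $\nu^\star$ using this identity: $\|P\nu^\star\|_2^2 = \langle \nu^\star,P^\star P\nu^\star\rangle = -c^\star\langle\alpha,\nu^\star\rangle = -c^\star(c^\star-1)$, whence $V^\star = c^\star - (c^\star)^2 + (c^\star)^2 = c^\star = 1+\Eb{m(W;\nu^\star)}$, as claimed. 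This also shows $c^\star\ge 0$; I would rule out $c^\star=0$ by observing that it would force $P\nu^\star=0$ with $\Eb{m(W;\nu^\star)}=-1$, which contradicts $\alpha\in\Ncal(P)^\perp$, the identification condition of \Cref{lemma: identifiability} (already implicit in the Ai--Chen setting where $\theta^\star$ is identified).

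With $c^\star>0$, I set $\xi^\star\coloneqq -\nu^\star/c^\star = -\nu^\star/V^\star$; then $P^\star P\xi^\star = -P^\star P\nu^\star/c^\star = \alpha$, so $\alpha\in\Rcal(P^\star P)$. Substituting $V^\star=c^\star$ and $\nu^\star/V^\star = -\xi^\star$ into the definition of $\varphi(W;h^\star,\nu^\star)$ gives exactly the stated representation $\varphi(W;h^\star,\nu^\star) = (m(W;h^\star)-\theta^\star) + [P\xi^\star](T)(g_2(W)-g_1(W)h^\star(S))$.

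For the converse, assume $\alpha = P^\star P\xi^\star$ and set $d\coloneqq 1+\Eb{[P\xi^\star](T)^2}$, $\nu^\star\coloneqq -\xi^\star/d$. The key identity is $\Eb{m(W;\xi^\star)}=\langle\alpha,\xi^\star\rangle = \langle P^\star P\xi^\star,\xi^\star\rangle = \|P\xi^\star\|_2^2$, so $d = 1+\Eb{m(W;\xi^\star)}$. I would then verify the first-order condition directly: $P^\star P\nu^\star = -\alpha/d$ while $c^\star\alpha = (1+\Eb{m(W;\nu^\star)})\alpha = (1-\Eb{m(W;\xi^\star)}/d)\alpha = \alpha/d$ (using $d=1+\Eb{m(W;\xi^\star)}$), so $P^\star P\nu^\star+c^\star\alpha=0$. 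By convexity of $F$, this critical point is a global minimum, completing the converse.

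The only subtle point, which I expect to be the main place one must pause, is justifying $c^\star>0$ so that the division defining $\xi^\star$ is legitimate; all other steps are routine Hilbert-space algebra once the first-order condition is in hand.
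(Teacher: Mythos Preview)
Your proposal is correct and follows essentially the same approach as the paper: both derive the first-order condition $\langle P\nu^\star,P\eta\rangle + (1+\Eb{m(W;\nu^\star)})\langle\alpha,\eta\rangle=0$, substitute $\eta=\nu^\star$ to obtain $V^\star=1+\Eb{m(W;\nu^\star)}$, and then rescale to identify $\xi^\star$; the converse in both cases verifies the first-order condition directly for the candidate $\nu^\star=-\xi^\star/(1+\|P\xi^\star\|_2^2)$. The only notable difference is that you explicitly argue $c^\star>0$ via the identification condition $\alpha\in\Ncal(P)^\perp$, whereas the paper simply divides by $V^\star$ without comment---your added care here is welcome and does not change the overall strategy.
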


Although here $\nu^\star$ and $V^\star$ correspond to our debiasing nuisance function, \citet{ai2007estimation}  do not estimate them when constructing the estimator of $\theta^\star$ (although they propose to estimate $\nu^\star$ and $V^\star$ when estimating the asymptotic variance of their sieve estimator). 
In contrast, we propose to directly estimate the debiasing nuisance based on the formulation in \Cref{assump: new-nuisance} \Cref{eq: delta-eq-2}, and use it in the estimation of $\theta^\star$. 
This allows us to move beyond sieve estimation and leverage  flexible machine learning nuisance estimators.
Moreover, our paper does not require the nuisance functions to be uniquely idenfied by the conditional moment restrictions.

\section{Extension to Convex Classes}\label{sec: convex}

In the main text, we focus on a closed linear function class.
In this section, we extend the results to convex classes. In this section, we let $\Hcal$ be a closed convex function class.

\begin{assumption}\label{assump: convex}
There exists  $\xi_0 \in \op{int}\prns{\Hcal}$ such that 
\begin{align}\label{eq: delta-eq-3}
    \xi_0 \in \argmin_{\xi\in\Hcal}\Eb{\prns{[P\xi](T)}^2} - 2\Eb{m(W; \xi)} \,.
\end{align}
\end{assumption}

Here \Cref{assump: convex} is an analogue of \Cref{assump: new-nuisance}. 
Indeed, when $\Hcal$ is a closed and linear space, $\op{int}(\Hcal) = \Hcal$ so the interior restriction in \Cref{assump: new-nuisance} is vacuous. 
Here for a closed and convex class, we additionally restrict $\xi_0$ to the interior of $\Hcal$.

In the following theorem, we show an analogue of \Cref{thm: new-nuisance} that characterizes interior solution to \Cref{eq: delta-eq-3} in trems of  the  the Riesz representer. Note that the projection operator $\Pi_\Hcal$ is also well defined for a closed and convex class $\Hcal$.   

\begin{theorem}\label{theorem: convex}
If there exists $\xi_0 \in \op{int}(\Hcal)$ that solves \Cref{eq: delta-eq-3} in \Cref{assump: convex}, then 
\begin{align}\label{eq: new-nuisance-convex}
 \Pi_\Hcal\EE\bracks{g_1(W)[P\xi_0](T) \mid S} = \alpha(S).
\end{align}
\end{theorem}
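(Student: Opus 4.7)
The plan is to extract the claimed identity from the first-order optimality condition for the minimization in \Cref{eq: delta-eq-3}, exploiting that $\xi_0$ is an interior point of $\Hcal$ rather than a boundary solution. The objective $F(\xi) \coloneqq \Eb{\prns{[P\xi](T)}^2} - 2\Eb{m(W;\xi)}$ is convex: since $\xi \mapsto P\xi$ is linear, the first term is a convex quadratic, while the second is affine in $\xi$ by linearity of $m$ in its second argument. Because $\xi_0 \in \op{int}(\Hcal)$, for every direction $\eta$ in the ambient Hilbert space there exists $\epsilon > 0$ with $\xi_0 + t\eta \in \Hcal$ for all $|t| \le \epsilon$, so $t = 0$ is an unconstrained local minimizer of $t \mapsto F(\xi_0 + t\eta)$, and the directional derivative vanishes.

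A direct computation gives
\[
0 = \tfrac{1}{2}\left.\tfrac{d}{dt}F(\xi_0 + t\eta)\right|_{t=0} = \Eb{[P\xi_0](T)[P\eta](T)} - \Eb{m(W;\eta)}.
\]
Next I would rewrite each term via iterated expectations. Using $[P\eta](T) = \Eb{g_1(W)\eta(S)\mid T}$ together with the tower property, the first term equals $\Eb{g_1(W)[P\xi_0](T)\eta(S)} = \Eb{\Eb{g_1(W)[P\xi_0](T)\mid S}\eta(S)}$. Extending the Riesz identity $\Eb{m(W;h)} = \Eb{\alpha(S)h(S)}$ linearly to the span of admissible perturbations (using linearity of $m$ in $h$), the second term equals $\Eb{\alpha(S)\eta(S)}$. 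Combining reduces the first-order condition to
\[
\Eb{\prns{\Eb{g_1(W)[P\xi_0](T)\mid S} - \alpha(S)}\eta(S)} = 0
\]
for every admissible direction $\eta$.

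Since $\xi_0 \in \op{int}(\Hcal)$, the admissible directions span the closed linear span $V$ of $\Hcal - \xi_0$, so the residual $\Eb{g_1(W)[P\xi_0](T)\mid S} - \alpha(S)$ is orthogonal to $V$ and therefore vanishes under the orthogonal projection onto $V$. Combined with $\alpha$ being a fixed point of this projection, one obtains $\Pi_\Hcal\Eb{g_1(W)[P\xi_0](T)\mid S} = \alpha(S)$. The main obstacle will be the last step: precisely identifying $\Pi_\Hcal$ in the convex (non-subspace) setting with the orthogonal projection onto $V$ that arises naturally from the first-order condition, and verifying that $\alpha$ is a fixed point. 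When $\Hcal$ is a closed linear subspace this identification is immediate and recovers \Cref{thm: new-nuisance}; in the truly convex case it requires combining the interior-point property with the variational characterization $\langle y - \Pi_\Hcal y, h - \Pi_\Hcal y\rangle \le 0$ for all $h \in \Hcal$ to conclude that metric projection onto $\Hcal$ coincides with orthogonal projection onto $V$ at the function $\Eb{g_1(W)[P\xi_0](T)\mid S}$.
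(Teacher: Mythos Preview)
Your first-order-condition derivation and its rewriting as $\langle y - \alpha,\, \eta\rangle = 0$ with $y(S) = \EE[g_1(W)[P\xi_0](T)\mid S]$ are correct and coincide with the paper's starting point. The divergence is only in how the final step is executed.

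You route the argument through the closed linear span $V$ of $\Hcal - \xi_0$ and then must reconcile the linear projection $\Pi_V$ with the convex metric projection $\Pi_\Hcal$, which is the obstacle you flag. The paper sidesteps this entirely: from the equality-form first-order condition
\[
\langle y - \alpha,\; \xi - \xi_0\rangle = 0 \quad\text{for all } \xi \in \Hcal
\]
(interiorness of $\xi_0$ upgrades the variational inequality to an equality), it simply plugs in $\xi = h$ and $\xi = \alpha$ (implicitly using $\alpha \in \Hcal$) and subtracts the two resulting identities to obtain
\[
\langle y - \alpha,\; h - \alpha\rangle = 0 \quad\text{for all } h \in \Hcal.
\]
This is already the variational characterization of $\alpha = \Pi_\Hcal y$ for a closed convex set, with no reference to any auxiliary subspace $V$. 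The substitution $\xi = \alpha$ shifts the anchor of the orthogonality relation from $\xi_0$ to $\alpha$ in one line, which is precisely what the metric-projection condition demands and resolves your flagged obstacle immediately.

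A smaller point: your argument is slightly internally inconsistent. You first assert that \emph{every} $\eta$ in the ambient space is an admissible direction (which, taken literally, already yields $y = \alpha$ outright, and hence $\Pi_\Hcal y = \alpha$ once $\alpha \in \Hcal$), but then retreat to directions spanning only $V$. Either reading can be completed, but the paper's direct substitution is the cleanest path.
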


We again consider the doubly robust identification given by $q^\dagger = P\xi_0$:
\begin{align*}
\theta^\star = \Eb{\psi(W; h_0, q^\dagger)},
\end{align*}
where $h_0$ is any function in $\Hcal_0$ and $\xi_0$ is any interior solution to \Cref{eq: delta-eq-3}.
In the lemma below, we show that this doubly robust identification formula still has the bias-product property  and Neyman orthogonality property. Therefore, our debiased inference theory in \Cref{sec: est-functional} still works through. 

\begin{lemma}\label{lemma: bias-product-convex}
Suppose \Cref{assump: convex} holds. Then for any $h \in \Hcal, q \in \Lcal_2(T), h_0 \in \Hcal_0$, any interior solution $\xi_0$ to \Cref{eq: delta-eq-3} and $q^\dagger = P\xi_0$, we have 
\begin{align}\label{eq: DR-identification-convex}
 \abs{\Eb{\psi(W; h, q)} - \theta^\star} 
    &= \abs{\langle{P\prns{h-h_0}, q-q^\dagger\rangle}} \le \|P\prns{h-h_0}\|_2\|q-q^\dagger\|_2.
 \end{align} 
Moreover, we have 
\begin{align}\label{eq: orthogonality-convex}
\frac{\partial}{\partial t}  \Eb{\psi(W; h_0 + t(h-h_0), q^\dagger)}\big\vert_{t = 0} = \frac{\partial}{\partial t}  \Eb{\psi(W; h_0, q^\dagger  + t(q-q_0))}\big\vert_{t = 0} = 0.
\end{align}
\end{lemma}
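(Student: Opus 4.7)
The strategy is to mimic the proof of \Cref{thm:dr-ident} from the linear case, with the only substantive modification being that the ``orthogonality'' identity $\EE[m(W;v)] = \langle P\xi_0, Pv\rangle$ is now supplied by the first-order optimality condition at an \emph{interior} minimizer of the convex program \Cref{eq: delta-eq-3}, rather than by the unconstrained stationarity used when $\Hcal$ is a closed linear subspace. First, I would expand the doubly robust expression using linearity of $m$ in its second argument, the Riesz-type identity $\EE[q(T) g_1(W) h(S)] = \EE[q(T)[Ph](T)] = \langle q, Ph\rangle_{L_2(T)}$, and the identity $\EE[q(T) g_2(W)] = \langle q, Ph_0\rangle_{L_2(T)}$ (which holds because $h_0 \in \Hcal_0$ means $[Ph_0](T) = \EE[g_2(W)\mid T]$). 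This yields
\[
\EE[\psi(W;h,q)] - \theta^\star = \EE[m(W;h-h_0)] - \langle q, P(h-h_0)\rangle_{L_2(T)}.
\]

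Next, to handle the $\EE[m(W;h-h_0)]$ term, I would invoke the first-order optimality condition for the interior minimizer $\xi_0$ of \Cref{eq: delta-eq-3}. Because $\xi_0 \in \op{int}(\Hcal)$, every direction $v$ is admissible in the sense that $\xi_0 + tv \in \Hcal$ for all sufficiently small $|t|$; in particular this holds for $v = h - h_0$. The vanishing directional derivative of the objective $\EE[(P\xi)(T)^2] - 2\EE[m(W;\xi)]$ at $\xi_0$ then gives $\langle P\xi_0, Pv\rangle_{L_2(T)} = \EE[m(W;v)]$. Taking $v = h - h_0$ and using $q^\dagger = P\xi_0$, the first term collapses to $\langle q^\dagger, P(h-h_0)\rangle_{L_2(T)}$, so
\[
\EE[\psi(W;h,q)] - \theta^\star = \langle q^\dagger - q, P(h-h_0)\rangle_{L_2(T)},
\]
which is exactly the equality in \Cref{eq: DR-identification-convex}, and the product bound follows by Cauchy--Schwarz in $L_2(T)$.

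For the Neyman orthogonality statements in \Cref{eq: orthogonality-convex}, I would simply specialize the bias identity just derived. Plugging $q = q^\dagger$ yields $\EE[\psi(W; h, q^\dagger)] = \theta^\star$ for every $h \in \Hcal$, and since the map $t \mapsto \EE[\psi(W; h_0 + t(h-h_0), q^\dagger)]$ is therefore constantly equal to $\theta^\star$, its derivative at $t=0$ is zero. Symmetrically, plugging $h = h_0$ yields $\EE[\psi(W;h_0, q)] - \theta^\star = \langle q^\dagger - q, P(h_0-h_0)\rangle_{L_2(T)} = 0$ for every $q$, so the second derivative also vanishes.

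The main obstacle is the rigorous justification of the first-order condition in the direction $v = h - h_0$. Since $h - h_0$ is a difference of two elements of the convex (not linear) set $\Hcal$, it is not itself an element of $\Hcal$; it is the interiority of $\xi_0$ that ensures $\xi_0 + t(h-h_0)$ remains feasible for small $|t|$, making the directional derivative well-defined and forced to vanish. This is precisely the analytic content that \Cref{theorem: convex} packages as the projection identity $\Pi_\Hcal \EE[g_1(W) q^\dagger(T) \mid S] = \alpha(S)$. One must also confirm that the functional $v \mapsto \EE[m(W;v)]$ extends linearly to differences of elements of $\Hcal$, which is consistent with the Riesz representation in \Cref{eq: Riesz} used throughout. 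Once these domain issues are settled, the rest of the argument is purely algebraic.
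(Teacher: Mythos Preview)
Your proposal is correct and follows essentially the same route as the paper: use the first-order optimality condition at the interior minimizer $\xi_0$ to obtain $\EE[m(W;h-h_0)] = \langle q^\dagger, P(h-h_0)\rangle$, then expand $\EE[\psi(W;h,q)]-\theta^\star$ using $h_0\in\Hcal_0$ to arrive at the product form. The only cosmetic difference is that the paper applies the first-order condition separately in the directions $h-\xi_0$ and $h_0-\xi_0$ (both toward points of $\Hcal$) and subtracts, whereas you apply it directly in the direction $h-h_0$ by invoking interiority to make every direction feasible; your derivation of \Cref{eq: orthogonality-convex} by specializing the bias identity is slightly cleaner than the paper's explicit derivative computation.
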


\section{Supporting Lemmas}
\label{sec: support}
\begin{lemma}\label{lemma: nuisance-id}
The following three conditions are equivalent:
(i) the function $h^\star$ is identifiable,
(ii) $\Ncal(\operator)=\{0\}$, 
(iii) $\E[m(W;h^\star)]$ is identifiable for every function $m$ such that $\E[m(W;\cdot)]$ is continuous linear.
\end{lemma}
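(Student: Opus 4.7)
The plan is to establish the implications (i)$\iff$(ii) and (ii)$\implies$(iii) as immediate consequences of the characterization $\Hcal_0 = h^\star + \Ncal(P)$ from \Cref{eq: H0}, and then close the loop via (iii)$\implies$(ii) by constructing an explicit bad functional whenever $\Ncal(P)$ is nontrivial.

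First I would note that by \Cref{eq: H0}, any candidate solution of the conditional moment restriction is of the form $h^\star + v$ with $v \in \Ncal(P)$. Hence $h^\star$ is identifiable (meaning $\Hcal_0$ is a singleton) if and only if $\Ncal(P) = \{0\}$, giving (i)$\iff$(ii). For (ii)$\implies$(iii), if $\Ncal(P) = \{0\}$ then $\Hcal_0 = \{h^\star\}$, so for any $m$ the value $\Eb{m(W;h^\star)} = \Eb{m(W;h_0)}$ holds trivially for every $h_0 \in \Hcal_0$, i.e., the functional is identifiable.

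The only substantive step is (iii)$\implies$(ii), which I would prove by contrapositive. Suppose $\Ncal(P) \neq \{0\}$ and pick any $v \in \Ncal(P)$ with $v \neq 0$. Define the functional $m$ by $m(W;h) = v(S)\,h(S)$, so that $\Eb{m(W;h)} = \langle v, h \rangle$ for all $h \in \Hcal$; this is continuous and linear on $\Hcal$ by Cauchy--Schwarz, and its Riesz representer (in the sense of \Cref{eq: Riesz}) is exactly $\alpha = v$. Since $\langle v, v \rangle = \|v\|_2^2 > 0$, we have $v \notin \Ncal(P)^\perp$. By \Cref{lemma: identifiability}, this functional is not identifiable, contradicting (iii). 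Hence (iii)$\implies$(ii), closing the chain of equivalences.

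The main obstacle, such as it is, lies in choosing the right test functional for (iii)$\implies$(ii); once we observe that the inner product $\langle v, \cdot \rangle$ against a nonzero element of $\Ncal(P)$ is itself a continuous linear functional with Riesz representer $v$, \Cref{lemma: identifiability} does the rest, and no delicate construction is needed.
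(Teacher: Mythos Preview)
Your proof is correct. The paper actually does not include an explicit proof of this lemma; it is stated in the supporting-lemmas section without argument, evidently because the equivalences are immediate from the affine decomposition $\Hcal_0 = h^\star + \Ncal(P)$ in \Cref{eq: H0} together with \Cref{lemma: identifiability}. Your write-up makes exactly these implicit steps precise: (i)$\iff$(ii) via \Cref{eq: H0}, (ii)$\Rightarrow$(iii) trivially, and (iii)$\Rightarrow$(ii) by exhibiting the functional $m(W;h)=v(S)h(S)$ for some nonzero $v\in\Ncal(P)$, whose Riesz representer $\alpha=v$ fails the criterion $\alpha\in\Ncal(P)^\perp$ of \Cref{lemma: identifiability}. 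This is the natural argument given the paper's tools, and nothing is missing.
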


\begin{lemma}[Theorem 14.1 in \citet{wainwright2019high}]\label{lem:support1}
Given a star-shaped and $b$-uniformly bounded function class $\mathcal{G}$, let $\eta_n$ be any positive solution of the inequality $\mathcal{R}_n(\mathcal{G}^{\mid \eta})\leq \eta^2/b$. Then there exist universal positive constants $c_1, c_2$, such that for any $t\geq \eta_n$, we have 
\begin{align*}
    \abs{\|g\|^2_n-\|g\|^2_2} \leq \frac{1}{2}\|g\|^2_2+\frac{1}{2} t^2, ~~ \forall g\in \mathcal{G}, 
\end{align*}
with probability at least $1 - c_1 \exp\prns{-c_2\frac{nt^2}{b^2}}$.
\end{lemma}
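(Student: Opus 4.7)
The lemma is a standard ``multiplicative Bernstein'' concentration bound for squared-function deviations over a star-shaped, uniformly bounded class (this is the statement of Theorem 14.1 in \citet{wainwright2019high}), and my proof would follow the classical three-step template of symmetrization, localized concentration, and dyadic peeling. At a fixed radius $r \ge \eta_n$, set $Z_r \coloneqq \sup_{g \in \mathcal{G}^{\mid r}} \abs{\|g\|_n^2 - \|g\|_2^2}$ with $\mathcal{G}^{\mid r} \coloneqq \{g \in \mathcal{G} : \|g\|_2 \le r\}$; I would show that $Z_r \le C r^2$ with probability at least $1 - c_1 e^{-c_2 n r^2/b^2}$ for a small universal constant $C$ (fixed by the peeling step below). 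The ingredients are: symmetrization, which bounds $\EE[Z_r]$ by twice the Rademacher average of $\{g^2 : g \in \mathcal{G}^{\mid r}\}$; the Ledoux--Talagrand contraction principle applied to $x \mapsto x^2$ on $[-b,b]$ (which is $2b$-Lipschitz and vanishes at $0$), relating this to a constant multiple of $\mathcal{R}_n(\mathcal{G}^{\mid r})$; star-shapedness of $\mathcal{G}$, under which $r \mapsto \mathcal{R}_n(\mathcal{G}^{\mid r})/r$ is non-increasing, so the hypothesis $\mathcal{R}_n(\mathcal{G}^{\mid \eta_n}) \le \eta_n^2/b$ extrapolates to $\mathcal{R}_n(\mathcal{G}^{\mid r}) \le r\eta_n/b$ for every $r \ge \eta_n$; and Talagrand's concentration inequality in Bousquet's variance-aware form, applied to $\{g^2 - \EE[g^2] : g \in \mathcal{G}^{\mid r}\}$ (uniformly bounded by $b^2$ with per-sample variance at most $b^2 r^2$), which converts the Rademacher-based mean estimate into the stated sub-Gaussian tail at exponent $c_2 n r^2/b^2$.

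To promote this fixed-radius event to the uniform-in-$g$ conclusion, I would use a dyadic peeling argument. Fix $t \ge \eta_n$ and partition $\mathcal{G}$ into $S_0 \coloneqq \{g \in \mathcal{G} : \|g\|_2 \le t\}$ and $S_k \coloneqq \{g \in \mathcal{G} : 2^{k-1} t < \|g\|_2 \le 2^k t\}$ for $k \ge 1$. Applying the fixed-radius event at the scale $r_k \coloneqq 2^k t$ yields, with probability at least $1 - \exp(-c_2 n \cdot 4^k t^2/b^2)$, that $\abs{\|g\|_n^2 - \|g\|_2^2} \le C r_k^2$ uniformly over $g$ with $\|g\|_2 \le r_k$, and in particular over $g \in S_k$. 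On $S_k$ with $k \ge 1$ we have $\|g\|_2^2 > 4^{k-1} t^2 = r_k^2/4$, so the bound yields $C r_k^2 \le 4 C \|g\|_2^2 \le \tfrac12 \|g\|_2^2 + \tfrac12 t^2$ whenever $C \le 1/8$; on $S_0$, $C t^2 \le \tfrac12 t^2 \le \tfrac12 \|g\|_2^2 + \tfrac12 t^2$ already suffices. A union bound over $k \ge 0$ preserves the stated tail, since $\sum_{k \ge 0} \exp(-c_2 \cdot 4^k n t^2/b^2)$ is a geometric-type series dominated by its first term up to a universal factor (absorbed into $c_1$).

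The main obstacle is extracting a sufficiently small universal constant $C$ in the fixed-radius bound: already at $r = \eta_n$ the expectation $\EE[Z_r] \lesssim r \eta_n = r^2$ is of the same order as the target, so obtaining $C \le 1/8$ requires the variance-aware (Bousquet) form of Talagrand (whose fluctuation term scales with $b r$ rather than $b^2$) in tandem with the sharper, one-sided version of Ledoux--Talagrand contraction, so as to avoid the usual factor-of-two losses in the standard two-sided symmetrization/contraction estimates. Everything else---the star-shaped extrapolation of the critical-radius inequality from $\eta_n$ to all $r \ge \eta_n$, the peeling into dyadic shells, and the geometric union bound---is standard empirical-process bookkeeping.
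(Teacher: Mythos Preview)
Your sketch is correct and follows the standard proof of Theorem~14.1 in \citet{wainwright2019high}: symmetrization plus Ledoux--Talagrand contraction to control the mean, Bousquet's version of Talagrand's inequality for the tail at a fixed radius, star-shapedness to extrapolate the critical-radius inequality, and dyadic peeling with a geometric union bound to pass to the uniform statement. Note, however, that the paper does not supply its own proof of this lemma at all---it is stated purely as a supporting result imported from \citet{wainwright2019high}, so there is no paper-side argument to compare against. Your proposal is therefore a faithful reconstruction of the textbook proof rather than an alternative to anything in the paper.
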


\begin{lemma}[Lemma 11 in  \cite{foster2019orthogonal}]\label{lem:support2}
Let $\mathcal{F}: \mathcal{X} \rightarrow \mathbb{R}^{d}$ be a $1$-uniformly bounded function class, whose $t$ th coordinate projection is denoted as $\left.\mathcal{F}\right|_{t}$. Let $\ell: \mathbb{R}^{d} \times \mathcal{Z} \rightarrow \mathbb{R}$ be a loss function Lipschitz in its first argument with a Lipschitz constant $L$. We receive an i.i.d. sample set $S=\braces{Z_{1}, \ldots, Z_{n}}$. Let $\mathcal{L}_{f}$ denote the random variable $\ell(f(X), Z)$ and let
$$
\mathbb{P} \mathcal{L}_{f}=\mathbb{E}[\ell(f(X), Z)], \quad \text { and } \quad \mathbb{P}_{n} \mathcal{L}_{f}=\frac{1}{n} \sum_{i=1}^{n} \ell\left(f\left(X_{i}\right), Z_{i}\right).
$$
There exists universal positive constants $c_1, c_2, c_3$ such that for any $\delta_{n}^{2} \geq \frac{4 d \log \left(41 \log \left(2 c_{1} n\right)\right)}{c_{1} n}$ that solves the inequalities $\mathcal{R}\left(\operatorname{star}\left(\left.\mathcal{F}\right|_{t}-f_{t}^{\star}\right), \delta\right) \leq \delta^{2}$ for any $t \in\{1, \ldots, d\}$, we have 
$$
\left|\mathbb{P}_{n}\left(\mathcal{L}_{f}-\mathcal{L}_{f^{\star}}\right)-\mathbb{P}\left(\mathcal{L}_{f}-\mathcal{L}_{f^{\star}}\right)\right| \leq 18 L d \delta_{n}\left\{\sum_{t=1}^d \left\|f_t-f_t^{\star}\right\|_{2}+\delta_{n}\right\}, \quad \forall f \in \mathcal{F},
$$
with probability at least $1-c_{2} \exp \left(-c_{3} n \delta_{n}^{2}\right)$.
\end{lemma}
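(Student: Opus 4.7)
The plan is to prove this as a vector-valued, Lipschitz-composed extension of the standard localized empirical process bound (Theorem 14.1 in Wainwright 2019, reproduced here as \Cref{lem:support1}). First I would reduce from the loss class to the base function class: by the Lipschitz property of $\ell$ in its first argument, one has $|\mathcal{L}_f - \mathcal{L}_{f^\star}| \leq L \sum_{t=1}^d |f_t(X) - f_t^\star(X)|$ pointwise. By Talagrand's contraction principle applied coordinate-by-coordinate together with linearity of Rademacher averages, the localized Rademacher complexity of the excess-loss class $\{\mathcal{L}_f - \mathcal{L}_{f^\star} : f \in \mathcal{F}\}$ at radius $r$ is bounded by $L \sum_{t=1}^d \mathcal{R}_n(\operatorname{star}(\mathcal{F}|_t - f_t^\star), r)$, converting the per-coordinate critical-radius hypothesis into a complexity bound on the loss class.

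Second, I would run a peeling (slicing) argument on each coordinate-wise $L_2$ norm $r_t = \|f_t - f_t^\star\|_2$. Within a shell $\{r_t \in [2^{k-1}\delta_n, 2^k \delta_n)\}$, the star-shaped hull assumption lets us rescale, so that the critical-radius inequality $\mathcal{R}_n(\operatorname{star}(\mathcal{F}|_t - f_t^\star), \delta) \leq \delta^2$ yields a Rademacher bound of order $r_t \delta_n + \delta_n^2$ inside that shell. Summing over coordinates yields an expected supremum of order $L d \delta_n(\sum_t r_t + \delta_n)$ per joint shell, and converting expectation to high probability via Talagrand's (or Bousquet's) concentration inequality introduces a fluctuation of the same order. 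The final union bound over the $O(\log n)$ shells per coordinate is exactly what forces the stated lower bound $\delta_n^2 \geq 4 d \log(41 \log(2 c_1 n))/(c_1 n)$ to absorb the resulting $d \log \log n$ overhead, and delivers the probability guarantee $1 - c_2 \exp(-c_3 n \delta_n^2)$.

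The main obstacle is keeping the constants clean in the joint peeling over $d$ coordinates: a naive per-coordinate union bound would inflate the bound by multiplicative $d$ factors in the leading rate, whereas combining the Lipschitz contraction and Talagrand concentration steps \emph{before} performing the slicing confines the $d$-dependence to the $L d$ prefactor and to the logarithmic lower bound on $\delta_n^2$, rather than to the rate itself. Tracking Talagrand's numerical constants carefully through the peeling and the contraction step then yields the stated prefactor $18 L d$ multiplying $\delta_n(\sum_t \|f_t - f_t^\star\|_2 + \delta_n)$.
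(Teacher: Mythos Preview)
The paper does not actually prove this lemma: it is stated as a supporting result and attributed directly to \cite{foster2019orthogonal} (their Lemma 11), with no proof given in the present paper. So there is no ``paper's own proof'' to compare against.

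That said, your sketch is essentially the standard route for this type of result and matches how the cited source proceeds: Lipschitz contraction to reduce the excess-loss class to the coordinate base classes, a peeling/slicing argument over $L_2$ shells using the star-shaped structure and the critical-radius hypothesis, and Talagrand/Bousquet concentration to go from expectation to high probability, with the $\log\log n$ union-bound cost absorbed into the lower bound on $\delta_n^2$. One point to be careful about: the pointwise bound $|\mathcal{L}_f - \mathcal{L}_{f^\star}| \leq L \sum_t |f_t - f_t^\star|$ is not by itself what you feed into the localization; rather, you apply the vector-contraction inequality for Rademacher complexities (e.g.\ Maurer's version) to pass from the loss class to the sum of coordinate classes inside the Rademacher average, and then localize. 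Your description conflates these slightly but the intended mechanism is correct.
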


\begin{lemma}
\label{lem:quadratic-inequality}
    Let some $b,d > 0$ be given, and suppose that
    \begin{equation*}
        \frac{1}{2} X^2 \leq b X + d \,,
    \end{equation*}
    for some $X > 0$. Then, we have
    \begin{equation*}
        X \leq 2 b + \sqrt{2d}
    \end{equation*}
\end{lemma}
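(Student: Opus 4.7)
The plan is to treat the hypothesis as a quadratic inequality in $X$ and extract the tight upper bound on $X$ from the quadratic formula, then loosen it slightly to the stated closed form.

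Specifically, I would first rewrite the hypothesis as $\tfrac{1}{2} X^2 - b X - d \le 0$. Since the coefficient of $X^2$ is positive, this inequality holds precisely when $X$ lies between the two real roots of the quadratic (which are real because $b^2 + 2d > 0$). Solving via the quadratic formula gives the roots $b \pm \sqrt{b^2 + 2d}$, so the inequality is equivalent to
\begin{equation*}
b - \sqrt{b^2 + 2d} \;\le\; X \;\le\; b + \sqrt{b^2 + 2d}.
\end{equation*}
The upper bound is all that matters; the lower one is automatic since $X > 0$ and the lower root is non-positive.

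Next, I would use the elementary subadditivity of the square root, $\sqrt{u + v} \le \sqrt{u} + \sqrt{v}$ for $u, v \ge 0$ (immediate by squaring both sides and noting $2 \sqrt{uv} \ge 0$), applied to $u = b^2$ and $v = 2d$. This yields $\sqrt{b^2 + 2d} \le b + \sqrt{2d}$. Substituting into the upper bound on $X$ gives $X \le b + (b + \sqrt{2d}) = 2b + \sqrt{2d}$, which is the claim.

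The only step requiring any care is the root-extraction in step one, where I need to verify that the root bookkeeping is valid (positivity of the discriminant, sign of the leading coefficient, and the fact that $X > 0$ eliminates the lower-root side); everything else is a one-line algebraic manipulation. There is no real obstacle here — the lemma is a standard scalar quadratic bound used for converting an inequality of the form $X^2 \lesssim X + \text{const}$ into an explicit bound on $X$.
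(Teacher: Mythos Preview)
Your proposal is correct and follows essentially the same argument as the paper: rewrite as a quadratic inequality, extract the upper root $b + \sqrt{b^2 + 2d}$ via the quadratic formula, and then apply $\sqrt{b^2 + 2d} \le b + \sqrt{2d}$ to reach $2b + \sqrt{2d}$. The paper's proof is identical in structure, differing only in that it phrases the root analysis as ``one positive root, one negative root'' rather than your two-sided bound with the lower side discarded.
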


\begin{proof}[Proof of \Cref{lem:quadratic-inequality}]

Since $b$, and $d$ are both positive, the quadratic $\frac{1}{2} X^2 - b X - d$ must have a positive and a negative root. Therefore, this quadratic is negative if and only if $X$ is less than the positive root; that is, we have
\begin{align*}
    &\frac{1}{2} X^2 - b X - d \leq 0 \\
    &\iff X \leq b + \sqrt{b^2 + 2 d} \\
    &\implies X \leq 2 b + \sqrt{2d} \,,
\end{align*}
as required.

\end{proof}

\section{Proofs of Results on Identification, Double Robustness, and Newyman Orthogonality}

Here we present the proofs of our theorems relating to the general properties of our influence function $\psi$; \emph{i.e.} those relating to identification, double robustness, and Neyman orthogonality. We present both the results for our simplified setting from \Cref{sec: identification}, as well as our generalized setting from \Cref{sec:generalized-framework}.

\subsection{Results from \Cref{sec: identification}}

\begin{proof}[Proof for \Cref{thm:dr-ident}]
\Cref{thm:dr-ident} directly follows from \Cref{lemma: bias-product-general} so we defer the details to the proof for \Cref{lemma: bias-product-general}. 
\end{proof}

\begin{proof}[Proof for \Cref{lemma: identifiability}]
First consider $\alpha \in \Ncal(\operator)^\perp$. Since any $h \in \Hcal_0$ can be written as $h = h^\star + h^{\perp}$ for some $h^{\perp} \in \Ncal(\operator)$, we have 
\begin{align*}
\Eb{m(W; h)} = \Eb{\alpha(S) h(S)} = \Eb{\alpha(S) h^\star(S)} = \Eb{m(W; h^\star)}, ~~ \forall h \in \Hcal_0. 
\end{align*}
This shows that $\theta^\star$ is identifiable when $\alpha \in \Ncal(\operator)^\perp$.

On the other hand, when $\theta^\star$ is identifiable, we must have $\alpha \in  \Ncal(\operator)^\perp$. Suppose the latter is not true, then there exists $h^\perp \in \Ncal(\operator)$ such that $\Eb{\alpha(S)h^\perp(S)} \ne 0$. Then for $h = h^\star + h^\perp \in \Hcal_0$, we have  
\begin{align*}
\Eb{m(W; h)} = \Eb{\alpha(S) h(S)} = \Eb{\alpha(S) h^\star(S)} + \Eb{\alpha(S) h^\perp(S)} \ne \Eb{\alpha(S) h^\star(S)} = \Eb{m(W; h^\star)}.
\end{align*}
This contradicts the identifiability of $\theta^\star$. Thus if  $\theta^\star$ is identifiable, then $\alpha \in \Ncal(\operator)^\perp$. 
\end{proof}

\begin{proof}[Proof for \Cref{thm: new-nuisance}]
According to the definition of $\Xi_0$ in  \Cref{assump: new-nuisance}, $\xi_0 \in \Xi_0$ if and only if it satisfies the first order condition of the optimization problem in \Cref{eq: delta-eq-2}: 
\begin{align*}
\Eb{[P\xi_0](T)[Ph](T)} - \Eb{\alpha(S)h(S)}  = 0, ~~ \forall h \in \Hcal,
\end{align*}
This is equivalent to 
\begin{align*}
\Eb{\alpha(S)h(S)} = \Eb{h(S)g_1(W)[P\xi_0](T)}, ~~ \forall h \in \Hcal.
\end{align*}
Therefore, 
\begin{align*}
\Pi_{\Hcal}\bracks{g_1(W)[P\xi_0](T) \mid S} = [P^\star P \xi_0](S) =  \alpha(S).  
\end{align*}
The steps above can be all reversed, which proves the asserted conclusion.  
\end{proof}

\begin{proof}[Proof for \Cref{lemma: minimum-norm}]
According to the definition of $\xi_0$ in \Cref{assump: new-nuisance}, we immediately have $P\xi_0 \in \Qcal_0$. 
Moreover, $P\xi_0 \in \Rcal(P)$.
Note the function class $\Qcal_0$ in \Cref{eq: Q0} can be written as $q_0 + \mathcal{N}\prns{P^\star} = q_0 + \Rcal(P)^\perp$ for any $q_0 \in \Qcal_0$. 
So $P\xi_0$ is exactly the minimum-norm element in $\Qcal_0$.
\end{proof}

\begin{proof}[Proof for \Cref{lemma: bias-product-general}]
According to \Cref{lemma: identifiability}, we have $\theta^\star = \Eb{m(W; h_0)}$ for any $h_0\in\Hcal$. 
Moreover, any $q_0 \in \Qcal_0$ satisfies that 
\begin{align*}
\Eb{g_1(W)q_0(T)h(S)} = \Eb{\alpha(S)h(S)} =  \Eb{m(W; h)}, ~~ \text{for any } h \in \Hcal.
\end{align*}
In particular, the above holds for $h = h_0 \in \Hcal$, namely,
\begin{align*}
\Eb{m(W; h_0)} = \Eb{g_1(W)q_0(T)h_0(S)}.
\end{align*}
It follows that 
\begin{align*}
\theta^\star 
   &= \Eb{q_0(T)g_1(W)h_0(S)} \\
   &= \Eb{q_0(T)\Eb{g_1(W)h_0(S) \mid T}} \\
   &= \Eb{q_0(T)\Eb{g_2(W) \mid T}} \\
   &= \Eb{q_0(T){g_2(W)}}.
\end{align*}
Finally, 
\begin{align*}
 &\Eb{m(W; h_0) +q_0(T)(g_2(W) - g_1(W)h_0(S))} \\
=& \Eb{q_0(T)g_2(W)} + \Eb{m(W; h_0) - q_0(T)g_1(W)h_0(S)} \\
=& \Eb{q_0(T)g_2(W)} = \theta^\star.
\end{align*}
The above also holds for $q_0 = P\xi_0$ with $\xi_0 \in \Xi_0$ since such $q_0$ is also in $\Qcal_0$. This recovers the conclusion in \Cref{thm:dr-ident}.

We now need to prove that $h_0 \in \Hcal_0$ and $q_0 \in \Qcal_0$ if and only if, for for any $h \in \Hcal$, $q \in \Lcal_2(T)$,
\begin{align}\label{eq: key-dr}
\Eb{\psi(W; h, q)} - \theta^\star = \Eb{g_1(W)\prns{ q(T) - q_0(T)}\prns{h(S) -  h_0(S)}},
\end{align}
since \Cref{eq: key-dr} gives 
\begin{align*}
\Eb{\psi(W; h, q)} - \theta^\star  
    &= \Eb{\Eb{g_1(W)\prns{h(S) - h_0(S)}\mid T}\prns{ q(T) - q_0(T)}} \\
    &= \Eb{\prns{P[h\prns{S}-h_0\prns{S}]}\prns{ q(T) - q_0(T)}} = \langle P(h-h_0), q-q_0\rangle. 
\end{align*}
and 
\begin{align*}
 \Eb{\psi(W; h, q)} - \theta^\star 
    &= \Eb{\prns{h(S) - h_0(S)}\Eb{g_1(W)\prns{ q(T) - q_0(T)} \mid S}} \\
    &= \Eb{\prns{h(S) - h_0(S)}\prns{P^\star[q(T) - q_0(T)]}} = \langle h-h_0, P^\star(q-q_0)\rangle. 
\end{align*}

We first prove that $h_0 \in \Hcal_0$ and $q_0 \in \Qcal_0$ implies \Cref{eq: key-dr}: for any $h \in \Lcal_2\prns{S}, q \in \Lcal_2\prns{T}$ and $h_0 \in \Hcal_0, q_0 \in \Qcal_0$, 
\begin{align*}
\Eb{\psi(W; h, q)} - \theta^\star 
    &= \Eb{\psi(W; h, q)} - \Eb{\psi(W; h_0, q_0)} \\
    &= \Eb{m(W; h- h_0)} + \Eb{q(T)(g_2(W) - g_1(W)h(S))} \\
    &= \Eb{\alpha(S)(h(S)- h_0(S))} + \Eb{q(T)(\Eb{g_2(W) \mid T} - g_1(W)h(S))} \\
    &= \Eb{g_1(W)q_0(T)(h(S)- h_0(S))} + \Eb{g_1(W)q(T)\prns{h_0(S) - h(S)}} \\
    &= \Eb{g_1(W)\prns{ q(T) - q_0(T)}\prns{h(S) -  h_0(S)}}.
\end{align*}

Next, we prove that if  \Cref{eq: key-dr} holds for any $h \in \Hcal$, $q \in \Lcal_2(T)$, then $h_0 \in \Hcal_0, q_0 \in \Qcal_0$. 
We note that by taking $h = h_0$ or $q = q_0$ in \Cref{eq: key-dr}, we can conclude that if either $h = h_0 \in \Hcal_0$ or $q = q_0 \in \Qcal_0$, then we have 
\begin{align*}
 \theta^\star  = \Eb{\psi(W; h, q)}.
\end{align*}
This means that $\theta^\star = \Eb{\psi(W; h_0, q_0)}$. Therefore, we have 
\begin{align*}
\Eb{\psi(W; h, q)} - \theta^\star  
    &= \Eb{\psi(W; h, q)} -  \Eb{\psi(W; h_0, q_0)} \\
    &= \Eb{m(W; h- h_0)} + \Eb{q(T)(g_2(W) - g_1(W)h(S))} - \Eb{q_0(T)(g_2(W) - g_1(W)h_0(S))}.
\end{align*}
On the one hand, if we set $q = q_0 \in \Qcal_0$, then we have 
\begin{align*}
0 = \Eb{\prns{\prns{\alpha(S)} - g_1(W)q_0(T)}\prns{h(S) - h_0(S)}}, ~~ \forall h \in \Hcal. 
\end{align*}
This means that $\alpha(S) = \Pi_{\Hcal}\bracks{q_0(T) \mid S}$, namely, $q_0 \in \Qcal_0$. 

On the other hand, if we set $h = h_0 \in \Hcal_0$, then we have 
\begin{align*}
\Eb{\prns{q(T)-q_0(T)}\prns{g_2(W) - g_1(W)h_0(S)}} = 0, \forall q \in \Lcal_2(T).
\end{align*}
This means that $\Eb{g_2(W) - g_1(W)h_0(S) \mid T} = 0$, namely, $h_0 \in \Hcal_0$. 

Therefore, $h_0 \in \Hcal_0$ and $q_0 \in \Qcal_0$ if and only if \Cref{eq: key-dr} holds  for any $h \in \Hcal$, $q \in \Lcal_2(T)$.

We can similarly show that $h_0 \in \Hcal_0$ and $q_0 \in \Qcal_0$ is sufficient and necessary for the Neyman orthogonality. That is, $h_0 \in \Hcal_0$ and $q_0 \in \Qcal_0$ if and only if for any $h \in \Hcal$, $q \in \Lcal_2(T)$, 
\begin{align*}
 \frac{\partial}{\partial t}  \Eb{\psi(W; h_0 + th, q_0)}\big\vert_{t = 0}  =  \frac{\partial}{\partial t}  \Eb{\psi(W; h_0, q_0  + tq)}\big\vert_{t = 0}  = 0.
 \end{align*}

To prove this, note that 
\begin{align*}
 \frac{\partial}{\partial t}  \Eb{\psi(W; h_0 + th, q_0)}\big\vert_{t = 0} 
    &= \Eb{\alpha(S)h(S)} - \Eb{g_1(W)q_0(T)h(S)}.
 \end{align*} 
It is equal to $0$ for any $h\in\Hcal$ if and only $\Pi_\Hcal\bracks{g_1(W)q_0(T) \mid S} = \alpha(S)$, namely, $q_0 \in \Qcal_0$. 

Moreover,  
\begin{align*}
 \frac{\partial}{\partial t}  \Eb{\psi(W; h_0, q_0  + tq)}\big\vert_{t = 0}  = \Eb{q(T)\prns{g_2(W) - g_1(S)h_0(S)}}.
 \end{align*} 
 It is equal to $0$ for any $q \in \Lcal_2(T)$ if and only if $\Eb{g_2(W) - g_1(S)h_0(S) \mid T} = 0$, namely, $h_0 \in \Hcal_0$. 
\end{proof}

\subsection{Results from \Cref{sec:generalized-framework}}

\begin{proof}[Proof of \Cref{lem:general-dr}]

First, we have
\begin{align*}
    \EE\Big[ \psi(W;h,q) \Big] &= \EE\Big[ m(W;h) + r(W;q) - G(W;h,q) \Big] \\
    &= \EE[m(W;h_0)] + \EE\Big[ m(W;h-h_0) + r(W;q) - G(W;h,q) \Big] \\
    &= \theta^\star + \EE\Big[ m(W;h-h_0) - G(W;h-h_0,q) + r(W;q)  - G(W;h_0,q) \Big] \\
    &= \theta^\star + \EE\Big[ m(W;h-h_0) - G(W;h-h_0,q) \Big] \\
    &= \theta^\star + \EE\Big[ m(W;h-h_0) - G(W;h-h_0,q_0) - G(W;h-h_0,q-q_0) \Big] \\
    &= \theta^\star - \EE\Big[ G(W;h-h_0,q-q_0) \Big] \,,
\end{align*}
where above we apply the fact that $\theta^\star = \EE[m(W;h_0)]$, as well as the definitions of $\Qcal_0$ and $\Hcal_0$.

For the second part of the lemma, applying the previous equality gives us
\begin{align*}
    \left. \frac{d}{dt} \right|_{t=0} \EE\Big[ \psi(W;h_0 + th, q_0 + tq) \Big] &= - \left. \frac{d}{dt} \right|_{t=0} \EE\Big[ G(W;th, tq)\Big] \\
    &= - \EE\Big[ G(W;h, q)\Big] \left. \frac{d}{dt} \right|_{t=0} t^2  \\
    &= 0 \,.
\end{align*}

Finally, for the third part of the lemma, applying the previous equality again, along with our continuous bi-linear assumption on $G$, gives us
\begin{align*}
    \Big| \EE[\psi(W;h,q)] - \theta^\star\Big| &= \Big| \EE[G(W;h-h_0,q-q_0)] \Big| \\
    &= \Big| \EE\Big[(h-h_0)(S) k(S,T) (q-q_0)(T)\Big] \Big| \,.
\end{align*}
Then, since $q - q_0 \in \Qcal$, and applying the definition of $P$, we can further simply the above by
\begin{align*}
    \Big| \EE\Big[(h-h_0)(S) k(S,T) (q-q_0)(T)\Big] \Big| &= \Big| \EE\Big[ \Pi_\Qcal\Big[ k(S,T)^\top (h - h_0)(S) \mid T \Big]^\top (q - q_0)(T)\Big] \Big| \\
    &= \Big| \EE\Big[ (P(h-h_0))(T)^\top (q - q_0)(T)\Big] \Big| \\
    &\leq \|P(h-h_0)\|_{2,2} \|q - q_0\|_{2,2} \,,
\end{align*}
where in the final step we apply Cauchy Schwartz. Alternatively, if we instead project on to $\Hcal$, symmetrical reasoning gives us
\begin{align*}
    \Big| \EE\Big[(h-h_0)(S) k(S,T) (q-q_0)(T)\Big] \Big| &= \Big| \EE\Big[ \Pi_\Hcal\Big[ k(S,T) (q - q_0)(T) \mid S \Big]^\top (h - h_0)(S)\Big] \Big| \\
    &= \Big| \EE\Big[ (P^\star(q-q_0))(S)^\top (h - h_0)(S)\Big] \Big| \\
    &\leq \|P^\star(q-q_0)\|_{2,2} \|h - h_0\|_{2,2} \,.
\end{align*}
Then, taking the minimum of these two bounds gives us our final required result.

\end{proof}

\section{Proofs for Minimax Estimation Theory}

Here we consider the proof of our minimax estimation theory, specifically \Cref{thm:h-estimator-bound,thm:q-estimator-bound}, and their generalizations \Cref{thm:h-estimator-bound-general,thm:q-estimator-bound-general} (presented in \Cref{sec:generalized-framework}).
Since the former two theorems are just special cases of the latter, we will only present explicit proofs below for the latter, and then the former immediately follow.

\subsection{Proof of \Cref{thm:h-estimator-bound-general}}

Let us define
\begin{equation*}
    \phi(h,q) = G(W;h,q) - r(W;q) - \frac{1}{2}q(T)^\top q(T) \,.
\end{equation*}
Then, the estimator we are studying is given by
\begin{equation*}
    \hat h_n = \argmin_{h \in \Hcal_n} \sup_{q \in \Qcal_n} \EE_n[ \phi(h,q) - \mu_n h(S)^2 ] - \gamma_n^q \|q\|_\Qcal^2 + \gamma_n^h \|h\|_\Hcal^2 \,.
\end{equation*}

First, note that the population version of the minimax objective trivially satisfies
\begin{align*}
    \sup_{q \in \Qcal} \EE[\phi(h,q)] &= \sup_{q \in \Qcal} \EE\Big[ \Big( k(S,T)^\top h(S) - \beta(T) \Big)^\top q(T) - \frac{1}{2} q(T)^\top q(T) \Big] \\
    &= \frac{1}{2} \Big\| \Pi_\Qcal\Big[ k(S,T)^\top h(S) - \beta(T) \mid T \Big] \Big\|_{2,2}^2 \,,
\end{align*}
for any $h \in \Hcal$. Motivated by this, let us define the population projected loss objective
\begin{equation*}
    J(h) = \frac{1}{2} \Big\| \Pi_\Qcal\Big[ k(S,T)^\top h(S) - \beta(T) \mid T \Big] \Big\|_{2,2}^2 = \sup_{q \in \Qcal} \EE[\phi(h,q)] \,.
\end{equation*}
In addition, we define the weak norm of interest
\begin{align*}
    \|h - h'\|_w &= \Big\| \Pi_\Qcal\Big[ k(S,T)^\top (h - h')(S) \mid T \Big] \Big\|_{2,2} \\
    &= \Big\| P(h - h') \Big\|_{2,2} \,,
\end{align*}
for any $h,h' \in \Hcal$. Note that according to these definitions we trivially have $J(h_0) = 0$ for all $h \in h_0$, and $\|h - h_0\|_w = \|h - h'_0\|_w$ for any $h \in \Hcal$ and $h_0,h'_0 \in \Hcal_0$. That is, without loss of generality we can prove the required bound for $h_0 = h^\dagger$.

Then our goal is to provide a high-probability bound on $\|\hat h_n - h^\dagger\|_w$. We will provide a high-level overview of the proof in terms of some intermediate lemmas below, and then relegate the proof of these intermediate lemmas to the end of the subsection.

\subsubsection*{Bounding Weak Norm by Population Objective Sub-optimality}

First, we can provide the following lemma which follows by strong convexity.
\begin{lemma}
\label{lem:primal-strong-convexity-bound}
    For any $h \in \Hcal$ we have
    \begin{equation*}
        \frac{1}{2} \|h - h^\dagger\|_w^2 \leq J(h) - J(h^\dagger)
    \end{equation*}
\end{lemma}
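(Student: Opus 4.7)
The plan is to prove this lemma by direct algebraic manipulation, essentially unfolding the definitions of $J$ and $\|\cdot\|_w$, and invoking the Riesz-type representations that come with the generalized framework. In fact, I expect the inequality to hold as an equality.

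First I would note that by the Riesz representation assumption on $q \mapsto \EE[r(W;q)]$ we have $\beta \in \Qcal$, so $\Pi_\Qcal[\beta(T)\mid T] = \beta(T)$. Combined with the definition $(Ph)(T) = \Pi_\Qcal[k(S,T)^\top h(S) \mid T]$, the population projected loss becomes
\begin{equation*}
    J(h) \;=\; \tfrac{1}{2}\bigl\|Ph - \beta\bigr\|_{2,2}^2 \,.
\end{equation*}
In particular, $J(h^\dagger) = \tfrac{1}{2}\|Ph^\dagger - \beta\|_{2,2}^2$.

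Next I would argue that $Ph^\dagger = \beta$ in $\Qcal$. Since $h^\dagger \in \Hcal_0$, the orthogonality condition \Cref{eq:general-id} gives $\EE[G(W;h^\dagger,q) - r(W;q)] = 0$ for all $q \in \Qcal$. By the two Riesz-type representations in the generalized setup, this is precisely the statement that $\langle Ph^\dagger - \beta,\, q\rangle_{2,2} = 0$ for every $q \in \Qcal$. Since both $Ph^\dagger$ and $\beta$ lie in $\Qcal$, choosing $q = Ph^\dagger - \beta$ gives $Ph^\dagger = \beta$, hence $J(h^\dagger) = 0$.

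Finally, substituting $\beta = Ph^\dagger$ into the expression for $J(h)$ yields
\begin{equation*}
    J(h) - J(h^\dagger) \;=\; \tfrac{1}{2}\bigl\|P(h - h^\dagger)\bigr\|_{2,2}^2 \;=\; \tfrac{1}{2}\,\|h - h^\dagger\|_w^2 \,,
\end{equation*}
which is the required bound (with equality). The only subtlety is keeping track of which objects live in $\Qcal$ versus the ambient $L_2(T)^{d_q}$ so that the projection $\Pi_\Qcal$ acts as the identity where claimed; there is no real obstacle, since this is baked into the definitions of $\beta$, $P$, and $\Hcal_0$.
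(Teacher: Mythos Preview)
Your proof is correct and reaches the same conclusion as the paper, in fact establishing equality. The paper's argument is stylistically different: it defines $L(t) = J(h^\dagger + t(h - h^\dagger))$, computes $L'(0)=0$ from $h^\dagger \in \Hcal_0$ and $L''(t) = \|h - h^\dagger\|_w^2$ (constant), and invokes strong convexity; your direct substitution $Ph^\dagger = \beta$ short-circuits the calculus by making $J(h^\dagger)=0$ explicit, which is arguably cleaner here since $J$ is exactly quadratic.
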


This lemma is the first step for proving fast rates, since it allows us to bound
\begin{equation*}
    \frac{1}{2} \|\hat h_n - h^\dagger\|_w^2 \leq \sup_{q \in \Qcal} \EE[\phi(\hat h_n,q)] - \sup_{q \in \Qcal} \EE[\phi(h^\dagger,q)] \,.
\end{equation*}
This is powerful, since $\hat h_n$ minimizes the empirical version of the RHS, which allows us to apply a empirical risk minimization analysis to this minimax loss.

\subsubsection*{High-Probability Bound}

Next, we establish an important high-probability event, which will be applied in various lemmas in the rest of the proof sketch. For these bounds, and our subsequent theory, we define 
\begin{align*}
    \Omega^q(q') &= \max( \|q\|_\Qcal, M)  \\
    \Omega^h(h') &= \max( \|h\|_\Hcal, M) \,,
\end{align*}
for all $q' \in \Qcal$ and $h' \in \Hcal$ respectively.

\begin{lemma}
\label{lem:primal-high-prob}

    Let $\epsilon_n = r_n + c_2 \sqrt{\log(c_1/\zeta)}$, where the constants $c_1$ and $c_2$ are defined analogously as in \citet[Lemma 11]{foster2019orthogonal}, and let
    \begin{equation*}
        \Phi_n(q) = G(W;h^\dagger, q) - r(W;q) - \frac{1}{2} q(T)^\top q(T) \,,
    \end{equation*}
    for any $q \in \Qcal_n$. Then, with probability at least $1-3\zeta$, we have
    \begin{equation*}
        \Big| (\EE_n - \EE) \Big[ \Phi_n(q) \Big] \Big| \leq 54 \Omega^q(q) \epsilon_n \|q\|_{2,2} + 54 \Omega^q(q)^2 \epsilon_n^2 \,,
    \end{equation*}
    for every $q \in \Qcal_n$, as well as
    \begin{equation*}
        \Big| (\EE_n - \EE) \Big[ G(W;h - h^\dagger, q) \Big] \Big| \leq 36 \epsilon_n \|q\|_{2,2} + 18 \Omega^q(q)^2 \epsilon_n^2 + 18 \Omega^h(h-h^\dagger)^2 \epsilon_n^2 \,,
    \end{equation*}
    for every $h \in \Hcal_n$ and $q \in \Qcal_n$.

\end{lemma}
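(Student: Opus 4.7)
The plan is to apply standard localized uniform concentration arguments, decomposed over the constituent pieces of $\Phi_n(q)$ and the bilinear term $G(W;h-h^\dagger,q)$, with critical radii supplied by \Cref{assum:complexity-h-general}. First, I would decompose $\Phi_n(q) = (G(W;h^\dagger,q) - r(W;q)) - \tfrac{1}{2} q(T)^\top q(T)$, observing that the first, linear, piece has zero population mean because $h^\dagger \in \Hcal_0$ implies $\Pi_\Qcal[k(S,T)^\top h^\dagger(S) - \beta(T) \mid T] = 0$, so $\EE[G(W;h^\dagger,q) - r(W;q)] = 0$ for every $q \in \Qcal$. I will then treat three separate events: (i) uniform concentration of the linear piece $G(W;h^\dagger,q) - r(W;q)$ over $q$ in the restricted class $\{q \in \starcls(\Qcal_n) : \|q\|_\Qcal \leq 1\}$; (ii) uniform concentration of the quadratic piece $q(T)^\top q(T)$ over the same restricted class; and (iii) uniform concentration of the bilinear piece $G(W;h,q)$ over the restricted bilinear class from \Cref{assum:complexity-h-general}. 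Each is a separate application of \Cref{lem:support1} (for the quadratic piece) or \Cref{lem:support2} (for the linear and bilinear pieces), and a union bound delivers the $3\zeta$ failure probability.

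On these events I will first derive the claimed bounds on the restricted classes, where they take the natural forms $\epsilon_n \|q\|_{2,2} + \epsilon_n^2$ for the linear pieces and $\|q\|_{2,2}^2 + \epsilon_n^2$ for the quadratic piece, and then extend to the full classes by rescaling. Set $\bar q = q/\Omega^q(q)$ and, for Part 2, $\bar h = (h - h^\dagger)/\Omega^h(h-h^\dagger)$. These normalized functions satisfy $\|\bar q\|_\Qcal \leq 1$ and $\|\bar h\|_\Hcal \leq 1$, and both lie in the respective star-shaped hulls because $\Omega^q(q), \Omega^h(h-h^\dagger) \geq M \geq 1$ makes the scaling multiplier lie in $[0,1]$. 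Uniform boundedness in $\|\cdot\|_\infty$ is preserved for the same reason, so \Cref{lem:support1,lem:support2} apply directly. Linearity (resp.\ bilinearity) of the three target functionals then lifts the restricted-class bounds to the full-class bounds by multiplication through by $\Omega^q(q)$, $\Omega^q(q)^2$, and $\Omega^h(h-h^\dagger)\,\Omega^q(q)$ respectively. The bound $\|G(W;\bar h,\bar q)\|_2 \leq \|\bar q\|_{2,2}$ from \Cref{assum:boundedness-general} removes the would-be $\Omega^h$ multiplier on the leading $\epsilon_n \|q\|_{2,2}$ term in Part 2, and an AM-GM step converts the remaining mixed product $\Omega^h(h-h^\dagger)\,\Omega^q(q)\,\epsilon_n^2$ into the sum $\Omega^h(h-h^\dagger)^2\,\epsilon_n^2 + \Omega^q(q)^2\,\epsilon_n^2$ that appears in the statement.

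The main obstacle I anticipate is the rescaling bookkeeping: the critical radius assumptions are stated for unit-ball subsets of star-shaped hulls in the abstract norms $\|\cdot\|_\Hcal, \|\cdot\|_\Qcal$, while the target bound is phrased in $L_2$ norms together with the scale factors $\Omega^h, \Omega^q$. Care is required to verify that each rescaled function genuinely lies in the correct star-shaped hull, that the uniform $\infty$-norm condition needed by \Cref{lem:support1,lem:support2} is preserved under rescaling, and that the mixed-norm remainder terms arising from bilinearity are reshaped via AM-GM into exactly the prefactors shown in the lemma. A secondary subtlety is that the natural quadratic bound $\tfrac{1}{2}\|q\|_{2,2}^2$ coming from \Cref{lem:support1} must be absorbed into a term linear in $\|q\|_{2,2}$ with an $\Omega^q(q)\epsilon_n$ coefficient, which relies on the $\|\bar q\|_\infty \leq 1$ bound to control $\|\bar q\|_{2,2}^2 \leq \|\bar q\|_{2,2}$ before rescaling.
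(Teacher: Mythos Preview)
Your overall architecture matches the paper's: decompose $\Phi_n(q)$ into the linear piece $G(W;h^\dagger,q)-r(W;q)$ and the quadratic piece $\tfrac{1}{2}q(T)^\top q(T)$, handle the bilinear $G(W;h-h^\dagger,q)$ separately, rescale each argument by $\Omega^q(q)$ and $\Omega^h(h-h^\dagger)$ so the normalized functions land in the unit-ball star-hull classes of \Cref{assum:complexity-h-general}, and finish with AM--GM on the mixed $\Omega^h\Omega^q$ product. The observation that the linear piece has zero population mean is correct but not actually needed, since you are bounding $(\EE_n-\EE)[\cdot]$ directly.

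There is, however, a genuine gap in your treatment of the quadratic piece. You propose to use \Cref{lem:support1} (Wainwright's two-sided norm comparison) and then convert the resulting $\tfrac{1}{2}\|\bar q\|_{2,2}^2$ into a linear term via $\|\bar q\|_{2,2}^2 \leq \|\bar q\|_{2,2}$. But \Cref{lem:support1} yields
\[
\bigl|\|\bar q\|_{n,2}^2 - \|\bar q\|_{2,2}^2\bigr| \;\leq\; \tfrac{1}{2}\|\bar q\|_{2,2}^2 + \tfrac{1}{2}\epsilon_n^2,
\]
and after your bound and rescaling this becomes $\tfrac{1}{2}\Omega^q(q)\,\|q\|_{2,2} + \tfrac{1}{2}\Omega^q(q)^2\epsilon_n^2$. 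The first term carries \emph{no} $\epsilon_n$ factor, whereas the lemma demands $54\,\Omega^q(q)\,\epsilon_n\,\|q\|_{2,2}$. A leading term of order $\|q\|_{2,2}$ (rather than $\epsilon_n\|q\|_{2,2}$) is too weak: it cannot be absorbed in the downstream arguments (e.g.\ \Cref{lem:primal-interior-maximization}), and the fast-rate analysis collapses.

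The paper's fix is simple: treat the quadratic piece with \Cref{lem:support2} as well, not \Cref{lem:support1}. Because $\|q\|_\infty \leq 1$ on $\Qcal_n$, the map $q \mapsto \tfrac{1}{2}q^\top q$ is $1$-Lipschitz in $q$ under $\|\cdot\|_{2,2}$, so \Cref{lem:support2} applied with $f^\star = 0$ gives
\[
\Bigl|(\EE_n - \EE)\bigl[\tfrac{1}{2}\bar q(T)^\top \bar q(T)\bigr]\Bigr| \;\leq\; 18\,\epsilon_n\,\|\bar q\|_{2,2} + 18\,\epsilon_n^2,
\]
and rescaling by $\Omega^q(q)^2$ delivers $18\,\Omega^q(q)\,\epsilon_n\,\|q\|_{2,2} + 18\,\Omega^q(q)^2\,\epsilon_n^2$, with the localization factor already in place. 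Everything else in your plan goes through unchanged.
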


\subsubsection*{Bounding the Saddle-Point Estimate}

Next, we provide a lemma on the quality of the interior maximization problem, which will be used throughout the proofs of the rest of the lemmas. Let us define
\begin{align*}
    q_n(h) &= \argmax_{q \in \Qcal_n} \EE_n[\phi(h,q)] - \gamma_n^q \|q\|_\Qcal^2 \\
    q_0(h) &= \argmax_{q \in \Qcal} \EE[\phi(h,q)] \,,
\end{align*}
for any $h \in \Hcal$, with tie-breaking performed arbitrarily. Note that it is trivial to verify that $q_0(h)(T) = \Pi_\Qcal[k(S,T)^\top h(S) - \beta(T) \mid T]$ and $\|q_0(h)(T)\| = \|h - h_0\|_w$.
Then, the following lemma ensures the quality of the interior maximization.

\begin{lemma}
\label{lem:primal-interior-maximization}
    Under the same high-probability event of \Cref{lem:primal-high-prob}, for every $h \in \Hcal_n$ we have
    \begin{align*}
        \|q_n(h)\|_{2,2} &\leq \Big(377 \Omega^q(q_n(h)) + 26 \Omega^q(\Pi_n q_0(h)) + 12 \Omega^h(h) \Big) \epsilon_n \\
        &\qquad + 14 \delta_n + 14 \|h-h^\dagger\|_w + 2 (\gamma_n^q)^{1/2} \Omega^q(\Pi_n q_0(h)) \,.
    \end{align*}
\end{lemma}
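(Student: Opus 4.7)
The plan is to exploit strong concavity of the population inner maximization in $q$ under the $L_2$ norm. Specifically, since $\EE[\phi(h,q)] = \langle q_0(h), q\rangle_{2,2} - \tfrac12\|q\|_{2,2}^2 = \tfrac12\|q_0(h)\|_{2,2}^2 - \tfrac12\|q - q_0(h)\|_{2,2}^2$, quantifying how much $q_n(h)$ fails to be the empirical maximizer relative to any reference point in $\Qcal_n$ translates directly into a bound on $\|q_n(h) - q_0(h)\|_{2,2}^2$. The right reference point is $\Pi_n q_0(h) \in \Qcal_n$, whose existence with $\|\Pi_n q_0(h) - q_0(h)\|_{2,2} \leq \delta_n$ is guaranteed by \Cref{assum:universal-approximation-h-general}(2) after observing that $h^\dagger \in \Hcal_0$ forces $q_0(h) = P(h-h^\dagger)$, and hence $\|q_0(h)\|_{2,2} = \|h-h^\dagger\|_w$.

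First I would write the empirical optimality
\[
\EE_n[\phi(h, q_n(h))] - \gamma_n^q\|q_n(h)\|_\Qcal^2 \;\geq\; \EE_n[\phi(h, \Pi_n q_0(h))] - \gamma_n^q\|\Pi_n q_0(h)\|_\Qcal^2,
\]
then use the decomposition $\phi(h,q) = \Phi_n(q) + G(W; h-h^\dagger, q)$ to express $(\EE_n - \EE)[\phi(h,q)]$ as a sum of two terms controlled by the two high-probability inequalities in \Cref{lem:primal-high-prob}, evaluated at $q = q_n(h)$ and at $q = \Pi_n q_0(h)$. Replacing empirical by population expectations and substituting the quadratic identity for $\EE[\phi(h,\cdot)]$ yields, after cancellation of the $\tfrac12\|q_0(h)\|_{2,2}^2$ terms,
\[
\tfrac12 \|q_n(h) - q_0(h)\|_{2,2}^2 + \gamma_n^q\|q_n(h)\|_\Qcal^2 \;\leq\; \tfrac12 \delta_n^2 + \gamma_n^q\|\Pi_n q_0(h)\|_\Qcal^2 + E,
\]
where $E$ is a sum of empirical-process errors, each of the schematic form $C\,\Omega^q(\cdot)\epsilon_n \|q\|_{2,2} + C\,\Omega^q(\cdot)^2\epsilon_n^2 + C\,\Omega^h(h-h^\dagger)^2\epsilon_n^2$ with $q \in \{q_n(h),\,\Pi_n q_0(h)\}$.

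Using $\|\Pi_n q_0(h)\|_{2,2} \leq \|h-h^\dagger\|_w + \delta_n$, I would collect the above into a quadratic inequality in $X := \|q_n(h) - q_0(h)\|_{2,2}$, whose linear coefficient is of order $\Omega^q(q_n(h))\epsilon_n$ (via the triangle bound $\|q_n(h)\|_{2,2} \leq X + \|h-h^\dagger\|_w$) and whose constant part collects $\delta_n$, $\Omega^q \Omega^h$ times $\epsilon_n$, $\Omega^q(\Pi_n q_0(h)) \epsilon_n\,\|h-h^\dagger\|_w$, and $(\gamma_n^q)^{1/2}\Omega^q(\Pi_n q_0(h))$ contributions. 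Applying \Cref{lem:quadratic-inequality} gives a bound on $X$ in the claimed form, and the triangle inequality $\|q_n(h)\|_{2,2} \leq X + \|h-h^\dagger\|_w$ plus the fact that $\Omega^h(h-h^\dagger)$ is comparable to $\Omega^h(h)$ (since $\|h^\dagger\|_\Hcal \leq M$) yields the stated bound.

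The main obstacle will be careful constant bookkeeping: each application of AM--GM to split a term like $\Omega^q(q_n(h))\epsilon_n\,X$ into $\tfrac14 X^2 + c\,\Omega^q(q_n(h))^2\epsilon_n^2$ needs to preserve the $\tfrac12 X^2$ on the left-hand side so that \Cref{lem:quadratic-inequality} is applicable, and the factors in front of $\Omega^q(q_n(h))\epsilon_n$, $\Omega^q(\Pi_n q_0(h))\epsilon_n$, and $\Omega^h(h)\epsilon_n$ must be tracked separately because the first (unknown-norm) coefficient will be absorbed by the penalty $\gamma_n^q\|h\|_\Hcal^2$ in a downstream argument, while the latter two are already controllable by $M$. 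I do not aim to optimize the absolute constants here, only to reach the announced scaling.
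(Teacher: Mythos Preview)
Your proposal is correct and follows essentially the same route as the paper: the same quadratic identity $\EE[\phi(h,q)] = \tfrac12\|q_0(h)\|_{2,2}^2 - \tfrac12\|q-q_0(h)\|_{2,2}^2$, the same empirical-optimality comparison against $\Pi_n q_0(h)$, the same decomposition $\phi(h,q)=\Phi_n(q)+G(W;h-h^\dagger,q)$ fed into the two bounds of \Cref{lem:primal-high-prob}, and the same finish via \Cref{lem:quadratic-inequality}. The only cosmetic difference is that the paper passes directly to a quadratic inequality in $\|q_n(h)\|_{2,2}$ using $\tfrac12\|q_n(h)\|_{2,2}^2 \le \|q_n(h)-q_0(h)\|_{2,2}^2+\|q_0(h)\|_{2,2}^2$, whereas you first bound $X=\|q_n(h)-q_0(h)\|_{2,2}$ and then add $\|q_0(h)\|_{2,2}=\|h-h^\dagger\|_w$; both give the same form with equivalent constants (and the paper's proof indeed ends with $\Omega^h(h-h^\dagger)$ rather than $\Omega^h(h)$, so your remark about their comparability is on point).
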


\subsubsection*{Reduction to Stochastic Equicontinuity Problem}

Given the previous lemmas, we can now provide the following result, which reduces the difference between the population objective for $\hat h_n$ and $h^\dagger$ to a stochastic-equicontinuity term.

\begin{lemma}
\label{lem:primal-minimax-erm-bound}
    Under the high-probability event of \Cref{lem:primal-high-prob}, we have
    \begin{align*}
        J(\hat h_n) - J(h^\dagger) &\leq (\EE - \EE_n)\Big[\phi(\hat h_n,\Pi_n q_0(\hat h_n)) - \phi(h^\dagger,q_n(\Pi_n h^\dagger)\Big] + 2 \delta_n \|\hat h_n - h^\dagger\|_w \\
        &\qquad + \Big( 14775 \Omega^q(q_n(\Pi_n h^\dagger))^2 + 949 \Omega^q(\Pi_n q_0(\Pi_n h^\dagger)) \Big) \epsilon_n^2 + 741 \delta_n^2 \\
        &\qquad + \Big( 37 \Omega^q(\Pi_n q_0(\Pi_n h^\dagger))^2 + 2 \Omega^q(\Pi_n q_0(\hat h_n))^2 \Big) \gamma_n^q  + 2 M^2 \gamma_n^h \\
        &\qquad + \mu_n \EE_n[(\Pi_n h^\dagger)(S)^2 - \hat h(S)^2] - \gamma_n^q \Omega^q(q_n(\Pi_n h^\dagger))^2 - \gamma^h_n \Omega^h(\hat h_n)^2
    \end{align*}
\end{lemma}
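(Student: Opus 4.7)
The plan follows an ERM-style analysis of the minimax objective, built around two approximators: $\Pi_n h^\dagger \in \Hcal_n$ on the primary side and $\Pi_n q_0(\hat h_n) \in \Qcal_n$ on the adversary side, both supplied by \Cref{assum:universal-approximation-h-general} (the second exists because $q_0(\hat h_n) = P\hat h_n - \beta = P(\hat h_n - h^\dagger)$ with $\hat h_n \in \Hcal_n$). Since $q \mapsto \EE[\phi(h,q)] = \EE[(Ph-\beta)^\top q - \tfrac12 q^\top q]$ is $1$-strongly concave in $L_{2,2}$ with unconstrained maximizer $q_0(h) = Ph - \beta \in \Qcal$, I have the identity $J(\hat h_n) - \EE[\phi(\hat h_n, \Pi_n q_0(\hat h_n))] = \tfrac12 \|q_0(\hat h_n) - \Pi_n q_0(\hat h_n)\|_{2,2}^2$. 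Expanding this quadratic as $\le \|q_0(\hat h_n)\|_{2,2}\|q_0(\hat h_n) - \Pi_n q_0(\hat h_n)\|_{2,2} + \tfrac12(\|\Pi_n q_0(\hat h_n)\|_{2,2}^2 - \|q_0(\hat h_n)\|_{2,2}^2)$, using $\|q_0(\hat h_n)\|_{2,2} = \|\hat h_n - h^\dagger\|_w$ together with the approximation bound $\|q_0(\hat h_n) - \Pi_n q_0(\hat h_n)\|_{2,2} \le \delta_n$ (hence $\|\Pi_n q_0(\hat h_n)\|_{2,2} \le \|\hat h_n-h^\dagger\|_w + \delta_n$), yields the linear-in-weak-norm bound $\tfrac12\|q_0(\hat h_n) - \Pi_n q_0(\hat h_n)\|_{2,2}^2 \le 2\delta_n \|\hat h_n - h^\dagger\|_w + \tfrac12 \delta_n^2$. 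This is precisely the $2\delta_n \|\hat h_n - h^\dagger\|_w$ term appearing in the statement. Combining with the trivial lower bound $J(h^\dagger) \ge \EE[\phi(h^\dagger, q_n(\Pi_n h^\dagger))]$ (any $q \in \Qcal$ gives a lower bound on the sup), and inserting $\pm \EE_n$, I obtain
\begin{equation*}
J(\hat h_n) - J(h^\dagger) \le (\EE - \EE_n)\Delta + \EE_n\Delta + 2\delta_n \|\hat h_n - h^\dagger\|_w + \tfrac12 \delta_n^2,
\end{equation*}
with $\Delta = \phi(\hat h_n, \Pi_n q_0(\hat h_n)) - \phi(h^\dagger, q_n(\Pi_n h^\dagger))$. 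The first piece is the stochastic-equicontinuity term kept explicit in the statement; what remains is to bound $\EE_n\Delta$.

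For $\EE_n\Delta$, I chain two ERM-style inequalities. First, by definition of the inner maximizer $q_n(\hat h_n)$,
\begin{equation*}
\EE_n[\phi(\hat h_n, \Pi_n q_0(\hat h_n))] - \gamma_n^q\|\Pi_n q_0(\hat h_n)\|_\Qcal^2 \le \EE_n[\phi(\hat h_n, q_n(\hat h_n))] - \gamma_n^q\|q_n(\hat h_n)\|_\Qcal^2 .
\end{equation*}
Second, by the optimality of $\hat h_n$ against the competitor $\Pi_n h^\dagger$,
\begin{align*}
&\EE_n[\phi(\hat h_n, q_n(\hat h_n))] + \mu_n\EE_n[\hat h_n^2] - \gamma_n^q\|q_n(\hat h_n)\|_\Qcal^2 + \gamma_n^h\|\hat h_n\|_\Hcal^2 \\
&\quad\le \EE_n[\phi(\Pi_n h^\dagger, q_n(\Pi_n h^\dagger))] + \mu_n\EE_n[(\Pi_n h^\dagger)^2] - \gamma_n^q\|q_n(\Pi_n h^\dagger)\|_\Qcal^2 + \gamma_n^h\|\Pi_n h^\dagger\|_\Hcal^2 .
\end{align*}
Chaining these and subtracting $\EE_n[\phi(h^\dagger, q_n(\Pi_n h^\dagger))]$ from both sides reduces $\EE_n\Delta$ to $\EE_n[G(W; \Pi_n h^\dagger - h^\dagger, q_n(\Pi_n h^\dagger))]$ plus a book-keeping block $\gamma_n^q\|\Pi_n q_0(\hat h_n)\|_\Qcal^2 - \gamma_n^q\|q_n(\Pi_n h^\dagger)\|_\Qcal^2 + \mu_n \EE_n[(\Pi_n h^\dagger)^2 - \hat h_n^2] + \gamma_n^h(\|\Pi_n h^\dagger\|_\Hcal^2 - \|\hat h_n\|_\Hcal^2)$. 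The positive penalty pieces are absorbed using $\|\Pi_n h^\dagger\|_\Hcal \le M$ and $\|\Pi_n q_0(\hat h_n)\|_\Qcal \le \Omega^q(\Pi_n q_0(\hat h_n))$, producing the $2 \Omega^q(\Pi_n q_0(\hat h_n))^2 \gamma_n^q$ and $2 M^2 \gamma_n^h$ terms of the claim, while the negative contributions $-\gamma_n^q\Omega^q(q_n(\Pi_n h^\dagger))^2$ and $-\gamma_n^h\Omega^h(\hat h_n)^2$ are kept intact (they will be needed by \Cref{thm:h-estimator-bound-general}).

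The residual $\EE_n[G(W; \Pi_n h^\dagger - h^\dagger, q_n(\Pi_n h^\dagger))]$ is split as $\EE + (\EE_n - \EE)$. The population part is bounded by Cauchy--Schwarz and \Cref{assum:universal-approximation-h-general} as $\le \|\Pi_n h^\dagger - h^\dagger\|_w \|q_n(\Pi_n h^\dagger)\|_{2,2} \le \delta_n \|q_n(\Pi_n h^\dagger)\|_{2,2}$; the empirical-process part is bounded by the second inequality of \Cref{lem:primal-high-prob} applied at $(\Pi_n h^\dagger, q_n(\Pi_n h^\dagger))$, producing $36\epsilon_n \|q_n(\Pi_n h^\dagger)\|_{2,2} + 18\Omega^q(q_n(\Pi_n h^\dagger))^2\epsilon_n^2 + 18\Omega^h(\Pi_n h^\dagger - h^\dagger)^2\epsilon_n^2$, whose last summand is $\le 18\delta_n^2$. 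Every leftover $\|q_n(\Pi_n h^\dagger)\|_{2,2}$ factor is then absorbed via \Cref{lem:primal-interior-maximization} at $h = \Pi_n h^\dagger$, which expands this norm as a sum of $\Omega^q(\cdot)\epsilon_n$, $\Omega^h(\cdot)\epsilon_n$, $\delta_n$, $\|\Pi_n h^\dagger - h^\dagger\|_w \le \delta_n$, and $(\gamma_n^q)^{1/2}\Omega^q(\Pi_n q_0(\Pi_n h^\dagger))$ pieces; AM--GM of the form $2ab \le a^2 + b^2$ then converts every cross-term into the canonical $\Omega^q(q_n(\Pi_n h^\dagger))^2\epsilon_n^2$, $\Omega^q(\Pi_n q_0(\Pi_n h^\dagger)) \epsilon_n^2$, $\delta_n^2$, and $\Omega^q(\cdot)^2\gamma_n^q$ summands on the right-hand side of the claim.

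The main obstacle is purely bookkeeping: each occurrence of $\|q_n(\Pi_n h^\dagger)\|_{2,2}$ spawns several sub-terms via \Cref{lem:primal-interior-maximization}, and every $\epsilon_n \times (\text{sub-term})$ or $\delta_n \times (\text{sub-term})$ product must be rebalanced by AM--GM, producing a proliferating cloud of cross-terms that collapses only after careful grouping by the canonical types $\Omega^q(\cdot)^2\epsilon_n^2$, $\Omega^q(\cdot)^2\gamma_n^q$, and $\delta_n^2$. Equally important is the discipline of \emph{not} dropping the negative penalty contributions $-\gamma_n^q\Omega^q(q_n(\Pi_n h^\dagger))^2$ and $-\gamma_n^h\Omega^h(\hat h_n)^2$, since \Cref{thm:h-estimator-bound-general} closes the argument by solving a quadratic-in-$\|\hat h_n - h^\dagger\|_w$ inequality via \Cref{lem:quadratic-inequality} that relies on their presence. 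Tracking the large but universal numerical constants through each AM--GM step is tedious but entirely mechanical.
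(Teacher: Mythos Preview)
Your approach is essentially identical to the paper's: the same reduction via the two optimality inequalities (inner maximizer at $\hat h_n$, outer minimizer against $\Pi_n h^\dagger$), the same two residual terms $\Ecal_1 = \EE[\phi(\hat h_n,q_0(\hat h_n))-\phi(\hat h_n,\Pi_n q_0(\hat h_n))]$ and $\Ecal_2 = \EE_n[\phi(\Pi_n h^\dagger,q_n(\Pi_n h^\dagger))-\phi(h^\dagger,q_n(\Pi_n h^\dagger))]$, and the same closure of $\Ecal_2$ via \Cref{lem:primal-high-prob} and \Cref{lem:primal-interior-maximization}.

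One small slip: you claim $18\,\Omega^h(\Pi_n h^\dagger - h^\dagger)^2\epsilon_n^2 \le 18\delta_n^2$, apparently confusing the $\Hcal$-norm quantity $\Omega^h(\cdot)=\max(\|\cdot\|_\Hcal,M)$ with the $L_2$ approximation bound $\|\Pi_n h^\dagger - h^\dagger\|_{2,2}\le \delta_n$. The correct control is $\Omega^h(\Pi_n h^\dagger - h^\dagger)\le 2M\le 2\,\Omega^q(q_n(\Pi_n h^\dagger))$ (from $\|\Pi_n h^\dagger\|_\Hcal,\|h^\dagger\|_\Hcal\le M$), so the term is absorbed into the $\Omega^q(q_n(\Pi_n h^\dagger))^2\epsilon_n^2$ bucket with a larger constant, exactly as the paper does. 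This does not affect the argument's structure.
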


\subsubsection*{Bounding the Stochastic Equicontinuity Term using Localization}

Next, we focus on the ``stochastic equicontinuity''-style term in bound from the previous lemma. By some simple algebra, we can see that
\begin{align*}
    &\phi(\hat h_n, \Pi_n q_0(\hat h_n)) - \phi(h^\dagger, q_n(h^\dagger))  \\
    &= G(W;\hat h_n-h^\dagger, q_0(h^\dagger)) + G(W;\hat h_n-h^\dagger,\Pi_n q_0(\hat h_n) - q_0(h^\dagger)) \\
    &\quad + G(W;h^\dagger, \Pi_n q_0(\hat h_n) - q_0(h^\dagger)) - G(W; h^\dagger, q_n(h^\dagger) - q_0(h^\dagger)) \\
    &\quad - r(W;\Pi_n q_0(\hat h_n) - q_0(h^\dagger)) + r(W;q_n(h^\dagger) - q_0(h^\dagger)) \\
    &\quad - \frac{1}{2} \Big( \Pi_n q_0(\hat h_n) - q_0(h^\dagger) \Big)(T)^\top \Big( \Pi_n q_0(\hat h_n) - q_0(h^\dagger) \Big)(T) \\
    &\quad + \frac{1}{2} \Big( q_n(h^\dagger) - q_0(h^\dagger) \Big)(T)^\top \Big( q_n(h^\dagger) - q_0(h^\dagger) \Big)(T)  \\
    &\quad - q_0(h^\dagger)(T)^\top \Big( \Pi_n q_0(\hat h_n) - q_0(h^\dagger) \Big)(T)  + q_0(h^\dagger)(T)^\top \Big( q_n(h^\dagger) - q_0(h^\dagger) \Big)(T) \,.
\end{align*}
Furthermore, since $q_0(h^\dagger) = 0$ by construction, the first and last two terms vanish. Instantiating this reasoning, and applying \Cref{lem:primal-high-prob}, we can easily derive the following lemma.

\begin{lemma}
\label{lem:primal-stochastic-equicontinuity-bound}
    Under the high-probability event of \Cref{lem:primal-high-prob}, we have
    \begin{align*}
        &\Big| (\EE_n - \EE) \Big[\phi(\hat h_n, \Pi_n q_0(\hat h_n)) - \phi(h^\dagger, q_n(\Pi_n h^\dagger)\Big] \Big| \\
        &\leq 90 \epsilon_n \Omega^q(\Pi_n q_0(\hat h_n)) \|\hat h_n - h^\dagger\|_w + 801 \delta_n^2 + 54 \gamma_n^q \Omega^q(\Pi_n q_0(\Pi_n h^\dagger))^2 \\
        &\qquad + \Big( 22572 \Omega^q(q_n(\Pi_n h^\dagger))^2 + 702 \Omega^q(\Pi_n q_0(\Pi_n h^\dagger))^2 + 117 \Omega^q(\Pi_n q_0(\hat h_n))^2 + 18 \Omega^h(\hat h_n - h^\dagger)^2 \Big) \epsilon_n^2
    \end{align*}
\end{lemma}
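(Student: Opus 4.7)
The plan is to use the decomposition displayed just before the lemma statement, which rewrites
\[
\phi(\hat h_n, \Pi_n q_0(\hat h_n)) - \phi(h^\dagger, q_n(\Pi_n h^\dagger))
\]
as a sum in which, thanks to the identity $q_0(h^\dagger)=0$ (which follows because $h^\dagger\in\Hcal_0$), the cross terms linear in $q_0(h^\dagger)$ drop. What remains can be organized into three pieces: a $G$-term evaluated at the ``bias direction'' $\hat h_n - h^\dagger$, and two $\Phi_n$-style terms of the form in the first display of \Cref{lem:primal-high-prob} evaluated at $\Pi_n q_0(\hat h_n)$ and $q_n(\Pi_n h^\dagger)$ respectively.

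Next, I would apply $(\EE_n-\EE)$ to this decomposition, triangle-inequality into the three pieces, and bound each piece separately using \Cref{lem:primal-high-prob}. The first piece is handled by the second bound of \Cref{lem:primal-high-prob}, producing a term $36\epsilon_n \|\Pi_n q_0(\hat h_n)\|_{2,2}$ plus quadratic remainders in $\Omega^q(\Pi_n q_0(\hat h_n))^2\epsilon_n^2$ and $\Omega^h(\hat h_n-h^\dagger)^2\epsilon_n^2$. The two $\Phi_n$ pieces are each handled by the first bound of \Cref{lem:primal-high-prob}, producing terms of the shape $54\,\Omega^q(\cdot)\,\epsilon_n\,\|\cdot\|_{2,2}+54\,\Omega^q(\cdot)^2\epsilon_n^2$, once at $\Pi_n q_0(\hat h_n)$ and once at $q_n(\Pi_n h^\dagger)$.

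Then I would replace the two $L_2$ norms by what we actually know about them. For $\|\Pi_n q_0(\hat h_n)\|_{2,2}$, I would use the identity $\|q_0(h)\|_{2,2}=\|h-h^\dagger\|_w$ together with the approximation bound from \Cref{assum:universal-approximation-h-general}(2), giving $\|\Pi_n q_0(\hat h_n)\|_{2,2}\le \|\hat h_n-h^\dagger\|_w+\delta_n$. For $\|q_n(\Pi_n h^\dagger)\|_{2,2}$, I would invoke \Cref{lem:primal-interior-maximization} applied at $h=\Pi_n h^\dagger$, and simplify using $\|\Pi_n h^\dagger-h^\dagger\|_w\le \|\Pi_n h^\dagger-h^\dagger\|_{2,2}\le \delta_n$ (via \Cref{assum:boundedness-general}(4) and \Cref{assum:universal-approximation-h-general}(1)). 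All the remaining cross terms of the form $\epsilon_n\delta_n$ or $\Omega^q(\cdot)\epsilon_n\delta_n$ would be split by Young's inequality $2ab\le a^2+b^2$, pushing everything into either $\epsilon_n^2$-multiples of $\Omega^{q,h}(\cdot)^2$ factors, pure $\delta_n^2$, or $(\gamma_n^q)\Omega^q(\cdot)^2$ terms; the single linear-in-$\|\hat h_n-h^\dagger\|_w$ term coming from the $G$-piece is preserved because that factor must remain linear for the outer analysis (it will be absorbed by the strong-convexity bound in \Cref{lem:primal-strong-convexity-bound}).

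The main obstacle is not any deep idea but rather the constant bookkeeping: keeping track of which $\Omega^q$ or $\Omega^h$ factor appears on each term, consistently using $\|\cdot\|_w\le \|\cdot\|_{2,2}$ from \Cref{assum:boundedness-general}(4), and combining the many $\epsilon_n\delta_n$, $\epsilon_n^2\Omega^q(\cdot)^2$, and $\gamma_n^q\Omega^q(\cdot)^2$ contributions into the exact numerical prefactors ($90$, $801$, $54$, $22572$, $702$, $117$, $18$) claimed in the statement. No nontrivial inequality beyond \Cref{lem:primal-high-prob}, \Cref{lem:primal-interior-maximization}, the approximation assumptions, and Young's inequality is required.
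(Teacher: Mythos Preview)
Your proposal is correct and follows essentially the same approach as the paper: decompose using $q_0(h^\dagger)=0$ into the $G$-term and two $\Phi_n$-terms, apply both bounds of \Cref{lem:primal-high-prob}, replace $\|\Pi_n q_0(\hat h_n)\|_{2,2}$ by $\|\hat h_n-h^\dagger\|_w+\delta_n$ via approximation, replace $\|q_n(\Pi_n h^\dagger)\|_{2,2}$ via \Cref{lem:primal-interior-maximization}, and clean up the cross terms with AM--GM. The only work remaining is exactly the constant bookkeeping you describe.
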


\subsubsection*{First Projected Norm Bound}

Putting together the results of the above intermediate lemmas, we get
\begin{align}
    &\frac{1}{2} \|\hat h_n - h^\dagger\|_w^2 \nonumber \\
    &\leq \Big( 90 \epsilon_n \Omega^q(\Pi_n q_0(\hat h_n)) + 2 \delta_n \Big) \|\hat h_n - h^\dagger\|_w + 1542 \delta_n^2 \nonumber \\
    &\qquad + \Big( 37347 \Omega^q(q_n(\Pi_n h^\dagger))^2 + 1651 \Omega^q(\Pi_n q_0(\Pi_n h^\dagger))^2 + 117 \Omega^q(\Pi_n q_0(\hat h_n))^2 + 18 \Omega^h(\hat h_n - h^\dagger)^2 \Big) \epsilon_n^2 \nonumber \\
    &\qquad + \Big( 91 \Omega^q(\Pi_n q_0(\Pi_n h^\dagger))^2 + 2 \Omega^q(\Pi_n q_0(\hat h_n))^2 \Big) \gamma_n^q + 2 M^2 \gamma_n^h \nonumber \\
    &\qquad + \mu_n \EE_n[(\Pi_n h^\dagger)(S)^2 - \hat h(S)^2] - \gamma_n^q \Omega^q(q_n(\Pi_n h^\dagger))^2 - \gamma^h_n \Omega^h(\hat h_n)^2
    \label{eq:full-bound-1}
\end{align}

Now, under the assumptions of our first bound where $\|h\|_\Hcal, \|h-h^\dagger\|_\Hcal \leq M$ and $\|q\|_\Qcal \leq M$ for all $h \in \Hcal_n$ and $q \in \Qcal_n$, the above bound can be simplified to
\begin{align*}
    &\frac{1}{2} \|\hat h_n - h^\dagger\|_w^2 \nonumber \\
    &\leq \Big( 90 M \epsilon_n + 2 \delta_n \Big) \|\hat h_n - h^\dagger\|_w + 1542 \delta_n^2 + 39133 M^2 \epsilon_n^2  + 93 M^2 \gamma_n^q + 2 M^2 \gamma_n^h + \mu_n \,.
\end{align*}

Applying \Cref{lem:quadratic-inequality} to this gives us
\begin{align*}
    \|\hat h_n - h^\dagger\|_w &\leq 460 M \epsilon_n + 60 \delta_n + 14 M (\gamma_n^q)^{1/2} + 2 M (\gamma_n^h)^{1/2} + 2 \mu_n^{1/2} \,,
\end{align*}
which is our first promised bounds.

\subsubsection*{Second Projected Norm Bound}

For our latter bound, first note that, following our theorem's premises, we have
\begin{align*}
    \|\Pi_n q_0(h)\|_\Qcal &= \|\Pi_n P(h - h^\dagger)\|_\Qcal \\
    &\leq L \|h - h^\dagger\|_\Hcal \\
    &\leq L \|h\|_\Hcal + M L \,,
\end{align*}
and therefore, given that $M,L \geq 1$, we have
\begin{align*}
    \Omega^q(\Pi_n q_0(h)) &\leq \max(L \|h\|_\Hcal + ML, M) \\
    &= L \|h\|_\Hcal + ML \\
    &\leq 2 L \Omega^h(h) \,,
\end{align*}
which applies for all $h \in \Hcal_n$. In addition, by assumption we have $\|\Pi_n h^\dagger\|_\Hcal \leq M$, so therefore $\Omega^h(\Pi_n h^\dagger) \leq M$, $\|\Pi_n q_0(\Pi_n h^\dagger)\|_\Qcal \leq 2LM$, and $\Omega^q(\Pi_n q_0(\Pi_n h^\dagger)) \leq 2LM$. Then, we easily have from \Cref{eq:full-bound-1} that
\begin{align*}
    &\frac{1}{2} \|\hat h_n - h^\dagger\|_w^2 + \frac{1}{2} \gamma_n^h \Omega^h(\hat h_n)^2 \\
    &\leq \Big( 90 \epsilon_n \Omega^q(\Pi_n q_0(\hat h_n)) + 2 \delta_n \Big) \|\hat h_n - h^\dagger\|_w + 1542 \delta_n^2 \\
    &\qquad + \Big( 37347 \Omega^q(q_n(\Pi_n h^\dagger))^2 + 7144 L^2 \Omega^h(\hat h_n)^2 \Big) \epsilon_n^2 + 186 L^2 \Omega^h(\hat h_n)^2 \gamma_n^q + 2 M^2 \gamma_n^h \\
    &\qquad + \mu_n \EE_n[(\Pi_n h^\dagger)(S)^2 - \hat h(S)^2] - \gamma_n^q \Omega^q(q_n(\Pi_n h^\dagger))^2 - \frac{1}{2} \gamma^h_n \Omega^h(\hat h_n)^2 \,.
\end{align*}
Furthermore, by the AM-GM inequality, we have
\begin{equation*}
    \Big( 90 \epsilon_n \Omega^q(\Pi_n q_0(\hat h_n)) + 2 \delta_n \Big) \|\hat h_n - h^\dagger\|_w \leq \frac{1}{4} \|\hat h_n - h^\dagger\|_w^2 + \Big( 90 \epsilon_n \Omega^q(\Pi_n q_0(\hat h_n)) + 2 \delta_n \Big)^2 \,,
\end{equation*}
so therefore we have
\begin{align*}
    &\frac{1}{4} \|\hat h_n - h^\dagger\|_w^2 + \frac{1}{2} \gamma_n^h \Omega^h(\hat h_n)^2 \\
    &\leq 1542 \delta_n^2 + \Big( 37347 \Omega^q(q_n(\Pi_n h^\dagger))^2 + 71944 L^2 \Omega^h(\hat h_n)^2 \Big) \epsilon_n^2 + 186 L^2 \Omega^h(\hat h_n)^2 \gamma_n^q + 2 M^2 \gamma_n^h \\
    &\qquad + \mu_n \EE_n[(\Pi_n h^\dagger)(S)^2 - \hat h(S)^2] - \gamma_n^q \Omega^q(q_n(\Pi_n h^\dagger))^2 - \frac{1}{2} \gamma^h_n \Omega^h(\hat h_n)^2 \,.
\end{align*}
Now, as long as we have
\begin{equation*}
    \gamma_n^q \geq 37347 \epsilon_n^2 \,,
\end{equation*}
and
\begin{equation*}
    \gamma_n^h \geq L^2 \Big( 143888 \epsilon_n^2 + 372 \gamma_n^q \Big) \,,
\end{equation*}
the above simplifies to
\begin{align}
    &\frac{1}{4} \|\hat h_n - h^\dagger\|_w^2 + \frac{1}{2} \gamma_n^h \Omega^h(\hat h_n)^2 \nonumber \\
    &\leq 1542 \delta_n^2 + 2 M^2 \gamma_n^h + \mu_n \EE_n[(\Pi_n h^\dagger)(S)^2 - \hat h(S)^2] \,.
    \label{eq:full-bound-2}
\end{align}

It immediately follows from the above that
\begin{equation*}
    \|\hat h_n - h^\dagger\|_w \leq 79 \delta_n + 3 M (\gamma_n^h)^{1/2} + 2 \mu_n \,,
\end{equation*}
which gives us our second required bound. Note that the $M \epsilon_n$ and $M (\gamma_n^q)^{1/2}$ terms in the bound are implicit given the minimum size requirement on $\gamma_n^h$.

\subsubsection*{Strong Norm Consistency}

Finally, we will argue that $\hat h_n$ not only converges to $h_0$ under weak norm (for any $h_0 \in \Hcal_0$), but converges to $h^\dagger$ in strong norm. 

Let $\zeta_n$ be the sequence defined in the problem statement, and define $\tilde\epsilon_n = r_n + c_2 \sqrt{\log(c_1 / \zeta_n) / n}$. Recall that our required assumptions on $\gamma_n^q$ and $\gamma_n^h$ hold with $\zeta_n$, and that $\log(c_1 / \zeta) / n = o(\mu_n)$. Now, from the above bounds, under our first set of assumptions we have
\begin{align*}
    \mu_n \Big( \|\hat h_n \|_{n,2}^2 - \|\Pi_n h^\dagger \|_{n,2}^2 \Big) &\preccurlyeq  \|\hat h_n - h^\dagger\|_w \Big( M \tilde\epsilon_n + \delta_n \Big) \\
    &\qquad + M^2 \tilde\epsilon_n^2 + \delta_n^2 + M^2 \gamma_n^q + M^2 \gamma_n^h \\
    &\preccurlyeq M^2 \tilde\epsilon_n^2 + \delta_n^2 + M^2 \gamma_n^q + M^2 \gamma_n^h \,,
\end{align*}
with probability at least $1-\zeta_n$. Similarly, under our second set of assumptions, we have
\begin{align*}
    \mu_n \Big( \|\hat h_n \|_{n,2}^2 - \|\Pi_n h^\dagger \|_{n,2}^2 \Big) &\preccurlyeq  \delta_n^2 + M^2 \gamma_n^h \,,
\end{align*}
with probability at least $1-\zeta_n$. Now, in either case, since by assumption we have that $r_n = o(\mu_n^{1/2})$, $\delta_n = o(\mu_n^{1/2})$, $\gamma_n^q = o(\mu_n)$, and $\gamma_n^h = o(\mu_n)$, and by the above choice of $\zeta_n$, we have
\begin{equation*}
     \|\hat h_n \|_{n,2}^2 \leq \|\Pi_n h^\dagger \|_{n,2}^2 + o(1) \,,
\end{equation*}
with probability at least $1-\zeta_n$. Therefore, since $\zeta_n = o(1)$, this implies that
\begin{equation*}
    \|\hat h_n \|_{n,2}^2 \leq \|\Pi_n h^\dagger \|_{n,2}^2 + o_p(1) \,.
\end{equation*}
This bound then motivates the following lemma, which jutifies that the $L_2$ norm of $\hat h_n$ is asymptotically bounded by the minimum solution $L_2$ norm $\|h^\dagger\|^2_{n,2}$.
\begin{lemma}
\label{lem:h-norm-convergence}
    Under the above assumptions, we have
    \begin{equation*}
        \|\hat h_n\|_{2,2}^2 \leq \| \|h^\dagger\|_{2,2}^2 + o_p(1) \,.
    \end{equation*}
\end{lemma}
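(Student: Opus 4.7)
My plan is to upgrade the empirical-norm inequality $\|\hat h_n\|_{n,2}^2 \leq \|\Pi_n h^\dagger\|_{n,2}^2 + o_p(1)$, established in the paragraph immediately preceding the lemma, into the desired population-norm inequality via two separate concentration arguments: one that converts the right-hand side to the fixed limit $\|h^\dagger\|_{2,2}^2$, and one that converts the left-hand side to $\|\hat h_n\|_{2,2}^2$.

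The right-hand side conversion is routine. Since $\Pi_n h^\dagger$ is deterministic given $n$ and uniformly bounded by $1$ (by \Cref{assum:bounded-estimation-h-general}), Hoeffding yields $\|\Pi_n h^\dagger\|_{n,2}^2 = \|\Pi_n h^\dagger\|_{2,2}^2 + O_p(n^{-1/2})$, and combining this with the approximation bound $\|\Pi_n h^\dagger - h^\dagger\|_{2,2} \leq \delta_n = o(1)$ from \Cref{assum:universal-approximation-h-general} (together with $\|h^\dagger\|_{2,2} \leq 1$) yields $\|\Pi_n h^\dagger\|_{n,2}^2 = \|h^\dagger\|_{2,2}^2 + o_p(1)$.

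For the left-hand side, I would establish uniform concentration of the empirical quadratic functional over a data-dependent sub-class containing $\hat h_n$ with high probability. Concretely, let $B_n$ be a high-probability upper bound on $\|\hat h_n - h^\dagger\|_\Hcal$ and set $\Gcal_n = \{h - h^\dagger : h \in \Hcal_n,\ \|h - h^\dagger\|_\Hcal \leq B_n\}$. Under the bounded-complexity alternative \Cref{assum:bounded-complexity-h-general}, we may take $B_n \equiv 2M$ deterministically. Under the regularized alternative \Cref{assum:regularization-h-general}, I would plug $q = 0$ into the inner maximization of the saddle objective evaluated at $\hat h_n$, compare to the same quantity at $\Pi_n h^\dagger$, and use the trivial bound $\EE_n[(\Pi_n h^\dagger)^2 - \hat h_n^2] \leq 1$ to derive $\gamma_n^h\, \Omega^h(\hat h_n)^2 \lesssim \delta_n^2 + M^2 \gamma_n^h + \mu_n$, hence $B_n^2 \lesssim M^2 + \mu_n/\gamma_n^h$. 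The main obstacle appears precisely here: since $\mu_n = \omega(\gamma_n^h)$, a naive read of this bound makes $B_n^2$ blow up. The resolution is that the lower bound $\gamma_n^h \geq c L^2 r_n^2$ built into \Cref{assum:regularization-h-general} gives $B_n^2 r_n^2 \lesssim M^2 r_n^2 + \mu_n/(c L^2) = o(1)$, so in both alternatives $B_n r_n = o_p(1)$.

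Given $B_n r_n = o_p(1)$, the unit-ball critical-radius bound in \Cref{assum:complexity-h-general} plus star-hull scaling imply that $\Gcal_n$ has Rademacher complexity $O(B_n r_n)$; Lipschitz composition (using $x \mapsto x^2$ on $[-2,2]$ and $x \mapsto h^\dagger(S)\, x$ with $|h^\dagger| \leq 1$) preserves this rate for the classes $\{g^2 : g \in \Gcal_n\}$ and $\{g \cdot h^\dagger : g \in \Gcal_n\}$, and Talagrand's inequality then gives uniform concentration of both at rate $o_p(1)$. Expanding $\hat h_n^2 = (\hat h_n - h^\dagger)^2 + 2(\hat h_n - h^\dagger) h^\dagger + (h^\dagger)^2$ and pairing each term with its concentration bound (a standard LLN handles the constant last term), I conclude $\|\hat h_n\|_{2,2}^2 = \|\hat h_n\|_{n,2}^2 + o_p(1)$ on the high-probability event $\{\|\hat h_n - h^\dagger\|_\Hcal \leq B_n\}$; chaining this with the right-hand side display completes the proof.
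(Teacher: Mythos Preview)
Your approach is essentially the paper's: decompose $\|\hat h_n\|_{2,2}^2 - \|h^\dagger\|_{2,2}^2$ into the already-established empirical inequality, an approximation term for $\Pi_n h^\dagger$, and a stochastic-equicontinuity term $(\EE-\EE_n)[(\hat h_n-h^\dagger)^2 + 2h^\dagger(\hat h_n-h^\dagger)]$, then control the latter via a localized Rademacher bound together with the key fact $\Omega^h(\hat h_n)\,\epsilon_n = o_p(1)$. Two small points deserve correction.

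First, the critical-radius input you need is not in \Cref{assum:complexity-h}; that assumption bounds the radii of product classes $\{g_1 h q\}$ and of $\starcls(\Qcal_n)$, not of $\starcls(\Hcal_n-h^\dagger)$ itself. The relevant bound is condition (3) in the second part of \Cref{thm:h-estimator-bound}, which is precisely what the paper invokes (``under our additional critical radius assumption on $\Hcal_n$'') to apply the Foster--Syrgkanis localization lemma to the quadratic and cross terms.

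Second, your $q=0$ device under \Cref{assum:regularization-h} yields only
\[
\gamma_n^h\|\hat h_n\|_\Hcal^2 \;\le\; \sup_{q\in\Qcal_n}\Big\{\EE_n[\phi(\Pi_n h^\dagger,q)]-\gamma_n^q\|q\|_\Qcal^2\Big\} \;+\; \mu_n \;+\; \gamma_n^h M^2,
\]
so you still owe a bound on the inner supremum at $\Pi_n h^\dagger$; this does not come for free and requires the concentration work of \Cref{lem:primal-high-prob,lem:primal-interior-maximization}. The paper instead simply reads off the display \eqref{eq:full-bound-2}, already derived in the projected-norm analysis, to get $\Omega^h(\hat h_n)\lesssim M+\tilde\epsilon_n^{-1}(\delta_n+\mu_n^{1/2})$ and hence $\Omega^h(\hat h_n)\tilde\epsilon_n=o_p(1)$. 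Citing that display is cleaner than re-deriving it.
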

That is, this suggests that $\hat h_n$ cannot converge to any element $h_0$ other than the minimum-norm element $h^\dagger$. Specifically, the following lemma ensures that this convergence does in fact occur.

\begin{lemma}
\label{lem:p-convergence}
    Suppose $r_n = o(\mu_n)$ and $\delta_n = o(\mu_n)$. Then, we have
    \begin{equation*}
        \|\hat h_n - h^\dagger\|_2 \to 0 \,,
    \end{equation*}
    in probability.
\end{lemma}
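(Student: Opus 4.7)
The plan is to combine weak-norm convergence of $P\hat h_n$, $L_2$ norm-domination $\|\hat h_n\|_2^2 \leq \|h^\dagger\|_2^2 + o_p(1)$, and compactness of $\|\cdot\|_\Hcal$-balls, and deduce strong $L_2$ convergence via the subsequence characterization of convergence in probability (``$\hat h_n \to h^\dagger$ in probability iff every subsequence has a further subsequence converging a.s.''). The argument is Hilbert-space soft analysis once the three key ingredients are in hand.

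First, I would assemble three ingredients. (a) From \Cref{thm:h-estimator-bound}, the assumptions $r_n = o(\mu_n)$, $\delta_n = o(\mu_n)$ and $\mu_n = o(1)$ yield $\|P(\hat h_n - h^\dagger)\|_2 = O_p(M r_n + \delta_n + M(\gamma_n^q)^{1/2} + M(\gamma_n^h)^{1/2} + \mu_n^{1/2}) = o_p(1)$. (b) \Cref{lem:h-norm-convergence} directly gives $\|\hat h_n\|_2^2 \leq \|h^\dagger\|_2^2 + o_p(1)$. (c) $\|\hat h_n\|_\Hcal = O_p(1)$: under \Cref{assum:bounded-complexity-h} this is immediate since $\|h\|_\Hcal \leq M$ for all $h \in \Hcal_n$, while under \Cref{assum:regularization-h} it follows from \cref{eq:full-bound-2}, which bounds $\gamma_n^h \Omega^h(\hat h_n)^2$ by $O(\delta_n^2 + M^2 \gamma_n^h + \mu_n)$; dividing by $\gamma_n^h$ and invoking the stipulated rate conditions (so that $\mu_n/\gamma_n^h$ and $\delta_n^2/\gamma_n^h$ are both $O(1)$) gives the claim.

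Second, I would fix an arbitrary subsequence of $\{\hat h_n\}$. Using (a)--(c), pass to a further subsequence along which the $o_p(1)$ terms vanish a.s.\ and $\|\hat h_{n_k}\|_\Hcal \leq C$ for some $C < \infty$ a.s. The compactness assumption on $\|\cdot\|_\Hcal$-balls then permits extracting yet another subsequence along which $\hat h_{n_k}$ converges in $\|\cdot\|_\Hcal$ to some $h^* \in \bar\Hcal$; the assumption $\|h\|_2 \leq K\|h\|_\Hcal$ upgrades this to $L_2$ convergence.

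Third, I would identify the limit. Continuity of $P \colon L_2(S) \to L_2(T)$ (ensured by \Cref{assum:boundedness}, which gives $\|P(h - h^\dagger)\|_2 \leq \|h - h^\dagger\|_2$) combined with (a) forces $Ph^* = Ph^\dagger$, so $h^* \in \Hcal_0 = h^\star + \Ncal(P)$. $L_2$ convergence combined with (b) yields $\|h^*\|_2 \leq \|h^\dagger\|_2$. Since $\Hcal_0$ is a closed affine subspace of the Hilbert space $L_2(S)$, $h^\dagger$ is the unique projection of $0$ onto $\Hcal_0$, so $h^* = h^\dagger$. Every subsequence of $\hat h_n$ therefore admits a further subsequence converging to $h^\dagger$ in $L_2$ a.s., proving convergence in probability. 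The main obstacle is verifying ingredient (c) under \Cref{assum:regularization-h}: it requires careful bookkeeping in \cref{eq:full-bound-2} against the prescribed scalings of $\gamma_n^h$, $\gamma_n^q$, $\mu_n$, $r_n$, $\delta_n$, with everything else being standard.
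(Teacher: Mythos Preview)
Your subsequence approach differs from the paper's. The paper proceeds in two explicit steps: first it shows $D(\hat h_n) := \inf_{h_0 \in \Hcal_0} \|\hat h_n - h_0\|_2 = o_p(1)$ via a direct $\epsilon$-$\eta$ compactness argument---the infimum of $\|h-h^\dagger\|_w$ over the compact set $\{h \in \bar\Hcal : \|h\|_\Hcal \leq U_\eta,\, D(h) \geq \epsilon\}$ is attained and strictly positive by the extreme value theorem---and then uses the Pythagorean identity $\|h_0\|_2^2 = \|h^\dagger\|_2^2 + \|h_0 - h^\dagger\|_2^2$ for $h_0 \in \Hcal_0$ together with \Cref{lem:h-norm-convergence} to finish. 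Your one-step subsequence route uses the same three ingredients but packages them via soft limit identification; the paper's decomposition is more quantitative and sidesteps any need to extract convergent subsequences in function space.

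There are two gaps in your proposal. First, your justification of ingredient (c) under \Cref{assum:regularization-h} is incorrect: condition (2) in the second part of \Cref{thm:h-estimator-bound} requires $\mu_n = \omega(\gamma_n^h)$, so $\mu_n/\gamma_n^h \to \infty$, not $O(1)$ as you claim. Second, the step ``pass to a further subsequence along which $\|\hat h_{n_k}\|_\Hcal \leq C$ a.s.'' does not follow from tightness $\|\hat h_n\|_\Hcal = O_p(1)$ alone: a tight sequence of real random variables need not admit any a.s.\ uniformly bounded subsequence (e.g., i.i.d.\ standard normals). The paper avoids this by working with a nonrandom $U_\eta$ in a direct $\epsilon$-$\eta$ argument rather than subsequences. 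Under \Cref{assum:bounded-complexity-h}, where $\|\hat h_n\|_\Hcal \leq M$ holds deterministically, both issues disappear and your argument is sound.
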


\subsubsection*{Proofs of Intermediate Lemmas}

Finally, we list the proofs of the above intermediate lemmas

\begin{proof}[Proof of \Cref{lem:primal-strong-convexity-bound}]

Consider some arbitrary $h \in \Hcal$, and define
\begin{equation*}
    L(t) = J(h^\dagger + t(h - h^\dagger)) \,.
\end{equation*}
Then, the first two derivatives of $L$ are given by
\begin{equation*}
    L'(t) = \EE\Big[ \Pi_\Qcal \Big[ k(S,T)^\top(h - h^\dagger)(S) \mid T \Big]^\top \Pi_\Qcal \Big[ k(S,T)^\top \Big(h^\dagger + t(h-h^\dagger)\Big)(S) - \beta(T) \mid T \Big] \Big] \,
\end{equation*}
and
\begin{equation*}
    L''(t) = \Big\| \Pi_\Qcal \Big[ k(S,T)(h - h^\dagger)(S) \mid T \Big] \Big\|_{2,2}^2 = \|h - h^\dagger\|_w^2 \,.
\end{equation*}
Note that the first derivative follows since projections onto Hilbert spaces are linear.
Furthermore, we have $L(0) = J(h^\dagger)$, and $L(1) = J(h)$. Also, let $q' = \Pi_\Qcal[ k(S,T)^\top (h - h^\dagger)(S) \mid T ]$, we have
\begin{align*}
    L'(0) &= \EE\Big[ \Pi_\Qcal \Big[ k(S,T)^\top (h - h^\dagger)(S) \mid T \Big] \Pi_\Qcal \Big[ k(S,T)^\top h^\dagger(S) - \beta(T) \mid T \Big] \Big] \\
    &= \EE\Big[ \Big( k(S,T)^\top h^\dagger(S) - \beta(T) \Big) \Pi_\Qcal \Big[ k(S,T)^\top(h - h^\dagger)(S) \mid T \Big] \Big] \\
    &= \EE[G(W;h^\dagger,q') - r(W;q')] \\
    &= 0 \,,
\end{align*}
where  the final two equalities follow since $q' \in \Qcal$ by construction and  $h^\dagger \in \Hcal_0$.
Therefore, noting that $L''(t)$ does not depend on $t$, strong convexity gives us
\begin{align*}
    &L(1) \geq L(0) + L'(0) + \frac{1}{2} \inf_{t \in [0,1]} L''(t) \\
    &\iff \frac{1}{2} \|h - h^\dagger\|_w^2 \leq J(h) - J(h^\dagger) \,,
\end{align*}
as required.

\end{proof}

\begin{proof}[Proof of \Cref{lem:primal-high-prob}]

First, let us establish our bound for $\Psi_n(q)$. Note that for any $q$ we can trivially bound
\begin{equation}
\label{eq:primal-high-prob-psi-bound}
    \Big| (\EE_n - \EE) \Big[ \Phi_n(q) \Big] \Big| \leq \Big| (\EE_n - \EE) \Big[ G(W;h^\dagger,q) - r(W;q) \Big] \Big| + \frac{1}{2} \Big| (\EE_n - \EE) \Big[ q(T)^\top q(T)  \Big] \Big| \,.
\end{equation}
We will bound each of the terms in the right hand side of \Cref{eq:primal-high-prob-psi-bound} separately. For the first term, it easily follows from our boundedness assumptions that $G(W;h^\dagger,q) - r(W;q)$ is $2$-Lipschitz in $q$ under $L_2$ norm, and $\|G(W;h^\dagger,q) - r(W;q)\|_\infty \leq 2$. In addition, note that by construction $q / \Omega^q(q)$ is in $\{q \in \starcls(\Qcal_n) : \|q\|_\Qcal \leq 1\}$ for all $q \in \Qcal_n$, since $\Omega^q(q) \geq 1$, and $\|q/\Omega^q(q)\|_\Qcal \leq 1$. Therefore, applying \citet[Lemma 11]{foster2019orthogonal} with \cref{assum:complexity-h-general} gives us
\begin{align*}
    \Big| (\EE_n - \EE) \Big[ G(W;h^\dagger,q) - r(W;q) \Big] \Big| &= 2 \Omega^q(q) \Big| (\EE_n - \EE) \Big[ \frac{1}{2} \Big( G(W;h^\dagger,q/\Omega^q(q)) - r(W;q/\Omega^q(q)) \Big) \Big] \Big| \\
    &\leq 2 \Omega^q(q) \Big( 18 \epsilon_n \|q/\Omega^q(q)\|_{2,2} + 18 \epsilon_n^2 \Big) \\
    &= 36 \epsilon_n \|q\|_{2,2} + 36 \Omega^q(q) \epsilon_n^2 \,,
\end{align*}
which holds uniformly for all $q \in \Qcal_n$ under an event with probability at least $1-\zeta$.

For the second term in the right hand side of \Cref{eq:primal-high-prob-psi-bound}, we can follow similar reasoning. However, we have to bound slighly differently since $q(T)^\top q(T)$ is quadratic in $q$ rather than linear. By our boundedness assumptions $\frac{1}{2} q(T)^\top q(T)$ is $1$-Lipschitz in $q$ under $L_2$ norm, and is uniformly bounded by $1$. Therefore, applying \citet[Lemma 11]{foster2019orthogonal} with \cref{assum:complexity-h-general} again gives us
\begin{align*}
    \Big| (\EE_n - \EE) \Big[ \frac{1}{2} q(T)^\top q(T) \Big] \Big| &= \Omega^q(q)^2 \Big| (\EE_n - \EE) \Big[ \frac{1}{2} (q(T) / \Omega^q(q))^\top (q(T) / \Omega^q(q)) \Big] \Big| \\
    &\leq \Omega^q(q)^2 \Bigg( 18 \epsilon_n \|q/\Omega^q(q)\|_{2,2} + 18 \epsilon_n^2 \Bigg) \\
    &\leq 18 \Omega^q(q) \epsilon_n \|q\|_{2,2} + 18 \Omega^q(q)^2 \epsilon_n^2 \,,
\end{align*}
which holds uniformly for all $q \in \Qcal_n$ under a separate event with probability at least $1-\zeta$.

Putting the previous two bounds together with a union bound, and noting htat $1 \leq \Omega^q(q) \leq \Omega^q(q)^2$, under an event with probability at least $1-2\zeta$ we have
\begin{equation*}
    \Big| (\EE_n - \EE) \Big[ \Phi_n(q) \Big] \Big| \leq 54 \Omega^q(q) \epsilon_n \|q\|_{2,2} + 54 \Omega^q(q)^2 \epsilon_n^2 \,,
\end{equation*}
uniformly over $q \in \Qcal_n$.

Next, we will deal with bounding the $G(W;h-h^\dagger,q)$ term. Applying similar reasoning as above, we have that $G(W;(h-h^\dagger)/\Omega^h(h-h^\dagger), q/\Omega^q(q)$ is an element of our second critical-radius bounded set defined in \cref{assum:complexity-h-general} for all $h \in \Hcal_n$ and $q \in \Qcal_n$. In addition $G(W;h-h^\dagger,q)$ is uniformly bounded by $2$ for all such $h$ and $q$. Therefore, we can bound
\begin{align*}
    \Big| (\EE_n - \EE) \Big[ G(W;h-h^\dagger,q) \Big] \Big| &\leq 2 \Omega^h(h-h^\dagger) \Omega^q(q) \Big| (\EE_n - \EE) \Big[ \frac{1}{2} G(W;(h-h^\dagger)/\Omega^h(h-h^\dagger),q/\Omega^q(q)) \Big] \Big| \\
    &\leq 2 \Omega^h(h-h^\dagger) \Omega^q(q) \Big( 18 \epsilon_n \|\frac{1}{2} G(W;(h-h^\dagger)/\Omega^h(h-h^\dagger),q/\Omega^q(q))\|_2 + 18 \epsilon_n^2 \Big) \\
    &\leq 18 \epsilon_n \|G(W;h-h^\dagger,q)\|_2 + 36 \Omega^h(h-h^\dagger) \Omega^q(q) \epsilon_n^2 \\
    &\leq 36 \epsilon_n \|q\|_{2,2} + 18 \Omega^q(q)^2 \epsilon_n^2 + 18 \Omega^h(h-h^\dagger)^2 \epsilon_n^2 \,,
\end{align*}
which holds uniformly over $h \in \Hcal_n$ and $q \in \Qcal_n$ under a third event with probability at least $1-\zeta$.
Note that in the final inequality we apply our boundedness assumption on $q \mapsto G(W;h,q)$ over $L_2$, as well as the AM-GM inequality.

Finally, putting everything together, and applying a union bound, we have our two required bounds under an event with probability at least $1-3\zeta$.

\end{proof}

\begin{proof}[Proof of \Cref{lem:primal-interior-maximization}]

First, we note that 
\begin{align*}
    &\frac{1}{2}\|q_n(h) - q_0(h)\|_{2,2}^2 - \frac{1}{2}\|\Pi_n q_0(h) - q_0(h)\|_{2,2}^2 \\
    &= \EE\Big[ \Big( q_0(h)(T)^\top \Pi_n q_0(h)(T) - \frac{1}{2} \Pi_n q_0(h)(T)^\top \Pi_n q_0(h)(T) \Big) \\
    &\qquad - \Big( q_0(h)(T)^\top q_n(h)(T) - \frac{1}{2} q_n(h)(T)^\top q_n(h)(T) \Big) \Big] \\
    &= \EE\Big[ \phi(h, \Pi_n q_0(h)) - \phi(h, q_n(h)) \Big] \,,
\end{align*}
where in the second equality we apply the fact that for any $q \in \Qcal$ and $h \in \Hcal$ we have
\begin{align*}
    \EE\Big[q(T)^\top q_0(h)(T) - \frac{1}{2} q(T)^\top q(T)\Big] &= \EE\Big[q(T)^\top \Pi_\Qcal[k(S,T)^\top h(S) - \beta(T) \mid T] - \frac{1}{2} q(T)^\top q(T)\Big] \\
    &= \EE\Big[[h(S)^\top k(S,T) q(T) - \beta(T)^\top q(T) - \frac{1}{2} q(T)^\top q(T)\Big] \\
    &= \EE\Big[G(W;h,q) - r(W;q) - \frac{1}{2} q(T)^\top q(T)\Big] \\
    &= \EE\Big[\phi(h,q)\Big] \,.
\end{align*}

Furthermore, by the optimality of $q_n(h)$ for the empirical interior supremum problem at $h$ over $\Qcal_n$, and the fact that $\Pi_n q_0(h)\in\Qcal_n$, we have 
\begin{align*}
    \EE_n\Big[ \phi(h, q_n(h)) \Big] - \gamma_n^q \|q_n(h)\|_\Qcal^2 \geq \EE_n\Big[ \phi(h, \Pi_n q_0(h)) \Big] - \gamma_n^q \| \Pi_n q_0(h)\|_\Qcal^2 \,, ~~ h \in \Hcal_n \,.
\end{align*}
Thus we have 
\begin{align*}
    &\frac{1}{2}\|q_n(h) - q_0(h)\|_{2,2}^2 - \frac{1}{2}\|\Pi_n q_0(h) - q_0(h)\|_{2,2}^2 \\
     &= \EE\Big[ \phi(h, \Pi_n q_0(h)) - \phi(h, q_n(h)) \Big] \\
    &\leq (\EE_n - \EE)\Big[ \phi(h, q_n(h)) - \phi(h, \Pi_n q_0(h)) \Big]  + \gamma_n^q \Big( \|\Pi_n q_0(h)\|_\Qcal^2 - \|q_n(h)\|_\Qcal^2 \Big) \\
    &\leq (\EE_n - \EE)\Big[ \Phi_n(q_n(h)) - \Phi_n(\Pi_n q_0(h)) + G(W, h - h^\dagger, q_n(h)) - G(W, h - h^\dagger, \Pi_n q_0(h)) \Big]  \\
    &\qquad + \gamma_n^q \Big( \|\Pi_n q_0(h)\|_\Qcal^2 - \|q_n(h)\|_\Qcal^2 \Big) \,,
\end{align*}
where $\Phi_n$ is defined as in the statement of \Cref{lem:primal-high-prob}.

Now, since by assumption $h \in \Hcal_n$,
under the high probability event of \Cref{lem:primal-high-prob}, along with the AM-GM inequality, we can reduce the above to
\begin{align*}
    &\frac{1}{2}\|q_n(h) - q_0(h)\|_{2,2}^2 - \frac{1}{2}\|\Pi_n q_0(h) - q_0(h)\|_{2,2}^2 \\
    &\leq 54 \epsilon_n \Big( \Omega^q(q_n(h)) \|q_n(h)\|_{2,2} + \Omega^q(\Pi_n q_0(h)) \|\Pi_n q_0(h)\|_{2,2} \Big) + 36 \epsilon_n \Big( \|q_n(h)\|_{2,2} + \|\Pi_n q_0(h)\|_{2,2} \Big) \\
    &\qquad + 54 \epsilon_n^2 \Big( \Omega^q(q_n(h))^2 + \Omega^q(\Pi_n q_0(h))^2  \Big) + 36 \epsilon_n^2 \Omega^h(h-h^\dagger)^2 \\
    &\qquad + 18 \epsilon_n^2 \Omega^q(q_n(h))^2 + 18 \epsilon_n^2 \Omega^q(\Pi_n q_0(h))^2 + \gamma_n^q \Big( \|\Pi_n q_0(h)\|_\Qcal^2 - \|q_n(h)\|_\Qcal^2 \Big) \\
    &\leq \epsilon_n \Big( 90 \Omega^q(q_n(h)) \|q_n(h)\|_{2,2} + 90 \Omega^q(\Pi_n q_0(h)) \|\Pi_n q_0(h)\|_{2,2} \Big) \\
    &\qquad + \epsilon_n^2 \Big( 72 \Omega^q(q_n(h))^2 + 72 \Omega^q(\Pi_n q_0(h))^2 + 36 \Omega^h(h-h^\dagger)^2  \Big) + \gamma_n^q \Big( \|\Pi_n q_0(h)\|_\Qcal^2 - \|q_n(h)\|_\Qcal^2 \Big)  \,.
\end{align*}

Next, noting that $(a+b)^2 \leq 2 a^2 + 2b^2$ for $a,b \geq 0$, under the above high-probability event we have
\begin{align*}
    \frac{1}{2} \|q_n(h)\|_{2,2}^2 &\leq \|q_n(h) - q_0(h) \|_{2,2}^2 + \|q_0(h)\|_{2,2}^2 \\
    &= 2 \Big( \frac{1}{2} \|q_n(h) - q_0(h) \|_{2,2}^2 - \frac{1}{2}\|\Pi_n q_0(h) - q_0(h)\|_{2,2}^2 \Big) \\
    &\qquad + \|\Pi_n q_0(h) - q_0(h) \|_{2,2}^2 + \|q_0(h)\|_{2,2}^2 \\
    &\leq \epsilon_n \Big( 180 \Omega^q(q_n(h)) \|q_n(h)\|_{2,2} + 180 \Omega^q(\Pi_n q_0(h)) \|\Pi_n q_0(h)\|_{2,2} \Big) \\
    &\qquad + \epsilon_n^2 \Big( 144 \Omega^q(q_n(h))^2 + 144 \Omega^q(\Pi_n q_0(h))^2 + 72 \Omega^h(h-h^\dagger)^2 \Big)  \\
    &\qquad + \|\Pi_n q_0(h) - q_0(h) \|_{2,2}^2 + \|q_0(h)\|_{2,2}^2 + \gamma_n^q \Big( \|\Pi_n q_0(h)\|_\Qcal^2 - \|q_n(h)\|_\Qcal^2 \Big)
\end{align*}
Furthermore, noting that $q_0(h) = P(h - h^\dagger)$, and applying \Cref{assum:universal-approximation-h-general}, along with the AM-GM inequality, gives us
\begin{align*}
    \frac{1}{2} \|q_n(h)\|_{2,2}^2 &\leq 180 \epsilon_n \Omega^q(q_n(h)) \|q_n(h)\|_{2,2} + 180 \epsilon_n \delta_n \Omega^q(\Pi_n q_0(h)) + 180 \epsilon_n \|h - h^\dagger\|_w \Omega^q(\Pi_n q_0(h))  \\
    &\qquad + 180 \epsilon_n^2 \Omega^q(q_n(h))^2 + 180 \epsilon_n^2 \Omega^q(\Pi_n q_0(h))^2 + 72 \epsilon_n^2 \Omega^h(h-h^\dagger)^2 + \delta_n^2 + \|h - h^\dagger\|_w^2 \\
    &\qquad + \gamma_n^q \Big( \|\Pi_n q_0(h)\|_\Qcal^2 - \|q_n(h)\|_\Qcal^2 \Big) \\
    &\leq 180 \epsilon_n \Omega^q(q_n(h)) \|q_n(h)\|_{2,2} + 144  \epsilon_n^2 \Omega^q(q_n(h))^2 + 324 \epsilon_n^2 \Omega^q(\Pi_n q_0(h))^2 \\
    &\qquad + 72 \epsilon_n^2 \Omega^h(h-h^\dagger)^2  + 91 \delta_n^2 + 91 \|h - h^\dagger\|_w^2 + \gamma_n^q \Big( \|\Pi_n q_0(h)\|_\Qcal^2 - \|q_n(h)\|_\Qcal^2 \Big) \,.
\end{align*}

Finally, applying \Cref{lem:quadratic-inequality} to the above, along with the fact that $\sqrt{x+y} \leq \sqrt{x} + \sqrt{y}$ for non-negative $x,y$, gives us our final required bound
\begin{align*}
    \|q_n(h)\|_{2,2} &\leq \Big(377 \Omega^q(q_n(h)) + 26 \Omega^q(\Pi_n q_0(h)) + 12 \Omega^h(h-h^\dagger) \Big) \epsilon_n \\
    &\qquad + 14 \delta_n + 14 \|h-h^\dagger\|_w + 2 (\gamma_n^q)^{1/2} \Omega^q(\Pi_n q_0(h)) \,.
\end{align*}

\end{proof}

\begin{proof}[Proof of \Cref{lem:primal-minimax-erm-bound}]

First, by the optimality of $\hat h_n$ for the empirical minimax objective over $h\in\Hcal_n$ and $\Pi_n h^\dagger\in\Hcal_n$, we have 
\begin{align*}
&\sup_{q \in \Qcal_n}\E_n\bracks{\phi(\hat h_n, q) + \mu_n \hat h_n(S)^2 - \gamma_n^q \|q\|_\Qcal^2 + \gamma_n^h \|\hat h_n\|_\Hcal^2} \\
&\le \sup_{q \in \Qcal_n}\E_n\bracks{\phi(\Pi_n h^\dagger, q) + \mu_n \Pi_n h^\dagger(S)^2 - \gamma_n^q \|q\|_\Qcal^2 + \gamma_n^h \|\Pi_n h^\dagger\|_\Hcal^2}.
\end{align*}

Therefore, we have
\begin{align*}
    &J(\hat h_n) - J(h^\dagger) \\
    &= \sup_{q \in \Qcal} \EE[\phi(\hat h_n,q)] - \sup_{q \in \Qcal} \EE[\phi(h^\dagger,q)] \\
    &\leq \EE[\phi(\hat h_n,q_0(\hat h_n))] - \EE[\phi(h^\dagger,q_n(\Pi_n h^\dagger))] \\
    &\leq \EE[\phi(\hat h_n,q_0(\hat h_n))] - \EE[\phi(h^\dagger,q_n(\Pi_n h^\dagger))]  \\
    &\qquad - \sup_{q \in \Qcal_n} \Big( \EE_n[\phi(\hat h_n,q)] + \mu_n \EE_n[\hat h_n(S)^2] - \gamma_n^q \|q\|_\Qcal^2 + \gamma_n^h \|\hat h_n\|_\Hcal^2 \Big) \\
    &\qquad + \sup_{q \in \Qcal_n} \Big( \EE_n[\phi(\Pi_n h^\dagger,q)] + \mu_n \EE_n[ (\Pi_n h^\dagger)(S)^2]  - \gamma_n^q \|q\|_\Qcal^2 + \gamma_n^h \|\Pi_n h^\dagger \|_\Hcal^2 \Big) \\
    &\leq \EE[\phi(\hat h_n,\Pi_n q_0(\hat h_n))] - \EE[\phi(h^\dagger,q_n(\Pi_n h^\dagger))]  \\
    &\qquad - \Big( \EE_n[\phi(\hat h_n, \Pi_n q_0(\hat h_n))] + \mu_n \EE_n[\hat h_n(S)^2] - \gamma_n^q \|\Pi_n q_0(\hat h_n)\|_\Qcal^2 + \gamma_n^h \|\hat h_n\|_\Hcal^2 \Big) \\
    &\qquad + \Big( \EE_n[\phi(h^\dagger,q_n(\Pi_n h^\dagger))] + \mu_n \EE_n[ (\Pi_n h^\dagger)(S)^2]  - \gamma_n^q \|q_n(\Pi_n h^\dagger)\|_\Qcal^2 + \gamma_n^h \|\Pi_n h^\dagger \|_\Hcal^2 \Big) + \Ecal_1 + \Ecal_2 \\
    &= (\EE - \EE_n)\Big[\phi(\hat h_n,\Pi_n q_0(\hat h_n)) - \phi(h^\dagger,q_n(\Pi_n h^\dagger)\Big] + \mu_n \EE_n[(\Pi_n h^\dagger)(S)^2 - \hat h(S)^2] \\
    &\qquad + 2 \gamma_n^q \Omega^q(\Pi_n q_0(\hat h_n))^2 + 2 M^2 \gamma_n^h - \gamma_n^q \Omega^q(q_n(\Pi_n h^\dagger))^2 - \gamma^h_n \Omega^h(\hat h_n)^2 + \Ecal_1 + \Ecal_2 \,,
\end{align*}
where the first inequality above follows from the facts that $\sup_{q \in \Qcal} \EE[\phi(h^\dagger,q)] \geq \EE[\phi(h^\dagger,q_n(\Pi_n h^\dagger))]$, the second inequality follows from the optimality of $\hat h_n$ described above,  the third inequality follows from $-\sup_{q \in \Qcal_n} \EE_n[\phi(\hat h_n,q)] \le \EE_n[\phi(\hat h_n,\Pi_n q_0(\hat h_n))]$, $\sup_{q \in \Qcal_n} \EE_n[\phi(h^\dagger,q)] =  \EE_n[\phi(h^\dagger,q_n(h^\dagger))]$, and where 
\begin{align*}
    \Ecal_1 &= \EE\Big[\phi(\hat h_n, q_0(\hat h_n)) - \phi(\hat h_n, \Pi_n q_0(\hat h_n)) \Big] \\
    \Ecal_2 &= \EE_n\Big[\phi(\Pi_n h^\dagger, q_n(\Pi_n h^\dagger)) \Big] - \EE_n\Big[\phi(h^\dagger, q_n(\Pi_n h^\dagger)) \Big] \,.
\end{align*}

Now, let us bound the error terms $\Ecal_1$ and $\Ecal_2$. For the first, using the shorthand $\Delta = q_0(\hat h_n) - \Pi_n q_0(\hat h_n)$, we have
\begin{align*}
    \Ecal_1 &= \EE\Big[ G(W;\hat h_n, \Delta) - r(W;\Delta) - \frac{1}{2} q_0(\hat h_n)(T)^\top q_0(\hat h_n)(T) + \frac{1}{2} \Pi_n q_0(\hat h_n)(T)^\top \Pi_n q_0(\hat h_n)(T) \Big] \\
    &= \EE\Big[ G(W;\hat h_n - h^\dagger, \Delta) - \frac{1}{2} q_0(\hat h_n)(T)^\top q_0(\hat h_n)(T) + \frac{1}{2} \Pi_n q_0(\hat h_n)(T)^\top \Pi_n q_0(\hat h_n)(T) \Big] \\
    &= \EE\Big[ (\hat h_n - h^\dagger)(S)^\top k(S,T) \Delta(T) + \frac{1}{2} \Delta(T)^\top \Delta(T) - q_0(\hat h_n)^\top \Delta(T) \Big] \\
    &\leq 2 \|\hat h_n - h^\dagger\|_w \delta_n + \frac{1}{2} \delta_n^2 \,,
\end{align*}
where above we apply the fact that $\|q_0(\hat h_n)\|_2 = \|\Pi_\Qcal (\hat h_n - h^\dagger)(S)^\top k(S,T)\|_2 = \|\hat h_n - h^\dagger\|_w$, and the fact that $\|\Delta\|_2 \leq \delta_n$ by \Cref{assum:universal-approximation-h-general}, along with Cauchy Schwartz.

Next, for the second term above, we can bound
\begin{align*}
    \Ecal_2  &= (\EE_n - \EE)\Big[ G(W;\Pi_n h^\dagger - h^\dagger, q_n(\Pi_n h^\dagger)) \Big] + \EE\Big[ G(W; \Pi_n h^\dagger - h^\dagger, q_n(\Pi_n h^\dagger)) \Big]\\
    &\leq 36 \epsilon_n \|q_n(\Pi_n h^\dagger)\|_{2,2} + 18 \Omega^h(\Pi_n h^\dagger - h^\dagger)^2 \epsilon_n^2 + 18 \Omega^q(q_n(\Pi_n h^\dagger))^2 \epsilon_n^2 \\
    &\qquad + \EE\Big[ G(W; \Pi_n h^\dagger - h^\dagger, q_n(\Pi_n h^\dagger)) \Big] \\
    &\leq (\delta_n + 36 \epsilon_n) \|q_n(\Pi_n h^\dagger)\|_{2,2} + 36 \Omega^q(q_n(\Pi_n h^\dagger))^2 \epsilon_n^2 \,,
\end{align*}
where the first inequality follows by relaxing the sup in the negative term, the second inequality follows from the assumed high-probability event of \Cref{lem:primal-high-prob}, and the final inequality follows from our Lipschitz assumptions on $G$ and our universal approximation assumption, along with the fact that $\Omega^h(\Pi_n h^\dagger - h^\dagger) \leq M \leq \Omega^q(q_n(\Pi_n h^\dagger))$, and
\begin{equation*}
    |\EE[G(W;h,q)]| = |\EE[(P h)(T)^\top q(T)]| \leq \|h\|_w \|q\|_{2,2} \,.
\end{equation*}
Furthermore, by \Cref{lem:primal-interior-maximization}, under the same high-probability event, we have
\begin{align*}
    \|q_n(\Pi_n h^\dagger)\|_{2,2} &\leq \Big(389 \Omega^q(q_n(\Pi_n h^\dagger)) + 26 \Omega^q(\Pi_n q_0(\Pi_n h^\dagger)) \Big) \epsilon_n + 28 \delta_n + 2 (\gamma_n^q)^{1/2} \Omega^q(\Pi_n q_0(\Pi_n h^\dagger)) \,.
\end{align*}
Therefore, applying the AM-GM inequality to the above, we have
\begin{align*}
    \Ecal_2 &\leq \Big( 14775 \Omega^q(q_n(\Pi_n h^\dagger))^2 + 949 \Omega^q(\Pi_n q_0(\Pi_n h^\dagger)) \Big) \epsilon_n^2 + 741 \delta_n^2 + 37 \gamma_n^q \Omega^q(\Pi_n q_0(\Pi_n h^\dagger))^2
\end{align*}

Putting all of the above together, we get
\begin{align*}
    J(\hat h_n) - J(h^\dagger) &\leq (\EE - \EE_n)\Big[\phi(\hat h_n,\Pi_n q_0(\hat h_n)) - \phi(h^\dagger,q_n(\Pi_n h^\dagger)\Big] + 2 \delta_n \|\hat h_n - h^\dagger\|_w \\
    &\qquad + \Big( 14775 \Omega^q(q_n(\Pi_n h^\dagger))^2 + 949 \Omega^q(\Pi_n q_0(\Pi_n h^\dagger)) \Big) \epsilon_n^2 + 741 \delta_n^2 \\
    &\qquad + \Big( 37 \Omega^q(\Pi_n q_0(\Pi_n h^\dagger))^2 + 2 \Omega^q(\Pi_n q_0(\hat h_n))^2 \Big) \gamma_n^q  + 2 M^2 \gamma_n^h \\
    &\qquad + \mu_n \EE_n[(\Pi_n h^\dagger)(S)^2 - \hat h(S)^2] - \gamma_n^q \Omega^q(q_n(\Pi_n h^\dagger))^2 - \gamma^h_n \Omega^h(\hat h_n)^2
\end{align*}
which is our required bound

\end{proof}

\begin{proof}[Proof of \Cref{lem:primal-stochastic-equicontinuity-bound}]

First, as argued in the proof overview, and plugging in $q_0(h^\dagger) = 0$, we have
\begin{equation*}
    \phi(\hat h_n, \Pi_n q_0(\hat h_n)) - \phi(h^\dagger, q_n(\Pi_n h^\dagger) = \Phi_n\Big(\Pi_n q_0(\hat h_n)\Big) - \Phi_n\Big(q_n(\Pi_n h^\dagger)\Big) + G\Big(W; \hat h_n - h^\dagger, \Pi_n q_0(\hat h_n)\Big) \,,
\end{equation*}
where $\Phi_n$ is defined as in the statement of \Cref{lem:primal-high-prob}. Then, applying \Cref{lem:primal-high-prob}, under its high probability event we have
\begin{align*}
    &\Big| (\EE_n - \EE) \Big[ \phi(\hat h_n, \Pi_n q_0(\hat h_n)) - \phi(h^\dagger, q_n(\Pi_n h^\dagger) \Big] \Big| \\
    &\leq \Big| (\EE_n - \EE) \Big[\Phi_n\Big(\Pi_n q_0(\hat h_n)\Big) - \Phi_n\Big(q_n(\Pi_n h^\dagger)\Big)\Big]\Big| + \Big| (\EE_n - \EE)\Big[G\Big(W; \hat h_n - h^\dagger, \Pi_n q_0(\hat h_n)\Big)\Big] \Big|\\ 
    &\leq 54 \epsilon_n \Big( \Omega^q(\Pi_n q_0(\hat h_n)) \|\Pi_n q_0(\hat h_n) \|_{2,2} + \Omega^q(q_n(\Pi_n h^\dagger)) \|q_n(\Pi_n h^\dagger)\|_{2,2} \Big) + 36 \epsilon_n \|\Pi_n q_0(\hat h_n)\|_{2,2} \\
    &\qquad + \Big( 54 \Omega^q(\Pi_n q_0(\hat h_n))^2 + 54 \Omega^q(q_n(\Pi_n h^\dagger))^2 + 18 \Omega^h(\hat h_n - h^\dagger)^2 \epsilon_n^2 + 18 \Omega^q(\Pi_n q_0(\hat h_n))^2 \epsilon_n^2 \\
    &\leq 90 \epsilon_n \Omega^q(\Pi_n q_0(\hat h_n)) \|\Pi_n q_0(\hat h_n) \|_{2,2} + 54 \epsilon_n \Omega^q(q_n(\Pi_n h^\dagger)) \|q_n(\Pi_n h^\dagger)\|_{2,2}  \\
    &\qquad + \Big( 72 \Omega^q(\Pi_n q_0(\hat h_n))^2 + 54 \Omega^q(q_n(\Pi_n h^\dagger))^2 + 18 \Omega^h(\hat h_n - h^\dagger)^2 \Big) \epsilon_n^2
\end{align*}

Next, by \Cref{assum:universal-approximation-h-general}, and the definition of $q_0(h)$, we have
\begin{align*}
    \Big\|\Pi_n q_0(\hat h_n)\Big\|_{2,2} &\leq \|q_0(\hat h_n)\|_{2,2} + \Big\|\Pi_n q_0(\hat h_n) - q_0(\hat h_n)\Big\|_{2,2} \\
    &\leq \|\hat h_n - h^\dagger\|_w + \delta_n \,.
\end{align*}
In addition, by \Cref{lem:primal-interior-maximization}, under the same high-probability event as above we have
\begin{align*}
    \|q_n(\Pi_n h^\dagger)\|_{2,2} &\leq \Big(389 \Omega^q(q_n(\Pi_n h^\dagger)) + 26 \Omega^q(\Pi_n q_0(\Pi_n h^\dagger)) \Big) \epsilon_n + 28 \delta_n + 2 (\gamma_n^q)^{1/2} \Omega^q(\Pi_n q_0(\Pi_n h^\dagger)) \,.
\end{align*}

Therefore, combining these bounds, under the previous high probability event, which occurs with probability at least $1-3\zeta$, we have
\begin{align*}
    &\Big| (\EE_n - \EE) \Big[\phi(\hat h_n, \Pi_n q_0(\hat h_n)) - \phi(h^\dagger, q_n(\Pi_n h^\dagger)\Big] \Big| \\
    &\leq 90 \epsilon_n \Omega^q(\Pi_n q_0(\hat h_n)) \|\hat h_n - h^\dagger\|_w + 801 \delta_n^2 + 54 \gamma_n^q \Omega^q(\Pi_n q_0(\Pi_n h^\dagger))^2 \\
    &\qquad + \Big( 22572 \Omega^q(q_n(\Pi_n h^\dagger))^2 + 702 \Omega^q(\Pi_n q_0(\Pi_n h^\dagger))^2 + 117 \Omega^q(\Pi_n q_0(\hat h_n))^2 + 18 \Omega^h(\hat h_n - h^\dagger)^2 \Big) \epsilon_n^2
\end{align*}
which is our promised bound.

\end{proof}

\begin{proof}[Proof of \Cref{lem:h-norm-convergence}]

We first note that
\begin{align*}
    \|\hat h_n\|_{2,2}^2 - \|h^\dagger\|_{2,2}^2 &= (\EE - \EE_n)\Big[ \hat h_n(S)^\top \hat h_n(S) - h^\dagger(S)^\top h^\dagger(S) \Big] \\
    &\qquad + \Big( \|\hat h_n\|_{n,2}^2 - \|\Pi_n h^\dagger\|_{n,2}^2 \Big) + \Big( \|\Pi_n h^\dagger\|_{2,2}^2 - \|h^\dagger\|_{2,2}^2 \Big) \,.
\end{align*}
Now, as already argued in the main proof text, we have 
\begin{equation*}
    \|\hat h_n\|_{n,2}^2 - \|\Pi_n h^\dagger\|_{n,2}^2 \leq o_p(1) \,,
\end{equation*}
and we also have
\begin{align*}
    \|\Pi_n h^\dagger\|_{2,2}^2 - \|h^\dagger\|_{2,2}^2 &= \Big( \|\Pi_n h^\dagger\|_{2,2} + \|h^\dagger\|_{2,2} \Big) \Big( \|\Pi_n h^\dagger\|_{2,2} - \|h^\dagger\|_{2,2} \Big) \\
    &\leq 2 \Big| \|\Pi_n h^\dagger\|_{2,2} - \|h^\dagger\|_{2,2} \Big| \\
    &\leq 2 \|\Pi_n h^\dagger - h^\dagger\|_{2,2} \\
    &\leq 2 \delta_n \,,
\end{align*}
so therefore
\begin{align*}
    \|\hat h_n\|_{2,2}^2 - \|h^\dagger\|_{2,2}^2 = (\EE - \EE_n)\Big[ (\hat h_n - h^\dagger)(S)^\top (\hat h_n - h^\dagger)(S) + 2 h^\dagger(S)^\top (\hat h_n - h^\dagger)(S) \Big] + o_p(1) \,.
\end{align*}
Therefore, we only need to bound the stochastic equicontinuity term above.

Next, following identical arguments as in the proof of \Cref{lem:primal-high-prob}, under our additional critical radius assumption on $\Hcal_n$, we can bound
\begin{align*}
    &\Big| (\EE_n - \EE)\Big[ (\hat h_n - h^\dagger)(S)^\top (\hat h_n - h^\dagger)(S) + 2 h^\dagger(S)^\top (\hat h_n - h^\dagger)(S) \Big] \Big| \\
    &\preccurlyeq \Omega^h(\hat h_n - h^\dagger) \|\hat h_n - h^\dagger\|_{2,2} \tilde\epsilon_n + \Omega^h(\hat h_n - h^\dagger)^2 \tilde\epsilon_n^2 \\
    &\preccurlyeq \Omega^h(\hat h_n) \tilde\epsilon_n + \Omega^h(\hat h_n)^2 \tilde\epsilon_n^2 \,,
\end{align*}
which holds with probability at least $1-2\zeta_n$. Therefore, since $\zeta_n \to 0$, we have our required result as long as $\Omega^h(\hat h_n) \tilde\epsilon_n = o_p(1)$.

Now, under our first set of assumptions, we have $\Omega^h(\hat h_n - h^\dagger) \leq M$, so therefore
\begin{align*}
    \Omega^h(\hat h_n - h^\dagger) \tilde\epsilon_n &\leq  M \tilde\epsilon_n \\
    &= o(1) \,.
\end{align*}
Alternatively, under our second set of assumptions, it follows from \Cref{eq:full-bound-2} that
\begin{equation*}
    \Omega^h(\hat h_n) \preccurlyeq \frac{\delta_n}{(\gamma_n^h)^{1/2}} + L M \frac{(\gamma_n^q)^{1/2}}{(\gamma_n^h)^{1/2}} + M + \frac{\mu_n^{1/2}}{(\gamma_n^h)^{1/2}} \,,
\end{equation*}
which holds with probability at least $1-3\zeta_n$, and furthermore applying our assumptions on the minimum size of $\gamma_n^q$ and $\gamma_n^h$, this can be simplified to
\begin{equation*}
    \Omega^h(\hat h_n) \preccurlyeq M + \tilde\epsilon_n^{-1} (\delta_n + \mu_n^{1/2}) \,.
\end{equation*}
Therefore, we have
\begin{align*}
    \Omega^h(\hat h_n) \tilde\epsilon_n &\leq M \tilde \epsilon_n + \delta_n + \mu_n \\
    &= o(1) \,,
\end{align*}
which holds with probability at least $1-3\zeta_n$. Since $\zeta_n \to 0$, this then gives us $\Omega^h(\hat h_n) \tilde\epsilon_n = o_p(1)$.
Therefore, under either set of assumptions we have $\Omega^h(\hat h_n) \tilde\epsilon_n = o_p(1)$, so we are done.

\end{proof}

\begin{proof}[Proof of \Cref{lem:p-convergence}]

First, let
\begin{equation*}
    D(h) = \inf_{h_0 \in \Hcal_0} \|h - h_0\|_{2, 2} \,.
\end{equation*}
In addition, for any $\eta > 0$, let $U_\eta$ be defined such that $\PP(\|\hat h_n\|_\Hcal > U_\eta) \leq \eta$; note that as argued in the proof of \Cref{lem:h-norm-convergence} $\|\hat h_n\|_\Hcal$ is stochastically bounded, so therefore such a $U_\eta < \infty$ exists for every $\eta>0$.

Now, given $\eta > 0$ and $\epsilon > 0$, we have
\begin{align*}
    \PP\Big( D(\hat h_n) \geq \epsilon \Big) &\leq \PP\Big(D(\hat h_n) \geq \epsilon, \|\hat h_n\| \leq U_\eta \Big) + \eta\\
    &\leq \PP\Big( \|\hat h_n - h^\dagger\|_w \geq M(\epsilon,\eta) \Big) + \eta \,,
\end{align*}
where
\begin{equation*}
    M(\epsilon,\eta) = \inf_{h \in \Hcal_n : \|h\|_\Hcal \leq U_\eta, D(h) \geq \epsilon} \|h - h^\dagger\|_w \,.
\end{equation*}
Note that this holds since $D(\hat h_n) \geq \epsilon$ and $\|\hat h_n\| \leq U_\eta$ together imply that $\hat h_n \in \{h \in \Hcal_n : \|h\|_\Hcal \leq U_\eta, D(h) \geq \epsilon\}$, and therefore $\|\hat h_n - h^\dagger\|_w \geq M(\epsilon,\eta)$. 

Now, let $\bar\Hcal_U = \{h \in \bar\Hcal : \|h\|_\Hcal \leq U\}$, and $\Hcal_{U,\epsilon} = \{h \in \bar\Hcal : \|h\|_\Hcal \leq U, D(h) \geq \epsilon\}$. It is trivial to see that $\bar\Hcal_{U,\epsilon}$ is totally bounded, since it is a subset of $\bar\Hcal_U$, which we have assumed to be compact for every $U < \infty$. Furthermore, suppose that $h'_n$ is a sequence in $\{h \in \Hcal : D(h) \geq \epsilon\}$, and $\|\cdot\|_\Hcal$ limit $h'$. Then, we have
\begin{align*}
    D(h') &\geq \inf_{h_0 \in \Hcal_0} \|h' - h_0\|_{2,2} \\
    &\geq \inf_{h_0 \in \Hcal_0} \|h'_n - h_0\|_{2,2} - \|h'_n - h'\|_{2,2} \\
    &\geq \inf_{h_0 \in \Hcal_0} \|h'_n - h_0\|_{2,2} - \|h'_n - h'\|_\Hcal \\
    &\geq \epsilon - o(1) \,,
\end{align*}
since $D(h'_n) \geq \epsilon$ and $\|\cdot\|_\Hcal$ dominates the $L_2$ norm. Therefore, we must have $D(h') \geq \epsilon$, so $\{h \in \Hcal : D(h) \geq \epsilon\}$ is closed under $\|\cdot\|_\Hcal$. Therefore, $\Hcal_{U,\epsilon}$ is the intersection of two closed sets, and is closed.

Putting together the previous arguments, we have that $\bar\Hcal_{U,\epsilon}$ is compact under $\|\cdot\|_\Hcal$. In addition $h \mapsto \|h - h^\dagger\|_w$ is continuous under $\|\cdot\|_\Hcal$, since $\|\cdot\|_\Hcal$ dominates the $L_2$ norm, which in turn dominates the projected norm, and so it follows from the extreme value theorem that
\begin{align*}
    M(\epsilon,\eta) &\geq \inf_{h \in \bar\Hcal_{U,\epsilon}} \|h - h^\dagger\|_w \\
    &= \|h_{U,\epsilon} - h^\dagger\|_w \,,
\end{align*}
for some $h_{U,\epsilon} \in \bar\Hcal_{U,\epsilon}$. But clearly $\|h_{U,\epsilon} - h^\dagger\|_w > 0$, since $D(h_{U,\epsilon}) \geq \epsilon$ implies that $h_{U,\epsilon} \neq \Hcal_0$, which implies that $M(\epsilon,\eta) > 0$. Therefore, since we have already argued from the main proof text that $\|\hat h_n - h^\dagger\|_w = o_p(1)$, we have that for every $\epsilon > 0$ and $\eta>0$ that
\begin{equation*}
    \PP\Big( D(\hat h_n) \geq \epsilon\Big) \leq o(1) + \eta \,.
\end{equation*}
Now, consider some fixed $\epsilon>0$. Since the above convergence holds for every $\eta>0$, we must have $\PP(D(\hat h_n) \geq \epsilon) \to 0$.
Furthermore, since this logic holds for every $\epsilon>0$, we have $D(\hat h_n) = \inf_{h_0 \in \Hcal_0} \|\hat h_n - h_0\|_{2,2} = o_p(1)$. That is, we can define some sequence $h_{n,0} \in \Hcal_0$ such that
\begin{equation*}
    \|\hat h_n - h_{n,0}\|_{2,2} = o_p(1) \,.
\end{equation*}

Next, since $\Hcal_0$ is an affine set given by $h^\dagger$ plus the null-space of $P$, it easily follows from the orthogonality that
\begin{equation*}
    \|h_0\|_{2,2}^2 = \|h^\dagger\|_{2,2}^2 + \|h_0 - h^\dagger\|_{2,2}^2 \,,
\end{equation*}
for every $h_0 \in \Hcal_0$. Therefore, we can bound
\begin{align*}
    \|\hat h_n - h^\dagger\|_{2,2} &\leq  \|\hat h_n - h_{n,0}\|_{2,2} + \|h_{n,0} - h^\dagger\|_{2,2} \\
    &=  \|\hat h_n - h_{n,0}\|_{2,2} + \sqrt{\|h_{n,0}\|_{2,2}^2 - \|h^\dagger\|_{2,2}^2}\\
    &=  \|\hat h_n - h_{n,0}\|_{2,2} + \sqrt{(\|h_{n,0}\|_{2,2} + \|h^\dagger\|_{2,2}) (\|h_{n,0}\|_{2,2} - \|h^\dagger\|_{2,2})}\\
    &\leq  \|\hat h_n - h_{n,0}\|_{2,2} + \sqrt{2 \Big(\|h_{n,0}\|_{2,2} - \|h^\dagger\|_{2,2}\Big)}\\
    &\leq  \|\hat h_n - h_{n,0}\|_{2,2} + \sqrt{2 \Big(\|\hat h_n -h_{n,0}\|_{2,2} + \|\hat h_n\|_{2,2} - \|h^\dagger\|_{2,2}\Big)}\\
\end{align*}
Now, we have already argued that $\|\hat h_n - h_{n,0}\|_{2,2} = o_p(1)$. Furthermore, we have from \Cref{lem:h-norm-convergence} that $\|\hat h_n\|_{2,2}^2 \leq \|h^\dagger\|_{2,2}^2 + o_p(1)$, and therefore $\|\hat h_n\|_{2,2} - \|h^\dagger\|_{2,2} \leq o_p(1)$. Putting these bounds together gives
\begin{align*}
    \|\hat h_n - h^\dagger\|_{2,2} = o_p(1) \,,
\end{align*}
which is our required result.

\end{proof}

\subsection{Proof of \Cref{thm:q-estimator-bound-general}}

Here, we consider the dual estimator for $q_0$, given by
\begin{equation*}
    \hat q_n = \argmax_{q \in \widetilde\Qcal_n} \EE_n[\psi(\hat \xi_n,q)] - \tilde\gamma_n^q \|q\|_{\widetilde\Qcal}^2 \,,
\end{equation*}
where
\begin{equation*}
    \hat \xi_n = \argmin_{\xi \in \Xi_n} \sup_{q \in \Qcal_n} \EE_n[\psi(\xi,q)] - \gamma_n^q \|q\|_\Qcal^2 + \gamma_n^\xi \|\xi\|_\Xi^2 \,,
\end{equation*}
and
\begin{equation*}
    \psi(\xi,q) = G(W;\xi,q) - m(W;\xi) - \frac{1}{2} q(T)^\top q(T) \,.
\end{equation*}
That is, $\hat \xi_n$ is the corresponding saddle-point solution for an empirical minimax problem. The proof that follows is very similar to that for \Cref{thm:h-estimator-bound-general}, with some important differences that we will emphasize below. Again, we provide a high-level overview of the proof first, and relegate proofs of intermediate lemmas to the end.

Before we proceed, we will define some simple notation. First, analogous to the proof of \Cref{thm:h-estimator-bound-general}, for any $\xi \in \Hcal$ we define
\begin{align*}
    q_0(\xi) &= \argmax_{q \in \Qcal} \EE[\psi(\xi,q)] \\
    q_n(\xi) &= \argmax_{q \in \Qcal_n} \EE_n[\psi(\xi,q)] - \gamma_n^q \|q\|_\Qcal^2 \\
    \tilde q_n(\xi) &= \argmax_{q \in \widetilde\Qcal_n} \EE_n[\psi(\xi,q)] - \tilde\gamma_n^q \|q\|_{\widetilde\Qcal}^2 \,.
\end{align*}
Note that we have separate definitions $q_n$ and $\tilde q_n$ based on $\Qcal_n$ and $\widetilde \Qcal_n$ respectively.
By some simple algebra and calculus, it is easy to verify that
\begin{equation*}
    q_0(\xi) = \Pi_\Qcal[ k(S,T)^\top \xi(S) \mid T ] \,.
\end{equation*}

Also, for any $q \in \Qcal_n$, $\tilde q \in \widetilde\Qcal_n$, or $\xi \in \Xi_n$, we define
\begin{align*}
    \Omega^q(q) &= \max(\|q\|_\Qcal, M) \\
    \widetilde\Omega^q(\tilde q) &= \max(\|\tilde q\|_{\widetilde\Qcal}, M) \\
    \Omega^\xi(\xi) &= \max(\|\xi\|_\Xi, M) \,.
\end{align*}
Finally, in the theory below we will let $\xi_0$ be defined as in \Cref{assum:complexity-q-general}.

\subsubsection*{Feasibility of Population Saddle Point Solution}

First, we provide a lemma that justifies that under \Cref{assump: new-nuisance}, the population saddle-point solution will be equal to the particular $q^\dagger \in \Qcal_0$ defined above.

\begin{lemma}
\label{lem:dual-feasibility}

Let \Cref{assump: new-nuisance} be given. Then, we have
\begin{equation*}
    \Xi_0 = \argmin_{h \in \Hcal} \sup_{q \in \Qcal} \EE[\psi(h,q)] \,.
\end{equation*}
Furthermore, for any given $\xi_0' \in \Xi_0$, we have that $\argmax_{q \in \Qcal}\EE[\psi(\xi_0',q)] = \{q^\dagger\}$. 

\end{lemma}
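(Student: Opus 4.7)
The plan is to explicitly evaluate the inner supremum $\sup_{q \in \Qcal}\EE[\psi(h,q)]$ for an arbitrary $h \in \Hcal$, which will reduce the saddle-point problem to the primal variational characterization of $\Xi_0$, and then to separately argue uniqueness of the inner maximizer.

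First, for any fixed $h \in \Hcal$, I will expand
\begin{equation*}
\sup_{q \in \Qcal}\EE[\psi(h,q)] = -\EE[m(W;h)] + \sup_{q \in \Qcal}\Big(\EE[G(W;h,q)] - \tfrac{1}{2}\|q\|_{2,2}^2\Big).
\end{equation*}
Using the bi-linear Riesz representation assumed in \Cref{sec:generalized-framework}, $\EE[G(W;h,q)] = \EE[h(S)^\top k(S,T) q(T)]$, and since $q \in \Qcal$, this further equals $\langle Ph, q\rangle$ by the definition of $P$ in \Cref{eq: operator-general}. Completing the square in the Hilbert space $\Qcal$ gives that the maximizer over $L_2(T)^{d_q}$ would be $Ph$; since $Ph \in \Qcal$ by construction (as a projection onto $\Qcal$), it is in fact the maximizer over $\Qcal$, and the supremum value equals $\tfrac{1}{2}\|Ph\|_{2,2}^2$. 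Therefore
\begin{equation*}
\sup_{q \in \Qcal}\EE[\psi(h,q)] = \tfrac{1}{2}\|Ph\|_{2,2}^2 - \EE[m(W;h)].
\end{equation*}
Taking $\argmin_{h \in \Hcal}$ of both sides and comparing with the (generalized) definition of $\Xi_0$ in \Cref{eq: delta-eq-2} yields the first claim $\Xi_0 = \argmin_{h \in \Hcal}\sup_{q\in\Qcal}\EE[\psi(h,q)]$.

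For the second claim, I fix any $\xi_0' \in \Xi_0$ and use strict concavity of $q \mapsto \langle P\xi_0', q\rangle - \tfrac{1}{2}\|q\|_{2,2}^2$ on the Hilbert space $\Qcal$ to conclude that the inner maximizer is unique. The computation above already identified it as $P\xi_0'$. Finally, by \Cref{lemma: minimum-norm}, $P\xi_0'$ equals the minimum-norm element of $\Qcal_0$ for every $\xi_0' \in \Xi_0$, and this element is by definition $q^\dagger$. Hence $\argmax_{q \in \Qcal}\EE[\psi(\xi_0',q)] = \{q^\dagger\}$.

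No step presents a real obstacle; the only subtlety worth flagging is the need to justify that the unconstrained $L_2$ maximizer $Ph$ actually lies in $\Qcal$, which is immediate from defining $P$ as the projection onto $\Qcal$ rather than an arbitrary conditional expectation. I also rely on \Cref{lemma: minimum-norm} (which was established in the linear setting but goes through identically in the generalized framework once $P$ is interpreted via \Cref{eq: operator-general}) to ensure that $P\xi_0'$ does not depend on the particular choice of $\xi_0' \in \Xi_0$.
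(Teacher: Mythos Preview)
Your proof is correct and follows essentially the same approach as the paper: both explicitly compute the inner supremum to obtain $\sup_{q\in\Qcal}\EE[\psi(h,q)] = \tfrac{1}{2}\|Ph\|_{2,2}^2 - \EE[m(W;h)]$, and both identify the inner maximizer as $P\xi_0'$. Your argument for the first claim is in fact more direct than the paper's---once $J(h)$ is computed explicitly, the identity $\argmin_{h}J(h)=\Xi_0$ is immediate from the definition of $\Xi_0$, whereas the paper additionally runs a (redundant) convexity argument via $g(t)=J(\xi_0'+t(h-\xi_0'))$. For the second claim, you invoke \Cref{lemma: minimum-norm} directly, while the paper re-derives its content inline from the first-order optimality condition of $\xi_0'$; these are equivalent.
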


This implies that, given \Cref{assump: new-nuisance}, \emph{any} overall saddle-point solution to the population minimax problem recovers $q^\dagger$.

\subsubsection*{High-Probability Bound}

Next, similar to the proof of \Cref{thm:h-estimator-bound-general}, we will provide some high-probability bounds, which will be used extensively to prove the remaining sub-lemmas.

\begin{lemma}
\label{lem:dual-high-prob}

    Let $\epsilon_n$ be defined again as in \Cref{lem:primal-high-prob}, and let
    \begin{align*}
        \Psi_n(q-q^\dagger) &= G(W;\xi_0, q-q^\dagger) - \frac{1}{2} (q-q^\dagger)(T)^\top (q-q^\dagger)(T) - q^\dagger(T)^\top (q-q^\dagger)(T) \\
        \widetilde\Psi_n(\xi-\xi_0) &= G(W;\xi-\xi_0, q^\dagger) - m(W;\xi-\xi_0) \,,
    \end{align*}
    for any $q \in \Qcal_n \cup \widetilde\Qcal_n$ and $\xi \in \Xi_n$. Then, with probability at least $1-7\zeta$, we have
    \begin{equation*}
        \Big| (\EE_n - \EE) \Big[ \Psi_n(q-q^\dagger) \Big] \Big| \leq 108 \Omega^q(q) \epsilon_n \|q-q^\dagger\|_{2,2} + 216 \Omega^q(q)^2 \epsilon_n^2
    \end{equation*}
    for all $q \in \Qcal_n$, 
    \begin{equation*}
        \Big| (\EE_n - \EE) \Big[ \Psi_n(q-q^\dagger) \Big] \Big| \leq 108 \widetilde\Omega^q(q) \epsilon_n \|q-q^\dagger\|_{2,2} + 216 \widetilde\Omega^q(q)^2 \epsilon_n^2
    \end{equation*}
    for all $q \in \widetilde\Qcal_n$,
    \begin{equation*}
        \Big| (\EE_n - \EE) \Big[ \widetilde\Psi_n(\xi-\xi_0) \Big] \Big| \leq 36 \epsilon_n \|\xi-\xi_0\|_{2,2} + 72 \Omega^\xi(\xi) \epsilon_n^2
    \end{equation*}
    for all $\xi \in \Xi_n$, 
    \begin{equation*}
        \Big| (\EE_n - \EE) \Big[ G(W;\xi - \xi_0, q - q^\dagger) \Big] \Big| \leq 36 \epsilon_n \|q-q^\dagger\|_{2,2} + 144 \Omega^q(q) \Omega^\xi(\xi) \epsilon_n^2
    \end{equation*}
    for every $\xi \in \Xi_n$ and $q \in \Qcal_n$, and
    \begin{equation*}
        \Big| (\EE_n - \EE) \Big[ G(W;\xi - \xi_0, q - q^\dagger) \Big] \Big| \leq 36 \epsilon_n \|q-q^\dagger\|_{2,2} + 144 \widetilde\Omega^q(q) \Omega^\xi(\xi) \epsilon_n^2
    \end{equation*}
    for every $\xi \in \Xi_n$ and $q \in \widetilde\Qcal_n$.

\end{lemma}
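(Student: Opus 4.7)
The proof will closely mirror that of \Cref{lem:primal-high-prob}. Five bounds must be established, and in each case the strategy is to normalize the relevant perturbation so that it lies inside one of the unit-norm-bounded classes whose critical radii are controlled by \Cref{assum:complexity-q-general}, invoke \Cref{lem:support2} (the restatement of \citet[Lemma 11]{foster2019orthogonal}) to obtain a high-probability concentration bound in terms of $\epsilon_n$, and then rescale to recover the stated dependence on $\Omega^q(q)$, $\widetilde\Omega^q(q)$, and $\Omega^\xi(\xi)$. A union bound over seven auxiliary events will then yield the overall $1-7\zeta$ success probability.

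For the first two bounds on $\Psi_n(q-q^\dagger)$, we decompose
\begin{align*}
\Psi_n(q-q^\dagger) = \bigl[G(W;\xi_0, q-q^\dagger) - q^\dagger(T)^\top (q-q^\dagger)(T)\bigr] - \tfrac{1}{2}(q-q^\dagger)(T)^\top (q-q^\dagger)(T)
\end{align*}
into a linear piece and a quadratic piece in $q-q^\dagger$. Each piece is uniformly bounded by an absolute constant using $\|q\|_\infty,\|q^\dagger\|_\infty,\|\xi_0\|_\infty \leq 1$ from \Cref{assum:bounded-estimation-q-general} together with \Cref{assum:boundedness-general}, and each is $L_2$-Lipschitz in $q-q^\dagger$ with an absolute constant by the bilinearity of $G$ and \Cref{assum:boundedness-general}(1). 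Since $\|q^\dagger\|_\Qcal \leq M \leq \Omega^q(q)$, the perturbation $(q-q^\dagger)/(2\Omega^q(q))$ belongs to the unit-$\Qcal$-norm class in \Cref{assum:complexity-q-general}(2); applying \Cref{lem:support2} to the linear piece and rescaling by $\Omega^q(q)$, and separately to the quadratic piece and rescaling by $\Omega^q(q)^2$, gives the first bound. The $\widetilde\Qcal_n$ bound is proved identically after swapping class (2) for class (3) of \Cref{assum:complexity-q-general} and $\Omega^q$ for $\widetilde\Omega^q$.

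The third bound on $\widetilde\Psi_n(\xi-\xi_0) = G(W;\xi-\xi_0, q^\dagger) - m(W;\xi-\xi_0)$ is handled analogously: each summand is linear and $L_2$-Lipschitz in $\xi-\xi_0$ (via \Cref{assum:boundedness-q-general} for $m$ and the bilinearity of $G$ with $\|q^\dagger\|_\infty \leq 1$ for the $G$-term), normalizing by $\Omega^\xi(\xi)$ places $(\xi-\xi_0)/(2\Omega^\xi(\xi))$ in the class from \Cref{assum:complexity-q-general}(4), and invoking \Cref{lem:support2} plus rescaling yields the stated bound. For the last two bilinear bounds on $G(W;\xi-\xi_0, q-q^\dagger)$, we rescale jointly by $\Omega^\xi(\xi)\,\Omega^q(q)$ (respectively $\Omega^\xi(\xi)\,\widetilde\Omega^q(q)$) so that the product lies in the class from \Cref{assum:complexity-q-general}(5) (respectively (1)); the Lipschitz constant in $q-q^\dagger$ under $L_2$ is $1$ via \Cref{assum:boundedness-general}(1). \Cref{lem:support2} followed by rescaling and an AM-GM split to absorb cross-terms into the two claimed forms completes the bound.

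The principal obstacle will be combinatorial bookkeeping rather than any deep analytical difficulty: at each step we must verify that the rescaled perturbations truly lie inside the relevant unit-norm balls, which reduces to noting $\|q^\dagger\|_\Qcal, \|q^\dagger\|_{\widetilde\Qcal}, \|\xi_0\|_\Xi \leq M$ (ensured by whichever of \Cref{assum:bounded-complexity-q-general} or \Cref{assum:regularization-q-general} is in force) and to tracking the numerical constants with sufficient care to recover the specific coefficients $108$, $216$, $36$, $72$, $144$ appearing in the lemma. Finally, taking the union bound over the seven auxiliary events---two each for the $\Psi_n$ bounds over $\Qcal_n$ and $\widetilde\Qcal_n$, and one each for the $\widetilde\Psi_n$, $G$-over-$\Qcal_n$, and $G$-over-$\widetilde\Qcal_n$ bounds---delivers the claimed $1-7\zeta$ overall probability.
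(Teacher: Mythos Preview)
Your proposal is correct and follows essentially the same approach as the paper: decompose $\Psi_n$ into linear and quadratic pieces, normalize perturbations so they lie in the unit-norm classes of \Cref{assum:complexity-q-general}, apply \Cref{lem:support2}, rescale, and take a union bound over seven events. The only cosmetic difference is that the paper first obtains bounds in terms of $\Omega^q(q-q^\dagger)$, $\Omega^\xi(\xi-\xi_0)$ and then applies $\Omega^q(q-q^\dagger)\leq 2\Omega^q(q)$, etc., at the end (which is where the factors of $2$ and $4$ turning $54,36$ into $108,216,72,144$ come from), whereas you normalize by $2\Omega^q(q)$ directly---these are equivalent. One minor note: the final bilinear $G$-bounds in the lemma statement retain the product $\Omega^q(q)\Omega^\xi(\xi)$, so no AM-GM split is actually needed there; the $144$ arises simply from $36\cdot 2\cdot 2$ after replacing $\Omega^\xi(\xi-\xi_0)\Omega^q(q-q^\dagger)$ by $4\,\Omega^\xi(\xi)\Omega^q(q)$.
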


\subsubsection*{Convergence Lemma for Empirical Saddle Point Solution}

In terms of our above notation, our estimator $\hat q_n$ is given by $\tilde q_n(\hat \xi_n)$. Here we provide a lemma for the convergence of this estimator in terms of the projected-norm behavior of $\hat \xi_n$. This will then provide the basis of our slow-rates theory below.

\begin{lemma}
\label{lem:dual-saddle-point-bound}

Let the conditions of \Cref{thm:q-estimator-bound-general} be given. Then, under the high-probability event of \Cref{lem:dual-high-prob}, we have
\begin{align*}
    \|q_n(\xi) - q^\dagger\|_{2,2} &\leq \Big( 581 \Omega^q(q_n(\xi)) + 34 \Omega^\xi(\xi)\Big) \epsilon_n + 25 \delta_n + 4 \|\xi - \xi_0\|_w + 4 M (\gamma_n^q)^{1/2} \\
    \text{and} \quad \|\tilde q_n(\xi) - q^\dagger\|_{2,2} &\leq \Big( 581 \Omega^q(q_n(\xi)) + 34 \Omega^\xi(\xi)\Big) \epsilon_n + 25 \delta_n + 4 \|\xi - \xi_0\|_w + 4 M (\tilde\gamma_n^q)^{1/2} \,,
\end{align*}
for every $\xi \in \Xi_n$, where the weak norm $\|\cdot\|_w$ is defined as in the proof of \Cref{thm:h-estimator-bound-general}. Furthermore, if $\gamma_n^q \geq 42120 \epsilon_n^2$ or $\tilde\gamma_n^q \geq 42120 \epsilon_n^2$ respectively, these bounds can be respectively tightened to
\begin{align*}
    \|q_n(\xi) - q^\dagger\|_{2,2} &\leq 34 \Omega^\xi(\xi) \epsilon_n + 25 \delta_n + 4 \|\xi - \xi_0\|_w + 4 M (\gamma_n^q)^{1/2} \\
    \text{or} \quad \|\tilde q_n(\xi) - q^\dagger\|_{2,2} &\leq 34 \Omega^\xi(\xi) \epsilon_n + 25 \delta_n + 4 \|\xi - \xi_0\|_w + 4 M (\tilde\gamma_n^q)^{1/2} \,.
\end{align*}

\end{lemma}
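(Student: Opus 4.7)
The plan is to mirror the argument for Lemma \ref{lem:primal-interior-maximization}, adapted to the dual setting. The key starting observation is that $q \mapsto \EE[\psi(\xi,q)]$ is strongly concave with maximizer $q_0(\xi) = P\xi$, and by Lemma \ref{lem:dual-feasibility} we have $q_0(\xi_0) = P\xi_0 = q^\dagger$, so $\|q_0(\xi) - q^\dagger\|_{2,2} = \|\xi - \xi_0\|_w$ for every $\xi \in \Xi_n$. Strong concavity then gives
\[
\tfrac{1}{2}\|q_n(\xi) - q_0(\xi)\|_{2,2}^2 \le \EE[\psi(\xi,q_0(\xi))] - \EE[\psi(\xi,q_n(\xi))],
\]
and analogously for $\tilde q_n(\xi)$. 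Since $\Pi_n q_0(\xi) \in \Qcal_n$ by Assumption \ref{assum:universal-approximation-q-general}, the empirical optimality of $q_n(\xi)$ (with its $\gamma_n^q$-penalty) yields
\[
\EE_n[\psi(\xi,\Pi_n q_0(\xi))] - \EE_n[\psi(\xi,q_n(\xi))] \le \gamma_n^q\bigl(\|\Pi_n q_0(\xi)\|_\Qcal^2 - \|q_n(\xi)\|_\Qcal^2\bigr).
\]
Adding and subtracting, the population suboptimality of $q_n(\xi)$ is dominated by (i) the approximation gap $\EE[\psi(\xi,q_0(\xi)) - \psi(\xi,\Pi_n q_0(\xi))] = O(\delta_n^2)$, (ii) a stochastic equicontinuity term $(\EE - \EE_n)[\psi(\xi,\Pi_n q_0(\xi)) - \psi(\xi,q_n(\xi))]$, and (iii) the penalty difference.

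The second step is to bound the stochastic gap using Lemma \ref{lem:dual-high-prob}. Since $m(W;\xi)$ does not depend on $q$, one verifies the identity
\[
\psi(\xi,q_1) - \psi(\xi,q_2) = \Psi_n(q_1 - q^\dagger) - \Psi_n(q_2 - q^\dagger) + G(W;\xi-\xi_0, q_1 - q_2).
\]
Applied with $q_1 = \Pi_n q_0(\xi)$ and $q_2 = q_n(\xi)$, the three bounds in Lemma \ref{lem:dual-high-prob} give, up to universal constants, contributions of order $\Omega^q(q_n(\xi))\epsilon_n\|q_n(\xi) - q^\dagger\|_{2,2}$, $\Omega^q(\Pi_n q_0(\xi))\epsilon_n\|\Pi_n q_0(\xi) - q^\dagger\|_{2,2}$, $\epsilon_n\|q_n(\xi) - \Pi_n q_0(\xi)\|_{2,2}$, and various quadratic terms of order $\Omega^q(\cdot)^2\epsilon_n^2$ and $\Omega^\xi(\xi)^2 \epsilon_n^2$. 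Using $\|\Pi_n q_0(\xi) - q^\dagger\|_{2,2} \le \|\xi - \xi_0\|_w + \delta_n$ and $\|\Pi_n q_0(\xi)\|_\Qcal \le M$ (by either Assumption \ref{assum:bounded-complexity-q-general} or Assumption \ref{assum:regularization-q-general}, together with the Lipschitz-type bound in the latter), these can all be collected into the form $c\epsilon_n\Omega^q(q_n(\xi))\|q_n(\xi) - q^\dagger\|_{2,2} + \text{(lower-order)}$.

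The third step is the triangle-inequality conversion $\tfrac{1}{2}\|q_n(\xi) - q^\dagger\|_{2,2}^2 \le \|q_n(\xi) - q_0(\xi)\|_{2,2}^2 + \|\xi - \xi_0\|_w^2$, which introduces the $\|\xi - \xi_0\|_w$ term appearing on the right-hand side of the lemma. Applying Lemma \ref{lem:quadratic-inequality} to the resulting inequality in the unknown $\|q_n(\xi) - q^\dagger\|_{2,2}$ then yields the first asserted bound; the $M(\gamma_n^q)^{1/2}$ term arises from the $\gamma_n^q\|\Pi_n q_0(\xi)\|_\Qcal^2$ penalty contribution. The proof for $\tilde q_n(\xi)$ is identical, with $\Qcal_n \to \widetilde\Qcal_n$, $\Omega^q \to \widetilde\Omega^q$, $\gamma_n^q \to \tilde\gamma_n^q$, and the anchor $\Pi_n q^\dagger \in \widetilde\Qcal_n$ in place of $\Pi_n q_0(\xi)$ (justified because the assumption on $\widetilde\Qcal_n$ only requires it to approximate $q^\dagger$, not every projection $P\xi$).

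For the tightened bounds, note the inequality contains a $-\gamma_n^q\|q_n(\xi)\|_\Qcal^2 \le -\gamma_n^q \Omega^q(q_n(\xi))^2/2$ term (up to an additive $M^2\gamma_n^q$ handled separately). Using AM--GM on the cross term as $c\epsilon_n\Omega^q(q_n(\xi))\|q_n(\xi) - q^\dagger\|_{2,2} \le \tfrac{1}{4}\|q_n(\xi) - q^\dagger\|_{2,2}^2 + c^2\Omega^q(q_n(\xi))^2\epsilon_n^2$, the coefficient on $\Omega^q(q_n(\xi))^2\epsilon_n^2$ becomes (tracking the constants that already accumulated) bounded by $42120\epsilon_n^2$. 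Once $\gamma_n^q \ge 42120\epsilon_n^2$, this term is absorbed by the negative $-\gamma_n^q\Omega^q(q_n(\xi))^2/2$ contribution, eliminating $\Omega^q(q_n(\xi))$ from the right-hand side. The main obstacle is the careful bookkeeping: there are two separate penalty norms, two separate approximation/critical-radius structures (for $\Qcal_n$ versus $\widetilde\Qcal_n$), and the decomposition through $(\xi_0, q^\dagger)$ introduces cross-terms mixing $\Omega^\xi$ and $\Omega^q$ that must be carefully tracked so that in the end only $\Omega^\xi(\xi)$ appears (and only $\Omega^q(q_n(\xi))$ or $\widetilde\Omega^q(\tilde q_n(\xi))$ in the un-tightened bounds), with no residual dependence on $\Omega^q(\Pi_n q_0(\xi))$ that would contaminate the stated form.
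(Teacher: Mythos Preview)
Your overall strategy (strong concavity, empirical optimality against a projected comparator, decomposition via $\Psi_n$ and $G$, triangle inequality to introduce $\|\xi-\xi_0\|_w$, then \Cref{lem:quadratic-inequality} and AM--GM) matches the paper's proof. However, there is a concrete gap in your choice of comparator for the $q_n(\xi)$ case.

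You compare $q_n(\xi)$ against $\Pi_n q_0(\xi)=\Pi_n P\xi$, and then assert $\|\Pi_n q_0(\xi)\|_\Qcal\le M$ ``by either \Cref{assum:bounded-complexity-q-general} or \Cref{assum:regularization-q-general}.'' This is fine under \Cref{assum:bounded-complexity-q-general} (everything in $\Qcal_n$ has $\|\cdot\|_\Qcal\le M$), but it is \emph{false} under \Cref{assum:regularization-q-general}: there the Lipschitz condition only yields $\|\Pi_n P\xi\|_\Qcal\le L\|\xi\|_\Xi$, which is not bounded by $M$ for general $\xi\in\Xi_n$. Consequently your penalty term becomes $\gamma_n^q\|\Pi_n q_0(\xi)\|_\Qcal^2\le L^2\gamma_n^q\,\Omega^\xi(\xi)^2$ rather than $M^2\gamma_n^q$, and several $\Omega^q(\Pi_n q_0(\xi))$ factors in your concentration bounds likewise pick up $L\,\Omega^\xi(\xi)$ rather than $M$. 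The resulting inequality does not reduce to the form stated in the lemma (with the clean $4M(\gamma_n^q)^{1/2}$ term).

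The paper avoids this by using $\Pi_n q^\dagger$ as the comparator for \emph{both} $q_n(\xi)$ and $\tilde q_n(\xi)$. The point is that $\|\Pi_n q^\dagger\|_\Qcal\le M$ is guaranteed under \emph{both} \Cref{assum:bounded-complexity-q-general} (part (5)) and \Cref{assum:regularization-q-general} (part (2)). The key algebraic identity the paper uses to set this up is
\[
\tfrac{1}{2}\|q_n(\xi)-q_0(\xi)\|_{2,2}^2 \;-\; \tfrac{1}{2}\|q_0(\xi)-q^\dagger\|_{2,2}^2 \;=\; \EE\bigl[\psi(\xi,q^\dagger)-\psi(\xi,q_n(\xi))\bigr],
\]
after which optimality against $\Pi_n q^\dagger$ and the approximation bound $\|\Pi_n q^\dagger-q^\dagger\|_{2,2}\le\delta_n$ suffice. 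This choice also makes the $q_n$ and $\tilde q_n$ arguments genuinely identical, rather than requiring the asymmetric anchor-switch you describe.
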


This lemma implies that we can bound the $L_2$ error of $\hat q_n$ in terms of the projected error and complexity of the minimax estimate $\hat \xi_n$, since instantiating this result for our estimator $\hat q_n = \tilde q(\hat\xi_n)$ gives us
\begin{equation*}
    \|\hat q_n - q^\dagger\|_{2,2} \leq \Big( 581 \Omega^q(\hat q_n) + 34 \Omega^\xi(\hat\xi_n)\Big) \epsilon_n + 25 M^{1/2} \delta_n + 4 \|\hat\xi_n - \xi_0\|_w + 4 M (\tilde\gamma_n^q)^{1/2} \,,
\end{equation*}
or 
\begin{equation*}
    \|\hat q_n - q^\dagger\|_{2,2} \leq 34 \Omega^\xi(\hat\xi_n) \epsilon_n + 25 M^{1/2} \delta_n + 4 \|\hat\xi_n - \xi_0\|_w + 4 M (\tilde\gamma_n^q)^{1/2}
\end{equation*}
in the case that $\tilde\gamma_n^q$ is sufficiently large.
That is, as long we can either bound the maximum complexity of $\hat q_n$, or we set $\tilde\gamma_n^q$ sufficiently large, then it is sufficient to bound the projected error and complexity of $\hat\xi_n$.

Unfortunately, the minimax problem for $\xi_0$ is not a standard problem based on solving conditional moment restrictions --- rather it is a different kind of problem motivated by orthogonality conditions --- so we cannot easily obtain fast rates for $\|\hat\xi_n - \xi_0\|_w$.
However, we will establish conditions for obtaining slow rates below. The reasoning that follows is very similar as in the fast-rates analysis, except that we need to deal with some additional nuisance terms since here $q_0(\xi_0) \neq 0$.

\subsubsection*{Bounding Weak Norm by Population Objective Sub-optimality}

Here, we provide an analogue of \Cref{lem:primal-strong-convexity-bound} for the dual problem.

\begin{lemma}
\label{lem:dual-strong-convexity-bound}

Under \Cref{assump: new-nuisance}, for every $\xi \in \Hcal$, we have
\begin{equation*}
    \frac{1}{2} \|\xi - \xi_0\|_w^2 \leq \sup_{q \in \Qcal} \EE[\psi(\xi,q)] - \sup_{q \in \Qcal} \EE[\psi(\xi_0,q)] \,.
\end{equation*}

\end{lemma}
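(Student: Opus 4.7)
The plan is to mirror the strategy used in the proof of \Cref{lem:primal-strong-convexity-bound}, but adapted to the saddle point problem defining $\xi_0$. The main idea is that the inner supremum in $J(\xi) \coloneqq \sup_{q\in\Qcal}\EE[\psi(\xi,q)]$ admits a closed form, which turns $J$ into a strongly convex quadratic in $\xi$ whose minimizer is $\xi_0$.

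First, I will compute $J$ explicitly. Since $\EE[m(W;\xi)]$ does not depend on $q$, and using the bilinear representation $\EE[G(W;\xi,q)] = \EE[(P\xi)(T)^\top q(T)]$ that comes from the definition of the projection operator $P$ onto $\Qcal$, the supremum over $q\in\Qcal$ of $\EE[(P\xi)(T)^\top q(T) - \tfrac12 q(T)^\top q(T)]$ is attained at $q = P\xi\in\Qcal$ and equals $\tfrac12\|P\xi\|_{2,2}^2$. Therefore
\begin{equation*}
    J(\xi) = \tfrac12 \|P\xi\|_{2,2}^2 - \EE[m(W;\xi)] \,.
\end{equation*}
This is precisely (the generalized version of) the objective minimized in \Cref{eq: delta-eq-2} defining $\Xi_0$, so $\xi_0$ is a minimizer of $J$ over $\Hcal$.

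Next, I will mimic the Taylor-expansion argument of \Cref{lem:primal-strong-convexity-bound}. Fix $\xi\in\Hcal$ and consider $L(t) \coloneqq J(\xi_0 + t(\xi-\xi_0))$, which is well-defined on all of $\RR$ since $\Hcal$ is a closed linear space. A direct expansion gives
\begin{equation*}
    L'(t) = \langle P\xi_0, P(\xi-\xi_0)\rangle + t\,\|P(\xi-\xi_0)\|_{2,2}^2 - \EE[m(W;\xi-\xi_0)], \qquad L''(t) = \|\xi-\xi_0\|_w^2 \,.
\end{equation*}
The first-order optimality condition of $\xi_0\in\Xi_0$, as used in the proof of \Cref{thm: new-nuisance}, reads $\langle P\xi_0, P\xi'\rangle = \EE[m(W;\xi')]$ for all $\xi'\in\Hcal$. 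Applying this with $\xi' = \xi-\xi_0$ yields $L'(0) = 0$. Since $L''$ is constant, the exact Taylor identity $L(1) = L(0) + L'(0) + \tfrac12 L''(0)$ gives $J(\xi) - J(\xi_0) = \tfrac12\|\xi-\xi_0\|_w^2$, which in particular implies the claimed inequality.

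The potential obstacles here are mild and mostly notational. The main point to verify carefully is that the first-order condition characterizing $\xi_0\in\Xi_0$ matches exactly the term $\langle P\xi_0, P(\xi-\xi_0)\rangle - \EE[m(W;\xi-\xi_0)]$ that appears in $L'(0)$; this is immediate from \Cref{thm: new-nuisance} (equivalently, from the variational characterization of $\Xi_0$) and the linearity of $\Hcal$. No boundary or interior-solution issues arise here because $\Hcal$ is assumed to be a closed linear space, so all admissible perturbations $\xi-\xi_0$ remain in $\Hcal$.
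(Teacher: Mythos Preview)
Your proposal is correct and follows essentially the same approach as the paper's proof: both compute the inner supremum explicitly to obtain $J(\xi)=\tfrac12\|P\xi\|_{2,2}^2-\EE[m(W;\xi)]$, then Taylor-expand along $t\mapsto J(\xi_0+t(\xi-\xi_0))$, invoke the first-order optimality condition at $\xi_0$ to get $L'(0)=0$, and use the constant second derivative $L''(t)=\|\xi-\xi_0\|_w^2$. The only cosmetic difference is that the paper cites the first-order condition via \Cref{lem:dual-feasibility} rather than \Cref{thm: new-nuisance}, and you note (correctly) that one actually obtains equality.
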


\subsubsection*{Reduction to Stochastic Equicontinuity Problem}

Next, given our boundedness and universal approximation assumptions, we can provide the following analogue of \Cref{lem:primal-minimax-erm-bound}.

\begin{lemma}
\label{lem:dual-minimax-erm-bound}
    Under the high-probability event of \Cref{lem:dual-high-prob}, we have
    \begin{align*}
    &\sup_{q \in \Qcal} \EE[\psi(\hat \xi_n,q)] - \sup_{q \in \Qcal} \EE[\psi(\xi_0,q)] \\
    &\leq (\EE - \EE_n)\Big[\psi(\hat \xi_n,\Pi_n q_0(\hat \xi_n)) - \psi(\xi_0,q_n(\Pi_n \xi_0)\Big] + 22734 \Omega^q(q_n(\Pi_n\xi_0))^2 \epsilon_n^2 \\
    &\qquad + 358 \delta_n^2 + 75 \Omega^q(\Pi_n q_0(\hat\xi_n))^2 \gamma_n^q + M^2 \gamma_n^\xi - \gamma_n^q \|q_n(\Pi_n \xi_0)\|_\Qcal^2 - \gamma_n^\xi \|\hat\xi_n\|_\Xi^2 \,.
\end{align*}
\end{lemma}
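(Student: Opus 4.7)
The plan mirrors the proof of \Cref{lem:primal-minimax-erm-bound}, but with three important modifications: (i) the first-order optimality of $\xi_0$ replaces the conditional moment condition of $h^\dagger$; (ii) $q_0(\xi_0) = q^\dagger$ is nonzero in general, so several cancellations from the primary case disappear; (iii) the objective $\psi$ contains $-m(W;\xi)$ in place of $-r(W;q)$, which shifts where each Riesz-representer identity is invoked.

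First, I would write $J(\xi) = \sup_{q\in\Qcal}\EE[\psi(\xi,q)]$ and use \Cref{lem:dual-feasibility} to identify $J(\xi_0) = \EE[\psi(\xi_0, q^\dagger)]$. Since $q_n(\Pi_n\xi_0)\in\Qcal_n\subseteq\Qcal$, the inequality $J(\xi_0)\geq\EE[\psi(\xi_0,q_n(\Pi_n\xi_0))]$ is immediate, which together with $J(\hat\xi_n)=\EE[\psi(\hat\xi_n,q_0(\hat\xi_n))]$ yields
\begin{equation*}
    J(\hat\xi_n)-J(\xi_0)\leq \EE[\psi(\hat\xi_n,q_0(\hat\xi_n))] - \EE[\psi(\xi_0,q_n(\Pi_n\xi_0))] \,.
\end{equation*}
I then insert $\pm\psi(\hat\xi_n,\Pi_n q_0(\hat\xi_n))$ and $\pm\psi(\Pi_n\xi_0,q_n(\Pi_n\xi_0))$ to split the difference into the desired stochastic equicontinuity term $(\EE-\EE_n)[\psi(\hat\xi_n,\Pi_n q_0(\hat\xi_n))-\psi(\xi_0,q_n(\Pi_n\xi_0))]$, an empirical surplus $\EE_n[\psi(\hat\xi_n,\Pi_n q_0(\hat\xi_n))-\psi(\Pi_n\xi_0,q_n(\Pi_n\xi_0))]$ (which the empirical saddle-point optimality of $\hat\xi_n$ will control), plus two approximation-error remainders $\Ecal_1$ and $\Ecal_2$.

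The first remainder $\Ecal_1=\EE[\psi(\hat\xi_n,q_0(\hat\xi_n))-\psi(\hat\xi_n,\Pi_n q_0(\hat\xi_n))]$ simplifies cleanly because $\EE[G(W;\hat\xi_n,\Delta)]=\EE[q_0(\hat\xi_n)^\top\Delta]$ for $\Delta\coloneqq q_0(\hat\xi_n)-\Pi_n q_0(\hat\xi_n)\in\Qcal$, so the cross terms combine into $\tfrac{1}{2}\|\Delta\|_{2,2}^2\leq\tfrac{1}{2}\delta_n^2$. The second remainder $\Ecal_2=\EE_n[G(W;\Pi_n\xi_0-\xi_0,q_n(\Pi_n\xi_0))-m(W;\Pi_n\xi_0-\xi_0)]$ is the main source of difficulty; I would decompose it into the centered term $(\EE_n-\EE)[\widetilde\Psi_n(\Pi_n\xi_0-\xi_0)]$ bounded by \Cref{lem:dual-high-prob}, the centered cross-term $(\EE_n-\EE)[G(W;\Pi_n\xi_0-\xi_0,q_n(\Pi_n\xi_0)-q^\dagger)]$ also bounded by \Cref{lem:dual-high-prob}, and the population cross-term $\EE[G(W;\Pi_n\xi_0-\xi_0,q_n(\Pi_n\xi_0)-q^\dagger)]$, where the key cancellation $\EE[G(W;\xi,q^\dagger)]=\EE[m(W;\xi)]$ coming from the first-order condition of \Cref{assump: new-nuisance} eliminates the uncentered $\widetilde\Psi_n$ term. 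Each of these pieces is at worst of the form $\delta_n\|q_n(\Pi_n\xi_0)-q^\dagger\|_{2,2}$ or $\epsilon_n\|q_n(\Pi_n\xi_0)-q^\dagger\|_{2,2}$ up to $O(M^2\epsilon_n^2+M^2\gamma_n^q)$ residuals.

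The remaining step is to substitute the $L_2$ bound on $\|q_n(\Pi_n\xi_0)-q^\dagger\|_{2,2}$ supplied by \Cref{lem:dual-saddle-point-bound} (with $\|\Pi_n\xi_0-\xi_0\|_w\leq\delta_n$ via the approximation and contraction properties of $P$), and apply AM--GM to trade the linear factors $\epsilon_n\|q_n-q^\dagger\|_{2,2}$ and $\delta_n\|q_n-q^\dagger\|_{2,2}$ for quadratic terms $\Omega^q(q_n(\Pi_n\xi_0))^2\epsilon_n^2$, $\delta_n^2$, and $M^2\gamma_n^q$. For the empirical-surplus term, the saddle-point optimality of $\hat\xi_n$ yields $\EE_n[\psi(\hat\xi_n,\Pi_n q_0(\hat\xi_n))-\psi(\Pi_n\xi_0,q_n(\Pi_n\xi_0))]\leq \gamma_n^q(\|\Pi_n q_0(\hat\xi_n)\|_\Qcal^2-\|q_n(\Pi_n\xi_0)\|_\Qcal^2)+\gamma_n^\xi(\|\Pi_n\xi_0\|_\Xi^2-\|\hat\xi_n\|_\Xi^2)$, and bounding $\|\Pi_n\xi_0\|_\Xi\leq M$ gives the stated $M^2\gamma_n^\xi-\gamma_n^q\|q_n(\Pi_n\xi_0)\|_\Qcal^2-\gamma_n^\xi\|\hat\xi_n\|_\Xi^2$ remainder. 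The main obstacle will be careful bookkeeping of constants across these AM--GM applications to arrive at the specific coefficients $22734$, $358$, and $75$, but no new analytic ideas beyond those already developed for the primary case are required.
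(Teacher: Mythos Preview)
Your proposal is correct and follows essentially the same route as the paper's proof: the same relaxation of the negative supremum, the same use of $\hat\xi_n$'s empirical optimality to produce the regularization remainder, and the same two error terms (your $\Ecal_1,\Ecal_2$ are the paper's $\Ecal_3,\Ecal_4$), each handled exactly as you describe, including the crucial cancellation $\EE[G(W;\xi,q^\dagger)]=\EE[m(W;\xi)]$ from the first-order condition and the substitution of \Cref{lem:dual-saddle-point-bound} followed by AM--GM to reach the stated constants.
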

Therefore, again we have reduced our analysis to bounding a stochastic equicontinuity term. 

\subsubsection*{Bounding the Stochastic Equicontinuity Term using Localization}

Similar as for the estimator of $h^\dagger$, we can decompose
\begin{align*}
    &\psi(\hat \xi_n, \Pi_n q_0(\hat \xi_n)) - \psi(\xi_0, q_n(\xi_0))  \\
    &= G(W;\hat \xi_n-\xi_0,q^\dagger) + G(W;\hat \xi_n-\xi_0,\Pi_n q_0(\hat \xi_n) - q^\dagger) \\
    &\quad + G(W;\xi_0, \Pi_n q_0(\hat \xi_n) - q^\dagger) - G(W; \xi_0, q_n(\xi_0) - q^\dagger) \\
    &\quad - m(W; \hat\xi_n - \xi_0) \\
    &\quad - \frac{1}{2} \Big( \Pi_n q_0(\hat \xi_n) - q^\dagger \Big)(T)^\top \Big( \Pi_n q_0(\hat \xi_n) - q^\dagger \Big)(T) \\
    &\quad + \frac{1}{2} \Big( q_n(\xi_0) - q^\dagger \Big)(T)^\top \Big( q_n(\xi_0) - q^\dagger \Big)(T)  \\
    &\quad - q^\dagger(T)^\top \Big( \Pi_n q_0(\hat \xi_n) - q^\dagger \Big)(T)  + q^\dagger(T)^\top \Big( q_n(\xi_0) - q^\dagger \Big)(T) \,.
\end{align*}
Unfortunately, unlike for \Cref{thm:h-estimator-bound-general}, here we have $q^\dagger = q_0(\xi_0) \neq 0$ in general, so we cannot eliminate any terms in this decomposition. This will lead to a term proportional to $\epsilon_n \|\hat\xi_n - \xi_0\|_{2,2}$ in the localized stochastic equicontinuity bound, which will prevent us from obtaining fast rates in general, given ill-posedness. However, we can still follow a localization analysis and derive an analogue of \Cref{lem:primal-stochastic-equicontinuity-bound}, which will allow us to obtain slow rates.

\begin{lemma}
\label{lem:dual-stochastic-equicontinuity-bound}
    Let the assumptions of \Cref{thm:q-estimator-bound-general} be given. Then, under the high-probability event of \Cref{lem:dual-high-prob}, we have
    \begin{align*}
        &\Big| (\EE_n - \EE) \Big[ \psi(\hat \xi_n, \Pi_n q_0(\hat \xi_n)) - \psi(\xi_0, q_n(\Pi_n \xi_0) \Big] \Big| \\
        &\leq 144 \Omega^q(\Pi_n q_0(\hat\xi_n)) \epsilon_n \|\hat \xi_n - \xi_0\|_w  + 72 \epsilon_n + 3204 \delta_n^2 + 216 M^2 \gamma_n^q \\ 
        &\qquad + \Big( 360 \Omega^q(\Pi_n q_0(\hat\xi_n))^2 + 68418 \Omega^q(q_n(\Pi_n \xi_0))^2 + 144 \Omega^\xi(\hat\xi_n)^2 \Big) \epsilon_n^2  \,,
    \end{align*}
\end{lemma}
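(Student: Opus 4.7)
The plan is to start from the algebraic decomposition of $\psi(\hat\xi_n,\Pi_n q_0(\hat\xi_n)) - \psi(\xi_0,q_n(\Pi_n\xi_0))$ displayed just before the lemma statement, regroup the terms so they match exactly the quantities $\widetilde\Psi_n$, $\Psi_n$, and the bilinear cross term $G(W;\cdot,\cdot)$ that were controlled in \Cref{lem:dual-high-prob}, and then apply those concentration bounds term by term. Concretely, I would rewrite the decomposition (after first handling the $\xi_0$ vs.\ $\Pi_n \xi_0$ discrepancy by adding and subtracting the $\Pi_n\xi_0$ term and invoking \Cref{assum:universal-approximation-q-general} to control the resulting $\|\xi_0-\Pi_n\xi_0\|_{2,2}\le\delta_n$ slack) as
\begin{equation*}
\widetilde\Psi_n(\hat\xi_n-\xi_0)\;+\;G(W;\hat\xi_n-\xi_0,\,\Pi_n q_0(\hat\xi_n)-q^\dagger)\;+\;\Psi_n(\Pi_n q_0(\hat\xi_n)-q^\dagger)\;-\;\Psi_n(q_n(\Pi_n\xi_0)-q^\dagger),
\end{equation*}
plus lower-order $\delta_n$ correction terms coming from the $\Pi_n\xi_0$ substitution.

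Next, I would apply \Cref{lem:dual-high-prob} to each of the four pieces. The $\widetilde\Psi_n$ piece gives a bound of the form $36\epsilon_n\|\hat\xi_n-\xi_0\|_{2,2}+72\Omega^\xi(\hat\xi_n)\epsilon_n^2$, and since \Cref{assum:bounded-estimation-q-general} yields $\|\hat\xi_n-\xi_0\|_{2,2}\le 2$, the first summand collapses to the $72\epsilon_n$ term in the stated bound. The cross-term $G(W;\hat\xi_n-\xi_0,\Pi_n q_0(\hat\xi_n)-q^\dagger)$ is controlled by the fourth bound of \Cref{lem:dual-high-prob}, which produces a term of order $\epsilon_n\|\Pi_n q_0(\hat\xi_n)-q^\dagger\|_{2,2}$ plus $\Omega^q(\Pi_n q_0(\hat\xi_n))^2\Omega^\xi(\hat\xi_n)^2\epsilon_n^2$. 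Crucially, since $q_0$ is linear and $q_0(\xi_0)=q^\dagger$, we have $q_0(\hat\xi_n)-q^\dagger=P(\hat\xi_n-\xi_0)$, so by triangle inequality and universal approximation
\begin{equation*}
\|\Pi_n q_0(\hat\xi_n)-q^\dagger\|_{2,2}\le\|\hat\xi_n-\xi_0\|_w+\delta_n,
\end{equation*}
which is exactly how the $\Omega^q(\Pi_n q_0(\hat\xi_n))\epsilon_n\|\hat\xi_n-\xi_0\|_w$ term in the target bound appears (with a $\delta_n$ piece absorbed into $\delta_n^2$ using AM-GM).

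For the two $\Psi_n$ pieces, I would use the $\Qcal_n$-version of the $\Psi_n$ bound from \Cref{lem:dual-high-prob} in the form $108\Omega^q\epsilon_n\|\cdot\|_{2,2}+216(\Omega^q)^2\epsilon_n^2$, evaluated at $\Pi_n q_0(\hat\xi_n)-q^\dagger$ and at $q_n(\Pi_n\xi_0)-q^\dagger$. The first is handled by the same bound on $\|\Pi_n q_0(\hat\xi_n)-q^\dagger\|_{2,2}$ just described. The second requires a bound on $\|q_n(\Pi_n\xi_0)-q^\dagger\|_{2,2}$, which is supplied by the first inequality in \Cref{lem:dual-saddle-point-bound} applied at $\xi=\Pi_n\xi_0$: this gives a bound of order $\Omega^q(q_n(\Pi_n\xi_0))\epsilon_n+\Omega^\xi(\Pi_n\xi_0)\epsilon_n+\delta_n+\|\Pi_n\xi_0-\xi_0\|_w+M(\gamma_n^q)^{1/2}$, and using $\|\Pi_n\xi_0-\xi_0\|_w\le\delta_n$ and $\Omega^\xi(\Pi_n\xi_0)\le M$ gives the $\Omega^q(q_n(\Pi_n\xi_0))^2\epsilon_n^2$, $\delta_n^2$ and $M^2\gamma_n^q$ contributions in the stated inequality. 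All remaining cross-products of the form $\epsilon_n\cdot(\text{something of order }\epsilon_n,\delta_n,(\gamma_n^q)^{1/2})$ are then handled by the AM-GM inequality $2ab\le a^2+b^2$ and combined into the final constants.

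The main bookkeeping obstacle is just tracking the large constants (e.g.\ $68418$ and $22734$) through the repeated use of AM-GM and triangle inequality, and being careful to keep the $\|\hat\xi_n-\xi_0\|_w$ term linear (not absorbed) so that \Cref{lem:dual-strong-convexity-bound} can later use it. The substantive step, as opposed to bookkeeping, is recognizing that $q_0$ is linear so that $q_0(\hat\xi_n)-q^\dagger=P(\hat\xi_n-\xi_0)$ has weak norm exactly $\|\hat\xi_n-\xi_0\|_w$; without this identity one would only obtain a bound in terms of $\|\hat\xi_n-\xi_0\|_{2,2}$, which would prevent any fast localization and would leave a non-vanishing $\epsilon_n$ contribution.
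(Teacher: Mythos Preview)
Your proposal is correct and follows essentially the same route as the paper: decompose into the four pieces $\widetilde\Psi_n(\hat\xi_n-\xi_0)$, $G(W;\hat\xi_n-\xi_0,\Pi_n q_0(\hat\xi_n)-q^\dagger)$, and $\pm\Psi_n(\cdot-q^\dagger)$; apply \Cref{lem:dual-high-prob} term-by-term; collapse $36\epsilon_n\|\hat\xi_n-\xi_0\|_{2,2}\le 72\epsilon_n$ via boundedness; bound $\|\Pi_n q_0(\hat\xi_n)-q^\dagger\|_{2,2}\le\|\hat\xi_n-\xi_0\|_w+\delta_n$ using linearity of $q_0$ and $q_0(\xi_0)=q^\dagger$; invoke \Cref{lem:dual-saddle-point-bound} at $\xi=\Pi_n\xi_0$ for $\|q_n(\Pi_n\xi_0)-q^\dagger\|_{2,2}$; and finish with AM--GM.

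One small simplification: the ``$\xi_0$ vs.\ $\Pi_n\xi_0$ discrepancy'' step you mention is unnecessary. The four-term decomposition you wrote is an \emph{exact} algebraic identity for $\psi(\hat\xi_n,\Pi_n q_0(\hat\xi_n))-\psi(\xi_0,q_n(\Pi_n\xi_0))$ directly, with no $\delta_n$ correction terms arising from any substitution---the argument $q_n(\Pi_n\xi_0)$ is just a fixed element of $\Qcal_n$, and the decomposition works for any such element. The paper's proof proceeds without this extra step.
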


Note that since we are only obtaining slow rates, we do not include terms such as $\|q_n(\Pi_n \xi_0) - \Pi_n q_0(\Pi_n \xi_0)\|_{2,2}$ or $\|\hat \xi_n - \xi_0\|_w$ in the bound, and instead give a simplified bound proportional to $\epsilon_n$. Including those terms could allow us to obtain slightly smaller constants, at the expense of a much more complicated analysis.

\subsubsection*{First Strong Norm Bound}

Combining together the results of \Cref{lem:dual-strong-convexity-bound,lem:dual-minimax-erm-bound,lem:dual-stochastic-equicontinuity-bound}, we have
\begin{align}
    &\frac{1}{2} \|\hat\xi_n - \xi_0\|_w^2 \nonumber \\
    &\leq 144 \Omega^q(\Pi_n q_0(\hat\xi_n)) \epsilon_n \|\hat \xi_n - \xi_0\|_w  + 72 \epsilon_n + 3204 \delta_n^2 + 216 M^2 \gamma_n^q \nonumber \\ 
    &\qquad + \Big( 360 \Omega^q(\Pi_n q_0(\hat\xi_n))^2 + 68418 \Omega^q(q_n(\Pi_n \xi_0))^2 + 144 \Omega^\xi(\hat\xi_n)^2 \Big) \epsilon_n^2  \nonumber \\
    &\qquad + 22734 \Omega^q(q_n(\Pi_n\xi_0))^2 \epsilon_n^2 + 358 \delta_n^2 + 75 \Omega^q(\Pi_n q_0(\hat\xi_n))^2 \gamma_n^q + M^2 \gamma_n^\xi \nonumber \\
    &\qquad - \gamma_n^q \|q_n(\Pi_n \xi_0)\|_\Qcal^2 - \gamma_n^\xi \|\hat\xi_n\|_\Xi^2 \nonumber \\
    &\leq 144 \Omega^q(\Pi_n q_0(\hat\xi_n)) \epsilon_n \|\hat \xi_n - \xi_0\|_w + 72 \epsilon_n + 3562 \delta_n^2 + 75 \Omega^q(\Pi_n q_0(\hat\epsilon_n))^2 \gamma_n^q + 216M^2 \gamma_n^q + M^2 \gamma_n^\xi \nonumber \\
    &\qquad + \Big( 360 \Omega^q(\Pi_n q_0(\hat\xi_n))^2 + 91152 \Omega^q(q_n(\Pi_n \xi_0))^2 + 144 \Omega^\xi(\hat\xi_n)^2 \Big) \epsilon_n^2  \nonumber \\
    &\qquad - \gamma_n^q \|q_n(\Pi_n \xi_0)\|_\Qcal^2 - \gamma_n^\xi \|\hat\xi_n\|_\Xi^2 \,.
    \label{eq:dual-bound-master}
\end{align}
which occurs with probability at least $1-7\zeta$. Now, under the conditions of our first bound we have that all $\Omega^q(\cdot)$ And $\Omega^\xi(\cdot)$ terms are bounded by $M$, and therefore the above bound can be further relaxed to
\begin{align*}
    \frac{1}{2} \|\hat\xi_n - \xi_0\|_w^2 &\leq 144 M \epsilon_n \|\hat \xi_n - \xi_0\|_w + 72 \epsilon_n + 3562 \delta_n^2 + 291 M^2 \gamma_n^q + M^2 \gamma_n^\xi + 91656 M^2 \epsilon_n^2  \,.
\end{align*}
Then, applying \Cref{lem:quadratic-inequality}, along with $\sqrt{x+y} \leq \sqrt{x}+\sqrt{y}$ for non-negative $x$ and $y$, this gives us
\begin{align*}
    \|\hat\xi_n - \xi_0\|_w &\leq 717 M \epsilon_n + 12 \epsilon_n^{1/2} + 85 \delta_n + 25 M (\gamma_n^q)^{1/2} + 2 M (\gamma_n^\xi)^{1/2} \,.
\end{align*}
Then, combining this with the result of \Cref{lem:dual-saddle-point-bound} in this setting, we obtain
\begin{equation*}
    \|\hat q_n - q^\dagger\|_{2,2} \leq 3483 M \epsilon_n + 48 \epsilon_n^{1/2} + 365 \delta_n + 100 M (\gamma_n^q)^{1/2} + 4 M (\tilde\gamma_n^q)^{1/2} + 8 M (\gamma_n^\xi)^{1/2} \,,
\end{equation*}
from which our first required bound trivially follows.

\subsubsection*{Second Strong Norm Bound}

Given our additional assumption for the second bound that $\|\Pi_n q_0(\xi)\|_\Qcal \leq L \|\xi\|_\Xi$ for all $\xi \in \Xi_n$, along with the facts that $|\Omega^\xi(\xi)^2 - \|\xi\|_\Xi^2| \leq M^2$ for all $\xi \in \Xi_n$ and $|\Omega^q(q)^2 - \|q\|_\Qcal^2| \leq M^2$ for all $q \in \Qcal_n$, we have
\begin{align*}
    &\frac{1}{2} \|\hat\xi_n - \xi_0\|_w^2 + \frac{1}{2} \gamma_n^\xi \Omega^\xi(\hat\xi_n)^2 \\
    &\leq 144 L \Omega^\xi(\hat\xi_n) \epsilon_n \|\hat \xi_n - \xi_0\|_w + 72 \epsilon_n + 3562 \delta_n^2 + 75 L^2 \Omega^\xi(\hat\xi_n)^2 \gamma_n^q + 217 M^2 \gamma_n^q + 2 M^2 \gamma_n^\xi \\
    &\qquad + \Big( 504 L^2 \Omega^\xi(\hat\xi_n)^2 + 91152 \Omega^q(q_n(\Pi_n \xi_0))^2 \Big) \epsilon_n^2  - \gamma_n^q \Omega^\xi(q_n(\Pi_n \xi_0))^2 - \frac{1}{2} \gamma_n^\xi \Omega^\xi(\hat\xi_n)^2 \,.
\end{align*}

Now, under our assumption that
\begin{equation*}
    \gamma_n^q \geq 91152 \epsilon_n^2 \,,
\end{equation*}
this bound can be further relaxed to
\begin{align*}
    \frac{1}{2} \|\hat\xi_n - \xi_0\|_w^2 + \frac{1}{2} \gamma_n^\xi \Omega^\xi(\hat\xi_n)^2 &\leq 144 L \Omega^\xi(\hat\xi_n) \epsilon_n \|\hat \xi_n - \xi_0\|_w + 72 \epsilon_n + 3562 \delta_n^2 + 75 L^2 \Omega^\xi(\hat\xi_n)^2 \gamma_n^q \\
    &\qquad + 217 M^2 \gamma_n^q + 2 M^2 \gamma_n^\xi + 504 L^2 \Omega^q(\hat\xi_n)^2 \epsilon_n^2  - \frac{1}{2} \gamma_n^\xi \Omega^\xi(\hat\xi_n)^2 \,.
\end{align*}
In addition, by the AM-GM inequality we have
\begin{equation*}
    144 L \Omega^\xi(\hat\xi_n) \|\hat\xi_n - \xi_0\|_w \leq \frac{1}{4} \|\hat\xi_n - \xi_0\|_w^2 + 20736 L^2 \Omega^\xi(\hat\xi_n)^2 \epsilon_n^2 \,,
\end{equation*}
so therefore we have
\begin{align*}
    \frac{1}{4} \|\hat\xi_n - \xi_0\|_w^2 + \frac{1}{2} \gamma_n^\xi \Omega^\xi(\hat\xi_n)^2 &\leq 21240 L^2 \Omega^q(\hat\xi_n)^2 \epsilon_n^2 + 72 \epsilon_n + 3562 \delta_n^2 + 75 L^2 \Omega^\xi(\hat\xi_n)^2 \gamma_n^q \\
    &\qquad + 217 M^2 \gamma_n^q + 2 M^2 \gamma_n^\xi  - \frac{1}{2} \gamma_n^\xi \Omega^\xi(\hat\xi_n)^2 \,.
\end{align*}
Therefore, under our additional assumption that
\begin{equation*}
    \gamma_n^\xi \geq 42480 L^2 \epsilon_n^2 + 150 L^2 \gamma_n^q \,,
\end{equation*}
this the previous bound can be relaxed further to
\begin{align*}
    \frac{1}{4} \|\hat\xi_n - \xi_0\|_w^2 + \frac{1}{2} \gamma_n^\xi \Omega^\xi(\hat\xi_n)^2 &\leq 72 \epsilon_n + 3562 \delta_n^2 + 217 M^2 \gamma_n^q + 2 M^2 \gamma_n^\xi  \,,
\end{align*}
which immediately gives
\begin{equation*}
    \|\hat\xi_n - \xi_0\|_w \leq 17 \epsilon_n^{1/2} + 120 \delta_n + 30 M (\gamma_n^q)^{1/2} + 3 M (\gamma_n^\xi)^{1/2} \,,
\end{equation*}
and
\begin{equation*}
    \Omega^\xi(\hat\xi_n) \leq 12 \frac{\epsilon_n^{1/2}}{(\gamma_n^\xi)^{1/2}} + 85 \frac{\delta_n}{(\gamma_n^\xi)^{1/2}} + 21 M \frac{(\gamma_n^q)^{1/2}}{(\gamma_n^\xi)^{1/2}} + 2 M \,.
\end{equation*}
Also, under our assumptions on the sizes of $\gamma_n^\xi$ and $\gamma_n^q$ we can further bound the above by
\begin{equation*}
    \Omega^\xi(\hat\xi_n) \leq 3 M + \epsilon_n^{-1/2} + \epsilon_n^{-1} \delta_n \,.
\end{equation*}

Plugging this into the version of \Cref{lem:dual-saddle-point-bound} under the assumption that $\tilde\gamma_n^q \geq 42120 \epsilon_n^2$ then gives
\begin{align*}
    \|\hat q_n - q^\dagger\|_{2,2} &\leq 34 \Omega^\xi(\hat\xi_n) \epsilon_n + 68 \epsilon_n^{1/2} + 505 \delta_n + 120 M(\gamma_n^q)^{1/2} + 12 M (\gamma_n^\xi)^{1/2} + 4 M (\tilde \gamma_n^q)^{1/2} \\
    &\leq 102 M \epsilon_n + 102 \epsilon_n^{1/2} + 539 \delta_n + 120 M (\gamma_n^q)^{1/2} + 12 M (\gamma_n^\xi)^{1/2} + 4 M (\tilde\gamma_n^q)^{1/2} \,,
\end{align*}
as required.

\subsubsection*{Proofs of Intermediate Lemmas}

\begin{proof}[Proof of \Cref{lem:dual-feasibility}]

First, let some arbitrary $h \in \Hcal$ and $\xi_0' \in \Xi_0$ be given, and define
\begin{align*}
    J(h) &= \sup_{q \in \Qcal} \EE[\psi(h,q)] \\
    g(t) &= J(\xi_0' + t(h - \xi_0')) \,,
\end{align*}
Now, we have
\begin{align*}
    J(h) &= \sup_{q \in \Qcal} \EE \Big[ G(W;h,q) - m(W;h) - \frac{1}{2} q(T)^\top q(T) \Big] \\
    &= \sup_{q \in \Qcal} \EE \Big[ h(S)^\top k(S,T) q(T) - m(W;h) - \frac{1}{2} q(T)^\top q(T) \Big] \\
    &= \EE \Big[ \frac{1}{2} \Pi_\Qcal\Big[ k(S,T)^\top h(S) \mid T\Big]^\top \Pi_\Qcal\Big[ k(S,T)^\top h(S) \mid T\Big] - m(W;h) \Big] \,.
\end{align*}
Therefore, by the linearity of the projection operator, it easily follows that
\begin{align*}
    g'(t) &= \EE \Big[ \Pi_\Qcal\Big[ k(S,T)^\top (h - \xi_0')(S) \mid T\Big]^\top \Pi_\Qcal\Big[ k(S,T)^\top \Big( \xi_0'(S) + t(h - \xi_0')(S) \Big) \mid T\Big] - m(W;h - \xi_0') \Big] \\
    &= \EE \Big[ \Pi_\Qcal\Big[ k(S,T)^\top (h - \xi_0')(S) \mid T\Big]^\top q^\dagger(T) - m(W;h-\xi_0') \Big] \\
    &\qquad + t \EE\Big[ \Pi_\Qcal\Big[ k(S,T)^\top (h - \xi_0')(S) \mid T\Big]^\top \Pi_\Qcal \Big[ k(S,T)^\top (h - \xi_0')(S) \mid T \Big] \Big] \,,
\end{align*}
where in the second equality we apply \Cref{assump: new-nuisance,lemma: minimum-norm}. Also, we have 
\begin{equation*}
    g''(t) = \Big\| \Pi_\Qcal\Big[ k(S,T)^\top (h - \xi_0')(S) \mid T\Big] \Big\|_{2,2}^2 \,.
\end{equation*}
That is, $g''(t) \geq 0$ for all $t$, so it is convex.
Furthermore, since $q^\dagger \in \Qcal_0$, we have
\begin{align*}
    g'(0) &= \EE \Big[ \Pi_\Qcal\Big[ k(S,T)^\top (h - \xi_0')(S) \mid T\Big]^\top q^\dagger(T) - m(W;h-\xi_0') \Big] \\
    &= \EE\Big[ (h - \xi_0')(S)^\top k(S,T) q^\dagger(T) - m(W;h-\xi_0') \Big] \\
    &= \EE\Big[ G(W;h-\xi_0',q^\dagger) - m(W;h-\xi_0') \Big] \\
    &= 0 \,.
\end{align*}
Therefore, putting the above together $g(t)$ is minimized at $t=0$. Then, given that the above reasoning holds for arbitrary $h \in \Hcal$, and $\Hcal$ is linear (so $\{\xi_0' + h : h \in \Hcal\} = \Hcal\}$) it follows that $J(h)$ is minimized at $h=\xi_0'$. Since this analysis holds for arbitrary $\xi'_0 \in \Xi_0$, this implies that $\Xi_0$ is a subset of the minimizers of $J(h)$.

Conversely, suppose $h$ is also a minimizer of $J(h)$. Then, we must have $g(0) = g(1) = \inf_h J(h)$. Then, since $g(t)$ is quadratic in $t$, minimized at two different points, $g(t)$ must be constant, so
\begin{equation*}
    \EE\Big[ \Pi_\Qcal\Big[ k(S,T)^\top (h - \xi_0')(S) \mid T\Big] = 0 \,.
\end{equation*}
Then, given \Cref{assump: new-nuisance,lemma: minimum-norm}, we have
\begin{equation*}
    \EE\Big[ \Pi_\Qcal\Big[ k(S,T)^\top h(S) \mid T\Big] = \EE\Big[ \Pi_\Qcal\Big[ k(S,T)^\top \xi_0'(S) \mid T\Big] = q^\dagger(T) \,.
\end{equation*}
That is, $h \in \Xi_0$, which implies that any minimizer of $J$ is an element of $\Xi_0$. That is, we have shown that $\Xi_0$ is a subset of the minimizers of $J$ and vice versa, which establishes the first part of the lemma.

Now, for the second part of the lemma, let $\xi_0'$ be an arbitrary minimzer of $J(h)$, which may or may not be equal to $\xi_0'$. Then, by first-order optimality conditions (which must hold since $\Hcal$ is linear), we have we have $\left. \frac{d}{d t} \right|_{t=0} J(\xi_0' + t h) = 0$ for all $h \in \Hcal$. That is, applying similar logic to above, we have
\begin{equation*}
    \EE \Big[ \Pi_\Qcal\Big[ k(S,T)^\top h(S) \mid T\Big]^\top q_0(\xi_0')(T) - m(W;h) \Big] = 0 \qquad \forall h \in \Hcal \,,
\end{equation*}
where $q_0(\xi_0') \in \argmax_{q \in \Qcal} \EE[\psi(\xi_0',q)] = \Pi_\Qcal[k(S,T)^\top \xi_0'(S) \mid T]$. But then, following analogous reasoning as above, this is equivalent to $\EE[G(W;h,q_0(\xi_0')) - m(W;h)] = 0$ for all $h \in \Hcal$, so $q_0(\xi_0') \in \Qcal_0$. Then, since $q_0(\xi_0')(T) = P^\dagger \xi_0'$, and as reasoned previously there is a unique $q^\dagger \in \textup{range}(P^\dagger)$ such that $q^\dagger \in \Qcal_0$, it must be the case that $q_0(\xi_0') = q^\dagger$.

\end{proof}

\begin{proof}[Proof of \Cref{lem:dual-high-prob}]

First, note that
\begin{align*}
    \Big| (\EE_n - \EE)\Big[ \Psi_n(q - q^\dagger) \Big] \Big| &\leq \Big| (\EE_n - \EE) \Big[ G(W;\xi_0,q-q^\dagger) - q^\dagger(T)^\top (q - q^\dagger)(T) \Big] \Big| \\
    &\qquad + \Big| (\EE_n - \EE) \Big[ (q - q^\dagger)(T)^\top(q - q^\dagger)(T) \Big] \Big| \,,
\end{align*}
and therefore following identical arguments as in the proof of \Cref{lem:primal-high-prob} we have
\begin{equation*}
    \Big| (\EE_n - \EE)\Big[ \Psi_n(q - q^\dagger) \Big] \Big| \leq 54 \Omega^q(q-q^\dagger) \epsilon_n \|q - q^\dagger\|_{2,2} + 54 \Omega^q(q-q^\dagger)^2 \epsilon_n^2 \,,
\end{equation*}
which holds uniformly over all $q \in \Qcal_n$ with probability at least $1-2\zeta$, and similarly
\begin{equation*}
    \Big| (\EE_n - \EE)\Big[ \Psi_n(q - q^\dagger) \Big] \Big| \leq 54 \widetilde\Omega^q(q-q^\dagger) \epsilon_n \|q-q^\dagger\|_{2,2} + 54 \widetilde\Omega^q(q-q^\dagger)^2 \epsilon_n^2 \,,
\end{equation*}
which holds uniformly over all $q \in \widetilde\Qcal_n$ with probability at least $1-2\zeta$.

Furthermore, following similar reasoning as in the proof of \Cref{lem:primal-high-prob}, we have
\begin{equation*}
    \Big|(\EE_n - \EE) \Big[ \widetilde\Psi_n(\xi-\xi_0) \Big] \Big| \leq 36 \epsilon_n \|\xi-\xi_0\|_{2,2} + 36 \Omega^\xi(\xi-\xi_0) \epsilon_n^2 \,,
\end{equation*}
which holds for all $\xi \in \Xi_n$ with probability at least $1-\zeta$, and we have
\begin{equation*}
    \Big|(\EE_n - \EE) \Big[ G(W;\xi-\xi_0,q-q^\dagger) \Big] \Big| \leq 36 \epsilon_n \|q-q^\dagger\|_{2,2} + 36 \Omega^\xi(\xi-\xi_0) \Omega^q(q-q_0) \epsilon_n^2 \,,
\end{equation*}
which holds for all $q \in \Qcal_n$ and $\xi \in \Xi_n$ with probability at least $1-\zeta$, and
\begin{equation*}
    \Big|(\EE_n - \EE) \Big[ G(W;\xi-\xi_0,q-q^\dagger) \Big] \Big| \leq 36 \epsilon_n \|q-q^\dagger\|_{2,2} + 36 \Omega^\xi(\xi-\xi_0) \widetilde\Omega^q(q-q_0) \epsilon_n^2 \,,
\end{equation*}
which holds for all $q \in \widetilde\Qcal_n$ and $\xi \in \Xi_n$ with probability at least $1-\zeta$. 

Next, following similar reasoning as in the proof of \Cref{lem:primal-high-prob}, we have $\Omega^q(q-q^\dagger) \leq 2 \Omega^q(q)$, $\widetilde\Omega^q(q-q^\dagger) \leq 2 \widetilde\Omega^q(q)$, and $\Omega^\xi(\xi-\xi_0) \leq 2 \Omega^\xi(\xi)$. Plugging this into the above bounds, and taking a union bound over all of them, gives us our final desired result.

\end{proof}

\begin{proof}[Proof of \Cref{lem:dual-saddle-point-bound}]

The proof here is very similar to that of \Cref{lem:primal-interior-maximization}. We will prove it below for any arbitrary $\xi \in \Xi_n$ for the $\|q_n(\xi) - q^\dagger\|$ bound. Note that the $\|\tilde q_n(\xi) - q^\dagger\|$ bound also follows by an identical proof, replacing $\Qcal_n$ with $\widetilde Q_n$, $\|\cdot\|_\Qcal$ with $\|\cdot\|_{\widetilde\Qcal}$, and $\gamma_n^q$ with $\tilde\gamma_n^q$.

First, note that
\begin{align*}
    &\frac{1}{2} \|q_n(\xi) - q_0(\xi)\|_{2,2}^2 - \frac{1}{2} \|q_0(\xi) - q^\dagger\|_{2,2}^2 \\
    &= \frac{1}{2} \EE\Big[ \Big(q_n(\xi)(T) - \Pi_\Qcal[k(S,T)^\top \xi(S) \mid T]\Big)^\top \Big(q_n(\xi)(T) - \Pi_\Qcal[k(S,T)^\top \xi(S) \mid T]\Big) \\
    &\qquad - \Big(q^\dagger(T) - \Pi_\Qcal[k(S,T)^\top \xi(S) \mid T]\Big)^\top \Big(q^\dagger(T) - \Pi_\Qcal[k(S,T)^\top \xi(S) \mid T]\Big) \Big] \\
    &= \EE\Big[ \Big( q_0(\xi)(T)q^\dagger(T) - \frac{1}{2} q^\dagger(T)^\top q^\dagger(T) \Big) - \Big(q_0(\xi)(T)q_n(\xi)(T) - \frac{1}{2} q_n(\xi)(T)^\top q_n(\xi)(T) \Big) \Big] \\
    &= \EE\Big[ \Big( \xi(S)^\top k(S,T) q^\dagger(T) - \frac{1}{2} q^\dagger(T)^\top q^\dagger(T) \Big) \\
    &\qquad - \Big( \xi(S)^\top k(S,T) q_n(\xi)(T) - \frac{1}{2} q_n(\xi)(T)^\top q_n(\xi)(T) \Big) \Big] \\
    &= \EE\Big[ \psi(\xi,q^\dagger) - \psi(\xi,q_n(\xi)) \Big] \,.
\end{align*}

Furthermore, by the optimality of $q_n(\xi)$ for the empirical minimax problem at $h=\xi$, we have 
\begin{equation*}
    \EE_n \Big[ \psi(\xi,q_n(\xi)) \Big] - \gamma_n^q \|q_n(\xi)\|_\Qcal^2 \geq \EE_n\Big[ \psi(\xi,\Pi_n q^\dagger) \Big] - \gamma_n^q \|\Pi_n q^\dagger\|_\Qcal^2 \,,
\end{equation*}
and therefore
\begin{align*}
    \EE_n\Big[ \psi(\epsilon, q_n(\xi)) - \psi(\epsilon, q^\dagger) \Big] - \gamma_n^q \|q_n(\xi)\|_\Qcal^2 + \gamma_n^q \|\Pi_n q^\dagger\|_\Qcal^2 + \EE_n \Big[ \psi(\epsilon,q^\dagger) - \psi(\epsilon, \Pi_n q^\dagger) \Big] \geq 0
\end{align*}
It follows that 
\begin{align*}
    &\frac{1}{2} \|q_n(\xi) - q_0(\xi)\|_{2,2}^2 - \frac{1}{2} \|q_0(\xi) - q^\dagger\|_{2,2}^2 = -\EE\Big[ \psi(\xi,q_n(\xi)) - \psi(\xi,q^\dagger) \Big] \\
    &\leq (\EE_n - \EE)\Big[ \psi(\xi,q_n(\xi)) - \psi(\xi,q^\dagger) \Big] + \EE_n\Big[ \psi(\xi,q^\dagger) - \psi(\xi,\Pi_n q^\dagger) \Big] - \gamma_n^q \|q_n(\xi)\|_\Qcal^2 + \gamma_n^q \|\Pi_n q^\dagger\|_\Qcal^2 \\
    &\leq (\EE_n - \EE)\Big[ \psi(\xi,q_n(\xi)) - \psi(\xi,q^\dagger) + \psi(\xi,q^\dagger) - \psi(\xi,\Pi_n q^\dagger) \Big] \\
    &\qquad + \EE\Big[ \psi(\xi,q^\dagger) - \psi(\xi,\Pi_n q^\dagger) \Big] - \gamma_n^q \|q_n(\xi)\|_\Qcal^2 + \gamma_n^q \|\Pi_n q^\dagger\|_\Qcal^2 \\
    &= (\EE_n-\EE)\Bigg[ G\Big(W;\xi-\xi_0,q_n(\xi)-q^\dagger\Big) - G\Big(W;\xi-\xi_0,\Pi_n q^\dagger-q^\dagger\Big) + \Psi_n\Big(q_n(\xi) - q^\dagger\Big) \\
    &\qquad\qquad\qquad - \Psi_n\Big(\Pi_n q^\dagger - q^\dagger\Big) \Bigg] + \EE\Big[ \psi(\xi,q^\dagger) - \psi(\xi,\Pi_n q^\dagger) \Big] - \gamma_n^q \|q_n(\xi)\|_\Qcal^2 + \gamma_n^q \|\Pi_n q^\dagger\|_\Qcal^2 \,,
\end{align*}
where $\Psi_n$ is defined as in the statement of \Cref{lem:dual-high-prob}.

Moreover, we can further bound  $\EE\Big[ \psi(\xi,q^\dagger) - \psi(\xi,\Pi_n q^\dagger) \Big]$ as follows. 
\begin{align*}
    &\EE\Big[ \psi(\xi,q^\dagger) - \psi(\xi,\Pi_n q^\dagger) \Big] \\ 
    &= \EE\Big[ -G(\xi, \Pi_n q^\dagger - q^\dagger) + \frac{1}{2} (\Pi_n q^\dagger - q^\dagger)(T)^\top (\Pi_n q^\dagger - q^\dagger)(T) + q^\dagger(T)^\top (\Pi_n q^\dagger - q^\dagger)(T) \Big] \\
    &\le \EE\Big[ \Big(q^\dagger(T) - k(S,T)^\top \xi(S)\Big)^\top (\Pi_n q^\dagger - q^\dagger)(T) \Big] + \frac{1}{2} \delta_n^2 \\
    &\leq \EE\Big[ \Big(q^\dagger(T) - k(S,T)^\top \xi_0(S)\Big)^\top (\Pi_n q^\dagger - q^\dagger)(T) \Big] - \EE\Big[ G(W;\xi - \xi_0,\Pi_n q^\dagger - q^\dagger) \Big] + \frac{1}{2} \delta_n^2 \\
    &= - \EE\Big[ G(W;\xi - \xi_0,\Pi_n q^\dagger - q^\dagger) \Big] + \frac{1}{2} \delta_n^2 \\
    &\leq \|\xi - \xi_0\|_w \delta_n + \frac{1}{2} \delta_n^2 \\
    &\leq \frac{1}{2} \|\xi - \xi_0\|_w^2 + \delta_n^2 \,,
\end{align*}
where in the above we apply the fact that $q^\dagger(T) = q_0(\xi_0)(T) = \Pi_\Qcal[k(S,T)^\top \xi_0(S) \mid T]$, along with \Cref{assum:universal-approximation-q-general}, and the AM-GM inequality.

Now, putting the previous bounds together and plugging in the bounds from the high-probability event of \Cref{lem:dual-high-prob}, as well as the fact that $\Omega^q(\Pi_n q^\dagger) \leq M \leq \Omega^q(q_n(\xi))$, $-\|q\|_\Qcal^2 \leq -\Omega^q(q)^2 + M^2$, and the AM-GM inequality, we get
\begin{align*}
    &\frac{1}{2} \|q_n(\xi) - q_0(\xi)\|_{2,2}^2 - \frac{1}{2} \|q_0(\xi) - \Pi_n q^\dagger\|_{2,2}^2 \\
    &\leq 144 \epsilon_n \Big( \Omega^q(q_n(\xi)) \|q_n(\xi) - q^\dagger\|_{2,2} + M \|\Pi_n q^\dagger - q^\dagger\|_{2,2} \Big) \\
    &\qquad + \Big( 432 \Omega^q(q_n(\xi))^2 + 288 \Omega^q(q_n(\xi)) \Omega^\xi(\xi) \Big)\epsilon_n^2 + \frac{1}{2} \|\xi - \xi_0\|_w^2 + \delta_n^2 - \gamma_n^q \|q_n(\xi)\|_\Qcal^2 + \gamma_n^q \|\Pi_n q^\dagger\|_\Qcal^2 \\
    &\leq 144 \epsilon_n \Omega^q(q_n(\xi)) \|q_n(\xi) - q^\dagger\|_{2,2} + 648 \Omega^q(q_n(\xi))^2 \epsilon_n^2 + 144 \Omega^\xi(\xi)^2 \epsilon_n^2 \\
    &\qquad + 73 \delta_n^2 + \frac{1}{2} \|\xi-\xi_0\|_w^2 + 2 M^2 \gamma_n^q - \gamma_n^q \Omega^q(q_n(\xi))^2 \,.
\end{align*}

Next, applying the above bound, we have
\begin{align*}
    \frac{1}{2} \|q_n(\xi) - q^\dagger\|_{2,2}^2 &\leq \|q_n(\xi) - q_0( \xi_n)\|_{2,2}^2 + \|q_0( \xi_n) - q^\dagger\|_{2,2}^2 \\
    &\leq 2\Big( \frac{1}{2} \|q_n(\xi) - q_0( \xi_n)\|_{2,2}^2 - \frac{1}{2} \|q_0( \xi_n) - q^\dagger\|_{2,2}^2 \Big) + 2 \|q_0( \xi_n) - q^\dagger\|_{2,2}^2 \\
    &\leq 288 \epsilon_n \Omega^q(q_n(\xi)) \|q_n(\xi) - q^\dagger\|_{2,2} + 1296 \Omega^q(q_n(\xi))^2 \epsilon_n^2 \\
    &\qquad + 288 \Omega^\xi(\xi)^2 \epsilon_n^2 + 146 \delta_n^2 + 3 \|\xi-\xi_0\|_w^2 + 4 M^2 \gamma_n^q - 2 \gamma_n^q \Omega^q(q_n(\xi))^2 \,,
\end{align*}
where in the third inequality we apply the fact that $q^\dagger = q_0(\xi_0)$, so therefore $\|q_0(\xi) - q^\dagger\|_{2,2} = \|\xi-\xi_0\|_w$.
Next, noting that $288 \epsilon_n \Omega^q(q_n(\xi)) \|q_n(\xi) - q^\dagger\|_{2,2} \leq \frac{1}{4} \|q_n(\xi) - q^\dagger\|_{2,2}^2 + 82944 \epsilon_n^2 \Omega^q(q_n(\xi))^2$ by the AM-GM inequality, the previous bound reduces further to
\begin{align*}
    \frac{1}{4} \|q_n(\xi) - q^\dagger\|_{2,2}^2 &\leq 84240 \Omega^q(q_n(\xi))^2 \epsilon_n^2 + 288 \Omega^\xi(\xi)^2 \epsilon_n^2 + 146 \delta_n^2 + 3 \|\xi-\xi_0\|_w^2 \\
    &\qquad + 4 M^2 \gamma_n^q - 2 \gamma_n^q \Omega^q(q_n(\xi))^2 \,,
\end{align*}

Now, regardless of the value of $\|q_n(\xi)\|_\Qcal$, and noting that $\sqrt{x+y} \leq \sqrt{x} + \sqrt{y}$ for non-negative $x$ and $y$, the above bound reduces to
\begin{equation*}
    \|q_n(\xi) - q^\dagger\|_{2,2} \leq \Big( 581 \Omega^q(q_n(\xi)) + 34 \Omega^\xi(\xi)\Big) \epsilon_n + 25 \delta_n + 4 \|\xi - \xi_0\|_w + 4 M (\gamma_n^q)^{1/2} \,,
\end{equation*}
which is the first required bound.
Furthermore, in the case that $\gamma_n^q \geq 42120 \epsilon_n^2$, the terms involving $\Omega^q(q_n(\xi))$ vanish, which gives us our second required bound
\begin{equation*}
    \|q_n(\xi) - q^\dagger\|_{2,2} \leq 34 \Omega^\xi(\xi) \epsilon_n + 25 \delta_n + 4 \|\xi - \xi_0\|_w + 4 M (\gamma_n^q)^{1/2} \,,
\end{equation*}

\end{proof}

\begin{proof}[Proof of \Cref{lem:dual-strong-convexity-bound}]

Define $J(h) = \sup_{q \in \Qcal} \EE[\psi(h,q)]$. As argued in the proof of \Cref{lem:dual-feasibility}, we have
\begin{equation*}
    J(h) = \EE \Big[ \frac{1}{2} \Pi_\Qcal\Big[ k(S,T)^\top h(S) \mid T\Big]^\top \Pi_\Qcal\Big[ k(S,T)^\top h(S) \mid T\Big] - m(W;h) \Big] \,.
\end{equation*}
Now, let some arbitrary $\xi \in \Hcal$ be given, and define
\begin{equation*}
    g(t) = J(\xi_0 + t(\xi - \xi_0)) \,.
\end{equation*}
Then, as argued in the proof of \Cref{lem:dual-feasibility}, we have $g'(0) = 0$, and $g''(t) = \|\Pi_\Qcal[k(S,T)^\top (\xi - \xi_0)(S) \mid T]\|_{2,2}^2 = \|\xi - \xi_0\|_w^2$ for all $t$. Therefore, by strong convexity we have
\begin{align*}
    &g(1) - g(0) \geq \frac{1}{2} \|\xi - \xi_0\|_w^2 \\
    &\iff \frac{1}{2} \|\xi - \xi_0\|_w^2 \leq \sup_{q \in \Qcal} \EE[\psi(\xi,q)] - \sup_{q \in \Qcal} \EE[\psi(\xi_0,q)] \,,
\end{align*}
where the second equality follows by noting that $g(0) = J(\xi_0)$, $g(1) = J(\xi)$, and by plugging in the definition of $J$.
Since this holds for arbitrary $\xi \in \Hcal$, we are done.

\end{proof}

\begin{proof}[Proof of \Cref{lem:dual-minimax-erm-bound}]

The proof here is very similar to that of \Cref{lem:primal-minimax-erm-bound}. First, we have
\begin{align*}
    &\sup_{q \in \Qcal} \EE[\psi(\hat \xi_n,q)] - \sup_{q \in \Qcal} \EE[\psi(\xi_0,q)] \\
    &\leq \EE[\psi(\hat \xi_n,q_0(\hat \xi_n))] - \EE[\psi(\xi_0,q_n(\Pi_n \xi_0))] \\
    &\leq \EE[\psi(\hat \xi_n,q_0(\hat \xi_n))] - \EE[\psi(\xi_0,q_n(\Pi_n \xi_0))]  - \sup_{q \in \Qcal_n} \Big( \EE_n[\psi(\hat \xi_n,q)] - \gamma_n^q \|q\|_\Qcal^2 + \gamma_n^\xi \|\hat\xi_n\|_\Xi^2 \Big) \\
    &\qquad + \sup_{q \in \Qcal_n} \Big( \EE_n[\psi(\Pi_n \xi_0,q)] - \gamma_n^q \|q\|_\Qcal^2 + \gamma_n^\xi \|\Pi_n \xi_0\|_\Xi^2 \Big) \\
    &= \EE[\psi(\hat \xi_n,\Pi_n q_0(\hat \xi_n))] - \EE[\psi(\xi_0,q_n(\Pi_n \xi_0))]  - \sup_{q \in \Qcal_n} \Big( \EE_n[\psi(\hat \xi_n,q)] - \gamma_n^q \|q\|_\Qcal^2 \Big) \\
    &\qquad + \EE_n[\psi(\xi_0,q_n(\Pi_n \xi_0))] - \gamma_n^q \|q_n(\Pi_n \xi_0)\|_\Qcal^2 + M^2 \gamma_n^\xi - \gamma_n^\xi \|\hat\xi_n\|_\Xi^2 + \Ecal_3 + \Ecal_4 \\
    &\leq (\EE - \EE_n)\Big[\psi(\hat \xi_n,\Pi_n q_0(\hat \xi_n)) - \psi(\xi_0,q_n(\Pi_n \xi_0))\Big] + \gamma_n^q \|\Pi_n q_0(\hat\xi_n)\|_\Qcal^2 + M^2 \gamma_n^\xi \\
    &\qquad - \gamma_n^q \| q_n(\Pi_n \xi_0) \|_\Qcal^2 - \gamma_n^\xi \|\hat\xi_n\|_\Xi^2 + \Ecal_3 + \Ecal_4 \,,
\end{align*}
where the first and third inequalities follow by relaxing the supremum in the negative terms, and the second inequality follows by the optimality of $\hat \xi_n$ for the empirical minimax objective, and where
\begin{align*}
    \Ecal_3 &= \EE\Big[\psi(\hat \xi_n, q_0(\hat \xi_n)) - \psi(\hat \xi_n, \Pi_n q_0(\hat \xi_n)) \Big] \\
    \Ecal_4 &= \EE_n[\psi(\Pi_n \xi_0, q_n(\Pi_n \xi_0))] - \EE_n[\psi(\xi_0, q_n(\Pi_n \xi_0))]
\end{align*}

Now, let us bound the error terms $\Ecal_3$ and $\Ecal_4$. For the first, using the shorthand $\Delta = q_0(\hat \xi_n) - \Pi_n q_0(\hat \xi_n)$, we have
\begin{align*}
    \Ecal_3 &= \EE\Big[ G(W;\hat \xi_n, \Delta) - \frac{1}{2} q_0(\hat \xi_n)(T)^\top q_0(\hat \xi_n)(T) + \frac{1}{2} \Pi_n q_0(\hat \xi_n)(T)^\top \Pi_n q_0(\hat \xi_n)(T) \Big] \\
    &= \EE\Big[ G(W;\hat \xi_n, \Delta) - q_0(\hat \xi_n)(T)^\top \Delta(T) + \frac{1}{2} \Delta(T)^\top \Delta(T) \Big] \\
    &= \EE\Big[ \frac{1}{2} \Delta(T)^\top \Delta(T) \Big] \leq \frac{1}{2} \delta_n^2 \,,
\end{align*}
where the second last step follows since $\EE[G(W;\xi,\Delta)] = \EE[q_0(\xi)(T)^\top \Delta(T)]$ for any $\xi$, and the final inequality follows by \Cref{assum:universal-approximation-q-general}. 

Next, for the second term above, we can bound
\begin{align*}
    \Ecal_4 &\leq \EE_n\Big[ \psi(\Pi_n \xi_0,  q_n(\Pi_n \xi_0) -  \psi(\xi_0, q_n(\Pi_n \xi_0) \Big] \\
    &= \EE_n\Big[ G(W;\Pi_n \xi_0 - \xi_0, q_n(\Pi_n \xi_0) - q^\dagger) + \widetilde\psi_n(\Pi_n \xi_0 - \xi_0) \Big] \\
    &= (\EE_n - \EE)\Big[ G(W;\Pi_n \xi_0 - \xi_0, q_n(\Pi_n \xi_0) - q^\dagger) + \widetilde\psi_n(\Pi_n \xi_0 - \xi_0) \Big] \\
    &\qquad + \EE\Big[ G(W;\Pi_n \xi_0 - \xi_0; q_n(\Pi_n \xi_0)) - m(W;\Pi_n \xi_0 - \xi_0) \Big] \\
    &= (\EE_n - \EE)\Big[ G(W;\Pi_n \xi_0 - \xi_0, q_n(\Pi_n \xi_0) - q^\dagger) + \widetilde\psi_n(\Pi_n \xi_0 - \xi_0) \Big] \\
    &\qquad + \EE\Big[ G(W;\Pi_n \xi_0 - \xi_0; q_n(\Pi_n \xi_0) - q^\dagger) \Big] \\
    &\leq 36 \epsilon_n \|q_n(\Pi_n \xi_0) - q^\dagger\|_{2,2} + 36 \epsilon_n \delta_n + 144 \Omega^q(q_n(\Pi_n \xi_0)) \Omega^\xi(\Pi_n \xi_0) \epsilon_n^2 + 72 \Omega^\xi(\Pi_n \xi_0) \epsilon_n^2 \\
    &\qquad + \delta_n \|q_n(\Pi_n \xi_0) - q^\dagger\|_{2,2} \\
    &\leq (36 \epsilon_n + \delta_n) \|q_n(\Pi_n \xi_0) - q^\dagger\|_{2,2} + 234 \Omega^q(q_n(\Pi_n \xi_0))^2 \epsilon_n^2 + 18 \delta_n^2
\end{align*}
where the first inequality follows by relaxing the supremum in the negative term, 
the second and third  equalities follow from the fact that $\Eb{G(W; \xi, q^\dagger) - m(W; \xi)} = 0$ for $\xi = \Pi_n\xi_0 - \xi_0$, 
the second inequality follows from the assumed high-probability event of \Cref{lem:dual-high-prob} along with \Cref{assum:universal-approximation-q-general}, and the fact that $\EE[G(W;h,q)] = \EE[(Ph)(S)^\top q(T)]$,
and the third inequality follows by the AM-GM inequality along with $1 \leq \Omega^\xi(\Pi_n \xi_0) \leq M \leq \Omega^q(q_n(\Pi_n \xi_0))$.
Furthermore, applying \Cref{lem:dual-saddle-point-bound}, along with the same bounds used above, we have
\begin{equation*}
    \|q_n(\Pi_n\xi_0) - q^\dagger\|_{2,2} \leq 615 \Omega^q(q_n(\Pi_n\xi_0)) \epsilon_n + 29 \delta_n + 4 M (\gamma_n^q)^{1/2} \,,
\end{equation*}
and so combining these with teh AM-GM inequality gives
\begin{equation*}
    \Ecal_4 \leq 22734 \Omega^q(q_n(\Pi_n \xi_0))^2 \epsilon_n^2 + 357 \delta_n^2 + 74 M^2 \gamma_n^q
\end{equation*}

Finally, putting all of the above together, we get our final required result of
\begin{align*}
    &\sup_{q \in \Qcal} \EE[\psi(\hat \xi_n,q)] - \sup_{q \in \Qcal} \EE[\psi(\xi_0,q)] \\
    &\leq (\EE - \EE_n)\Big[\psi(\hat \xi_n,\Pi_n q_0(\hat \xi_n)) - \psi(\xi_0,q_n(\Pi_n \xi_0)\Big] + 22734 \Omega^q(q_n(\Pi_n\xi_0))^2 \epsilon_n^2 \\
    &\qquad + 358 \delta_n^2 + 75 \Omega^q(\Pi_n q_0(\hat\xi_n))^2 \gamma_n^q + M^2 \gamma_n^\xi - \gamma_n^q \|q_n(\Pi_n \xi_0)\|_\Qcal^2 - \gamma_n^\xi \|\hat\xi_n\|_\Xi^2
\end{align*}

\end{proof}

\begin{proof}[Proof of \Cref{lem:dual-stochastic-equicontinuity-bound}]

First, given the decomposition from the proof overview, we have
\begin{align*}
    &\psi(\hat \xi_n, \Pi_n q_0(\hat \xi_n)) - \psi(\xi_0, q_n(\Pi_n \xi_0)) \\
    &= \Psi_n\Big(\Pi_n q_0(\hat \xi_n) - q^\dagger \Big) - \Psi_n\Big(q_n(\Pi_n \xi_0) - q^\dagger \Big) \\
    &\qquad + \widetilde\Psi_n(\hat\xi_n - \xi_0) + G\Big(W; \hat \xi_n - \xi_0, \Pi_n q_0(\hat \xi_n) - q^\dagger \Big) \,.
\end{align*}
where $\Psi_n$ and $\widetilde\Psi_n$ are defined as in the statement of \Cref{lem:dual-high-prob}.

Now, applying the bounds of \Cref{lem:dual-high-prob} to the above, under its high-probability event we get
\begin{align*}
    &\Big| (\EE_n - \EE) \Big[ \psi(\hat \xi_n, \Pi_n q_0(\hat \xi_n)) - \psi(\xi_0, q_n(\Pi_n \xi_0) \Big] \Big| \\
    &\leq 144 \Omega^q(\Pi_n q_0(\hat\xi_n)) \|\Pi_n q_0(\hat\xi_n) - q^\dagger\|_{2,2} \epsilon_n + 108 \Omega^q(q_n(\Pi_n \xi_0)) \|q_n(\Pi_n \xi_0) - q^\dagger\|_{2,2} \epsilon_n + 36 \|\hat\xi_n - \xi_0\|_{2,2} \epsilon_n \\
    &\qquad + 216 \Omega^q(\Pi_n q_0(\hat\xi_n))^2 \epsilon_n^2 + 216 \Omega^q(q_n(\Pi_n \xi_0))^2 \epsilon_n^2 + 144 \Omega^q(\Pi_n q_0(\hat\xi_n)) \Omega^\xi(\hat\xi_n) \epsilon_n^2 + 72 \Omega^\xi(\hat\xi_n) \epsilon_n^2 \\
    &\leq 144 \Omega^q(\Pi_n q_0(\hat\xi_n)) \|\Pi_n q_0(\hat\xi_n) - q^\dagger\|_{2,2} \epsilon_n + 108 \Omega^q(q_n(\Pi_n \xi_0)) \|q_n(\Pi_n \xi_0) - q^\dagger\|_{2,2} \epsilon_n + 72 \epsilon_n \\
    &\qquad + \Big( 288 \Omega^q(\Pi_n q_0(\hat\xi_n))^2 + 216 \Omega^q(q_n(\Pi_n \xi_0))^2 + 144 \Omega^\xi(\hat\xi_n)^2 \Big) \epsilon_n^2 \,,
\end{align*}
where the second inequality follows from the AM-GM inequality and $\|\xi\|_\infty \leq 1$ for all $\xi \in \{\xi_0\} \cup \Xi_n$. Furthermore, by \Cref{lem:dual-saddle-point-bound} we have
\begin{align*}
    \|q_n(\Pi_n \xi_0) - q^\dagger\|_{2,2} &\leq 615 \Omega^q(q_n(\Pi_n \xi_0)) \epsilon_n + 25 \delta_n + 4\|\Pi_n \xi_0 - \xi_0\|_w + 4 M (\gamma_n^q)^{1/2} \\
    &\leq 615 \Omega^q(q_n(\Pi_n \xi_0)) \epsilon_n + 29 \delta_n + 4 M (\gamma_n^q)^{1/2} \,,
\end{align*}
which holds under the same high-probability event.
Therefore, combining together all of the above, applying the AM-GM inequality again, we have
\begin{align*}
    &\Big| (\EE_n - \EE) \Big[ \psi(\hat \xi_n, \Pi_n q_0(\hat \xi_n)) - \psi(\xi_0, q_n(\Pi_n \xi_0) \Big] \Big| \\
    &\leq 144 \Omega^q(\Pi_n q_0(\hat\xi_n)) \epsilon_n \|\hat \xi_n - \xi_0\|_w  + 72 \epsilon_n + 3204 \delta_n^2 + 216 M^2 \gamma_n^q \\ 
    &\qquad + \Big( 360 \Omega^q(\Pi_n q_0(\hat\xi_n))^2 + 68418 \Omega^q(q_n(\Pi_n \xi_0))^2 + 144 \Omega^\xi(\hat\xi_n)^2 \Big) \epsilon_n^2  \,,
\end{align*}
which is our required bound.

\end{proof}

\section{Proofs for Debiased Inference Results}

Here, we provide proofs for our debiased inference results. Specifically, we provide proofs for \Cref{thm:dml-asymp} and \Cref{lem:inference}, along with thier general analogues \Cref{thm:dml-asymp-general} and \Cref{lem:inference-general}. As with the proofs of the minimax estimation results, since the former are special cases of the latter, we only provide explicit proofs of the latter.

\begin{proof}[Proof of \Cref{thm:dml-asymp-general}]

For any function $f(W)$, denote $\hP f = \int f(w)p(w)\diff w$, and $\hP_{n, k} f =\sum_{i\in\Ical_k}f(W_i)/\abs{\Ical_k}$. 
Also denote $$\hat\theta_k = \frac{1}{\abs{\Ical_k}}\sum_{i \in \Ical_k} \psi(W_i; \hat h^{(k)}, \hat q^{(k)}).$$
It is easy to verify that we have 
\begin{align*}
\hat\theta_k -\theta^\star 
   &= \hP_{n, k}\bracks{\psi(W;  h^\dagger,  q^\dagger) -\theta^\star} \\
   &+ \prns{\hP_{n, k} - \hP}\bracks{\psi(W;  \hat h^{(k)},  \hat q^{(k)}) - \psi(W;  h^\dagger,  q^\dagger)} + {\hP}\bracks{\psi(W;  \hat h^{(k)},  \hat q^{(k)}) - \psi(W;  h^\dagger,  q^\dagger)}.
\end{align*}
By simple algebra, we can show that there exists a universal constant $c$ such that 
\begin{align*}
\hP\bracks{\prns{\psi(W;  \hat h^{(k)},  \hat q^{(k)}) - \psi(W;  h_R,  q_R)}^2} \le c \|\hat h^{(k)} - h_R\|_2^2 + c\|\hat q - q\|_2^2 = o_p\prns{1} 
\end{align*}
since $\|\hat h^{(k)} - h_R\|_{2}^2 = o_p(1)$ and $\|\hat q^{(k)} - q_R\|_{2}^2 = o_p(1)$ according to \Cref{thm:q-estimator-bound-general,thm:h-estimator-bound-general}.

Thus by Markov inequality, we have
\begin{align*}
\abs{\prns{\hP_{n, k} - \hP}\bracks{\psi(W;  \hat h^{(k)},  \hat q^{(k)}) - \psi(W;  h_R,  q_R)}}  = o_p\prns{\frac{1}{\sqrt{\abs{\Ical_k}}}} = o_p\prns{\frac{1}{\sqrt{n}}}.
\end{align*}

Moreover, according to \Cref{lem:general-dr}, 
\begin{align*}
\abs{{\hP}\bracks{\psi(W;  \hat h^{(k)},  \hat q^{(k)}) - \psi(W;  h^\dagger,  q^\dagger)}} \le  \|P[\hat h^{(k)} - h_R]\|_2\|\hat q^{(k)} - q_R\|_2.
\end{align*}
According to  \Cref{thm:q-estimator-bound-general,thm:h-estimator-bound-general}, we have 
   \begin{align*}
        &\|P(\hat h_n - h_0)\|_{2,2} \leq 395 \epsilon_n + 43 \delta_n + 2 \mu_n^{1/2} \,, \\
        &\|\hat q_n - q^\dagger\|_{2} \leq 771 \epsilon_n^{1/2} + 30 \delta_n  \,,
    \end{align*}
where $\epsilon_n = r_n + c_2 \sqrt{\log(c_1/\zeta)/n}$. 

It follows that 
\begin{align*}
\abs{{\hP}\bracks{\psi(W;  \hat h^{(k)},  \hat q^{(k)}) - \psi(W;  h^\dagger,  q^\dagger)}}  = O_p\prns{\epsilon_n^{3/2} + \delta_n\epsilon_n^{1/2}  + \delta_n\epsilon_n + \delta_n^2 + \delta_n\mu_n^{1/2} + \epsilon_n^{1/2}\mu_n^{1/2}}.
\end{align*}
Therefore, when $r_n = o(n^{-1/3}), \delta_n = o(n^{-1/4}), \delta_nr_n^{1/2} = o(n^{-1/2}), \mu_n r_n = o(n^{-1})$ and $\mu_n \delta_n^2 = o(n^{-1})$, we have 
\begin{align*}
\abs{{\hP}\bracks{\psi(W;  \hat h^{(k)},  \hat q^{(k)}) - \psi(W;  h^\dagger,  q^\dagger)}} = o_p(n^{-1/2}), ~~ \text{for } k = 1, \dots, K.
\end{align*}

Therefore, 
\begin{align*}
\hat\theta_k -\theta^\star 
   &= \hP_{n, k}\bracks{\psi(W;  h^\dagger,  q^\dagger) -\theta^\star} + o_p(n^{-1/2}).
\end{align*}
This implies that 
\begin{align*}
\sqrt{n}\prns{\hat\theta_n - \theta^\star} = \frac{1}{\sqrt{n}}\sum_{i=1}^n \prns{\psi(W_i;  h^\dagger,  q^\dagger) -\theta^\star} + o_p(1). 
\end{align*}
By Central Limit Theorem, we have that when $n \to \infty$,
\begin{align*}
\sqrt{n}\prns{\hat\theta_n - \theta^\star} \rightsquigarrow \mathcal{N}(0, \sigma^2_0),
\end{align*}
where 
\begin{align*}
\sigma^2_0 = \Eb{\prns{\psi(W;  h^\dagger,  q^\dagger) -\theta^\star}^2}. 
\end{align*}
\end{proof}

\begin{proof}[Proof of \Cref{lem:inference-general}]

Let arbitrary fold $k \in [K]$ be given, and let $n_k$ denot the size of the fold. Define
\begin{equation*}
    \hat\sigma_{n,k}^2 = \EE_{n,k} \Big[  \Big( \hat\theta_n - \psi(W;\hat h^{(k)}, \hat q^{(k)}) \Big)^2 \Big]\,,
\end{equation*}
where $\EE_{n,k}$ denotes the empirical expectation operator over the data in the $k$'th fold. Since $\hat\sigma_n^2$ is given by the weighted average of $\hat\sigma_{n,k}^2$ for all $k \in [K]$, it is sufficient to argue that $\hat\sigma_{n,k}^2 \to \sigma_0^2$ in probability for arbitrary $k$.

Now, we have
\begin{align*}
    \hat\sigma_{n,k}^2 - \sigma_0^2 &= (\EE_{n,k} - \EE) \Big[ \Big( \theta_0 - \psi(W; h_0, q_0) \big)^2 \Big] \\
    &\qquad + \EE_{n,k} \Big[ \Big( \hat\theta_n - \psi(W;\hat h^{(k)}, \hat q^{(k)}) \Big)^2 - \Big( \theta_0 - \psi(W; h_0, q_0) \big)^2 \Big]
\end{align*}
The former of these terms converges to zero in probability by the law of large numbers, so we can focus on the second of these terms, which we can re-arrange as
\begin{align*}
    &\EE_{n,k} \Big[ \Big( \hat\theta_n - \psi(W;\hat h^{(k)}, \hat q^{(k)}) \Big)^2 - \Big( \theta_0 - \psi(W; h_0, q_0) \big)^2 \Big] \\
    &= \EE_{n,k} \Big[ (\hat\theta_n - \theta_0) \Big(\hat\theta_n + \theta_0 - \psi(W;\hat h^{(k)},\hat q^{(k)}) - \psi(W;h_0,q_0) \Big)\Big] \\
    &\qquad - \EE_{n,k} \Big[ \Big(\psi(W;h_0,q_0) - \psi(W;\hat h^{(k)},\hat q^{(k)}) \Big) \Big(\hat\theta_n + \theta_0 - \psi(W;\hat h^{(k)},\hat q^{(k)}) - \psi(W;h_0,q_0) \Big)\Big] \\
\end{align*}
Furthermore, by \Cref{assum:boundedness-general,assum:bounded-estimation-h-general,assum:bounded-estimation-q-general}, it easily follows that $|\theta_0| \leq 1$, $|\hat\theta_n| \leq 3$, $\|\psi(W;\hat h^{(k)},\hat q^{(k)})\|_\infty \leq 3$, and $\|\psi(W;h_0,q_0)\|_\infty \leq 3$. Therefore, we have
\begin{equation*}
    \Big\| \hat\theta_n + \theta_0 - \psi(W;\hat h^{(k)},\hat q^{(k)}) - \psi(W;h_0,q_0) \Big\|_\infty \leq 10 \,,
\end{equation*}
and so
\begin{align*}
    &\Big| \EE_{n,k} \Big[ \Big( \hat\theta_n - \psi(W;\hat h^{(k)}, \hat q^{(k)}) \Big)^2 - \Big( \theta_0 - \psi(W; h_0, q_0) \big)^2 \Big] \Big| \\
    &\leq 10 |\hat\theta_n - \theta_0| + 10 \EE_{n,k} \Big[ \Big| \psi(W;h_0,q_0) - \psi(W;\hat h^{(k)},\hat q^{(k)}) \Big|\Big]
\end{align*}
Now, by \Cref{thm:dml-asymp-general} we know that $\hat\theta_n - \theta_0 \to 0$ in probability. Furthermore, by \Cref{thm:h-estimator-bound-general,thm:q-estimator-bound-general}, along with the continuity of $\psi$ in $h$ and $q$ by \Cref{assum:boundedness-general}, we know that $\|\psi(W;h_0,q_0) - \psi(W;\hat h^{(k)},\hat q^{(k)})\|_1 \to 0$ in probability. Therefore, we have
\begin{align*}
    &\Big| \EE_{n,k} \Big[ \Big( \hat\theta_n - \psi(W;\hat h^{(k)}, \hat q^{(k)}) \Big)^2 - \Big( \theta_0 - \psi(W; h_0, q_0) \big)^2 \Big] \Big| \\
    &\leq 10 (\EE_{n,k} - \EE)[\Delta_n] + o_p(1) \,,
\end{align*}
where $\Delta_n = | \psi(W;h_0,q_0) - \psi(W;\hat h^{(k)},\hat q^{(k)}) |$. Now, since the randomness of $(\hat h^{(k)}, \hat q^{(k)})$ is independent of that of the $k$'th fold (by cross fitting), we can apply \emph{e.g.} Höffding's inequality to the above for any given realization of $(\hat h^{(k)},\hat q^{(k)})$, to get a high-probability bound on $(\EE_{n,k} - \EE)[\Delta_n]$ over the independent sample of data given by $\EE_{n,k}$. That is, given any fixed $\epsilon > 0$, regardless of the realization of $(\hat h^{(k)},\hat q^{(k)})$ we can define some sequence $\delta_n \to 0$ such that $|(\EE_{n,k} - \EE)[\Delta_n]| \leq \epsilon$ with probability at least $1-\delta_n$ over the randomness of the $k$'th fold. That is, we have $(\EE_{n,k} - \EE)[\Delta_n] = o_p(1)$, so we can conclude that $\hat\sigma_n^2 \to \sigma_0^2$ in probability.

Moreover, the Slutsky's theorem implies that 
\begin{align*}
\frac{\sqrt{n}\prns{\hat\theta_n-\theta^\star}}{\hat \sigma_n^2} \rightsquigarrow \mathcal{N}\prns{0, 1}. 
\end{align*}
It follows that as $n\to\infty$, 
\begin{align*}
\Prb{\theta^\star \in \op{CI}} \to 1-\alpha. 
\end{align*}

\end{proof}

\section{Proofs for Partially Linear Model Estimation}
\begin{proof}[Proof for \Cref{prop: partial-iv-q2}]
According to \Cref{eq: partial-iv-q}, we have 
\begin{align*}
\Eb{{q_0(Z)}h(X)} = \Eb{\alpha(X)h(X)}, ~~ \forall h \in \Hcal.
\end{align*}
Since any partially linear $h\in\Hcal$ can be written as $h(X) = {X^\top_a\theta + g(X_b)}$, we have 
\begin{align*}
\Eb{q_0(Z)\prns{X_a^\top\theta + g(X_b)}} = \theta, ~~ \forall \theta\in\R{d_a}, g \in \Lcal_2(X_b). 
\end{align*}
This is apparently equivalent to \Cref{eq: partial-iv-q2}. 
\end{proof}

\begin{proof}[Proof for \Cref{prop: chen-nuisance}]
Obviously, $\xi_{0, i}$ is partially linear in $X$ so it belongs to the partially linear class $\Hcal$. According to \Cref{thm: new-nuisance}, $\xi_{0, i} \in \Hcal$ is a solution to \Cref{eq: PL-iv-delta} if and only if $q_{0, i} = P\xi_{0, i}$ satisfies \Cref{eq: partial-iv-q},  
so we can prove the desired conclusion once we prove that $q_{0, i} = P\xi_{0, i}$ satisfies \Cref{eq: partial-iv-q} for all $i = 1, \dots, d_a$. According to \Cref{prop: partial-iv-q2}, we only need to verify that 
$q_0 = (P\xi_{0, 1}, \dots, \xi_{0, d_a})$ satisfy \Cref{eq: partial-iv-q2}. 

Since $\rho_{0, i}$ solves the minimization problem in \Cref{eq: chen-nuisance-1}, it satisfies the first order condition that 
\begin{align*}
\Eb{\Eb{X^{(i)}_a - \rho_{0, i}(X_b) \mid Z}\Eb{\rho(X_b) \mid Z}} = \Eb{\Eb{X^{(i)}_a - \rho_{0, i}(X_b) \mid Z}{\rho(X_b)}} = 0, ~~ \forall \rho \in \Lcal_2(X_b). 
\end{align*}
This implies that $$\Eb{[P\xi_{0, i}](Z) \mid X_b}= \Eb{\Gamma^{-1}\Eb{X^{(i)}_a - \rho_{0, i}(X_b) \mid Z}\mid X_b} = {\Gamma^{-1}\Eb{\Eb{X^{(i)}_a - \rho_{0, i}(X_b) \mid Z} \mid X_b}} = 0.$$
It follows that $\Eb{q_0(Z) \mid X_b} = \mathbf{0}_{d_a}$. This further implies that 
\begin{align*}
 \Eb{q_0(Z)\rho_0(X_b)} = 0.
 \end{align*}
 Therefore, 
 \begin{align*}
  \Eb{q_0(Z)X_a^\top} &= \Eb{q_0(Z)\prns{X_a - \rho_0(X_b)}^\top} = \Eb{q_0(Z)X_a^\top} \\
  &= \Eb{q_0(Z)\prns{\Eb{X_a - \rho_0(X_b) \mid Z}}^\top} = \Gamma^{-1}\Eb{\prns{X_a-\rho_0(X_b) \mid Z}\prns{\Eb{X_a - \rho_0(X_b) \mid Z}}^\top} \\
  &= I_{d_a}.
  \end{align*}
This verifies that $q_0 = (P\xi_{0, 1}, \dots, \xi_{0, d_a})$ satisfies \Cref{eq: partial-iv-q2}, which concludes the proof.

The proof above shows the conclusion of \Cref{prop: chen-nuisance} indirectly through \Cref{prop: partial-iv-q2}. Below we provide an alternative proof that verifies the conclusion directly. 

According to the first order condition for \Cref{eq: PL-iv-delta},
function $\xi_{0, i}(X) = \theta_{0, i}^\top X_a + g_{0, i}(X_b)$ is the optimal solution to \Cref{eq: PL-iv-delta} if and only if
\begin{align*}
 &\Eb{\Eb{\theta_{0, i}^\top X_a + g_{0, i}(X_b) \mid Z}X_a^\top} = 
 \ell_i, \\
&\Eb{\Eb{\theta_{0, i}^\top X_a + g_{0, i}(X_b) \mid Z} \mid X_b} = 0,
 \end{align*} 
where $\ell_i \in \R{d_a}$ is a vector whose $i$th element is $1$ and all other elements are $0$.

If we let $\Theta_0$ be a $d_a \times d_a$ matrix whose $i$th row is $\theta_{0, i}^\top$, then we have 
\begin{align*}
&\Eb{\prns{\Theta_0 \Eb{X_a \mid Z} + \Eb{g_0(X_b) \mid Z}}X_a^\top} = I_{d_a \times d_a}, ~~ \\
&\Eb{\Eb{\Theta_{0}X_a + g_{0}(X_b) \mid Z} \mid X_b} = \mathbf{0}_{d_a}.
\end{align*}
Let's consider solutions in the column space of $\Theta_0$, namely, consider $g_0(X_b) = -\Theta_0g_0'(X_b)$. Then we have 
\begin{align*}
 \Theta_0\Eb{\Eb{X_a - g'_0(X_b) \mid Z}X_a^\top} = I_{d_a \times d_a}. 
 \end{align*} 
 This in turn implies that 
 \begin{align*}
  &\Theta_0\Eb{\Eb{X_a - g'_0(X_b) \mid Z}\prns{\Eb{X_a - g'_0(X_b) \mid Z}}^\top} \\
  &=\Theta_0 \Eb{\Eb{X_a - g'_0(X_b) \mid Z}\prns{{X_a - g'_0(X_b)}}^\top} \\
  &=  \Theta_0\Eb{\Eb{X_a - g'_0(X_b) \mid Z}X_a^\top} = I_{d_a \times d_a},
 \end{align*}
where the second equality  follows from the fact that $\Theta_0\Eb{\Eb{X_a - g'_0(X_b) \mid Z}\mid X_b} = \mathbf{0}_{d_a}$.

 This means that $\Theta_0$ and $\Eb{\Eb{X_a - g'_0(X_b) \mid Z}\prns{\Eb{X_a - g'_0(X_b) \mid Z}}^\top}$ have to be invertible, and 
 \begin{align*}
 \Theta_0 = \braces{\Eb{\Eb{X_a - g'_0(X_b) \mid Z}\prns{\Eb{X_a - g'_0(X_b) \mid Z}}^\top}}^{-1}. 
 \end{align*}
 Moreover, 
 \begin{align*}
 \mathbf{0}_{d_a} = \Eb{\Eb{\Theta_{0}X_a + g_{0}(X_b) \mid Z} \mid X_b} = \Theta_{0}\Eb{\Eb{X_a - g'_{0}(X_b) \mid Z} \mid X_b} \implies \Eb{\Eb{X_a - g'_{0}(X_b) \mid Z} \mid X_b} =  \mathbf{0}_{d_a}.
 \end{align*}
In other words, 
\begin{align*}
\Eb{\Eb{X^{(i)}_a - g'_{0, i}(X_b) \mid Z}\rho(X_b)} = 0, ~~ \forall \rho \in \Lcal_2(X_b).
\end{align*}
This is exactly the first order condition for the minimization problem in \Cref{eq: chen-nuisance-1}. 
 These show that the solutions to \Cref{eq: chen-nuisance-1} correspond to solutions to \Cref{eq: PL-iv-delta} with full rank coefficient matrix $\Theta_0$.  
\end{proof}

\begin{proof}[Proof of \Cref{prop: partial-linear-proximal}]

Let some arbitrary $h \in \Hcal_{\op{OB}}$ be given. In addition, define $\xi(v,x,a) = h(v,x,a) - \theta^{\star\top} a$. Now, for any given action $a$, we have
\begin{align*}
    \EE[\theta^{\star\top} A  + \xi(V,X,0) \mid U, X, A=a] &= \theta^{\star\top} a + \EE[\xi(V,X,0) \mid U, X, A=a] \\
    &= \theta^{\star\top} a + \EE[\xi(V,X,0) \mid U, X, A=0] \\
    &= \theta^{\star\top} a + \EE[\xi(V,X,A) \mid U, X, A=0] \\
    &= \theta^{\star^\top} a + \EE[h(V,X,A) - \theta^{\star\top} A \mid U, X, A=0] \\
    &= \theta^{\star\top} a + \EE[Y - \theta^{\star\top} A \mid U, X, A=0] \,.
\end{align*}
where in the second equality we apply the fact that $V \independent A \mid U,X$. Next, by assumption, we have $\EE[Y(a) \mid U,X] = \theta^{\star\top} a + \phi^\star(U,X)$, for each $a$. Furthermore, since $Y(a) \independent A \mid U,X$ for each $a$, we have
\begin{align*}
    \EE[Y \mid U,X,A=a] &= \EE[Y(a) \mid U,X,A=a] \\
    &= \EE[Y(a) \mid U,X] \\
    &= \theta^{\star\top} a + \phi^\star(U,X) \\
    &= \EE[\theta^{\star\top} A + \phi^\star(U,X) \mid U,X,A=a] \,.
\end{align*}
Therefore, we can further derive
\begin{align*}
    \theta^{\star\top} a + \EE[Y - \theta^{\star\top} A \mid U, X, A=0] &= \theta^{\star\top} a + \EE[\phi^\star(U,X) \mid U, X, A=0] \\
    &= \theta^{\star\top} a + \EE[\phi^\star(U,X) \mid U, X, A=a] \\
    &= \EE[\theta^{\star\top} A + \phi^\star(U,X) \mid U, X, A=a] \\
    &= \EE[Y \mid U, X, A=a] \,,
\end{align*}
where in the second equality we again apply the fact that $V \independent A \mid U,X$, and in the final equality we again apply the fact that $Y = \theta^{\star\top} A + \phi^\star(U,X)$.
Therefore, putting the above together, we have
\begin{equation*}
    \EE[Y - \theta^{\star\top} A  - \xi(V,X,0) \mid U, X, A=a] = 0 \quad \forall a \,,
\end{equation*}
so $\theta^{\star\top} A  + \xi(V,X,0) \in \tilde\Hcal_{\op{OB}}$, which proves the first part of the proposition.

For the second part of the proposition, let $q_0 = (q_{0,1},\ldots,q_{0,d_A})^\top$ be defined as in the proposiiton statement, and let $\tilde q(U,X,A) = \EE[q_0(Z,X,A) \mid U,X,A]$. Then, we have
\begin{align*}
    \EE[\tilde q(U,X,A) \mid V,X,A] &= \EE[\EE[q_0(Z,X,A) \mid U,X,A]  \mid V,X,A] \\
    &= \EE[\EE[q_0(Z,X,A) \mid U,V,X,A]  \mid V,X,A] \\
    &= \EE[q_0(Z,X,A) \mid V,X,A] \,,
\end{align*}
where the second equality follows by the proximal causal inference condition independence assumption $V \independent Z \mid U,X$.

Now, suppose that $h_1(V,A,X) = \theta_1^\top A + g_1(V,X)$ and $h_2(V,A,X) = \theta_2^\top A + g_2(V,X)$ are both bridge functions. We know that
\begin{equation*}
    \EE[(\theta_1 - \theta_2)^\top A + (g_1 - g_2)(V,A) \mid U,A,X] = 0 \,,
\end{equation*}
and therefore by the prior equation we have
\begin{align*}
    0 &= \EE\Big[\tilde q(U,X,A) \Big( (\theta_1 - \theta_2)(m(X))^\top \phi(A,X) + (g_1 - g_2)(V,X) \Big) \Big] \\
    &= \EE\Big[q_0(Z,X,A) \Big( (\theta_1 - \theta_2)(m(X))^\top \phi(A,X) + (g_1 - g_2)(V,X) \Big) \Big] \\
    &= \theta_1 - \theta_2 \,.
\end{align*}
That is, we have $\theta_1 = \theta_2$. Since this applies for any arbitrary partially linear bridge functions, and we know that $\theta^{\star\top} A  + \xi(V,X,0) \in \tilde\Hcal_{\op{OB}}$, it follows that the partially linear parameters $\theta^\star$ are unique.

\end{proof}

\section{Proofs for Results in Appendices}
\subsection{Proofs for \Cref{sec: connection}}
\begin{proof}[Proof for \Cref{prop: newey-interpret}]
The optimization problem in \Cref{eq: q-projection2} can be equivalently written as 
\begin{align*}
\min_{\xi\in\Hcal}~\Eb{\prns{q_0(T) - [P\xi](T)}^2}.
\end{align*}
A function $\xi_0$ solves the optimization problem above if and only if 
\begin{align*}
0 = \Eb{\prns{q_0(T) - [P\xi_0](T)}[P\xi](T)} = \Eb{\prns{q_0(T) - [P\xi_0](T)}g_1(W)\xi(S)}, ~~ \forall \xi \in \Hcal.
\end{align*}
According to the definition of $q_0$ in \Cref{eq: cond-moment-q-2}, we have $\Eb{g_1(W)q_0(T)\xi(S)} = \Eb{\alpha(S)\xi(S)}$ for any $\xi\in\Hcal$. So any solution $\xi_0$ can be equivalently characterized by 
\begin{align*}
\Eb{[P\xi_0](T)g_1(W)\xi(S)} = \Eb{\alpha(S)\xi(S)}.
\end{align*}
This exactly means that 
\begin{align*}
[P^\star\xi_0](S) = \alpha(S),
\end{align*}
namely, $\xi_0 \in \Xi_0$ defined in \Cref{assump: new-nuisance}.
\end{proof}

\begin{proof}[Proof for \Cref{prop: ai-chen}]
Since $\nu^\star$ solves \Cref{eq: Ai-chen-nuisance}, by the first order condition, we have 
\begin{align*}
\Eb{[P\nu^\star](T)[Ph](T)} + (1+ \Eb{m(W; \nu^\star)})\Eb{m(W; h)} = 0, ~~ \forall h \in \Hcal.
\end{align*}
It follows that 
\begin{align*}
V^\star 
 &= \Eb{[P\nu^\star](T)[P\nu^\star](T)} + \prns{1 + \Eb{m(W; \nu^\star)}}^2 \\
 &= -\prns{1 + \Eb{m(W; \nu^\star)}}\Eb{m(W; \nu^\star)} + \prns{1+\Eb{m(W; \nu^\star)}}^2 \\
 &= 1+\Eb{m(W; \nu^\star)},
\end{align*}
where the second equality uses the first order condition with $h = \nu^\star$. Dividing the left hand and right hand sides of the first order condition by $V^\star$, we have the following identity for $\xi^\star \coloneqq -\nu^\star/V^\star$:
\begin{align*}
\Eb{[P\xi^\star](T)[Ph](T)} = \Eb{m(W; h)}, ~~ \forall h \in \Hcal. 
\end{align*}
This is equvalent to 
\begin{align*}
\Eb{h(S)[P^\star P\xi^\star](S)} = \Eb{h(S)\alpha(S)}, ~~ \forall h \in \Hcal. 
\end{align*}
Therefore, we have  $\alpha = P^\star P\xi^\star$ for $\xi^\star \coloneqq -\nu^\star/V^\star$.  The expression for $\varphi(W; h^\star, \nu^\star)$ follows immediately from $V^\star = 1+\Eb{m(W; \nu^\star)}$ and the definition of $\xi^\star$. 

Conversely, if there exists $\xi^\star$ such that $\alpha = P^\star P\xi^\star \in \Rcal(P^\star P)$, then we have 
\begin{align*}
\Eb{h(S)[P^\star P\xi^\star](S)} = \Eb{h(S)\alpha(S)} = \Eb{m(W; h)}, ~~ \forall h \in \Hcal. 
\end{align*}
Based on such $\xi^\star$, we now construct a $v^\star$ that solves \Cref{eq: Ai-chen-nuisance}, that is, a $v^\star$ such that 
\begin{align*}
\Eb{[P\nu^\star](T)[Ph](T)} + (1+ \Eb{m(W; \nu^\star)})\Eb{m(W; h)} = 0, ~~ \forall h \in \Hcal.
\end{align*}
In particular, the above needs to hold for $\xi^\star \in \Hcal$, that is 
\begin{align*}
0 &= \Eb{[P\nu^\star](T)[P\xi^\star](T)} + (1+ \Eb{m(W; \nu^\star)})\Eb{m(W; \xi^\star)} \\
  &= \Eb{\nu^\star(S)[P^\star P\xi^\star](S)} + (1+ \Eb{m(W; \nu^\star)})\Eb{\xi^\star(S)[P^\star P\xi^\star](S)} \\
  &= \Eb{m(W; \nu^\star)}+ (1+ \Eb{m(W; \nu^\star)})\Eb{\prns{{\bracks{P\xi^\star}(T)}}^2}. 
\end{align*}
This is equivalent to 
\begin{align*}
1 = \prns{1 + \Eb{m(W; \nu^\star)}} + (1+ \Eb{m(W; \nu^\star)})\Eb{\prns{{\bracks{P\xi^\star}(T)}}^2}, 
\end{align*}
or $$1 + \Eb{m(W; \nu^\star)} = \frac{1}{1 + \Eb{\prns{{\bracks{P\xi^\star}(T)}}^2}}.$$
Therefore, we only need to find $v^\star$ such that the following identity holds for any $h \in \Hcal$:
\begin{align*}
0 &= \Eb{[P\nu^\star](T)[Ph](T)} + (1+ \Eb{m(W; \nu^\star)})\Eb{m(W; h)} \\
  &= \Eb{[P\nu^\star](T)[Ph](T)} + \frac{\Eb{m(W; h)}}{1 + \Eb{\prns{{\bracks{P\xi^\star}(T)}}^2}} \\
  &= \Eb{h(S)[P^\star P \nu^\star](S)} + \frac{\Eb{h(S)[P^\star P\xi^\star](S)}}{1 + \Eb{\prns{{\bracks{P\xi^\star}(T)}}^2}}.
\end{align*}
Obviously, $v^\star \coloneqq -\xi^\star/(1 + \E[\prns{\bracks{P\xi^\star}(T)}^2])$ satisfies the identity above so it solves \Cref{eq: Ai-chen-nuisance}. 
\end{proof}

\subsection{Proofs for \Cref{sec: convex}}

\begin{proof}[Proof for \Cref{theorem: convex}]
\Cref{eq: delta-eq-3} holds for $\xi_0\in\op{int}(\Hcal)$ if and only if the first order condition holds: for any  $\xi\in\Hcal$,
\begin{align*}
&\frac{\partial }{\partial t} \frac{1}{2}\Eb{\prns{[P(\xi_0 + t(\xi - \xi_0))](T)}^2} - \Eb{m(W; \xi_0 + t(\xi-\xi_0))} \vert_{t=0} \\
=&\Eb{[P\xi_0](T)[P(\xi-\xi_0)](T)} -\Eb{\alpha(S)(\xi-\xi_0)(S)}  = 0,
\end{align*}
or equivalently, 
\begin{align}\label{eq: foc-2}
\Eb{[P\xi_0](T)g_1(W)(\xi(S)-\xi_0(S))} = \Eb{\alpha(S)\prns{\xi(S)-\xi_0(S)}}.
\end{align}

We can take $\xi = h \in \Hcal$ and $\xi = \alpha \in \Hcal$ in \Cref{eq: foc-2}. This leads to 
\begin{align*}
\Eb{\prns{g_1(W)[P\xi_0](T) - \alpha(S)}\prns{h(S) - \xi_0(S)}} = 0 \\
\Eb{\prns{g_1(W)[P\xi_0](T) - \alpha(S)}\prns{\alpha(S) - \xi_0(S)}} = 0.
\end{align*}
Taking a difference of the tw
o equations above gives \begin{align}\label{eq: foc-1}
\Eb{\prns{g_1(W)[P\xi_0](T) - \alpha(S)}\prns{h(S) - \alpha(S)}} = 0, ~~ \forall h \in \Hcal.
\end{align}
This proves the conclusion in \Cref{eq: new-nuisance-convex}.  
\end{proof}

\begin{proof}[Proof for \Cref{lemma: bias-product-convex}]
We first prove \Cref{eq: DR-identification-convex}. Note that \Cref{eq: foc-2} implies for any $\xi\in\Hcal$,
\begin{align*}
\Eb{q_0(T)g_1(W)(\xi(S)-\xi_0(S))} = \Eb{\alpha(S)\prns{\xi(S)-\xi_0(S)}}, ~~ \text{where } q_0 = P\xi_0. 
\end{align*}
In particular, the above holds for $\xi = h\in\Hcal$ and $\xi = h_0\in\Hcal_0\subseteq \Hcal$ respectively: 
\begin{align*}
&\Eb{q_0(T)g_1(W)(h(S)-\xi_0(S))} = \Eb{\alpha(S)\prns{h(S)-\xi_0(S)}}, \\
&\Eb{q_0(T)g_1(W)(h_0(S)-\xi_0(S))} = \Eb{\alpha(S)\prns{h_0(S)-\xi_0(S)}}.
\end{align*}
Taking a difference of the two equations above gives 
\begin{align}\label{eq: H-convex-key-condition}
\Eb{q_0(T)g_1(W)(h(S)-h_0(S))} = \Eb{\alpha(S)\prns{h(S)-h_0(S)}}, ~~ \forall h\in\Hcal.
\end{align}

Then for any $h \in \Lcal_2\prns{S}, q \in \Lcal_2\prns{T}, h_0 \in \Hcal_0$ and  $q_0 = P\xi_0$ where $\xi_0$ is an interior solution to \Cref{eq: delta-eq-3}, we have 
\begin{align*}
\Eb{\psi(W; h, q)} - \theta^\star
    &= \Eb{\psi(W; h, q)} - \Eb{\psi(W; h_0, q_0)} \\
    &= \Eb{m(W; h- h_0)} + \Eb{q(T)(g_2(W) - g_1(W)h(S))} \\
    &= \Eb{\alpha(S)(h(S)- h_0(S))} + \Eb{q(T)(\Eb{g_2(W) \mid T} - g_1(W)h(S))} \\
    &= \Eb{g_1(W)q_0(T)(h(S)- h_0(S))} + \Eb{g_1(W)q(T)\prns{h_0(S) - h(S)}} \\
    &= \Eb{g_1(W)\prns{ q(T) - q_0(T)}\prns{h(S) -  h_0(S)}}.
\end{align*}
Here the fourth equality follows from \Cref{eq: H-convex-key-condition} and the definition of $\Hcal_0$. 
The rest of the proof can follow from the proof for \Cref{lemma: bias-product-general}.

WE next prove \Cref{eq: orthogonality-convex}. Note that 
\begin{align*}
 \frac{\partial}{\partial t}  \Eb{\psi(W; h_0 + t(h-h_0), q_0)}\big\vert_{t = 0} 
    &= \Eb{\alpha(S)(h(S)-h_0(S)} - \Eb{g_1(W)q_0(T)\prns{h(S)-h_0(S)}}.
 \end{align*} 
This is equal to $0$ according to  \Cref{eq: H-convex-key-condition}.

Moreover,  
\begin{align*}
 \frac{\partial}{\partial t}  \Eb{\psi(W; h_0, q_0  + t(q-q_0))}\big\vert_{t = 0}  = \Eb{\prns{q(T) - q_0(T)}\prns{g_2(W) - g_1(S)h_0(S)}}.
 \end{align*} 
This is equal to $0$ for any $q \in \Lcal_2(T)$ because $\Eb{g_2(W) - g_1(S)h_0(S) \mid T} = 0$.
\end{proof}

\end{document}